\documentclass[sigconf]{acmart}

\AtBeginDocument{%
  \providecommand\BibTeX{{%
    \normalfont B\kern-0.5em{\scshape i\kern-0.25em b}\kern-0.8em\TeX}}}

\copyrightyear{2021} 
\acmYear{2021} 
\setcopyright{acmcopyright}\acmConference[CCS '21]{Proceedings of the 2021 ACM SIGSAC Conference on Computer and Communications Security}{November 15--19, 2021}{Virtual Event, Republic of Korea}
\acmBooktitle{Proceedings of the 2021 ACM SIGSAC Conference on Computer and Communications Security (CCS '21), November 15--19, 2021, Virtual Event, Republic of Korea}
\acmPrice{15.00}
\acmDOI{10.1145/3460120.3485668}
\acmISBN{978-1-4503-8454-4/21/11}

\settopmatter{printacmref=True, printfolios=false}
\usepackage{subfigure}
\usepackage{algorithm}
\usepackage{algpseudocode}
\usepackage{url}
\usepackage{xspace}
\usepackage{multirow}
\usepackage{color}

\usepackage{enumitem}
\usepackage{colortbl}

\definecolor{mygray}{gray}{.9}
\newcommand{\etal}{\textit{et al.}\xspace}
\newcommand{\ie}{\textit{i.e.}\xspace}
\newcommand{\eg}{\textit{e.g.}\xspace}

\newcommand{\etc}{\textit{etc.}\xspace}

\newcommand{\mypara}[1]{\smallskip\noindent\textbf{#1.} \xspace}

\newtheorem{observation}{Observation}
\newcommand{\mydp}{DP\xspace}
\newcommand{\myldp}{LDP\xspace}
\newcommand{\fo}{\ensuremath{\mathsf{FO}}\xspace}
\newcommand{\rr}{\ensuremath{\mathsf{RR}}\xspace}
\newcommand{\myvar}{\ensuremath{\mathsf{Var}}\xspace}

\newcommand{\grr}{\ensuremath{\mathsf{GRR}}\xspace}

\newcommand{\oue}{\ensuremath{\mathsf{OUE}}\xspace}
\newcommand{\olh}{\ensuremath{\mathsf{OLH}}\xspace}
\newcommand{\rappor}{\ensuremath{\mathsf{RAPPOR}}\xspace}

\newcommand{\myuni}{\ensuremath{\mathsf{Uni}}\xspace}
\newcommand{\myhio}{\ensuremath{\mathsf{HIO}}\xspace}
\newcommand{\mydht}{\ensuremath{\mathsf{DHT}}\xspace}
\newcommand{\mycalm}{\ensuremath{\mathsf{CALM}}\xspace}
\newcommand{\myHDG}{\ensuremath{\mathsf{HDG}}\xspace}
\newcommand{\myahead}{\ensuremath{\mathsf{AHEAD}}\xspace}
\newcommand{\PrivTree}{\ensuremath{\mathsf{PrivTree}}\xspace}
\newcommand{\hdmm}{\ensuremath{\mathsf{HDMM}}\xspace}
\newcommand{\PriView}{\ensuremath{\mathsf{PriView}}\xspace}
\newcommand{\myngram}{\ensuremath{\mathsf{Ngram}}\xspace}

\newcommand{\Cauchy}{\emph{Cauchy}\xspace}
\newcommand{\Zipf}{\emph{Zipf}\xspace}
\newcommand{\Laplacian}{\emph{Laplacian}\xspace}
\newcommand{\Gaussian}{\emph{Gaussian}\xspace}

\newcommand{\de}{\ensuremath{\mathsf{DE}}\xspace}
\newcommand{\lle}{\ensuremath{\mathsf{LLE}}\xspace}
\newcommand{\myErr}{\ensuremath{\mathsf{Err}}\xspace}

\begin{document}
\fancyhead{}

\title{AHEAD: Adaptive Hierarchical Decomposition for Range Query \\ under Local Differential Privacy}

\author{Linkang Du}
\affiliation{%
  \institution{Zhejiang University}
  \country{China}
}
\email{linkangd@gmail.com}

\author{Zhikun Zhang}
\affiliation{%
  \institution{CISPA Helmholtz Center for Information Security}
  \country{Germany}
}
\email{zhikun.zhang@cispa.de}

\author{Shaojie Bai}
\affiliation{%
  \institution{Zhejiang University}
  \country{China}
}
\email{white.shaojie@gmail.com}

\author{Changchang Liu}
\affiliation{%
  \institution{IBM Research}
  \country{USA}
}
\email{changchang.liu33@ibm.com}

\author{Shouling Ji}
\authornotemark[1]
\affiliation{%
  \institution{Zhejiang University \& Binjiang Institute of Zhejiang University}
  \country{China}
}
\email{sji@zju.edu.cn}

\author{Peng Cheng}
\authornote{Peng Cheng and Shouling Ji are the co-corresponding authors.}
\affiliation{%
  \institution{Zhejiang University}
  \country{China}
}
\email{saodiseng@gmail.com}

\author{Jiming Chen}
\affiliation{%
  \institution{Zhejiang University \& Zhejiang University of Technology}
  \country{China}
}
\email{cjm@zju.edu.cn}

\begin{abstract}
For protecting users' private data, local differential privacy (LDP) has been leveraged to provide the privacy-preserving range query, thus supporting further statistical analysis. However, existing LDP-based range query approaches are limited by their properties, \ie, 
collecting user data according to a pre-defined structure. 
These static frameworks would incur excessive noise added to the aggregated data especially in the low privacy budget setting. 
In this work, we propose an Adaptive Hierarchical Decomposition (\myahead) protocol, 
which adaptively and dynamically controls the built tree structure, so that the injected noise is well controlled for maintaining high utility. 
Furthermore, we derive a guideline for properly choosing parameters for \myahead so that the overall utility can be consistently competitive while rigorously satisfying LDP. 
Leveraging multiple real and synthetic datasets, we extensively show the effectiveness of \myahead in both low and high dimensional range query scenarios, as well as its advantages over the state-of-the-art methods. 
In addition, we provide a series of useful observations for deploying \myahead in practice. 
\end{abstract}

\begin{CCSXML}
<ccs2012>
   <concept>
       <concept_id>10002978.10002991.10002995</concept_id>
       <concept_desc>Security and privacy~Privacy-preserving protocols</concept_desc>
       <concept_significance>500</concept_significance>
       </concept>
 </ccs2012>
\end{CCSXML}

\ccsdesc[500]{Security and privacy~Privacy-preserving protocols}

\keywords{Differential Privacy; Range Query; Adaptive Decomposition}
\maketitle
\section{Introduction}
With the increasing incidents of data breaches, such as Facebook \cite{facebookbreach}, Marriott \cite{Marriottbreach}, Exactis \cite{Exactisbreach}, \etc, 
users' privacy has become a serious obstacle in many practical applications. 
As a promising countermeasure, differential privacy (\mydp) \cite{dwork2006differential, dwork2006calibrating} 
has been accepted as the \emph{de facto} standard for protecting data privacy in academia and industry \cite{he2020differential, dwork2014algorithmic, johnson2017practical, li2016differential, papernot2016semi, yang2018crowd, ji2014structural}, 
due to its rigorous theoretical guarantees and independence of attacker's background knowledge. 
DP in the centralized setting requires a \emph{trusted aggregator} that collects sensitive data from users and performs perturbation analysis, and then provides data services by answering queries or publishing synthetic data \cite{ge2020kamino, zhang2020privsyn}. 
  
When there is no trusted aggregator, DP in the centralized setting is no longer applicable and users are often reluctant to share their private data without protection. 
To address this obstacle, local differential privacy (LDP) \cite{raskhodnikova2008can, duchi2013local} is proposed, which allows individuals to encode and perturb their private data locally. 
In recent years, \myldp has been deployed by many well-known leading companies, 
including Google \cite{erlingsson2014rappor, fanti2016building}, Apple \cite{APPLE-DP-pdf} and Microsoft \cite{ding2017collecting}. 
For example, Google collects users' favorite homepages and Apple analyzes users' emoji preferences with LDP. 

Previous studies \cite{erlingsson2014rappor, bloom1970space, bassily2015local, wang2016private, wang2017locally} on LDP mainly focus on obtaining frequency distribution throughout the entire domain, \ie, frequency oracle (\fo) \cite{wang2019consistent}. 
However, in practice, people may be more interested in a range query, \ie, estimating the frequency in a certain range of a domain. 
For instance, 
supermarkets are interested to  know the proportion of their high-income customers, e.g., earning between 100K to 120K dollars annually, to make commercial policies. 
Furthermore, based on range query results, 
we can directly obtain other distribution features such as 
order statistics \cite{nissim2007smooth}. 

For range query, recent main-stream solutions can be divided into two categories by query dimension. 
For low($\leq 2$)-dimensional
query scenes, 
Wang \etal \cite{wang2019answering} proposed to hierarchically decompose the entire domain based on the complete $B$-ary tree structure and answer the range query by accumulating the frequency values, 
which was originally developed by Hay \etal \cite{hay2010boosting} in the centralized setting. 
Cormode \etal \cite{cormode2019answering} proposed to apply the discrete wavelet transformation (based on a full binary tree structure over the domain) to convert each user's private value to a Haar wavelet coefficient vector for perturbation 
and perform inverse transformation to get the query answer, 
as a generalization of \cite{xiao2010differential} under the centralized \mydp setting. 
For high($\geq 2$)-dimensional range query, 
Yang \etal \cite{yang2020answering} proposed to combine information from $1, 2$-dimensional grids, 
which was originally proposed by Qardaji \etal \cite{qardaji2013differentially} in the centralized setting, 
and leverage the weighted update strategy to estimate the high-dimensional range queries. 
However, the existing methods have several limitations.  
First, there exist sparse areas in the data domain of most real-world datasets.
For instance, 50-60 years old people account for a small ratio among the members of a football club.
Therefore, the nodes (cells) with small values in the complete tree (grid) are highly likely to be overwhelmed by the injected noises.
In addition, existing techniques are mainly designed for specific dimensional queries, \ie, \cite{wang2019answering, cormode2019answering} for $1, 2$-dim queries and \cite{yang2020answering} for high($\geq 2$)-dimensional queries. 
Although \cite{wang2019answering, cormode2019answering, yang2020answering} are not technically limited by query dimensions, they are less effective in the case of non-target dimensions. 
Since the dimensions of datasets are various in practice, the aggregator needs to combine the algorithms for different scenarios, 
thus limiting the adaptability and applicability of these algorithms. 

To suppress the excessive injected noises, \myahead provides a fine domain decomposition mechanism to accommodate the injected noise of nodes with various granularities in the tree. 
In order to enable \myahead to find the proper domain decomposition, 
we carefully analyze the error source of the query answer obtained by \myahead, 
and provide a guideline to obtain the decomposition. 
After \myahead completes the interaction with all users, there exist certain constraints on nodes' values, \eg,
the sum of the children's values is equal to their parent's value. 
Thus, a post-processing method is designed for \myahead to further boost the query accuracy. 
For high-dimensional queries, we compare two different expansion methods, \ie, Direct Estimation (\de) and Leveraging Low-dimensional Estimation (\lle), 
and show the advantage of \lle based on experimental results. 

To validate the effectiveness of \myahead, we use multiple real and synthetic datasets to show the consistent advantage of \myahead over the state-of-the-art methods. 
Specifically, on several real datasets, 
\myahead can achieve significantly smaller estimation errors as compared to previous works by up to two orders of magnitude. 
For low dimensional scenarios, we evaluate various combinations of essential parameters, \eg, privacy budget, domain size, user scale and distribution skewness, and then provide a comprehensive understanding of \myahead 
by considering 757 parameter combinations in order to guide its adoption in practice. 
For high dimensional scenarios, we investigate the query accuracy of \myahead in different data dimensions and attribute correlations, and show the characteristics of \myahead and competitors based on experimental results. 

In summary, the contributions of this paper are three-fold:
\begin{itemize}[leftmargin=*]
  \item We propose a dynamic algorithm for range query under LDP, which can adaptively determine the granularity of the domain composition. 
  As compared to the state-of-the-art techniques, \myahead can reduce the impact of the inserted noise to range queries for maintaining outstanding utility performance. 
  \item We theoretically derive the parameter settings (decomposition threshold and tree fanout) for the consistent high utility under rigorous LDP guarantees. 
  Furthermore, we extend our strategy to multi-dimensional scenarios.
  \item Through extensive experiments, we demonstrate the effectiveness of \myahead on multiple real-world datasets as well as its advantages over previous approaches in balancing the utility and privacy tradeoff. 
  In addition, we present six useful observations for deploying \myahead in practical use. 
\end{itemize}

\section{Background}
\label{Background}
\renewcommand\arraystretch{1.1}
\begin{table}[!t] \small
  \label{tab:my_label}
  \centering
  \caption{Summary of mathematical notations. }
  \vspace{-0.2cm}
    \begin{tabular}{c c}
      \toprule
         \bf{Notation} & \bf{Description}\\
      \midrule
         $N$ & The total number of users (user scale)\\
         \rowcolor{mygray}
         $D$ & Private attribute domain\\
         $B$ & Tree fanout\\
         \rowcolor{mygray}
         $m$ & The number of private attributes\\
         $\epsilon$ & Privacy budget\\
         \rowcolor{mygray}
         $\theta$ & Threshold for intervals decomposition\\
      \bottomrule
    \end{tabular}
    \vspace{-0.3cm}
\end{table}

\subsection{Local Differential Privacy}
In LDP, 
each user perturbs his/her private data $v$, through a perturbation mechanism $\Psi$, and then transmits $\Psi(v)$ to the aggregator while satisfying rigorous LDP guarantees defined as below. 
\begin{definition}{$\epsilon$$-$Local Differential Privacy ( $\epsilon$-LDP ) \cite{kasiviswanathan2011can}. } 
A perturbation function $\Psi(\cdot)$ satisfies $\epsilon$-LDP if and only if for $\epsilon > 0$ and all possible pairs of input $v_1$, $v_2 \in D$, we have
\begin{equation}
    \forall T \in \operatorname{Range}(\Psi): \Pr{\Psi\left(v_{1}\right) \in T} \leq e^{\epsilon} \Pr {\Psi\left(v_{2}\right) \in T},  \nonumber
\end{equation}
where $\operatorname{Range}(\Psi)$ denotes the set of all possible outputs of \xspace $\Psi$.
\end{definition}

\subsection{Frequency Oracle}
\label{Frequency Oracle}

The frequency oracle (\fo) protocol is used to estimate the frequency distribution $F$ across a private attribute, serving as a basic building block for general LDP tasks such as marginal release \cite{zhang2018calm} and range query \cite{wang2019answering, cormode2019answering}. 
Most \fo protocols consist of three steps: $\operatorname{Encoding}$, $\operatorname{Perturbation}$ and $\operatorname{Aggregation}$ \cite{wang2017locally}. We introduce two state-of-the-art \fo protocols in the following.
\subsubsection{Generalized Randomized Response (\grr)}

The \grr algorithm is a generalized version of random response \cite{warner1965randomized}. 

\mypara{Encoding} \grr directly perturbs on private value $v$, thus the encoded value $x_i$ equals to $v_i$ for user $i$. 

\mypara{Perturbation} User $i$ keeps $v_i$ with probability $p = \frac{e^{\epsilon}}{e^{\epsilon}+|D|+1}$ and randomly chooses $v_i^{\prime} \in D$ s.t. $v_i \neq v_i^{\prime}$  with probability $q=\frac{1}{e^{\epsilon}+|D|+1}$, then uploads $x_i^{\prime}$ to the server, where $x_i^{\prime} \coloneqq \operatorname{Perturb}(x_i)$. 

\mypara{Aggregation} 
The aggregator counts how many times $v$ is reported, denoted by $\operatorname{count}[v] = \sum_{i = 1}^{N} \mathbb{I}_{\{x^{\prime}=v\}}$. An unbiased estimation of the frequency of $v$ is  
$ \hat{f_v} = \frac{\operatorname{count}[v]- Nq}{N(p-q)}$. 

\mypara{Estimation Error} 
$\hat{f_v}$ is an unbiased estimation of the true frequency $f_v$ \cite{wang2019answering}. 
Therefore, the estimation error of \grr originates from the algorithm variance 
\begin{small}
  \begin{align}
    \myvar_{\grr(\epsilon)} = \frac{|D| - 2 + e^{\epsilon}}{N\left(e^{\epsilon}-1\right)^{2}}
    \label{GRRVAR}
    \end{align}  
\end{small}
\subsubsection{Optimized Unary Encoding (\oue)}
\label{myoue}
\oue \cite{wang2017locally}  is an optimization of the basic \rappor protocol in \cite{erlingsson2014rappor}. 

\mypara{Encoding} 
User $i$ encodes his/her private data into a one-hot binary vector, \ie, $ x_i = [0,\ 0\ \ldots,\ 1,\ \ldots,\ 0] $
of length $|D|$, where only the $v_i$-th position is $1$.

\mypara{Perturbation} User $i$ flips each bit of $x_i$ based on probabilities $p = \frac{1}{2}$ and $q = \frac{1}{e^\epsilon + 1}$ as below, 
while transmitting 1’s and 0’s differently. 
1's (resp., 0's) keeps with the probability of $p$ (resp., $1-q$) and flips to the reverse with the probability of $1-p$ (resp., $q$). 
Then user $i$ uploads $x_i^{\prime}$ to the server. 

\mypara{Aggregation} 
The aggregator collects $\{x_i^{\prime}\}_{i=1}^{N}$ uploaded by the users and counts the number of occurrences of 1 in each bit, \eg, for the $v$-th bit, $\operatorname{count}[v] = \sum_{i = 1}^{N} x_i^{\prime}[v]$. 
The $\operatorname{count}[v]$ needs to be corrected to obtain an unbiased estimation
$ \hat{f}[v] = \frac{\operatorname{count}[v] - Nq}{N(p - q)} $. 

\mypara{Estimation Error} 
It is proved in \cite{wang2017locally} that \oue has variance 
\begin{equation}
    \myvar_{\oue(\epsilon)} = \frac{4e^{\epsilon}}{N\left(e^{\epsilon}-1\right)^{2}}
    \label{OUEVAR}
\end{equation}  
Both \grr and \oue achieve unbiased estimation of frequency values. 
As shown in \autoref{GRRVAR} and \autoref{OUEVAR}, \oue has a variance that is independent of $|D|$. 
For smaller $|D|$ (such that $|D|-2<3e^\epsilon$), \grr is better; while \oue is superior for larger $|D|$. 
\section{Problem Definition and Existing Solutions}
\subsection{Range Query Problem}
\begin{figure}[h]
    \centering
    \includegraphics[width=0.5\hsize]{./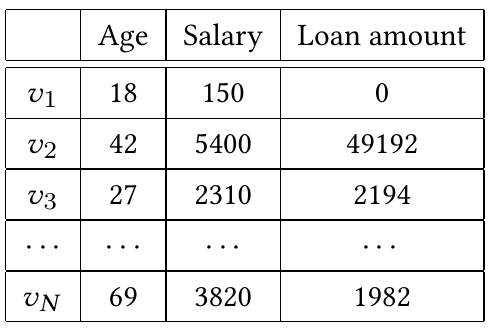}
    \vspace{-0.2cm}
    \caption{An example database containing $N$ users with three attributes: \emph{age}, \emph{salary} and \emph{loan amount}. }
    \label{Example dataset}
    \vspace{-0.3cm}
  \end{figure}

Assume there are $N$ users, where the $i$-th user has an $m$-dim ordinal record ${\bf{v}}^{m}_i = (v^{(1)}_i, v^{(2)}_i, \cdots, v^{(m)}_i)$, with $v_i^{(j)}$ representing the $j$-th private attribute value owned by user $i$. 
Denote the domain for the $j$-th item as $D_j$. Given a series of ranges $\alpha_j$, $\beta_j$ ( $j=1,2,\cdots,m$ ), an $m$-dim range query can be computed as 
\begin{small}
\begin{align}
    R_{\bigcap {[\alpha_j,\beta_j]}_{j=1}^{m}}=\frac{1}{N} \sum_{i=1}^{N} \mathbb{I}_{\bigcap{\{\alpha_j \leq v_i^{(j)} \leq \beta_j\}_{j=1}^{m}}}  \nonumber
\end{align}
\end{small}where $\mathbb{I}_\gamma$ is an indicator function that takes 1 if the predicate $\gamma$ is true and 0 otherwise.
\autoref{Example dataset} gives a running example of range query. 
For example, the proportion of people within 20 years to 40 years old constitutes a 1-dim range query, 
while the ratio of people within 20 years to 40 years old, with salary less than 5000, and with loan amount less than 20000 constitutes a 3-dim range query, 
where the three dimensions corresponding to age, salary and loan amount, respectively.

\subsection{Hierarchical-Interval Optimized (\myhio)}
\label{exsitingHIO}
Based on a $B$-ary tree, \myhio \cite{wang2019answering} hierarchically decomposes the entire domain into mutually disjoint subsets called \emph{intervals}. 
The root node represents the entire domain, and the leaf nodes represent the individual values. 
Nodes on the same layer represent intervals of the same granularity. 
Then, \myhio obtains the frequency estimations of nodes in each layer by the \oue \cite{wang2017locally} algorithm. 
When answering a range query, 
\myhio completely covers the query range by using the minimum number of intervals from different layers. 

For example, when the users' private attribute domain size $|D| = 8$ and tree fanout $B = 2$, the range query $[2,7]$ can be decomposed into intervals $[2,3] \cup [4,7]$. 
Then, \myhio adds the estimated frequency values of the two intervals above to get the answer of a range query. 
For general query with range length $r$, \myhio can answer it with at most $2(B-1)\log_B|D|$ intervals. 
Compared with directly using \fo mechanisms, \myhio can effectively reduce the number of intervals used when answering queries, thus substantially reducing the cumulative error caused by adding noisy frequency values of intervals within the range. 

However, \myhio has two weaknesses that limit its applicability in practice. 
1) \myhio inserts the same level of noise into the estimated frequencies of all the intervals. 
For nodes with small intervals, the perturbation noise often overwhelms the true frequency values thus degrading the utility of the entire algorithm. 
2) In the multi-dimensional scene, the number of tree layers increases exponentially with the number of dimensions. 
For high-dimensional scenarios, 
the query error increases extremely with the excessive small value nodes. 

\subsection{Discrete Haar Wavelet Transform (\mydht)}
\label{exsitingDHT}

\mydht \cite{cormode2019answering} imposes a full binary tree structure over the domain, and encodes the user private value $v$ into a set of \emph{Haar wavelet coefficients}.
The motivation underlying \mydht \cite{cormode2019answering} is that the calculation of a length-$r$ range query uses only a smaller number of estimated values in the Haar wavelet domain, comparing to apply \fo directly. 

\mydht also faces several limitations. 
1) Similar to \myhio, \mydht inserts the same level of noise into all estimated Haar wavelet coefficients. 
For some coefficients with low values, noise tends to skew the estimated coefficients causing query error to increase. 
2) It is mainly designed for 1-dim scenario, thus limiting its application in practice.

\subsection{Consistent Adaptive Local Marginal (\mycalm)}
\label{Consistent Adaptive Local Marginal}
\mycalm \cite{zhang2018calm} is a marginal release LDP protocol, which can construct the joint distribution of $m$ attributes with privacy protection guarantee. 
Instead of directly estimating all marginal tables, \mycalm strategically chooses the size and the number of marginal tables, based on which all the marginal tables can be reconstructed. 
We notice that \mycalm can be used to answer range queries. 
More specifically, to answer a multi-dimensional range query, \mycalm can sum up the reconstructed marginals included in the query. 

However, when the domain size $|D|$ is large, \mycalm needs to sum up extensive noisy marginals to answer a query, which is likely to inject a large amount of noise to the true answer. 

\subsection{Hybird-Dimensional Grids (\myHDG)}

\myHDG \cite{yang2020answering} is the state-of-the-art method to answer multi-dimensional range query under LDP. 
The main idea of \myHDG is to carefully bucketize the 2-dim domains of all attribute pairs into coarse 2-dim grids and then estimate the answer of a higher dimensional range query from the answers of the associated 2-dim range queries. 
To capture the fine-grained distribution information for users' data, \myHDG also introduces 1-dim grids to offer finer-grained distribution information on each attribute and combines information from 1-dim and 2-dim grids to answer range queries. 

\myHDG also faces several limitations. 
1) The equal granularity grids of \myHDG cannot handle various distributions of users' data. For skew-distributed datasets, whose data is concentrated in a small part of the whole domain, noise error or non-uniform error dominates in some grids thus degrading the utility. 
2) Using 1-dim grids may destroy the correlation between attributes. 

\subsection{Remarks}
\label{Remarks}
To overcome the limitations of the state-of-the-art low-dimensional mechanisms (\myhio, \mydht) and high-dimensional mechanisms (\mycalm, \myHDG),  
we aim to achieve the following two design goals: 
1) find a reasonable decomposition for the domain to avoid introducing excessive noise; 
2) the designed mechanism can be extended to multi-dimensional scenarios, with better query accuracy than existing algorithms. 
Motivated by these goals, 
we propose \myahead, which differs from the existing work in several major aspects: 
1) \myahead is an adaptive and dynamic algorithm, compared to the existing algorithms with static frameworks; 
2) \myahead reduces the impact of noise on small value nodes by merging intervals. 
3) The designed mechanism can migrate from 1-dim to multi-dimensional scenarios. 
Next, we will illustrate the motivation and design of \myahead in detail. 

\section{\myahead: Adaptive Hierarchical Decomposition}
\label{Adaptive Hierarchical Decomposition}
\subsection{Motivation and Overview}
\label{Motivation and Overview}
In this subsection, we use an example to illustrate the limitations of the existing algorithms and the rationality of \myahead. 
As shown in \autoref{Method Motivation}, the tables on the left show the intervals with corresponding real frequency values. 
For instance, the true frequency value of interval $[0,1]$ is $0$, meaning that there is no user data reside in this interval. 
Then, the middle tables display the frequency values of the intervals, separately estimated by different strategies. 
$\sigma^2$ represents the variance of the noise introduced in the perturbation process. 
The remaining part on the right shows the process of answering the query based on the estimated values. 

Firstly, we focus on the process of the baseline strategy, such as \myhio. 
The baseline strategy chooses to publish the estimated frequency of each interval. 
Each estimated value integrates a \emph{noise error} by the \fo mechanism to meet the LDP guarantees. 
For interval $[0,1]$, its true value is 0, meaning that the estimated values 
for these intervals are completely filled with noise. 
When answering a range query, such as $frequency([0, 5])$, the questioner wants to know the frequency value of interval $[0, 5]$. 
The answer of the baseline strategy is $0.5 + 3\sigma^2$. 
It is worthy noting that interval $[0,1]$ does not contribute to the query answer but bring the same degree of noise, which reduces the query accuracy of the existing algorithms. 

On the other hand, for the adaptive strategy, as if we know the true frequency values of the intervals, we can combine the intervals $n_0$ and $n_1$ and estimate a single value for $n_p$. 
When answering the same range query $frequency([0,5])$, the answer of the adaptive strategy is $0.5 + 2\sigma^2$, which reduces the noise error by 30\% compared to the baseline method. 
We attenuate the noise error for the intervals with small frequency values, and the adaptive strategy works better in this case. 

The combination of intervals will reduce the impact of noise on intervals with small frequency values. 
However, when a query falls within an interval, the answer has to be approximated by the assumption about the distribution within the interval. 
Making the \emph{uniform distribution assumption} is a dominant strategy \cite{ioannidis2003history}, where the value of each record in the interval is the same. 
When the assumption is not satisfied, it leads to a \emph{non-uniform error}. 
For instance, if the questioner wants to know $frequency([2, 3])$, the answer of the adaptive strategy is calculated from the frequency value of interval $[0,3]$, \ie, the half of the frequency value of interval $[0,3]$. 
Compared with the baseline strategy, the adaptive strategy reduces the noise error from $\sigma^2$ to $\frac{\sigma^2}{2}$, 
while it also brings non-uniform error $0.05 - \frac{0.05}{2} = 0.025$. 

Adaptivity reduces the noise error by merging intervals, while introducing the non-uniform error by assuming uniformity. 
Therefore, to reduce the overall query error, we aim to find the optimal domain decomposition through balancing these two errors. 
However, finding the optimal partition for 2-dim datasets is diffcult \cite{muthukrishnan1999rectangular},
which is even worse with privacy constraint. 
Inspired by the above example, 
we propose a multi-phase hierarchy based recursive partitioning strategy (detailed in \autoref{Workflow of ahead}) that seeks to balance the errors and address the limitations of the existing solutions. 

\begin{figure}[!t]
    \centering
    \includegraphics[width=0.9\hsize]{./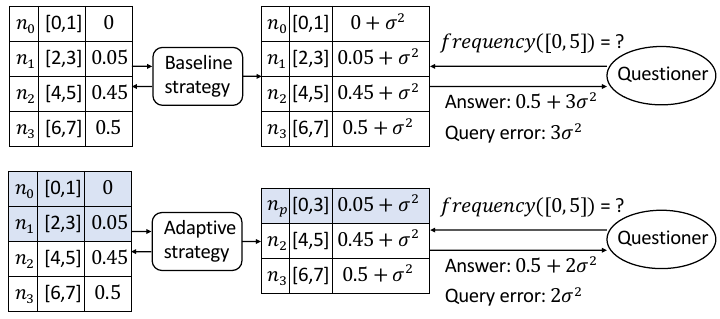}
    \vspace{-0.2cm}
    \caption{Baseline strategy vs. Adaptive strategy. 
    }
    \label{Method Motivation}
    \vspace{-0.3cm}
\end{figure} 
\subsection{Workflow of \myahead}
\label{Workflow of ahead}
\begin{figure*}[t]
    \centering
    \includegraphics[width=\hsize]{./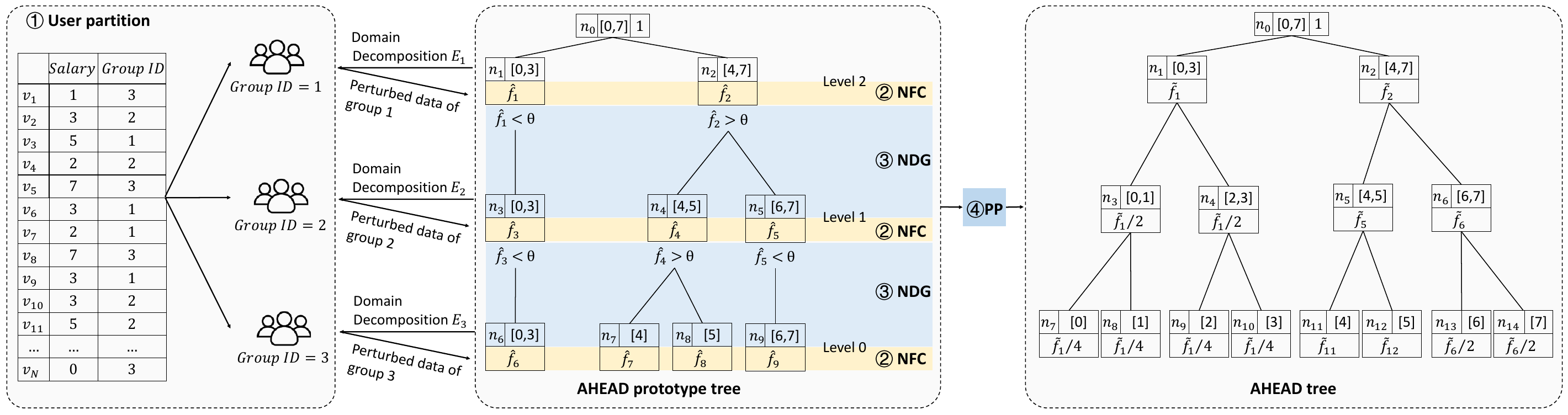}
    \vspace{-0.4cm}
    \caption{Workflow of \myahead. From left to right, the four steps in the \myahead algorithm, \ie, user partition, noisy frequency construction,  new decomposition generation and post-processing, are shown respectively. 
    \myahead answers the range queries based on the tree in the rightmost sub-figure.} 
    \label{AHEAD method}    
    \vspace{-0.4cm}
  \end{figure*}
In this subsection, we show the workflow of \myahead with an example as shown in \autoref{AHEAD method}. 
In this example, the aggregator wants to complete the range query task about the user’s salary based on \myahead. 
The salary data is bucketized into 8 ordinal levels, \ie, domain size $|D| = 8$. 
The tree fanout $B = 2$, meaning that each node of the \myahead tree has at most two child nodes. 
In each node of the \myahead prototype tree, $n_i$ represents the node index, $[a, b]$ represents the node's interval, and $\hat{f_i}$ represents the estimated frequency value. 
It is worth noting that \myahead adopts the \emph{sampling principle} \cite{cormode2019answering}, \ie, partitioning users into groups with each group using the full privacy budget). 
\emph{Sampling principle} can significantly reduce the overall error in local setting \cite{nguyen2016collecting, wang2017locally, wang2019locally} (refer to more details in \autoref{The rationality of sampling principle}). 
Below, we divide the workflow of \myahead into four steps and describe the steps in detail. 

\mypara{Step 1: User Partition (UP)} 
As shown in the left dashed box in \autoref{AHEAD method}, the aggregator determines the number of partitions $c$, 
where $c = \log_B|D|$ is set to ensure that users are assigned to each layer of the \myahead tree. 
The users randomly choose the group number in range $[1, 2, 3, \cdots, c]$. 
In addition, the users can also leverage their public information to select groups, 
such as the time of account registration, user ID, \etc
The partition process should ensure that each group is representative of the overall population with a similar number of users. 

\mypara{Step 2: Noisy Frequency Construction (NFC)} 
In the middle dashed box, the aggregator first establishes a root node $n_0$ representing the entire domain. 
After that, the aggregator performs the initial decomposition of the domain, 
\ie, dividing the entire domain into $B$ equal-sized intervals, 
then attaches the interval nodes to the root node $n_0$.
The children of the root node represents a way to divide the whole domain, denoted as domain decomposition $E_1$. 
The aggregator selects the first group of users and sends the decomposition $E_1$ and privacy budget $\epsilon$ to them. 
Each user in the first group projects his/her private value $v$ onto the intervals of $E_1$ and uploads the projected value of $v$ via \oue. 
After receiving users' reports, 
the server uses the aggregation algorithm to obtain the estimated frequency distribution $\hat{F_1}$, which represents the ratio of users falling within each node's interval. 

\mypara{Step 3: New Decomposition Generation (NDG)} 
The aggregator compares each frequency value of $\hat{F_1} = \left\{\hat{f_1}, \hat{f_2}\right\}$ with a threshold $\theta$ and decides whether to divide the corresponding interval of $E_1 = \left\{[0, 3], [4, 7]\right\}$ further. 
To be specific, since $\hat{f_2}$ is greater than the setting $\theta$, the corresponding interval $[4,7]$ in $E_1$ should be divided into $B$ equal-sized sub-intervals $[4,5]$, $[6,7]$. 
While the frequency value $\hat{f_1}$ of node $n_1$ is not greater than $\theta$, interval $[0,3]$ does not need further partition. 
For the new interval nodes, we attach them to the corresponding parent interval nodes. 
When all the elements in $\hat{F_1}$ are traversed completely, 
we can obtain a new set of intervals serving as decomposition $E_{2}$. 

Then, the aggregator sends decomposition $E_{2}$ to the second group of users and obtains the estimated frequency distribution $F_2$. 
The aggregator repeats the above steps until all user groups are applied and gets an \myahead prototype tree as shown in the middle dashed box of \autoref{AHEAD method}. 
Since the estimated frequency is less than the threshold $\theta$, \myahead will not decompose the intervals $[0,3]$ and $[6,7]$ in subsequent interactions. 
To guarantee LDP, $[0,3]$ and $[6,7]$ should be estimated by all groups of users. 

While constructing the prototype tree, 
\myahead estimates each layer separately, 
which does not consider the constraint of frequency values in the tree, 
\ie, the sum of the child nodes' frequency values is equal to that of their parent node. 
Therefore, in Step 4, 
we further boost the accuracy of \myahead by conducting \emph{non-negativity} and \emph{weighted averaging} between the nodes' estimations. 

\mypara{Step 4: Post-processing (PP)} 
The post-processing module contains two steps: \emph{non-negativity} and \emph{weighted averaging}. 

Firstly,  \myahead processes the nodes in the same layer by Norm-Sub \cite{wang2019consistent} to ensure that the estimated frequencies of nodes are non-negative and the sum of the frequencies is equal to 1. 
\myahead converts the negative value into 0 and calculates the total difference between the sum of positive values and 1. 
Next, each positive value subtracts the average difference, which is obtained by dividing the total difference by the number of positive estimated values. 
The \emph{non-negativity} process repeats until all values become non-negative. 

Then, from bottom to top, \myahead calculates the weighted average between non-leaf node $n$ and its children to update the estimated frequencies of $n$, \ie, reducing the added noise by fusing multiple estimations of $n$. 
For a non-root node $n$: 
\begin{small}
\begin{equation}
    \label{weighted average}
    \tilde{f}(n)=\left\{
    \begin{array}{rcl}
    \lambda_1\hat{f}(n) + \lambda_2 \sum_{u \in child(n)}\hat{f}(u), &   \mbox{if\ u\ is\ a\ leaf\ node} \\
    \lambda_1\hat{f}(n) + \lambda_2 \sum_{u \in child(n)}\tilde{f}(u),    & \mbox{o.w.}  \\
    \end{array}\right.
\end{equation}
\end{small}
The weights $\lambda_1$ and $\lambda_2$ are inversely proportional to the variance of the estimates, 
\ie, $\lambda_1 = \frac{\myvar_{child(n)}}{\myvar_{child(n)} + \myvar_{(n)}}$ and $\lambda_2 = \frac{\myvar_{(n)}}{\myvar_{child(n)} + \myvar_{(n)}}$, 
where $\myvar_{child(n)}$ represents the sum of node $n$'s children variances, and $\myvar_{n}$ indicates the variance of node $n$. 
$\tilde{f}$ indicates the post-processed version of $\hat{f}$ and will be used to answer queries. 
The weighted average process can minimize the magnitude of noise as shown in the following theorem. 
The proof of \autoref{theorem: weighted average} can be found in \autoref{The proof of weighted average}.
\begin{theorem}
\label{theorem: weighted average}
    Using \autoref{weighted average} to combine the frequencies of child nodes, the node $n$ can achieve the minimal updated variance.
\end{theorem}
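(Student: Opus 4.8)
The plan is to read the post-processed estimate $\tilde f(n)$ as an optimally weighted fusion of two \emph{independent} unbiased estimators of the same quantity $f(n)$, and then to solve a one-parameter variance minimization. First I would observe that, because \myahead sends disjoint groups of users to successive layers (the sampling principle), the direct estimate $\hat f(n)$ obtained at $n$'s own layer and the aggregated estimate $S(n) \coloneqq \sum_{u \in child(n)} \tilde f(u)$ built from the layers below rely on statistically independent reports; moreover each is unbiased for $f(n)$ --- $\hat f(n)$ by the unbiasedness of \oue, and $S(n)$ because the children's intervals partition $n$'s interval, so $\sum_{u} f(u) = f(n)$, together with an induction hypothesis that every $\tilde f(u)$ is itself unbiased. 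Consequently any affine combination $w_1 \hat f(n) + w_2 S(n)$ with $w_1 + w_2 = 1$ remains unbiased, and by independence it has variance $w_1^2\,\myvar_{(n)} + w_2^2\,\myvar_{child(n)}$, where I write $\myvar_{child(n)} \coloneqq \myvar(S(n))$ for the variance accumulated bottom-up from the children.

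Next I would minimize $g(w_1) = w_1^2\,\myvar_{(n)} + (1-w_1)^2\,\myvar_{child(n)}$ over $w_1 \in \mathbb{R}$. Setting $g'(w_1) = 2 w_1 \myvar_{(n)} - 2(1-w_1)\myvar_{child(n)} = 0$ gives $w_1^\star = \myvar_{child(n)}/(\myvar_{(n)} + \myvar_{child(n)})$ and $w_2^\star = \myvar_{(n)}/(\myvar_{(n)} + \myvar_{child(n)})$, exactly the weights $\lambda_1, \lambda_2$ of \autoref{weighted average}; since $g'' > 0$ this critical point is the unique global minimum. Substituting back gives the minimal attainable variance $\myvar_{(n)}\myvar_{child(n)}/(\myvar_{(n)} + \myvar_{child(n)})$, which is strictly below both $\myvar_{(n)}$ and $\myvar_{child(n)}$ whenever both are positive, so the fusion can only help. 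Carrying this argument from the leaves upward --- the induction mentioned above --- then shows that every non-leaf node simultaneously attains its minimal updated variance, and at the same time furnishes the recursive expression for $\myvar_{child(\cdot)}$ needed one level higher.

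The step I expect to be the main obstacle is justifying the independence of $\hat f(n)$ and $S(n)$ and, relatedly, pinning down precisely the class of estimators over which ``minimal'' is claimed. Independence rests on the User Partition step genuinely assigning disjoint user groups to different layers and on each $\tilde f(u)$ depending only on reports from strictly lower layers; I would isolate this as a small lemma about \myahead's workflow. The minimality should be stated as ``minimal among all unbiased estimators of $f(n)$ that are affine in the pair $(\hat f(n), S(n))$'' --- I would make that scope explicit, since allowing arbitrary recombination of the raw per-group estimates could in principle yield a different, more intricate optimum; the theorem is really the optimality of the two-way weighted average given the summary statistics \myahead retains. A minor secondary point is that $\myvar_{child(n)}$ is written as the plain sum of the children's variances, which is exact only if the sibling estimates are uncorrelated; I would either invoke that as the standing modeling assumption used elsewhere in the paper or keep the cross-covariance terms, noting that they do not change the form of the optimal weights.
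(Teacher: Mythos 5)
Your proposal is correct and follows essentially the same route as the paper's proof: both treat $\tilde f(n)$ as a convex combination of the direct estimate $\hat f(n)$ and the children's aggregate, use unbiasedness of \oue plus the fact that the children's intervals partition $n$'s interval, write the variance as $\lambda_1^2\myvar_{(n)}+\lambda_2^2\myvar_{child(n)}$, and minimize subject to $\lambda_1+\lambda_2=1$, extending to non-leaf children by the same bottom-up recursion. Your version is in fact slightly more careful than the paper's --- you justify the independence via the user-partition step, make explicit that minimality is over affine combinations of the two retained statistics, and avoid the Gaussian approximation the paper invokes but does not actually need.
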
 

Finally, from top to bottom, 
\myahead decomposes the frequency value recursively under the uniform distribution assumption of the node's interval to obtain a complete tree (shown in the rightmost sub-figure of \autoref{AHEAD method}), 
which will be used to answer range queries. 

\subsection{Privacy and Utility Analysis}
\label{Privacy and Utility Analysis}
\mypara{Privacy Guarantee} 
\myahead is sequentially interactive \cite{joseph2019role, joseph2020exponential, duchi2013local, acharya2020interactive, kasiviswanathan2011can}, \ie, each user communicates once (Step 2 in \autoref{Workflow of ahead}) but the randomization depends on earlier user's messages (Step 3 in \autoref{Workflow of ahead}). 
Since the private data of each user is transmitted to the aggregator once via \oue with privacy budget $\epsilon$ (no other information of the users is leaked), 
we claim that \myahead rigorously satisfies $\epsilon$-LDP and 
the proof is deferred to \autoref{AHEAD satisfies LDP} due to the space limitation. 

\mypara{Error Analysis}
The overall error between the true query answer and the estimated answer originates from three sources of errors. 

\emph{Noise and Sampling Errors} originate from the \oue's perturbation and the user sampling processes. 
As shown in \autoref{myoue}, although \oue can get an unbiased estimation of the frequency values, there is still an estimation variance caused by perturbation.
In addition, \myahead divides users into $c$ groups and uses each user group to represent the frequency estimation from the entire population. 
Based on the analysis in \cite{yang2020answering}, 
the sampling error is a constant which is much smaller than the inserted noise. 
Since each user randomly chooses one of the $c$ groups to report private data, the population of each group approximates to $\frac{N}{c}$. 
By \autoref{OUEVAR}, 
the variance of perturbed noise $X$ is proportional to the number of groups $c$, \ie, $\sigma^2 = c \cdot \frac{4e^{\epsilon}}{N\left(e^{\epsilon}-1\right)^{2}}$.
Due to the threshold setting, some fine-grained intervals' frequency values may not be directly estimated. 
For the non-estimated intervals, their frequency values should be calculated from the larger intervals, \ie, the higher-level intervals (such as parent nodes) in the \myahead tree. 
If a non-estimated interval's frequency value is calculated from a larger interval whose size is $k$ times of the non-estimated interval's size, 
\myahead assigns $\frac{1}{k}$ of the large interval's value to the non-estimated interval. 
Thus, the noise error of the non-estimated interval can be viewed as originating from a random variable $\frac{X}{k}$. 

\emph{Non-uniform Error} arises from some intervals whose values are approximated by larger intervals' values in the \myahead tree. 
For a non-estimated interval $n$ whose true frequency value is $f_n$, 
the size of the larger interval is $k$ times that of $n$ ($k$ is the same as that in the noise and sampling errors part). 
During the calculation, it is assumed that the private values in the larger interval satisfy a uniform distribution \cite{qardaji2013differentially}. 
Thus, the assigned frequency value of the interval $n$ is $\frac{f_p}{k}$, where $f_p$ is the true frequency value of the larger interval. 
If the values in interval $p$ satisfy the uniform distribution, there is no non-uniform error, \ie, $f_n = \frac{f_p}{k}$. 
When the values in interval $p$ do not meet the uniform distribution assumption,  
the deviation of the interval $n$'s frequency value is $|f_n-\frac{f_p}{k}|$. 
Thus, the non-uniform error is influenced by the true distribution of the interval $p$. 
If the distribution is closed to the uniform distribution, the non-uniform error becomes small. 
Otherwise, the non-uniform error will increase, and the upper bound of the uniform error depends on the true frequency value $f_p$, \ie $|f_p-\frac{f_p}{k}|$. 
For the entire domain, when the frequencies are more uniformly distributed across nodes, \myahead behaves better owing to smaller non-uniform errors. 

\begin{algorithm}[!t]
    \caption{1-dim \myahead Tree Construction}
    \label{Construct 1-dim prototype tree}
	\begin{algorithmic}[1]
        \Require All users’ value set $V = \{v_1, v_2,\ldots,v_N \}$, attribute domain $D$, tree fanout $B$, privacy budget $\epsilon$, threshold $\theta$
        \Ensure \myahead Tree $T$
		\State $c = {\log_B}|D|$
		\State $//$ Step 1: User partition
        \State Randomly divide users into $c$ parts $\left\{ V_1, V_2, \ldots, V_c \right\}$
        \State Create the root node of tree $T$ with initial interval $e^0_0 = [1, |D|]$ and $T.\operatorname{node}(e^0_0).frequency = 1$
		\For {$i$ from 1 to $c$}
            \State $//$ Step 2: New decomposition generation
            \For {$j$, node in enumerate($T.\operatorname{node}(level=i-1)$)}
            \If {$\operatorname{node.frequency} > \theta$}
            \State Divide $\operatorname{interval} e^{i-1}_j$ into $B$ disjoint intervals $\{e^{i-1}_{j,k}\}$
            \For {$k$ from $1$ to $B$}
    		\State $\operatorname{node.add\_child}(e^{i-1}_{j,k})$
            \EndFor
            \Else
            \State $\operatorname{node.add\_child}(e^{i-1}_{j})$
            \EndIf    		
            \EndFor

            \State $//$ Step 3: Noisy frequency construction
            \State $F = \fo(V_i, T.\operatorname{node}(level=i).\operatorname{interval}, \epsilon)$
            \For {$k$, node in enumerate $T.\operatorname{node}(level=i)$}
    		\State $node.\operatorname{frequency} = F[k]$
            \EndFor
    
        \EndFor
        \State $//$ Step 4: Post-processing
        \State Run \autoref{alg:1-dim post_processing}
        \State Return $T$
	\end{algorithmic}
\end{algorithm}
\begin{algorithm}[!t]
	\caption{Post-processing}
	\label{alg:1-dim post_processing}
	\begin{algorithmic}[1]
        \Require \myahead tree $T$, tree fanout $B$
        \Ensure \myahead tree $T$
        \For {$i$ from $1$ to $c$}
		\State norm\_sub($T.\operatorname{node(level=i)}.\operatorname{frequency}$)
        \EndFor    
        \For {$j$ from $c-1$ to $1$}
            \For {\_, node in enumerate $T.\operatorname{node}(level=j)$}
            \State $f_1$ = node.frequency, $f_2$ = $\sum{\operatorname{node.children().frequency}}$
		    \State node.frequency = $\lambda_1$$f_1$ + $\lambda_2$$f_2$
            \EndFor 
        \EndFor    
        \For {$k$ from $1$ to $c$}
            \For {\_, node in enumerate $T.\operatorname{node}(level=k)$}
                    \If {node.children() == None}:
		            \State $\operatorname{node}.\operatorname{add\_children}()$
		            \State $\operatorname{node}.\operatorname{children()}.\operatorname{frequency} = \operatorname{node}.\operatorname{frequency}/B$
                    \EndIf
            \EndFor 
        \EndFor    
    \end{algorithmic}
\end{algorithm}

\subsection{Selection of $B$ and $\theta$}
\label{Selection parameters}
The most important parameters of \myahead are tree fanout $B$ and threshold $\theta$. 
Since \myahead has already rigorously satisfied LDP guarantees (recall \autoref{satisfy LDP} in \autoref{AHEAD satisfies LDP}), we aim to explore the settings of $B$ and $\theta$ so that the overall utility performance of \myahead can be maximized. 
Due to the partition strategy mentioned above and a large scale of users in actual scenarios ($N > 10^5$), 
we assume that each group has an equal number of users. 
Recalling the error analysis in \autoref{Privacy and Utility Analysis}, we focus on the noise error and non-uniform error, which dominate the overall estimation error. 

\mypara{Choosing $\theta$}
Intuitively, our goal in selecting the parameters of \myahead is to balance the two errors, so that \myahead can achieve an outstanding performance. 
    For a set of parameters, \ie, tree fanout $B$, privacy budget $\epsilon$, user scale $N$ and the number of groups $c$, the decomposition threshold $\theta$ setting follows the formula below. 
    \begin{equation}
        \theta = \sqrt{(B+1)\myvar}, 
        \label{theta setting1}
    \end{equation}    
    where \myvar is equal to $\frac{4e^{\epsilon}c}{N\left(e^{\epsilon}-1\right)^{2}}$, \ie, the variance of each estimated frequency value. 

The analysis to support \autoref{theta setting1} is as follows. 
Recalling the new decomposition generation step in \autoref{Workflow of ahead}, 
\myahead divides each interval separately by comparing the estimated value with the threshold. 
Therefore, our analysis can focus on one of the interval nodes of the \myahead tree. 
Suppose we have a node $n$ with a true frequency value $f$ and use $f_1$, $f_2$, $\cdots$, $f_B$ to denote the true frequency values of its children. 
Without loss of generality, one of $n$'s children frequency value is $\eta f$ and the sum of others is $(1-\eta)f$, where $\eta \in[0, 1]$. 
The parameter $\eta$ is determined by the distribution of the users' data. 
When the distribution is far away from the uniform distribution, 
$\eta$ closes to 0 or 1, which is the boundary of its value domain. 
Otherwise, $\eta$ closes to $\frac{1}{B}$. 
Here, we consider two different strategies mentioned in \autoref{Motivation and Overview} to estimate the frequency values of $n$'s children. 
Firstly, we use the baseline strategy used in \myhio to obtain their frequency values and the expected overall estimation error can be calculated below. 
\begin{align}
    \mathbb{E}\left[\myErr_1\right] & = \mathbb{E}\left[(\hat{f_1} - f_1)^2 + (\hat{f_2} - f_2)^2 + \cdots + (\hat{f}_B - f_B)^2\right]        \nonumber \\
        & = \mathbb{E}\left[(f_1 + X_1 - f_1)^2 +  \cdots + (f_B + X_B - f_B)^2\right] \nonumber \\
        & = \mathbb{E}\left[B \cdot X^2\right] = B\mathbb{E}\left[X^2\right] = B\myvar
    \label{method_1}
\end{align}    

In the derivation of \autoref{method_1}, \oue is conducted once to obtain users’ data distribution over $B$ intervals in each layer. 
$X_i$ represents the perturbed noise added to the frequency of the $i$-th sub-interval. 
For each layer, the number of users in \autoref{OUEVAR} is the same for all intervals, \ie, $\mathbb{E}[X_i^2] = O(c/N)$ for any $i$.
Leveraging the adaptive strategy used in \myahead, we estimate the frequency value of node $n$ and assign the average as the child nodes' values. Then, we obtain the expected estimation error. 
\begin{small}
    \begin{align}
        \mathbb{E}\left[\myErr_2\right] &= \mathbb{E}\left[ (\frac{\hat{f}}{B} - f_1)^2 + (\frac{\hat{f}}{B} - f_2)^2 + \cdots + (\frac{\hat{f}}{B} - f_B)^2 \right]  \nonumber \\
                    & = \mathbb{E}\left[ (\frac{f+X_1}{B} - \eta f)^2 + \cdots + (\frac{f+X_B}{B} - f_B)^2 \right] \label{method_2_1} \\
                    & = \mathbb{E}\left[ \eta^2 f^2 + \sum_{i=2}^B f_i^2 - \frac{f^2}{B}\right] + \mathbb{E}\left[\frac{X^2}{B} \right]  \nonumber \\
                    & \leq \eta^2 f^2 + (1-\eta)^2f^2 - \frac{f^2}{B} + \frac{1}{B}\myvar \label{method_2_6} \\
                    & \leq \frac{1}{B}((B-1)f^2+\myvar)
        \label{method_2}
    \end{align}    
\end{small}
In the derivation of \autoref{method_2}, the error of each child node contains noise error and non-uniform error. 
For instance, the squared error of the first child in \autoref{method_2_1} can be rewritten as $\left(\frac{X_{1}}{B}+\left(\frac{f}{B}-f_{1}\right)\right)^{2}$. 

We let \autoref{method_2} $<$ \autoref{method_1} to ensure that the adaptive strategy has a lower overall estimation error than that of the baseline strategy. Then we get the inequality as follows. 
\begin{align}
    f < \sqrt{(B+1)\myvar}
    \label{theta setting 2}
\end{align}    
We further calculate the frequency values of the nodes on node $n$'s subtree and the adaptive strategy always outranks the baseline strategy. 
Based on the analysis for \autoref{theta setting 2}, 
if the frequency value of a node meets \autoref{theta setting 2}, then it cannot be divided. 
Otherwise, it needs to be further divided. 
Therefore, we set the threshold $\theta=\sqrt{(B+1)\myvar}$ to guarantee that \myahead does not divide the nodes with small frequency values and reduces the estimation error than the baseline strategy. 

It is worth noting that $\theta$ can be less optimal after the post-processing step (step 4 in \autoref{Workflow of ahead}). However, post-processing is correlated with the tree structure, which is unknown when choosing $\theta$. Therefore, post-processing is hard to be incorporated in the theoretical analysis of $\theta$. Our current analysis of $\theta$ is independent of the tree structure (and the input data as well), making the derived $\theta$ generally applicable to any input data.
In the practical implementation of \myahead, we only have access to the estimated frequency value $\hat{f}$, \ie, the true frequency value $f$ with a random noise variable $X$. 
Since \oue is an unbiased protocol, the expected value of $\hat{f}$ is equal to $f$. 
Thus, we still use $\theta$ as the threshold in \myahead and provide a comprehensive validation of threshold choice in \autoref{Validation of Threshold Choice}. 

\mypara{Choosing $B$}
In general, $B$ is used to balance the tree height and the number of nodes required to answer the query. 
Previous studies \cite{wang2019answering, cormode2019answering} select the optimal fanout $B$ around $4$ when only considering noise and sampling errors. 
Different from \cite{wang2019answering, cormode2019answering}, \myahead introduces non-uniform error in the process of merging intervals. 
Compared to the $B$ choice of previous studies, we set $B=2$ considering non-uniformity and provide the analysis in the following. 

For a node $n$ with a true frequency $f$ ($f<\theta$), \myahead does not further decompose the interval of node $n$ due to the threshold setting, where the children of node $n$ can not directly get estimation frequency values in noisy frequency construction (Step 2 in \autoref{Workflow of ahead}). 
From \autoref{OUEVAR}, 
the variance of the perturbed noise $X$ on node $n$ is proportional to the number of groups $c$, \ie, $\sigma^2 = c \cdot \frac{4e^{\epsilon}}{N\left(e^{\epsilon}-1\right)^{2}}$. 
To obtain a complete tree for answering queries, 
\myahead assigns $\frac{1}{B}$ of node $n$'s estimation frequency value to its child nodes in post-processing (Step 4 in \autoref{Workflow of ahead}). 
Then, we can obtain the noise error of node $n$'s child as $\frac{X}{B}$. 
Since we have no prior knowledge of the data distribution, 
for non-uniform error, we consider the worst case, \ie, the non-uniform error is $f-\frac{f}{B}$. 
Considering $f$ should be close to the threshold, the expected estimation error can be expressed as 
\begin{small}
    \begin{align}
        \mathbb{E}\left[\myErr_3\right] &= \mathbb{E}\left[(\frac{X}{B}+(f-\frac{f}{B}))^2\right] \nonumber \\
        &=\frac{c \sigma ^2}{B} + (f-\frac{f}{B})^2 \nonumber \\ 
        &=\frac{c \sigma ^2}{B} + (\frac{B-1}{B})^2 (B+1) c \sigma^2 \nonumber \\ 
        &=(\sigma^2\ln{|D|})\frac{B+(B-1)^2(B+1)}{B^2\ln{B}},
        \label{B setting}
    \end{align}    
\end{small}where $|D|$ is the domain size of attribute, $\epsilon$ is the privacy budget, 
$N$ is the user scale and $c$ is the number of user groups. 
Letting the derivative of \autoref{B setting} to 0, we get $B=0.6$ and $B=2.2$. 
Since the tree fanout $B$ is an integer greater than 1 and the value of \autoref{B setting} is smaller at $B=2$ than that at $B=3$, 
we select $B = 2$ for \myahead and compare the query error with $B = 4$ \cite{wang2019answering, cormode2019answering}. 
We empirically validate the effectiveness of this parameter setting in \autoref{Evaluation}. 

\begin{figure*}[t]
    \centering
    \includegraphics[width=\hsize]{./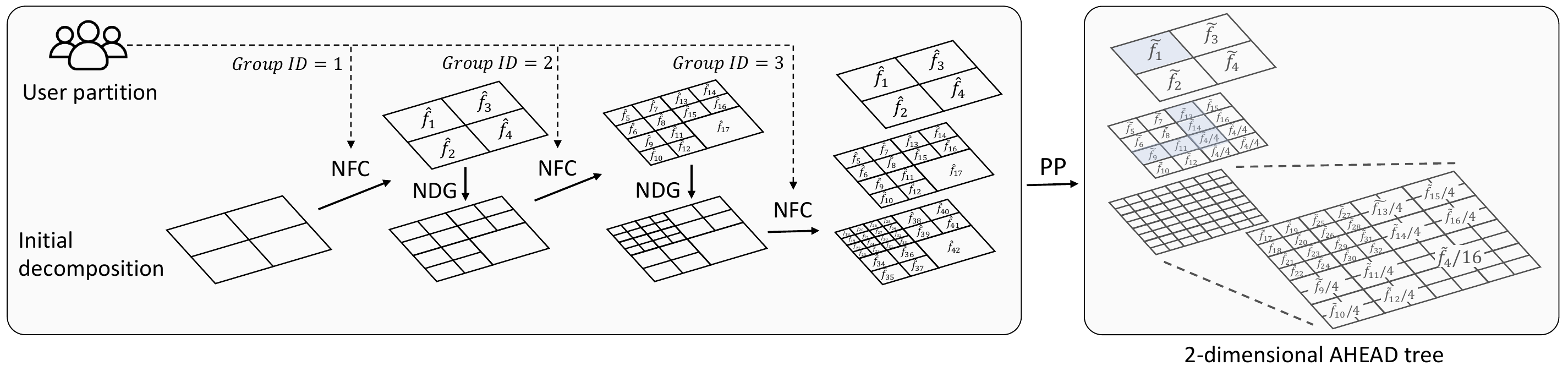}
    \vspace{-0.7cm}
    \caption{2-dim \myahead algorithm, 
    where the decomposition is implemented by decomposing both dimensions simultaneously.}
    \vspace{-0.4cm}
    \label{Multi-dimensional tree}
\end{figure*}

\subsection{Extension to Multi-dimensional Settings}
\label{Extension to Multi-dimensional Settings}
\mypara{2-dim Range Query}
Let us first look at the 2-dim scenario. 
Without loss of generality, we assume that all attributes have the same domain $D = \{1, 2, \cdots, d\}$, where $d$ is a power of the fanout $B$ (if not in real setting, we can simply add some dummy values to achieve it). 
The main difference between 1-dim and 2-dim \myahead is the decomposition process (lines 4, 9 of \autoref{Construct 1-dim prototype tree}). 
For 1-dim scenarios, \myahead hierarchically divides the entire domain, which is an interval $[1, |D|]$, into different sub-intervals with varying granularity. 
While for 2-dim scenarios, the entire domain becomes a square area $[1, |D|] \times [1, |D|]$. \myahead generates different granularity 2-dim grids to decompose the entire domain. 
The pseudo-code of 2-dim \myahead can be found in \autoref{The pseudo-code of 2-dim AHEAD}. 

An example of 2-dim \myahead is shown in \autoref{Multi-dimensional tree}. 
The domain size of the user private attribute is $8\times8$ and the tree fanout $B=4$. 
Similar to 1-dim scenarios, 2-dim \myahead also contains four steps to capture the users' data distribution. 

\begin{itemize}[leftmargin=*]
\item \mypara{Step 1: User Partition (UP)}
The users randomly choose the group number in range $[1, 2, 3, \cdots, c]$, where $c = \log_B|D|^2$. 

\item \mypara{Step 2: Noisy Frequency Construction (NFC)}
Then, the aggregator divides the entire domain into $B$ equal-size square areas and sends the initial decomposition of the domain to the first group of users. The users project their private data into the initial decomposition and reports their data through \oue. The server uses the aggregation algorithm to obtain the estimated frequency distribution, which represents the ratio of users falling within each sub-domain.

\item \mypara{Step 3: New Decomposition Generation (NDG)}
After that, the aggregator compares each frequency value with a threshold $\theta$ and decides whether to divide the corresponding sub-domain further. 
Repeating the NFC and the NDG processes, \myahead recursively decomposes the domain and constructs the \myahead prototype tree.

\item \mypara{Step 4: Post-processing (PP)}
Finally, \myahead conducts the non-negativity process within each layer and the weighted averaging process between two adjacent layers to further reduce the estimated error. Based on uniform distribution assumption, \myahead obtains a complete tree to answer queries. 
\end{itemize}
As shown in the right part of \autoref{Multi-dimensional tree}, 
In order to reduce the query error, which increases with the number of nodes used, \myahead prefers to use coarse-grained nodes to answer a query. 
For example, \myahead answers a 2-dim query $[0,5]\times[0,5]$. 
\myahead searches from the top to the bottom layer of the tree and calculates the sum of the fully covered sub-domains' frequencies. 
The query completely covers the area of $[0,3]\times[0,3]$ of the top layer and five areas $[0,1]\times[4,5]$, $[2,3]\times[4,5]$, $[4,5]\times[4,5]$, $[4,5]\times[0,1]$, $[4,5]\times[2,3]$ of the middle layer (these areas are highlighted in blue), thus the query answer is $(\tilde{f}_1 + \tilde{f}_9 + \tilde{f}_{11} + \tilde{f}_{13} + \tilde{f}_{14} + \tilde{f}_{4}/4)$.

For the 2-dim range query, we set $B = 2^2$, \ie, the $B$ of each dimension is $2$. 
As a comparison, we also provide the results of $B = 4^2$ in the experiment. Since the derivation of $\theta$ does not involve dimension changes, we still select $\theta$ according to \autoref{theta setting1}. 

\mypara{High-dimensional Range Query}
\myahead can be extended to higher dimensions in two ways. 
\begin{itemize}[leftmargin=*]
\item Direct Estimation (\de). Based on a tree with fanout $B = 2^m$, \myahead decomposes $m$ dimensions simultaneously. 
For instance, \myahead treats the 3-dim domain as a cube, and then aggregates the frequencies of sub-cubes with different granularities to answer queries. 
With the threshold setting of \autoref{theta setting1}, \myahead can well control the overall estimation errors of sub-domains. 
However, the number of leaf nodes increases exponentially with dimension, which makes the query answer process extremely time-consuming on high-dimensional datasets. 

\item Leveraging Low-dimensional Estimation (\lle). 
To solve the sub-domain explosion caused by the rise of data dimension, 
\lle combines the attributes in pairs, and then constructs a 2-dim \myahead tree for each attribute pair. 
When answering an $m$-dim query, \lle constructs a query set with the associated $2^m$ queries (recall Algorithm \ref{Estimating Answer of m-dimensional Range Query} in \autoref{Extending AHEAD to high-dimensional range query}). 
Then, taking the 2-dim frequency as constraints, 
\lle estimates the frequency values of all the $2^m$ queries by the maximum entropy optimization. For self-containment, we include its description and pseudo-code in \autoref{Extending AHEAD to high-dimensional range query}. 
Previous methods also used the maximum entropy optimization to effectively extend the low-dimensional mechanism to high-dimensional scenarios. 
\PriView \cite{qardaji2014priview} proposes to construct low-dimensional views, \ie, 2, 3-dim marginal tables, and apply maximum entropy optimization to reconstruct higher-way marginals from views. 
In this way, \PriView constructs marginal tables for $m$-dim data with high data utility while satisfying DP. 
\mycalm \cite{zhang2018calm} migrates the idea of \PriView to LDP, which also achieves high accuracy for the problem of marginal release under LDP. 
\myHDG \cite{yang2020answering} groups all attributes in pairs and estimates the frequency distribution of user data on each attribute pair (2-dim grid). 
Then, \myHDG obtains the answer of a higher dimensional range query from the answers of the associated 2-dim range queries. 
\end{itemize}

The implementation of the \de method is relatively simple, and the number of user partitions will not increase with the increase of the dimension. 
However, the \myahead tree with \de might be very large in high-dimensional scenarios, 
making the tree construction and query answering process time-consuming. 
Compared to \de, the \lle method combines the attributes in pairs, and then constructs a 2-dim \myahead tree for each attribute pair, which makes the scale of each tree not too large. 
We empirically show the performance of these two strategies and discuss how to choose between them in \autoref{Effectiveness of ahead for high-dimensional Range Query}. 

\subsection{Discussion}

\myahead is similar to \PrivTree \cite{zhang2016privtree}, \ie, a general approach for hierarchical decomposition on private data, where \PrivTree also 1) generates tree $T$ by recursively splitting a root node $n_0$ whose sub-domain covers the entire data domain $D$, and 2) decides whether a node $n$ should be decomposed based on a noisy frequency of $n$. 
However, \myahead differs from \PrivTree in several important aspects. 
It is worth noting that these differences are mainly due to the fact that these two methods work under different privacy requirements. 
That is, \PrivTree works in a centralized setting of differential privacy, while \myahead works in a local setting. 
\begin{itemize}[leftmargin=*]
\item \PrivTree can directly access all the information in the server, where \PrivTree operates on the dataset, adds noise, and then derives the answers. \myahead only has access to the noisy data uploaded by users, and then aggregates the reports to answer queries. 
\item In \PrivTree, each frequency is estimated using the information of all users, with a split privacy budget. While in the local setting, partitioning users into groups and estimating the frequency with an entire privacy budget can obtain a higher data utility \cite{zhang2018calm, cormode2019answering, wang2019answering}. Therefore, in \myahead, each frequency is estimated by only a subset of users, with an entire privacy budget.  
\item Based on the above two differences, besides \emph{noise errors} in \PrivTree, \myahead further considers the \emph{sampling} and \emph{non-uniform errors}. 
\item In \PrivTree, the tree fanout $B$ is not considered in error analysis. In comparison, \myahead derives the $B$ setting through the analysis of \emph{noise errors} and \emph{non-uniform errors}. 
\item For high-dimensional scenes, \PrivTree adopts direct estimation (\de) to extend low-dimensional strategies, while \myahead leverages a more efficient way, \ie, leveraging low-dimensional estimation (\lle). 
\end{itemize}

\section{Evaluation}
\label{Evaluation}
To validate the effectiveness of \myahead, we evaluate its performance on multiple real-world datasets and compare \myahead with the state-of-the-art methods such as \myhio \cite{wang2019answering} (1-dim query), \mydht \cite{cormode2019answering} (1-dim query), \mycalm \cite{zhang2018calm} (1, 2-dim query) and \myHDG \cite{yang2020answering} (2-dim and high-dimensional query) in balancing utility and privacy. 
We also provide a detailed complexity analysis of the algorithms used in our evaluation in \autoref{Complexity Analysis}. 
\subsection{Experimental Setup}
\begin{table}[!t]
    \centering
    \caption{Summary of datasets.}
    \vspace{-0.3cm}
    \setlength{\tabcolsep}{0.8mm}{
        \resizebox{0.40\textwidth}{!}{
          \begin{tabular}{c c c c c}
            \toprule
            \bf{Dataset}  & \bf{Distribution} & \bf{Scale}  & \bf{Field} & \bf{Type}    \\  
            \midrule
            Salaries      &      --         & 148,654       &   employee salary & real \\   
            \rowcolor{mygray}
            BlackFriday   &      --         & 537,577       &   shopping    & real \\
            Loan          &      --         & 2,260,668    &   online loan  & real \\             
            \rowcolor{mygray} 
            Financial    &      --         & 6,362,620    & fraud detection   &  synthetic\\
            Cauchy       &      Cauchy     &  --          & --              &  synthetic               \\    
            \rowcolor{mygray} 
            Zipf         &      Zipf (power-law)     &  --          &  --   &  synthetic    \\
            Gaussian     &      Gaussian  & --   &  --      &  synthetic \\
            \rowcolor{mygray} 
            Laplacian      &    Laplacian   & --  &  --    &  synthetic\\  
          \bottomrule
          \end{tabular}}}
          \vspace{-0.4cm}
          \label{realDatasets}
\end{table}
\mypara{Environment} 
All our evaluations are conducted on a PC with Intel Xeon Platinum 8269@2.5GHz and 32GB memory. 

\mypara{Datasets}
In our experiments, we use 3 real-world datasets and 5 synthetic datasets to evaluate the performance of \myahead, 
and \autoref{realDatasets} provides an overview of all datasets. 
The detailed information about the four datasets, \ie, Salaries, BlackFriday, Loan and Financial, is demonstrated in \autoref{Dataset Description}. 

We generate 1-dim datasets by sampling data from the \Cauchy ($x_0 = 0$, $\gamma = 1$), \Zipf ($\alpha = 1.1$), \Gaussian ($\mu = 0$, $\sigma^2 = 1$) and  \Laplacian ($\mu = 0$, $\lambda^2 = 1/2$) distribution respectively, as prior works did \cite{ye2019privkv, cormode2019answering, yang2020answering}. 
The multi-dimensional datasets are synthesized from multivariate \Gaussian and \Laplacian distribution with mean $0$, standard deviation $1$ \cite{yang2020answering}. 

\mypara{Metrics}
To quantify the performance of \myahead, 
we use the MSE (mean square error) widely used in literature \cite{ye2019privkv, cormode2019answering, wang2019consistent} to measure the deviation between estimated and true values. 
For each experimental setting, we compute the MSE of 200 query results, and then compute the mean and std among 20 repetitions. 
In addition, we also provide a 95\% confidence interval to reflect the deviations between MSEs. 

\mypara{Competitors}
For a fair comparison, the \myhio \cite{wang2019answering}, \mydht \cite{cormode2019answering}, \mycalm \cite{zhang2018calm} and \myHDG \cite{yang2020answering} methods are applied with the same parameter settings as in the original papers. 
We also plot the MSE of the Uniform method (denoted as \myuni) in 1, 2-dim scenes, which always obtains the query answer from a uniform distribution. Clearly, if the performance of one method is worse than \myuni, the query answer from that method is meaningless. For high-dimensional range query, we take \myHDG as the baseline method. 

\subsection{Evaluation for 1-dim Range Query}
\label{Evaluation for 1-dim Range Query}
\begin{figure*}[ht]
    \centering
    \subfigure[Loan, $|D|=256$, vary $\epsilon$]{
    \includegraphics[width=0.23\hsize]{./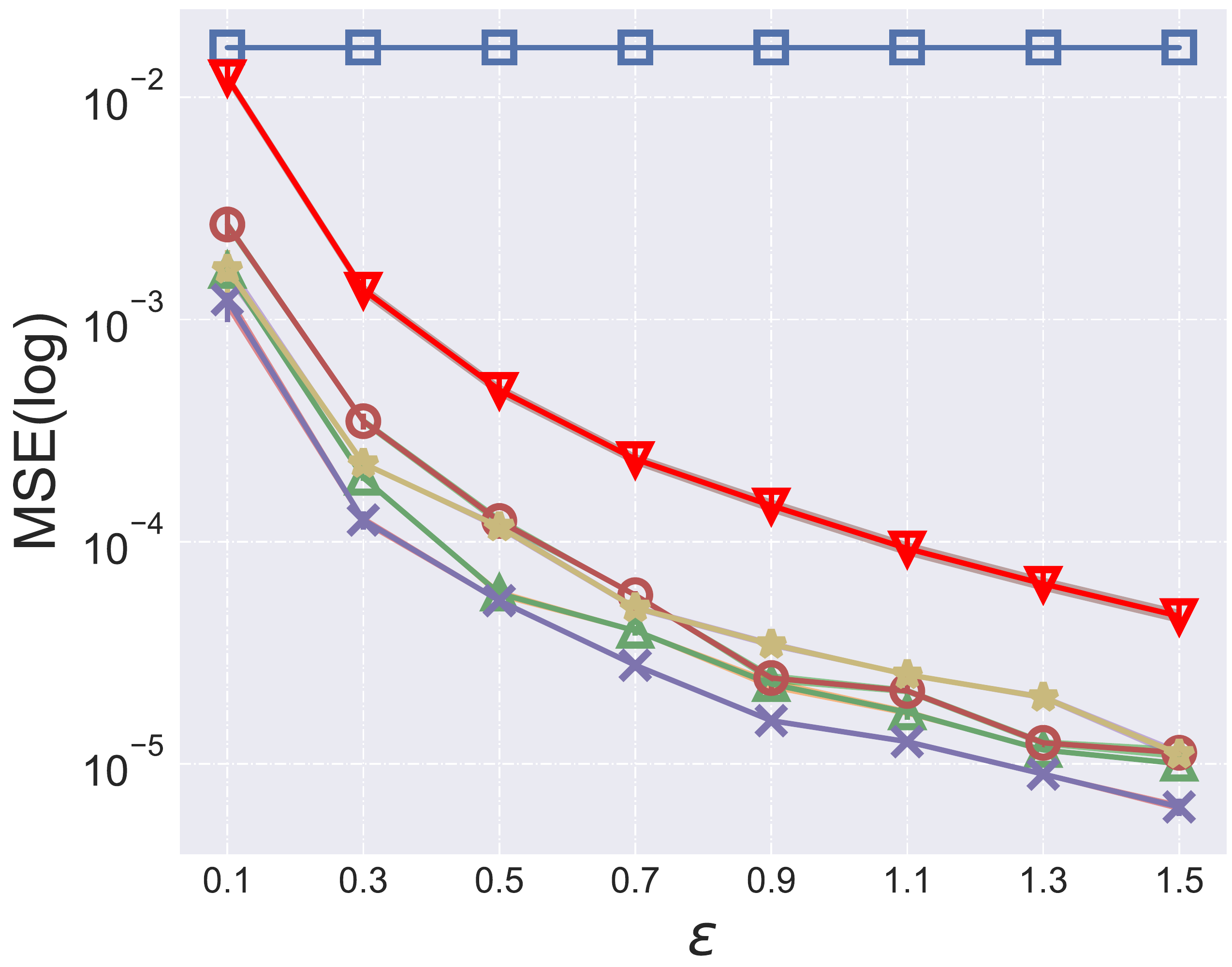}
    }
    \subfigure[Financial, $|D|=512$,  vary $\epsilon$]{
    \includegraphics[width=0.23\hsize]{./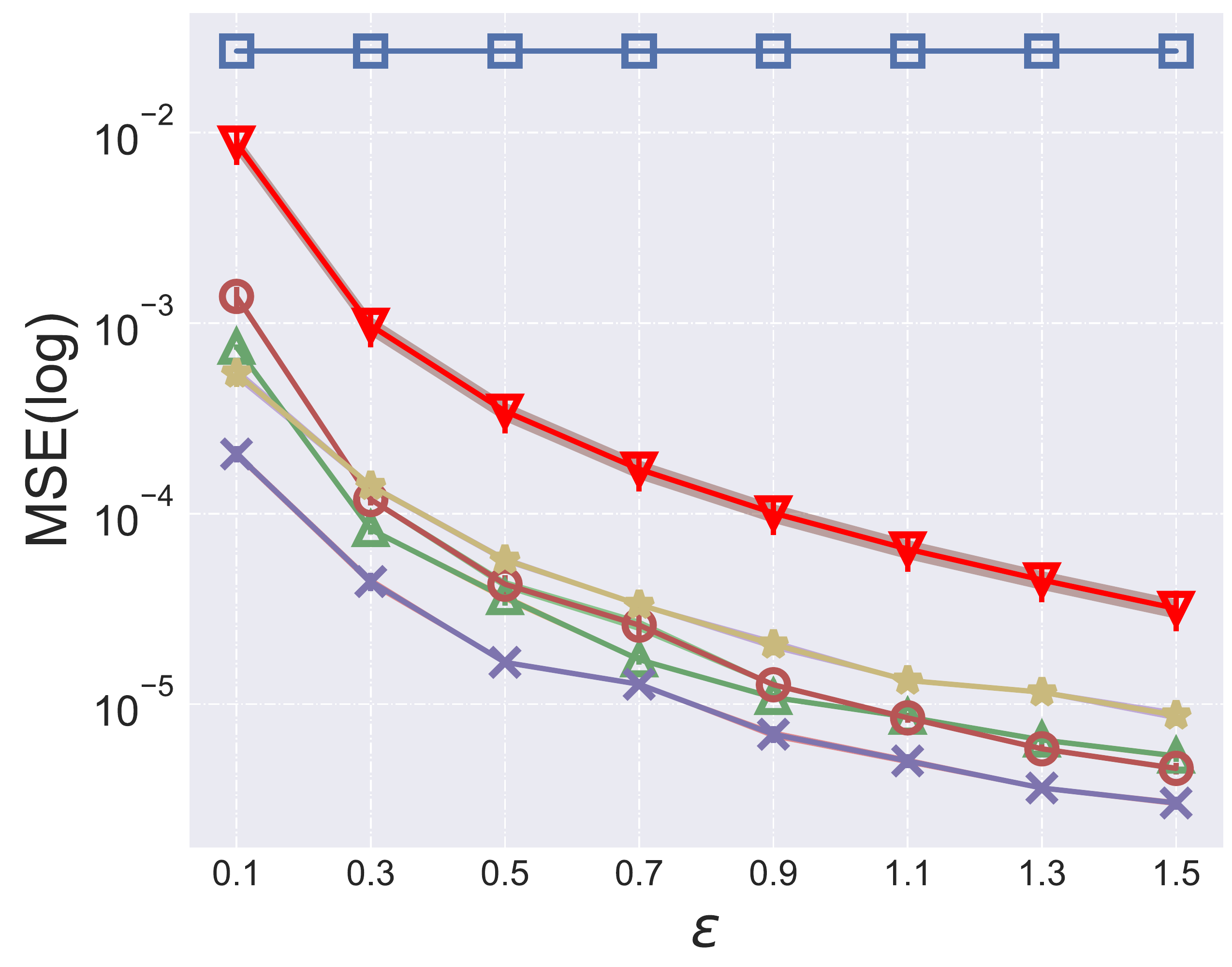}
    }
    \subfigure[BlackFriday, $|D|=1028$,  vary $\epsilon$]{
    \includegraphics[width=0.23\hsize]{./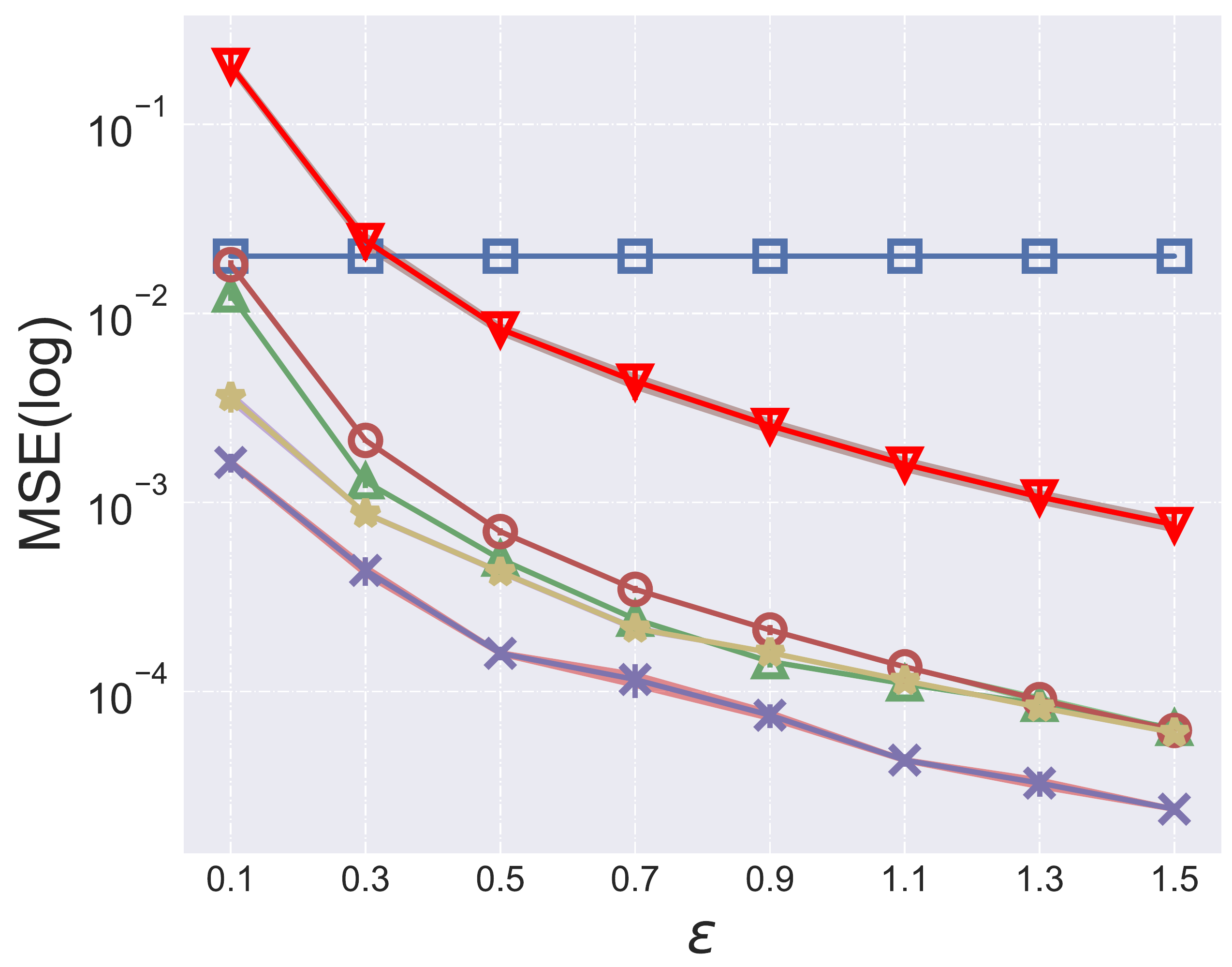}
    }
    \subfigure[Salaries, $|D|=2048$,  vary $\epsilon$]{
    \label{overall compare Salaries}
    \includegraphics[width=0.23\hsize]{./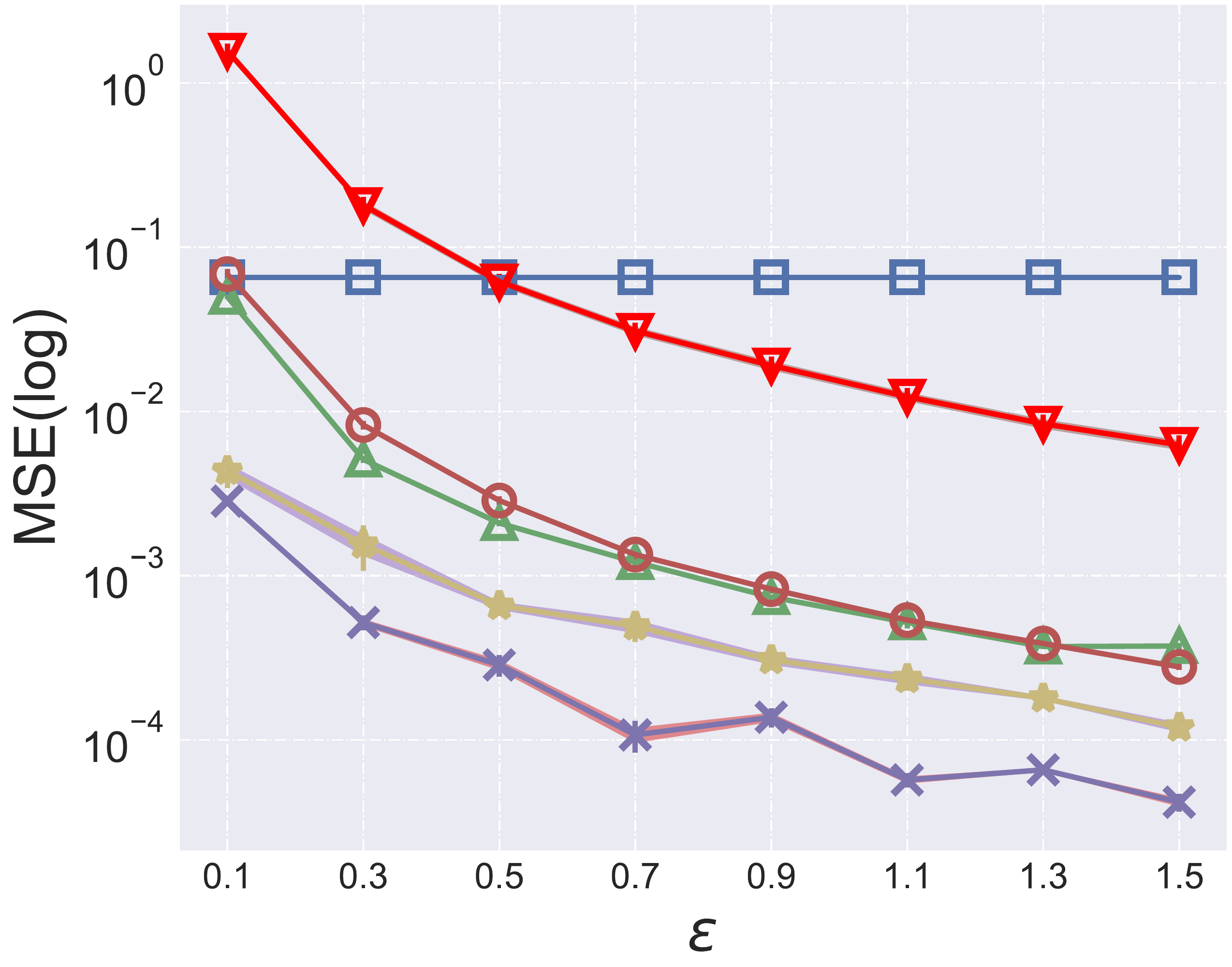}
    }
    \\[-1ex]
    \subfigure{
    \includegraphics[width=0.65\hsize]{./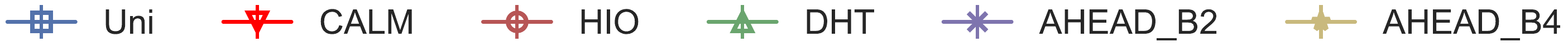} 
    }
    \vspace{-0.6cm}
    \caption{The MSE of different methods with privacy budget varying from 0.1 to 1.5. The results are shown in log scale. }
    \vspace{-0.4cm}
    \label{overall compare}
\end{figure*}

We evaluate the effectiveness of \myahead under various values of privacy budgets. 
Specifically, we consider privacy budget $\epsilon$ varying from 0.1 (representing high privacy protection) to 1.5 (representing low privacy protection) \cite{zhang2018calm, cormode2019answering}. 
\autoref{overall compare} illustrates the MSE comparison between \myahead and existing algorithms \cite{cormode2019answering, wang2019answering, zhang2018calm} for 1-dim queries. 
For each plot, we vary the privacy budget $\epsilon$ on the X-axis and show the corresponding MSE on the Y-axis. 
Each histogram in the plots shows the average MSE of 20 repeated experiments with the error bar representing the standard deviation. 

From \autoref{overall compare}, we have the following observations.  
1) The MSE of \myahead is smaller than its counterparts throughout the experiment datasets. 
Recalling \autoref{theta setting 2} in \autoref{Selection parameters}, 
for the nodes with frequency values less than the threshold, 
\myahead with the adaptive strategy has a smaller overall estimation error than the counterparts that employ the baseline strategy. 
2) \myahead obtains different performance over various datasets. 
For the Salaries dataset, 
\myahead achieves significant advantages over previous methods across the entire range of privacy budget, 
where the MSE of \myahead is almost one order of magnitude smaller than state-of-the-art methods. 
For the Loan dataset, \myahead is slightly better than \mydht. 
We speculate that this is mainly because these two datasets have quite different distributions, such as more sub-domains with small frequency values (less than the threshold) on the Salaries dataset compared to the Loan dataset, which causes \myahead to behave differently on these two datasets. 

In addition, 
the performance of \myahead varies under different datasets, 
which have various user scales, attribute domain sizes and data distributions. 
Therefore, we will conduct a comprehensive experiment in \autoref{comprehensive experiment} to further analyze the performance of \myahead under different experimental parameters and conclude important observations for its practical adoption. 

\subsection{Evaluation for 2-dim Range Query}

\begin{figure*}[h!]
    \centering
    \subfigure[2-dim \Laplacian, $|D|=256^2$, vary $\epsilon$]{\label{fig:Two_Laplace08-Set_10_7-Domain_8_8}\includegraphics[width=0.24\hsize]{./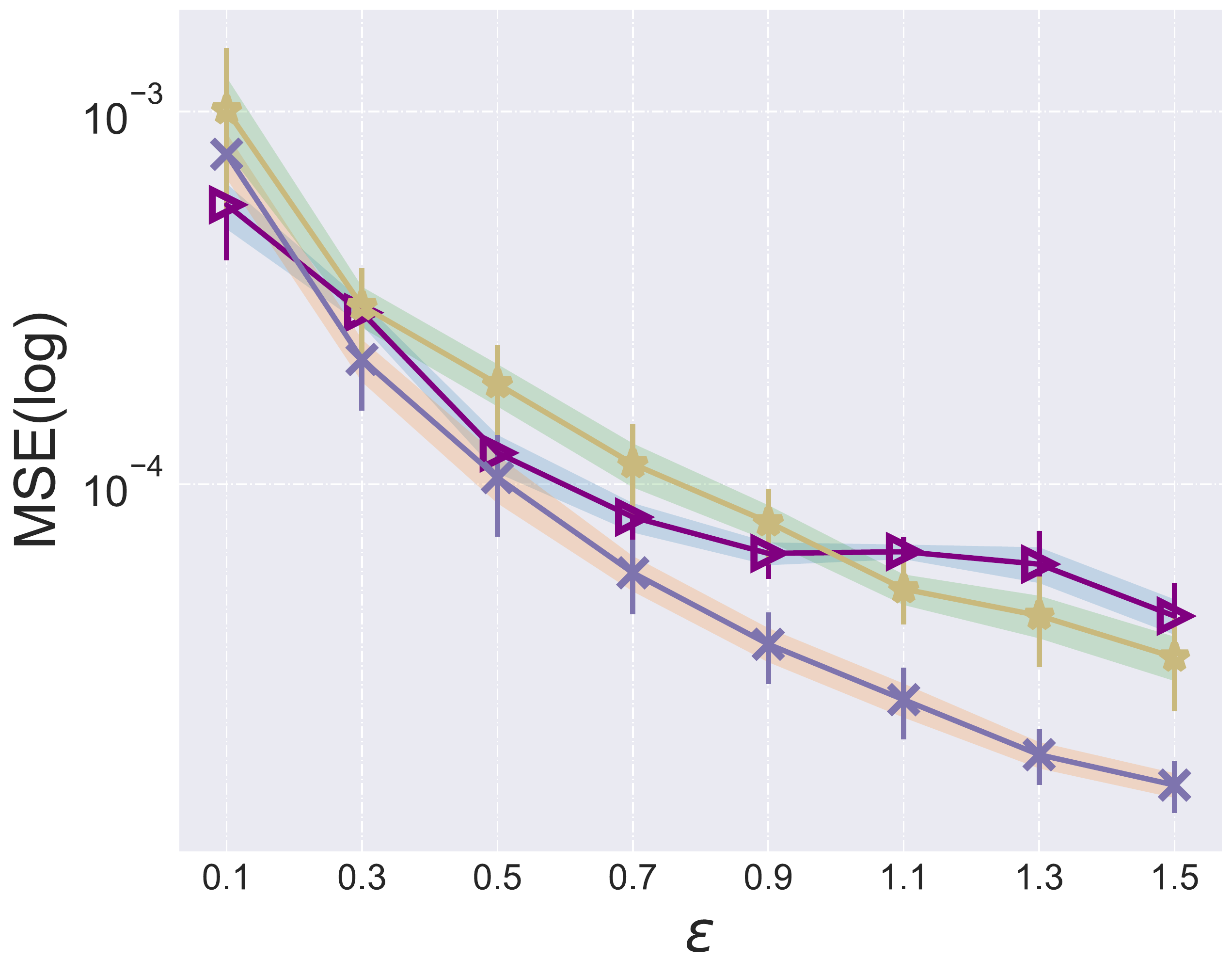}}
    \subfigure[2-dim \Laplacian, $|D|=1024^2$, vary $\epsilon$]{\label{fig:Two_Laplace08-Set_10_7-Domain_10_10}\includegraphics[width=0.24\hsize]{./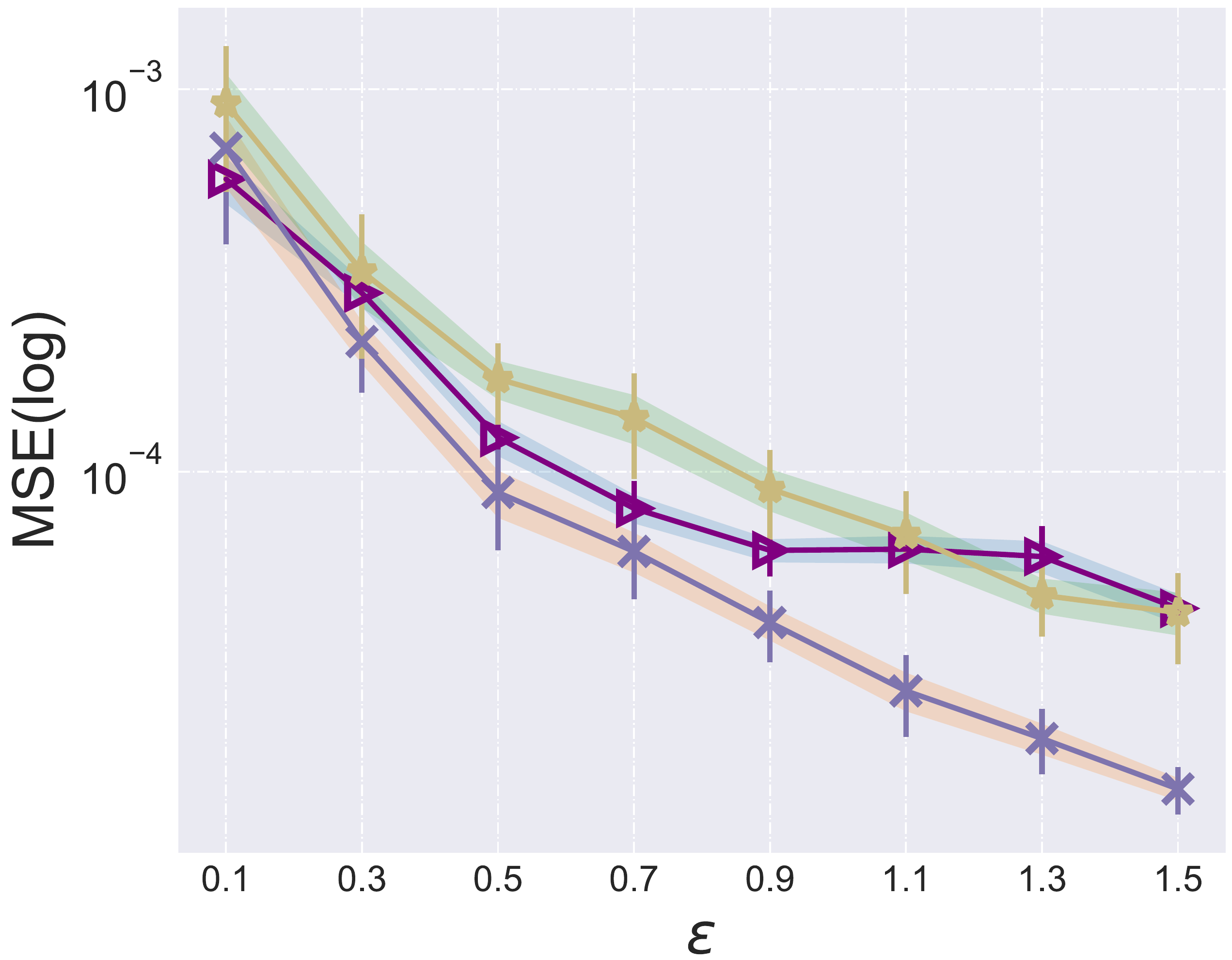}}
    \subfigure[2-dim \Gaussian, $|D|=256^2$, vary $\epsilon$]{\label{fig:Two_Normal08-Set_10_7-Domain_8_8}\includegraphics[width=0.24\hsize]{./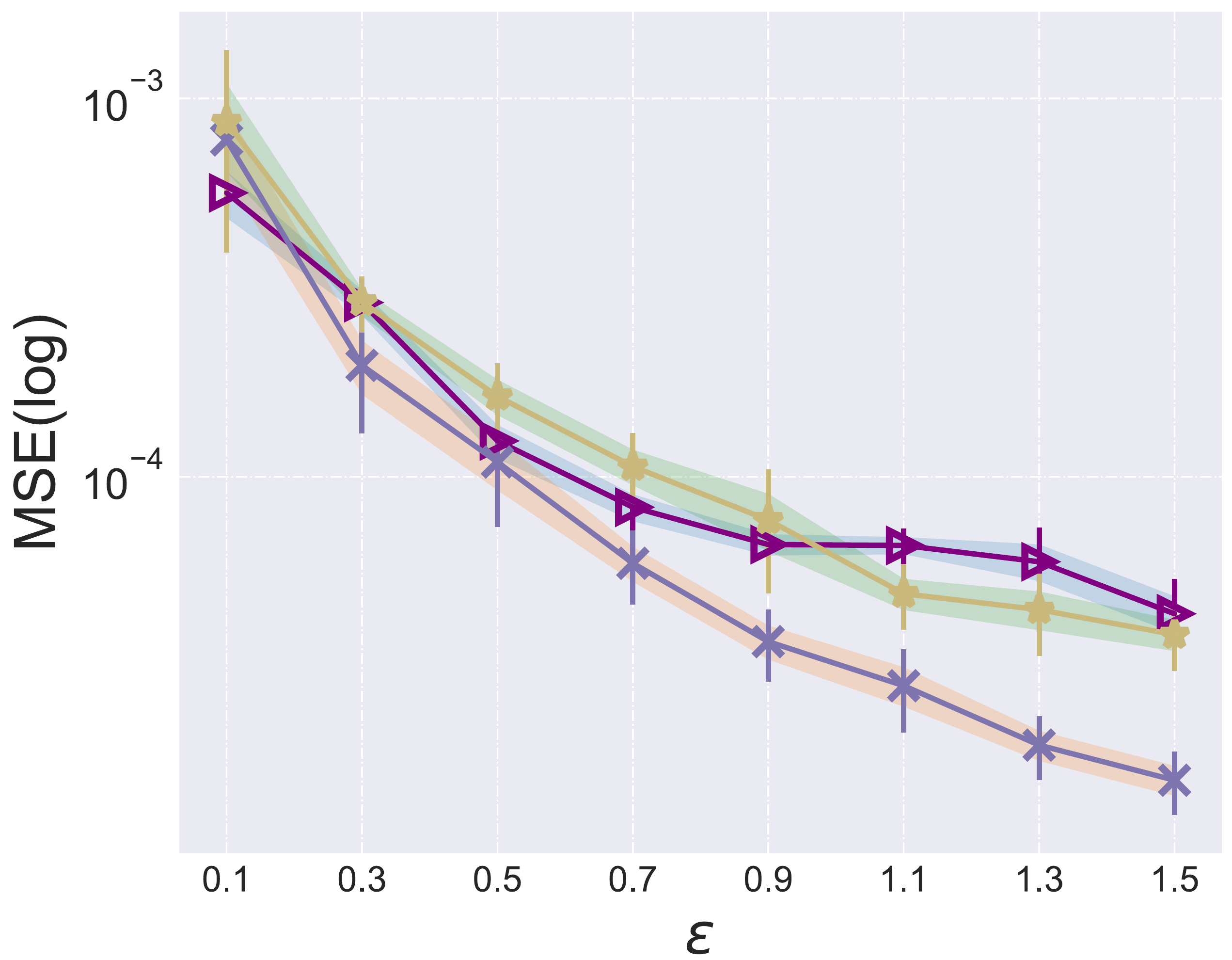}}
    \subfigure[2-dim \Gaussian, $|D|=1024^2$, vary $\epsilon$]{\label{fig:Two_Normal08-Set_10_7-Domain_10_10}\includegraphics[width=0.24\hsize]{./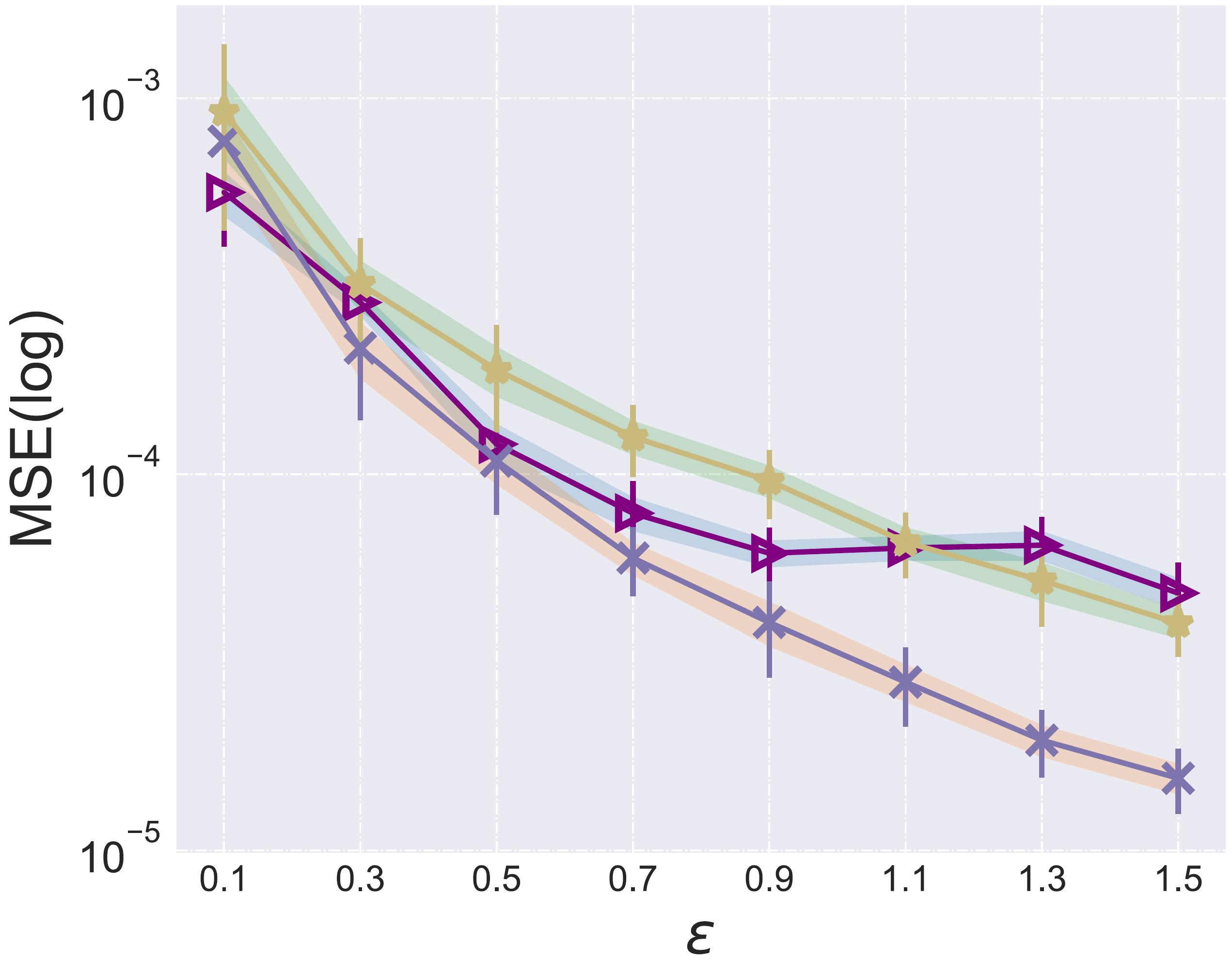}}\\[-1.8ex]
    \subfigure{\includegraphics[width=0.4\textwidth]{./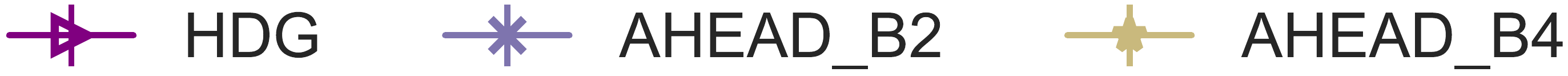}} 
    \vspace{-0.6cm}
    \caption{Comparison of different methods on 2-dim \Laplacian and \Gaussian datasets under various privacy budgets. We only plot the methods that are scalable in each setting. \myHDG is a baseline method. The results are shown in log scale.}
    \vspace{-0.5cm}
    \label{2-dim overall compare}
\end{figure*}

\begin{figure*}[h!]
    \centering
    \subfigure[2-dim \Laplacian, $|D|=256^2$, vary $r$]{\label{fig:Two_Laplace08-Set_10_7-Domain_8_8}\includegraphics[width=0.24\hsize]{./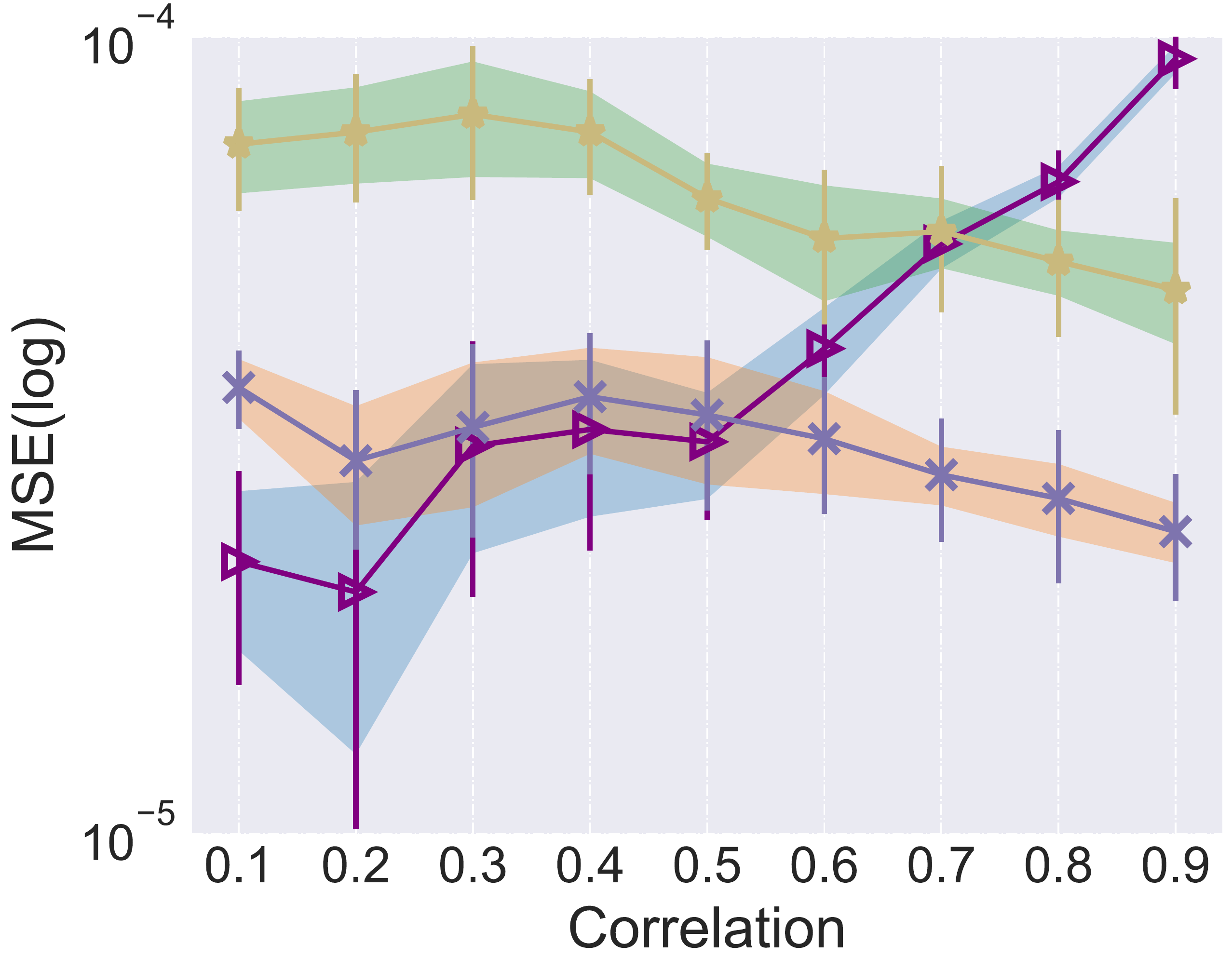}}
    \subfigure[2-dim \Laplacian, $|D|=1024^2$, vary $r$]{\label{fig:Two_Laplace09-Set_10_7-Domain_8_8}\includegraphics[width=0.24\hsize]{./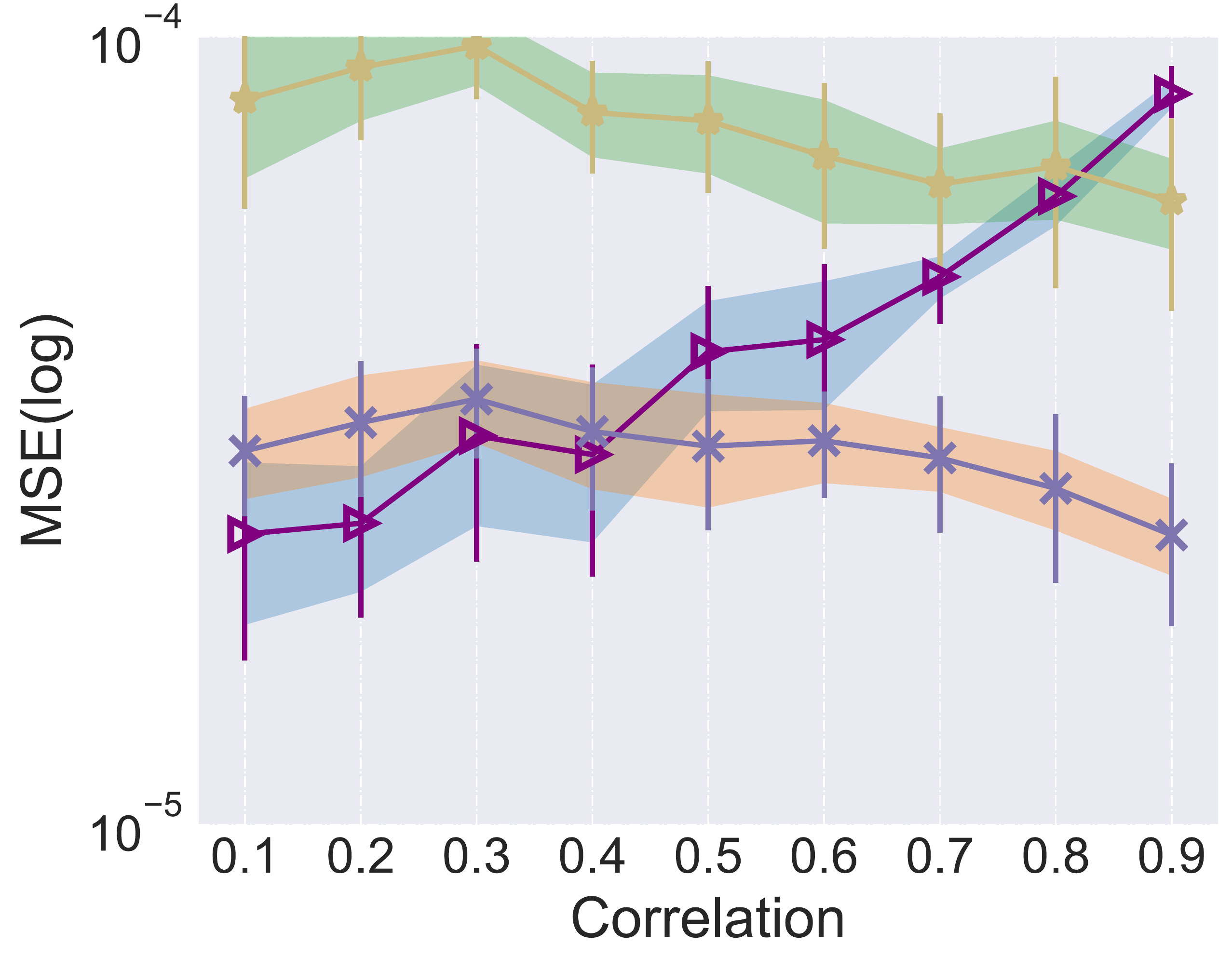}}
    \subfigure[2-dim \Gaussian, $|D|=256^2$, vary $r$]{\label{fig:Two_Laplace08-Set_10_7-Domain_10_10}\includegraphics[width=0.24\hsize]{./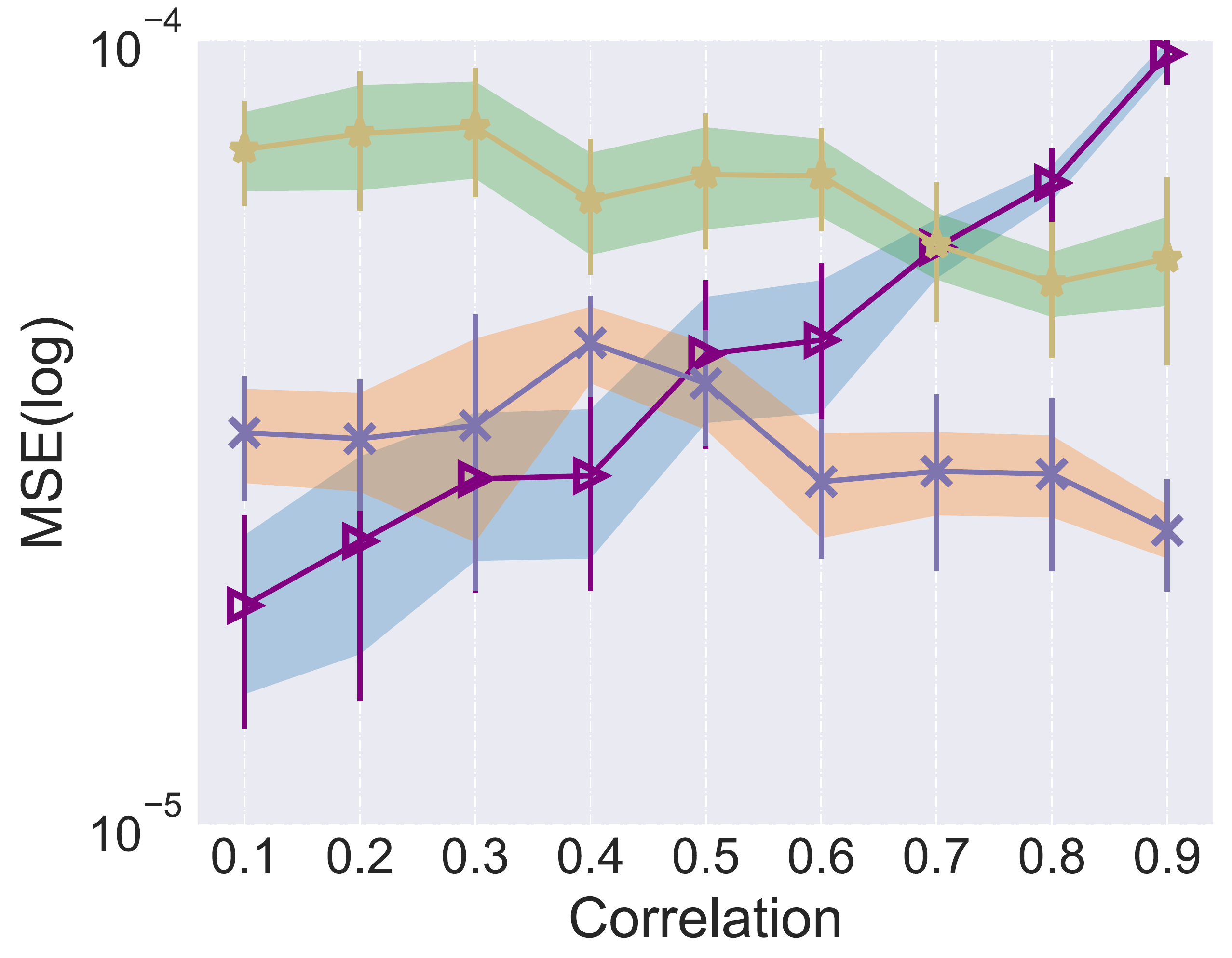}}
    \subfigure[2-dim \Gaussian, $|D|=1024^2$, vary $r$]{\label{fig:Two_Laplace09-Set_10_7-Domain_10_10}\includegraphics[width=0.24\hsize]{./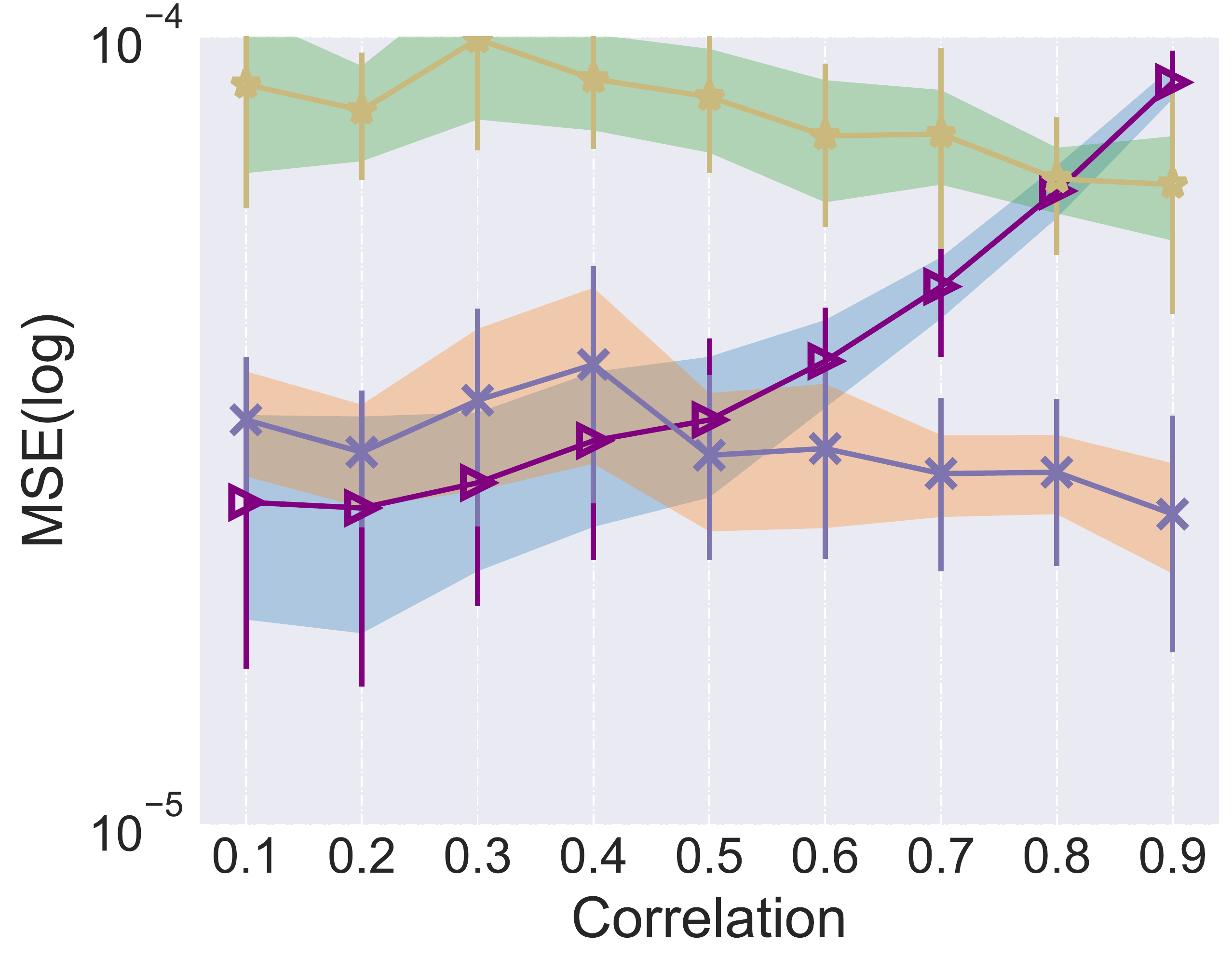}}\\[-1.5ex]
    \subfigure{\includegraphics[width=0.4\textwidth]{./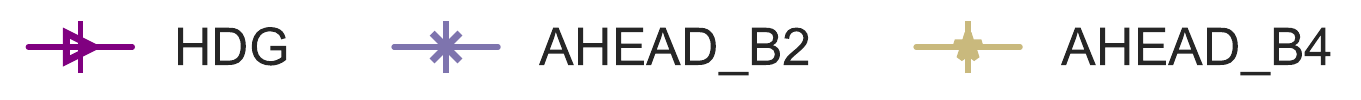.pdf}}  
    \vspace{-0.6cm}
    \caption{Comparison of different methods on 2-dim \Laplacian and \Gaussian datasets under various attribute correlation. The results are shown in log scale.}
    \vspace{-0.5cm}
    \label{2-dim correlation}
\end{figure*}

\mypara{MSE under Various Privacy Budget}
We evaluate the impact of varying privacy budget on the algorithms under 2-dim queries. 
Here, we focus on the synthetic datasets since they can reflect the algorithms' performance on the standard distribution and facilitate the adjustment of the dataset parameters. 
Each dataset contains $10^7$ records, sampling from 2-dim \Laplacian and \Gaussian distributions respectively, with two domain sizes as $256 \times 256$ and $1024 \times 1024$. 

\autoref{2-dim overall compare} shows the results over 
2-dim \Laplacian and \Gaussian distributions respectively. 
We adopt correlation coefficient $r = 0.8$ the same with \cite{yang2020answering} between the two attributes. 
The MSEs of \myuni and \mycalm are out of scale, thus their results are omitted here. 

Based on the results, we have the following observations. 
1) \myahead outperforms \myHDG throughout the privacy budget settings. 
\myHDG leverages coarse 2-dim grids to bucketize the 2-dim domain, and captures the correlation information between two attributes. 
For a 2-dim $1024 \times 1024$ domain, the finest granularity of \myHDG is $8 \times 8$ in experiments. 
Using such coarse-grained grid results in the loss of some fine-grained data correlation. 
\myahead establishes multiple granularity decompositions for sub-domains with different frequency values. 
For a sub-domain with high frequency value, which has a great impact on query answers, 
\myahead uses fine-grained decomposition to estimate the frequency distribution within the sub-domain. 
Thus, the data correlation will be captured more accurately. 
2) \myahead is robust to the changes in domain size. 
On one hand, 
a larger domain size means that users need to be divided into more groups. 
For example, \myahead partitions users into 8 groups for domain size $256 \times 256$, and 10 groups for domain size $1024 \times 1024$. 
There are fewer users in each group for a larger domain size, which will increase the added noise in the frequency values. 
On the other hand, the sub-domains whose frequencies are smaller than the threshold are not further divided from the layer where they first appear. 
These sub-domains will be estimated multiple times by user reports from different groups. 
After the weighted average process in post-processing, 
the noise errors of these sub-domains will become $\frac{1}{\beta}$ of the original noise errors, 
where $\beta$ is the number of the estimation times. 
Comparing two adjacent subplots in \autoref{2-dim overall compare}, the impact of the threshold setting is greater than that of domain size changes. Thus, the MSE of \myahead almost does not vary across different domain sizes, 
and we select diverse domain sizes to further verify this fact in \autoref{Impact of Domain Size}. 

\mypara{MSE under Various Attribute Correlation}
Then, we evaluate the impact of different attribute correlations on query errors as shown in \autoref{2-dim correlation}. 
For each subplot, we vary the correlation coefficient $r$ from 0.1 (weakly correlated) to 0.9 (strongly correlated) with a fixed privacy budget $\epsilon = 1.1$. 
From the results, we have following findings.  
1) The MSE of \myahead almost does not change with different correlations. 
\myahead decomposes both dimensions simultaneously, thus better protecting the correlation of the data. 
2) The data utility of \myHDG changes significantly with the correlation of attributes, and becomes worse with a stronger correlation. 
\myHDG combines finer-grained 1-dim grid to estimate the frequency distribution. 
If the correlation between two attributes is not very strong, 
the fine-grained 1-dim grid can complement the deficiencies of the coarse-grained 2-dim grid. 
Otherwise, the supplementary 1-dim information may be counterproductive. 
It is interesting to find that correlation $r = 0.5$ seems to be the intersection of \myHDG and \myahead, 
which may guide the aggregator to select the better algorithm based on the correlation of attributes.
For high-dimensional scenarios, we also evaluate the impact of different attribute correlations on query errors as shown in \autoref{Impact of Attribute Correlation}, where \myahead reacts similarly as 2-dim scenes. 

\subsection{Evaluation for High-dimensional Range Query}
\label{Effectiveness of ahead for high-dimensional Range Query}

\begin{figure*}[h!]
    \centering
    \subfigure[3-dim \Laplacian, $N=10^6$, vary $\epsilon$]{\label{fig:Three_Laplace08-Set_10_6-Domain_6_6_6}\includegraphics[width=0.24\hsize]{./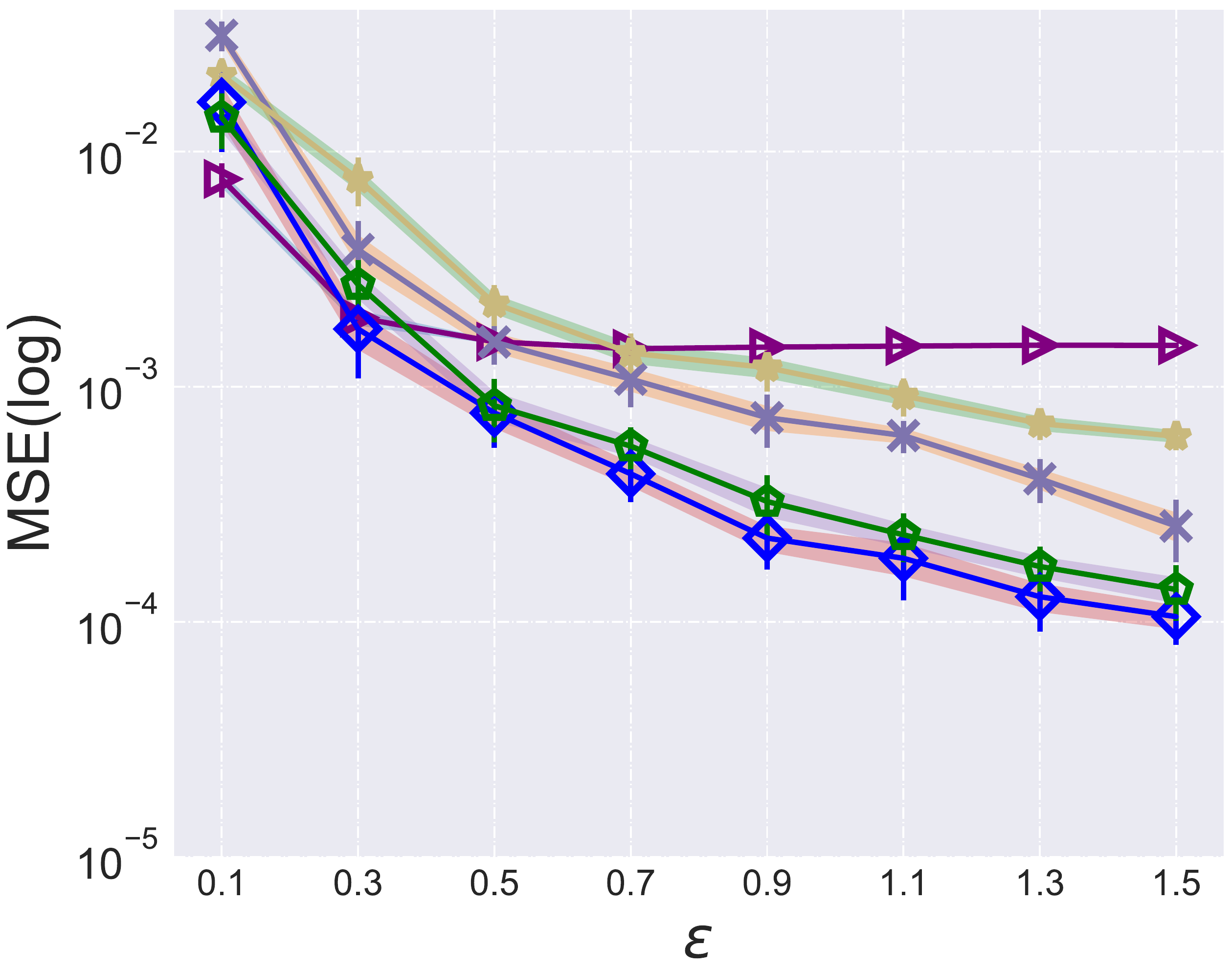}}
    \subfigure[3-dim \Laplacian, $N=10^7$, vary $\epsilon$]{\label{fig:Three_Laplace08-Set_10_7-Domain_6_6_6}\includegraphics[width=0.24\hsize]{./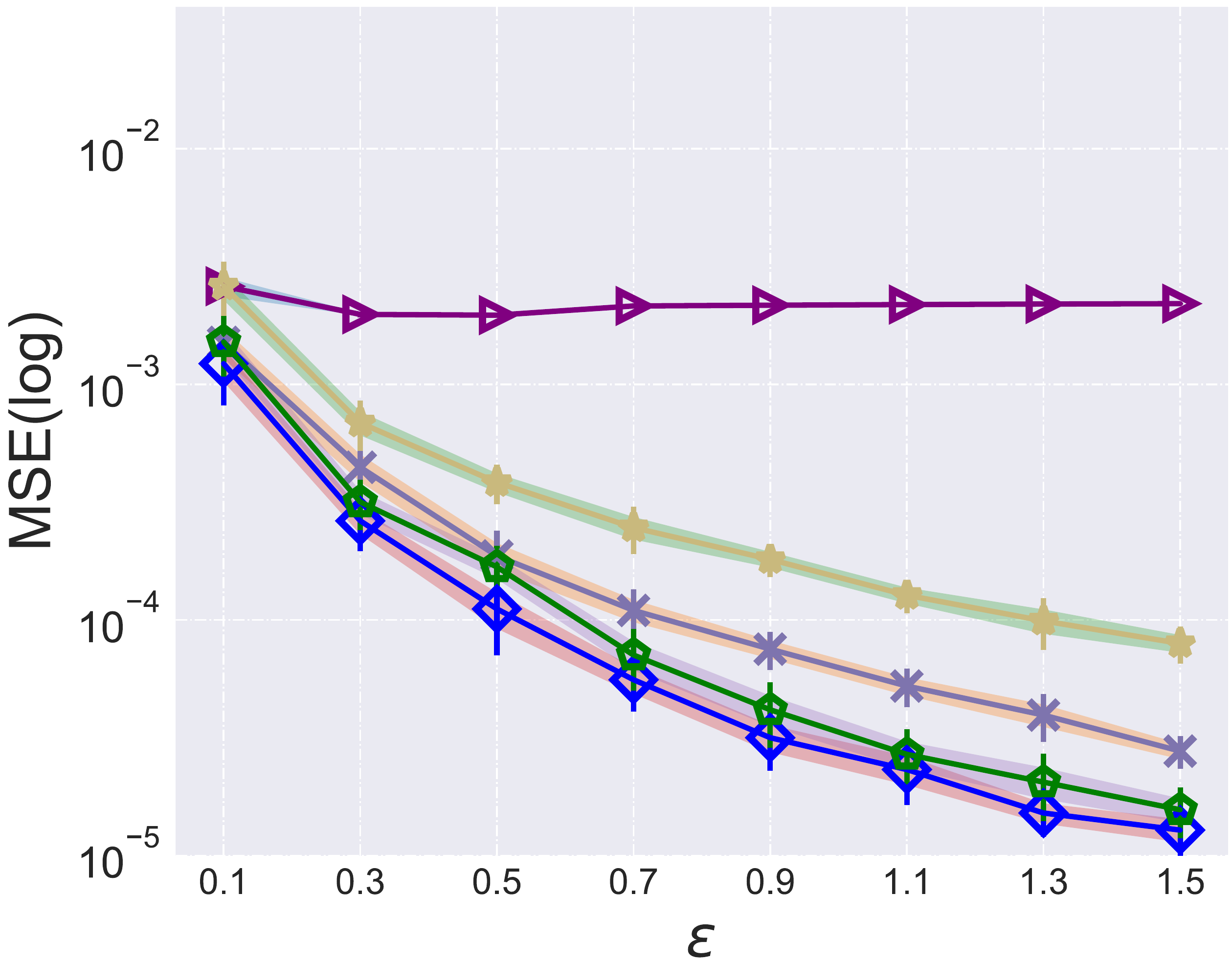}}
    \subfigure[3-dim \Gaussian, $N=10^6$, vary $\epsilon$]{\label{fig:Three_Normal08-Set_10_6-Domain_6_6_6}\includegraphics[width=0.24\hsize]{./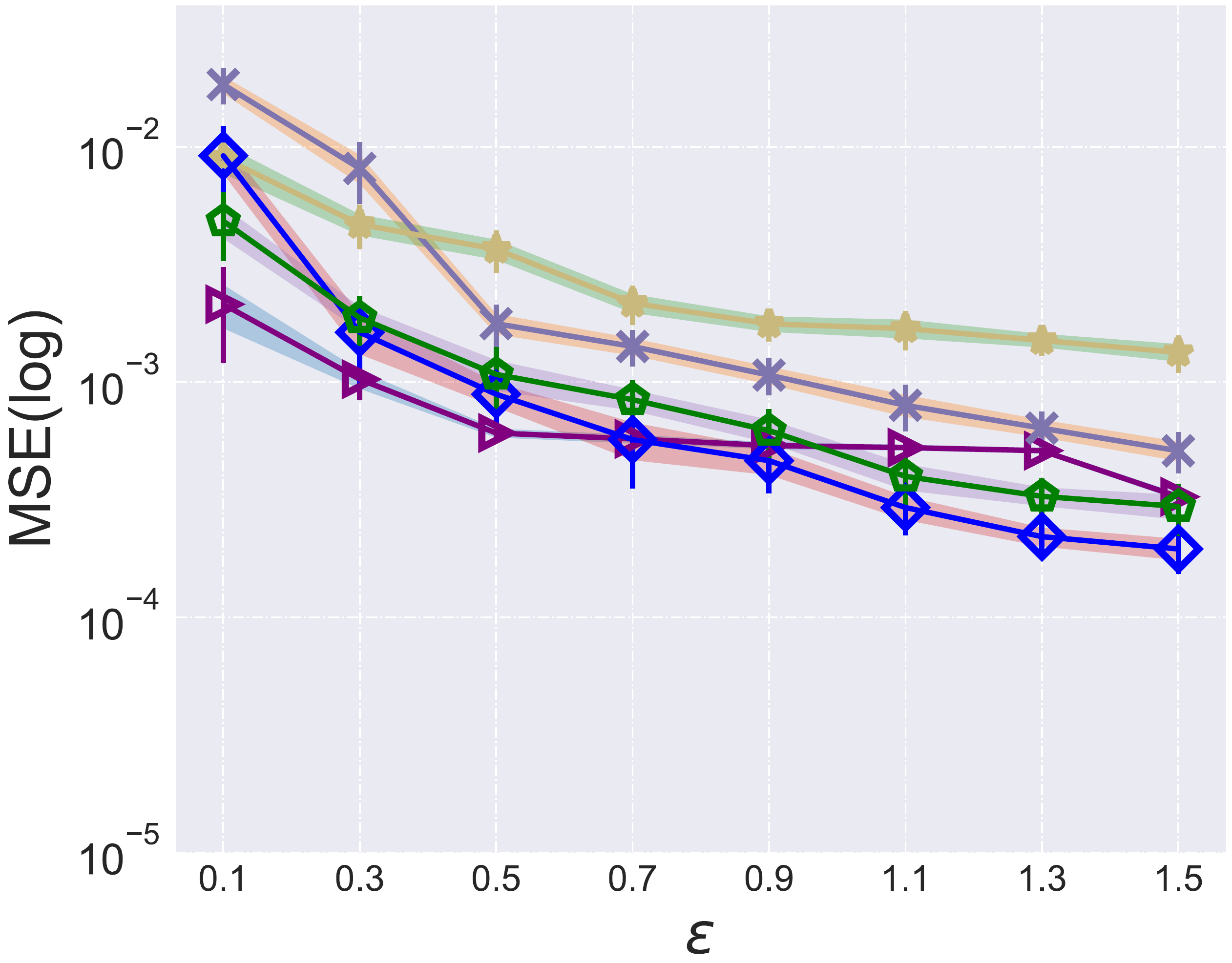}}
    \subfigure[3-dim \Gaussian, $N=10^7$, vary $\epsilon$]{\label{fig:Three_Normal08-Set_10_7-Domain_6_6_6}\includegraphics[width=0.24\hsize]{./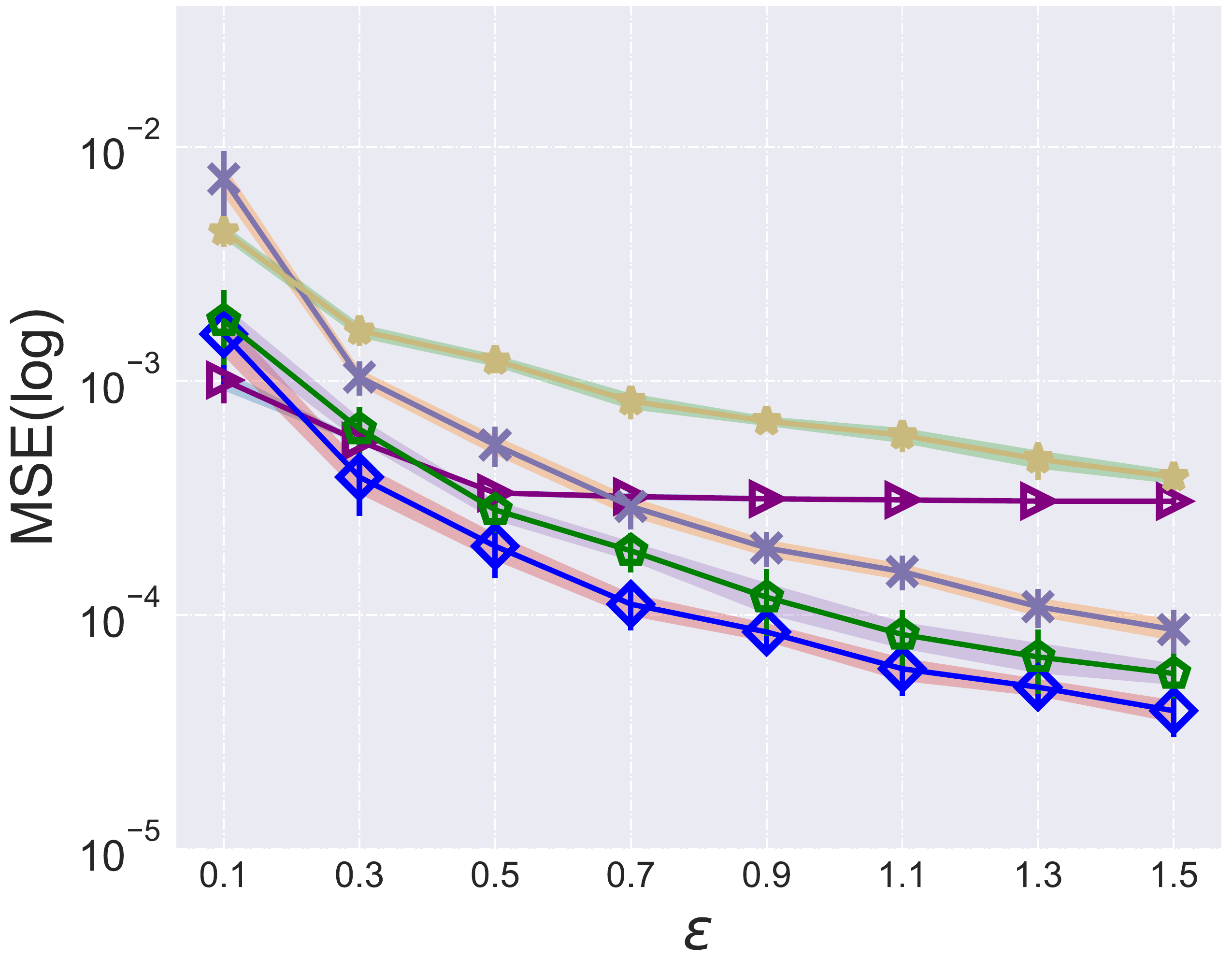}}\\[-2ex]
    \subfigure[5-dim \Laplacian, $N=10^6$, vary $\epsilon$]{\label{fig:Five_Laplace08-Set_10_6-Domain_6_6_6_6_6}\includegraphics[width=0.24\hsize]{./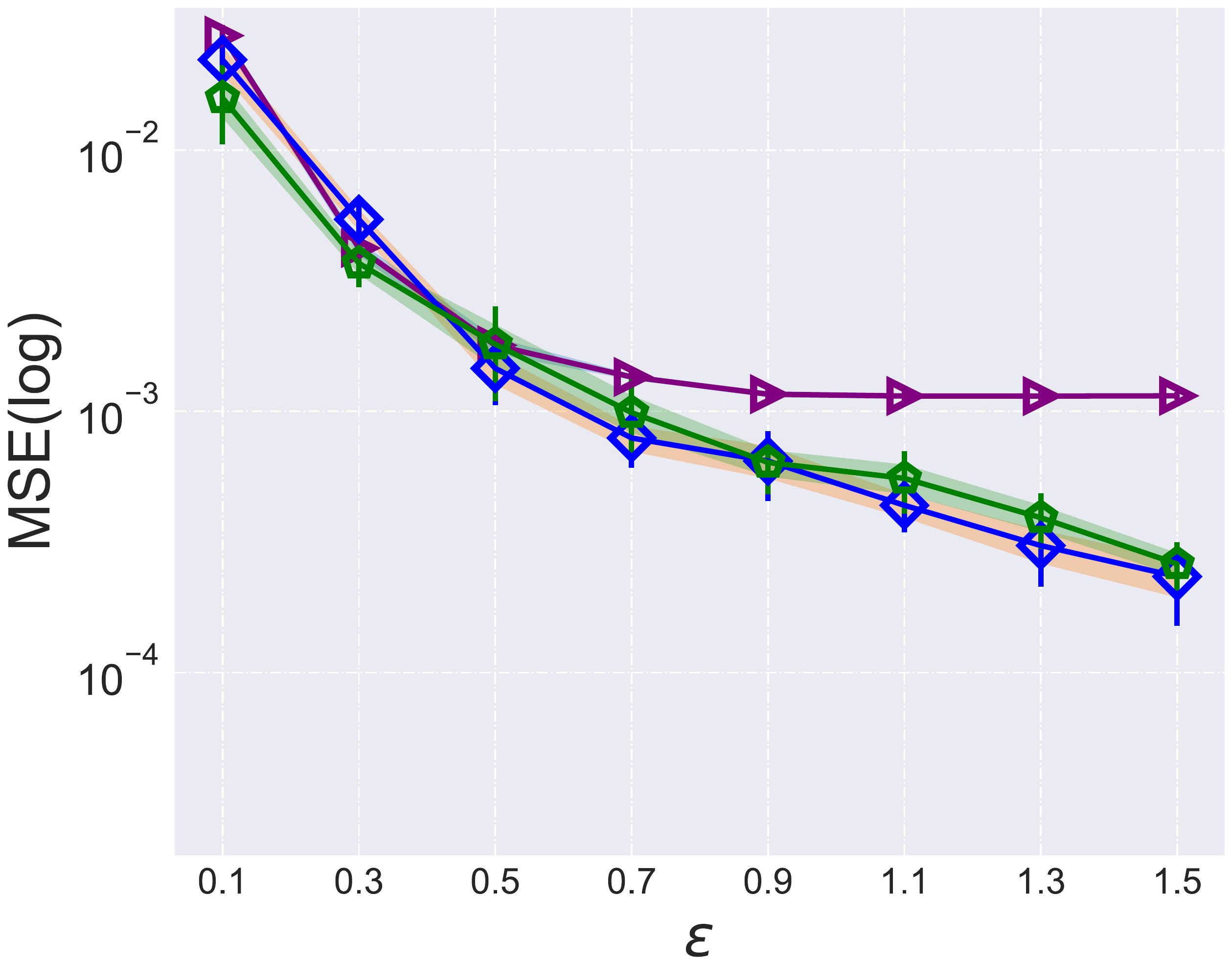}}
    \subfigure[5-dim \Laplacian,$N=10^7$, vary $\epsilon$]{\label{fig:Five_Laplace08-Set_10_7-Domain_6_6_6_6_6}\includegraphics[width=0.24\hsize]{./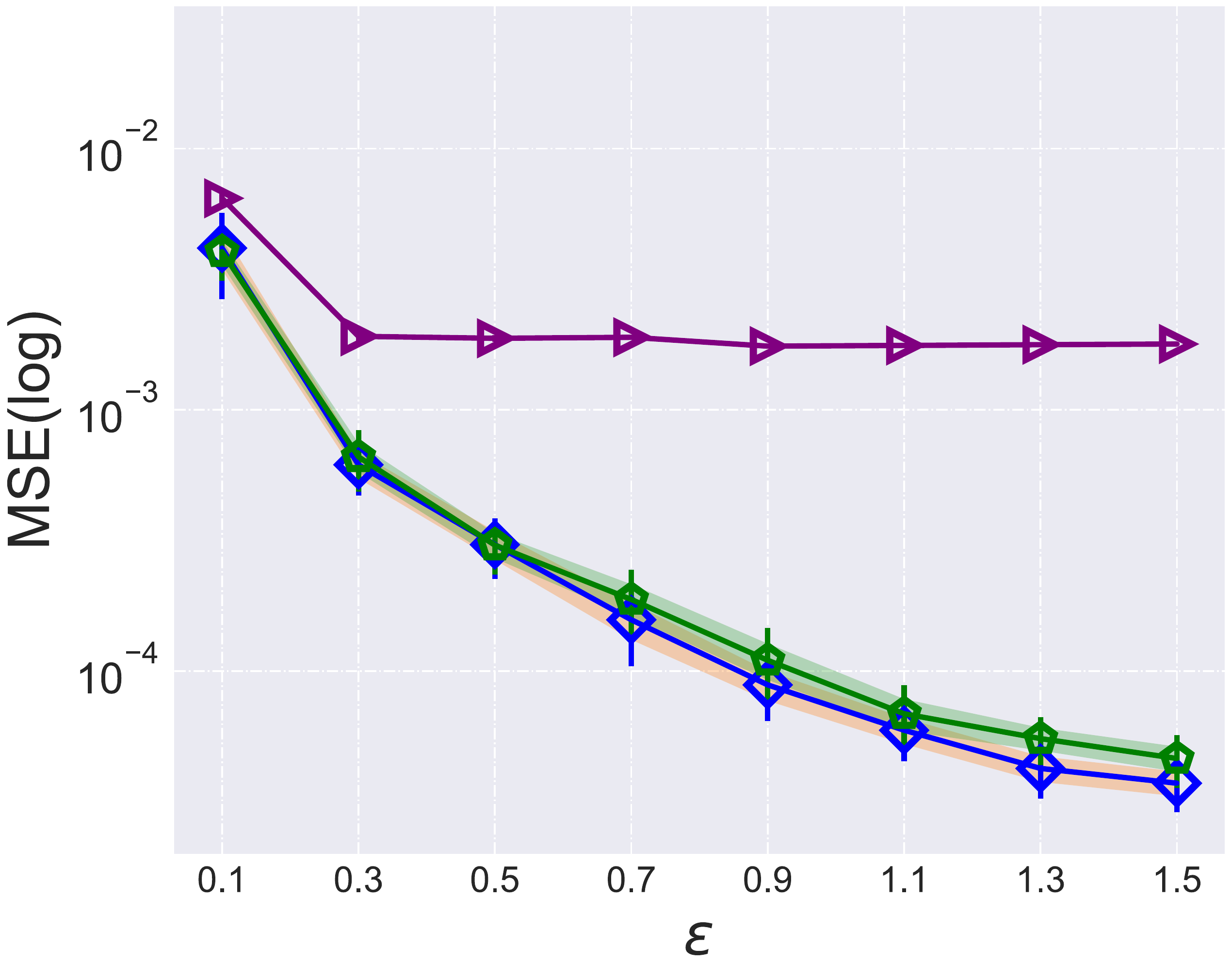}}
    \subfigure[5-dim \Gaussian, $N=10^6$, vary $\epsilon$]{\label{fig:Five_Normal08-Set_10_6-Domain_6_6_6_6_6}\includegraphics[width=0.24\hsize]{./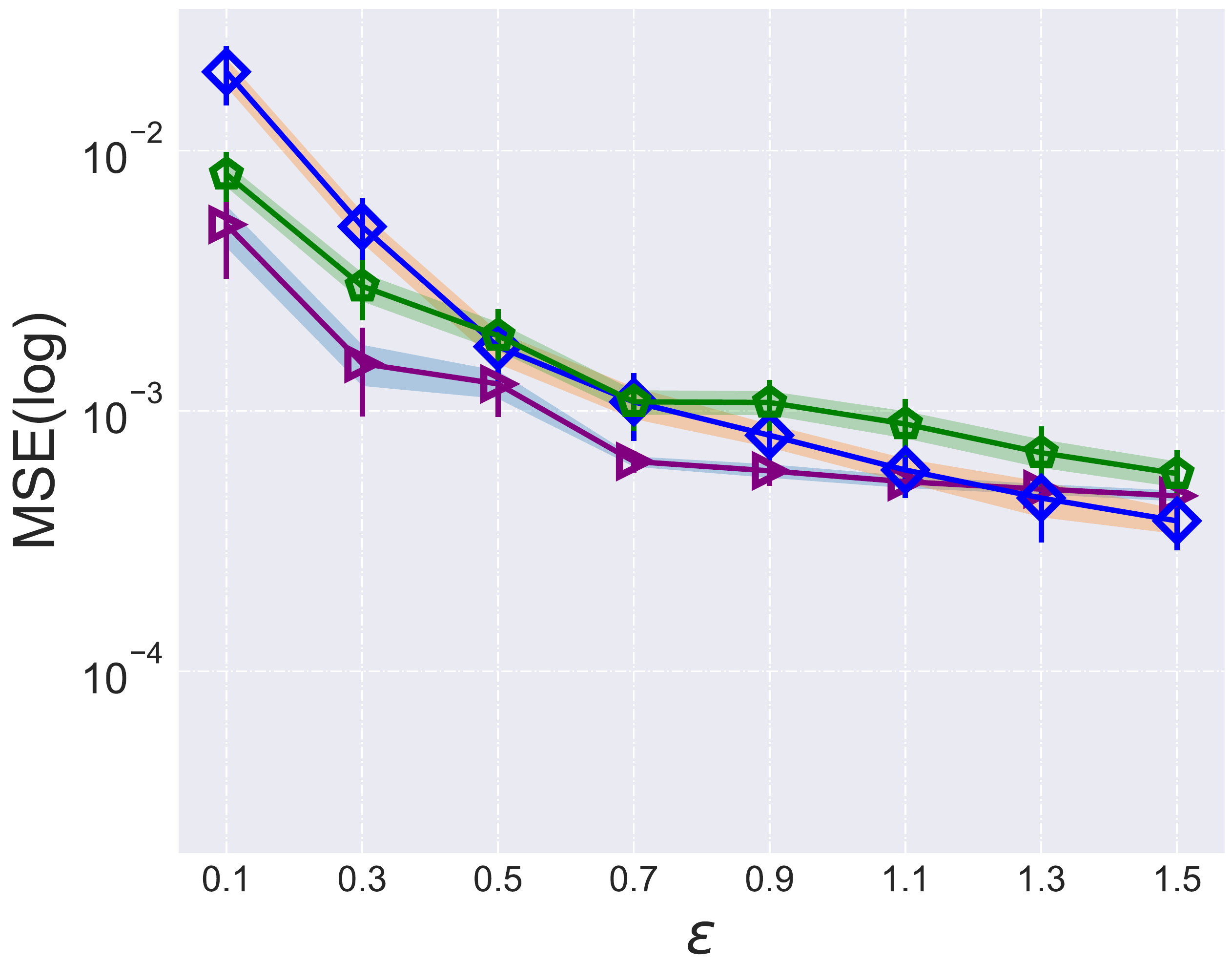}}
    \subfigure[5-dim \Gaussian, $N=10^7$, vary $\epsilon$]{\label{fig:Five_Normal08-Set_10_7-Domain_6_6_6_6_6}\includegraphics[width=0.24\hsize]{./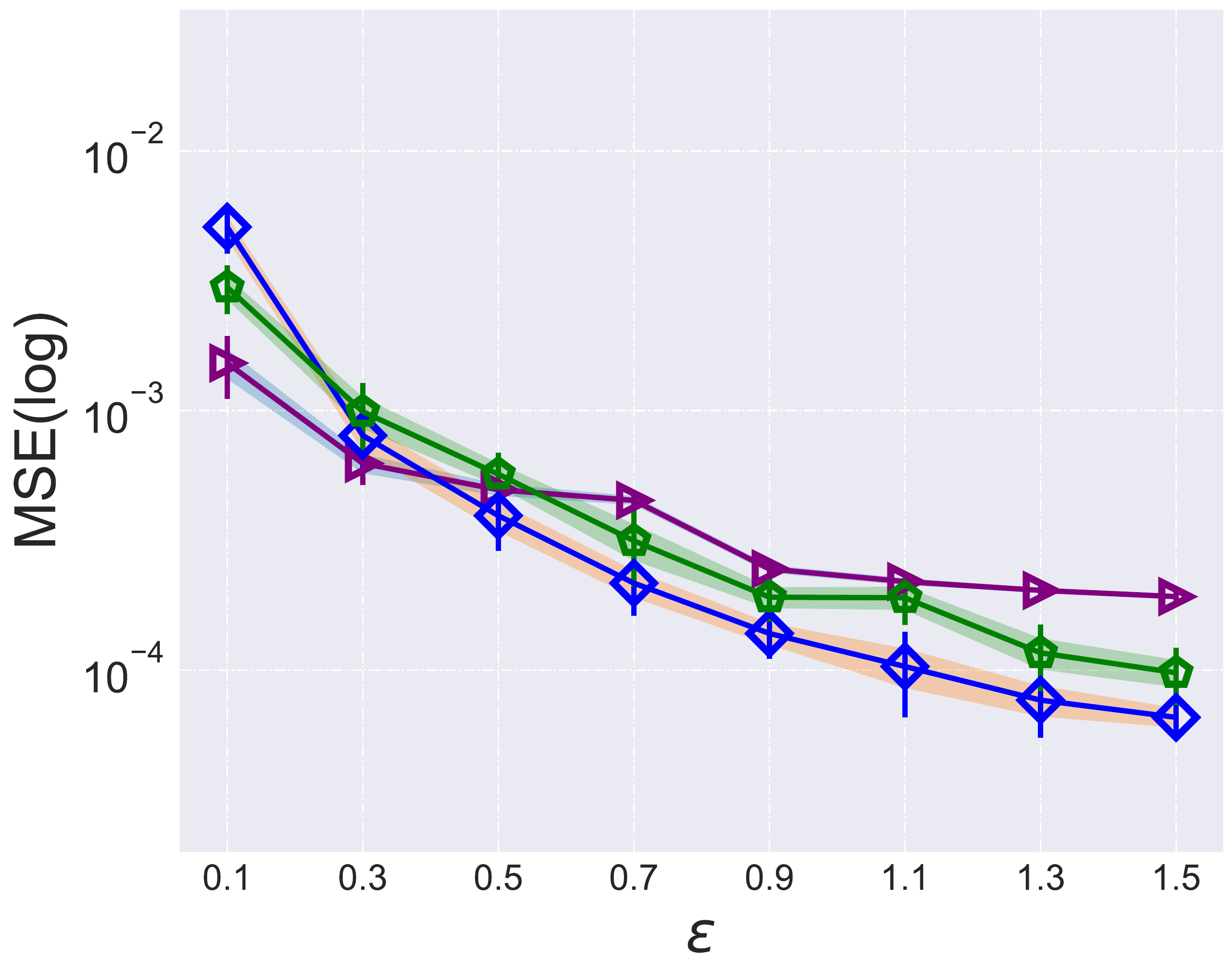}}\\[-1ex]
    \subfigure{\includegraphics[width=0.8\textwidth]{./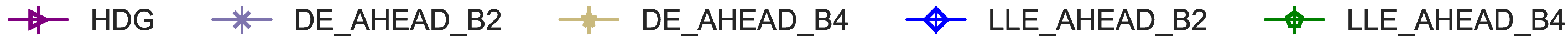}}  
    \vspace{-0.5cm}
    \caption{Comparison of different methods on high-dimensional \Laplacian and \Gaussian datasets under various privacy budgets. \de and \lle respectively represent two high-dimensional expansion methods, \ie, ``direct estimation'' and ``leveraging low-dimensional estimation''. \myHDG is a baseline method. The results are shown in log scale.}
    \vspace{-0.6cm}
    \label{3,5-dim overall compare}
\end{figure*}

\begin{figure*}[!h]
    \centering
    \subfigure[\Laplacian, $N=10^6$, $|D|= 64^m$]{
    \includegraphics[width=0.235\hsize]{./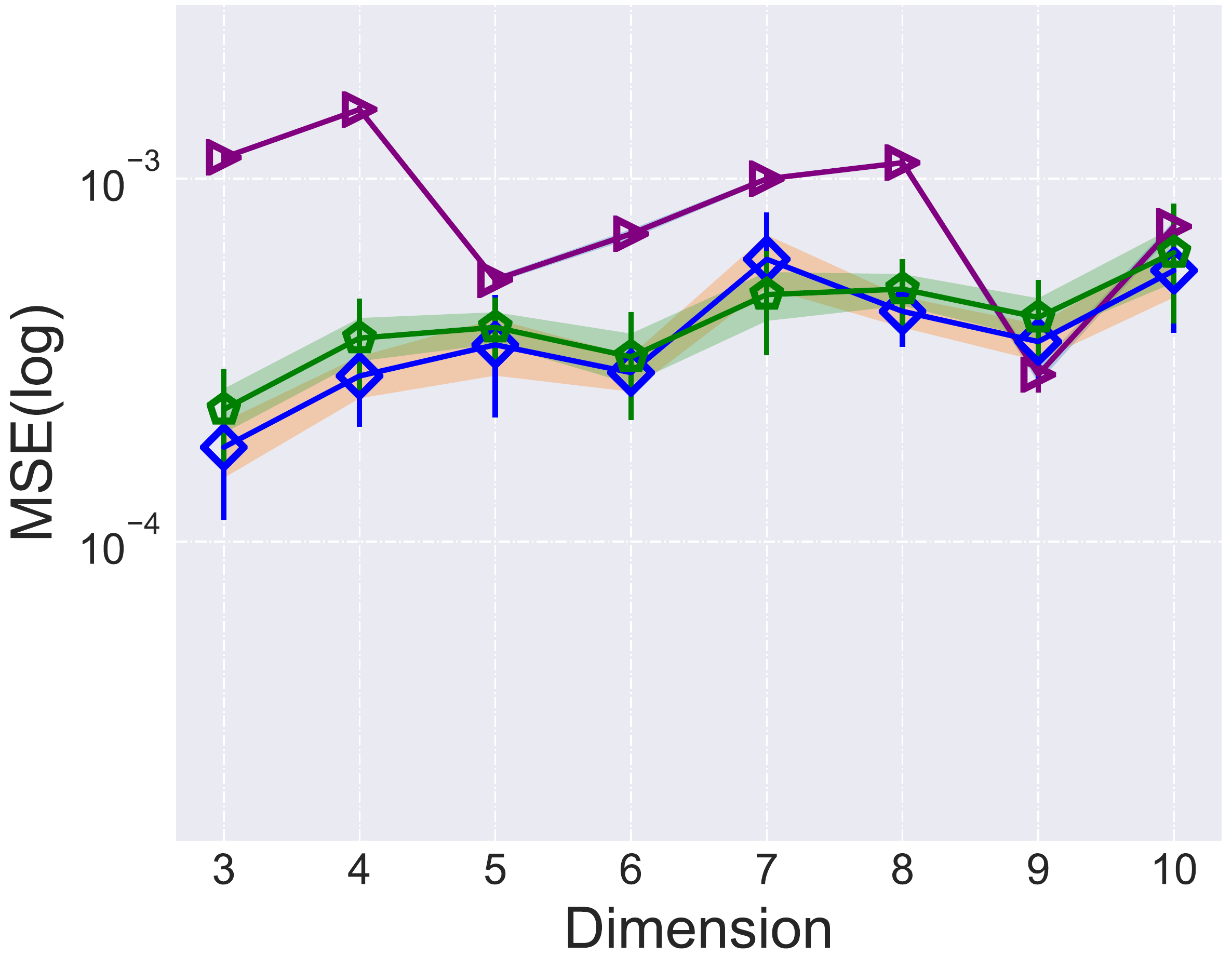}
    \label{fig: Attribute-Ten_Laplace08-Set_10_6}
    }
    \subfigure[\Laplacian, $N=10^7$, $|D|= 64^m$]{
    \includegraphics[width=0.23\hsize]{./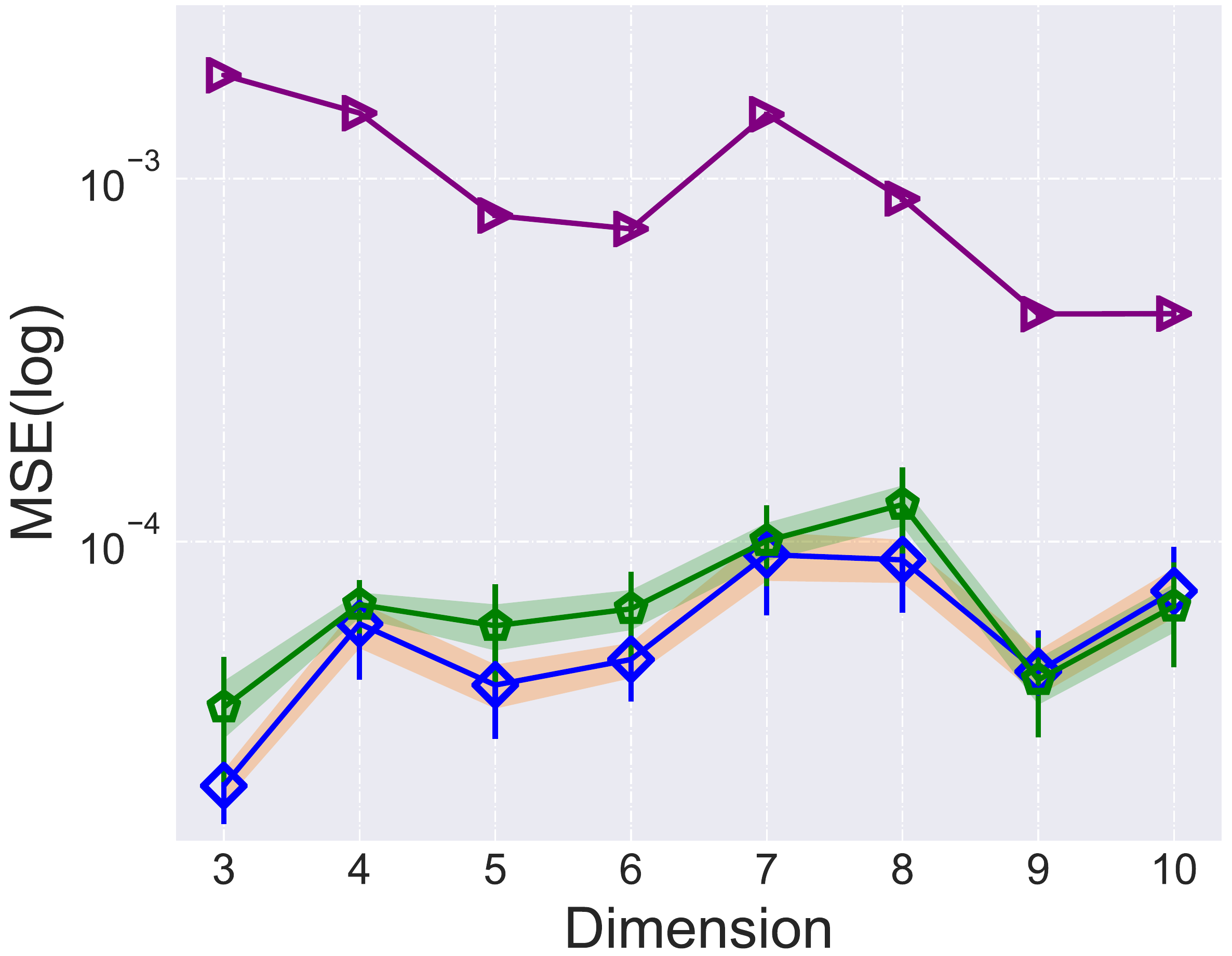}
    \label{fig: Attribute-Ten_Laplace08-Set_10_7}
    }
    \subfigure[\Gaussian, $N=10^6$, $|D|= 64^m$]{
    \includegraphics[width=0.235\hsize]{./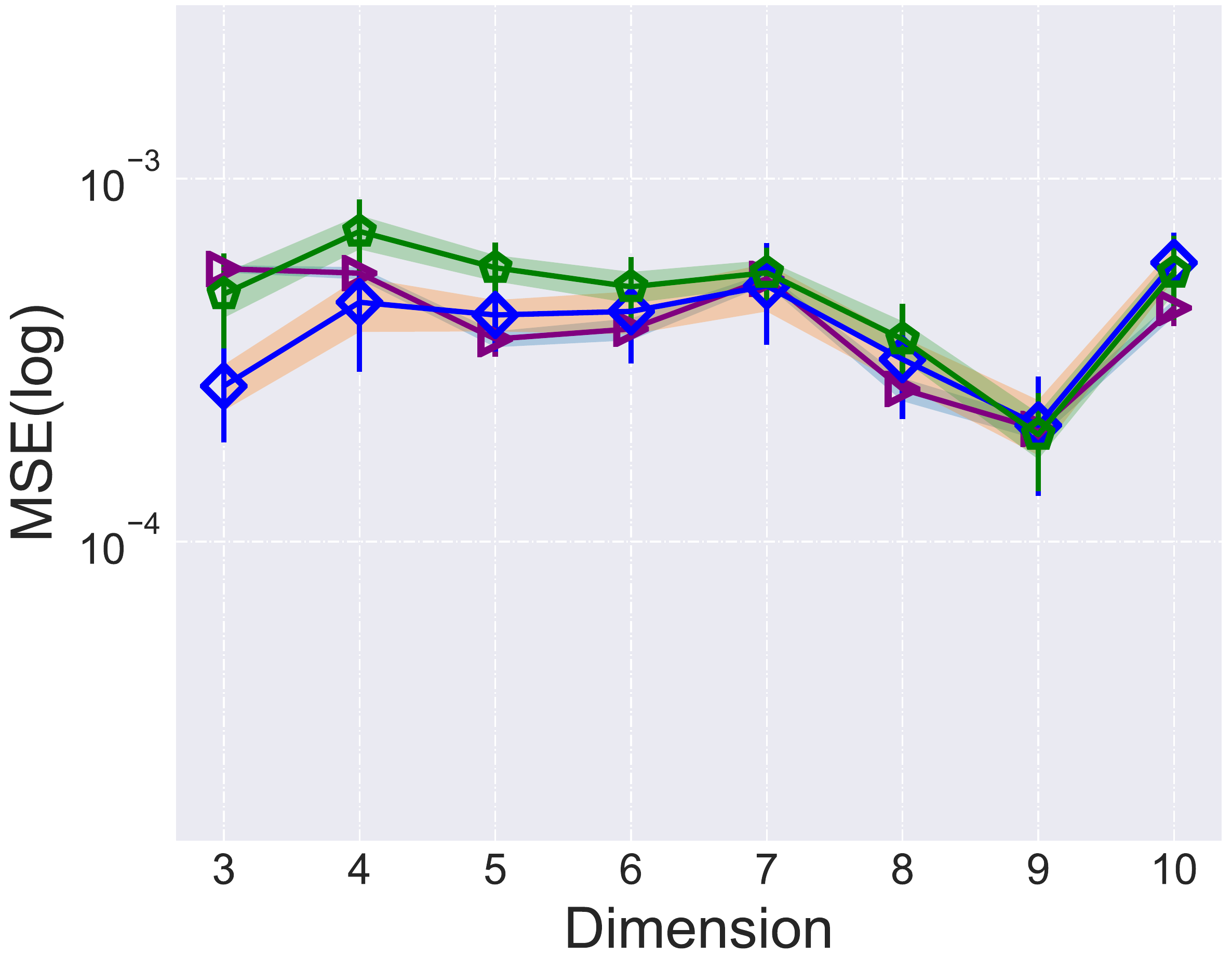}
    \label{fig: Attribute-Ten_Normal08-Set_10_6}
    }
    \subfigure[\Gaussian, $N=10^7$, $|D|= 64^m$]{
    \includegraphics[width=0.235\hsize]{./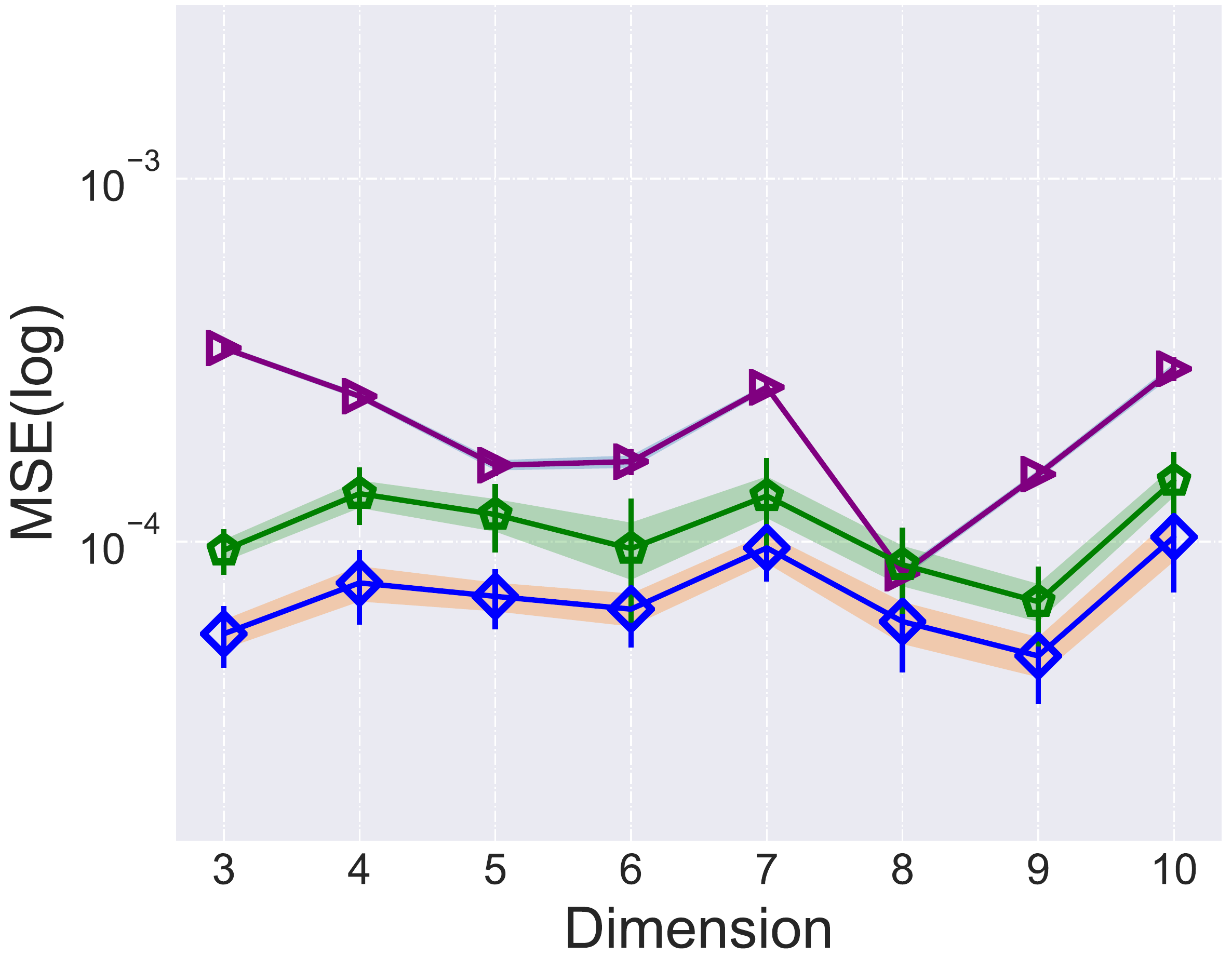}
    \label{fig: Attribute-Ten_Normal08-Set_10_7}
    }\\[-1ex]
    \subfigure{\includegraphics[width=0.45\textwidth]{./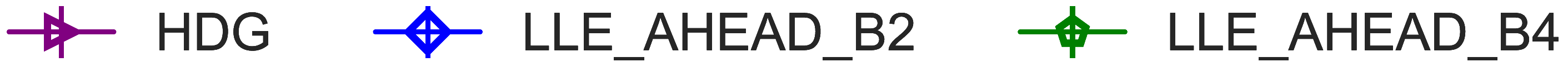}}  
    \vspace{-0.5cm}
    \caption{The MSE of different methods when varying data dimension $m$ with $\epsilon = 1.1$. The results are shown in log scale. }
    \vspace{-0.4cm}
    \label{fig: varying attribute dimension}
\end{figure*}

In this subsection, we evaluate \myahead on high-dimensional private datasets. 
As observed in \autoref{2-dim overall compare} and \autoref{2-dim correlation} (where the domain size is $256 \times 256$ and $1024 \times 1024$), MSEs of \myHDG and \myahead are not sensitive to domain size. 
Therefore, we fix the domain size $|D|=2^6$ and conduct the experiments on synthetic datasets with $10^6$ and $10^7$ records, 
which are sampled from high-dimensional \Laplacian and \Gaussian distributions, respectively. 
We refer the readers to \autoref{High-dimensional Range Query on Real datasets} for the evaluation \myahead on high-dimensional real-world datasets due to space limitation. 

\mypara{MSE under Two Expansion Methods}
Recalling the two extension ways in \autoref{Extension to Multi-dimensional Settings}, \ie, \de and \lle, we compare the MSEs of the two expansion methods under the 3-dim range query. 
For queries higher than 3 dimensions, 
we only consider \lle since the \de method is too time-consuming. 

From \autoref{3,5-dim overall compare}, we observe that \myahead with \lle obtains lower MSEs than \de. 
The sub-domains obtained by \de are equilateral high-dimensional cubes. 
\myahead tends to use the underlying nodes, inducing more nodes used in the answering process. 
For example, when answering query $[1,1]\times[1,8]\times[1,8]$ in a 3-dim dataset with domain $[1,8]\times[1,8]\times[1,8]$, 
\myahead selects the 64 leaf nodes at the bottom to answer the query instead of using higher-level nodes with larger sub-domains, 
which causes error accumulation in query answer. 
Although \lle needs to divide users and gather different attribute combination records from each user group, 
\ie, holding fewer records to estimate the frequency for each layer of \myahead trees compared to \de, 
the maximum entropy optimization step of \lle incorporates more information, 
thus resulting in smaller MSEs. 

\mypara{MSE under Various Privacy Budget}
\autoref{3,5-dim overall compare} shows the results over 3-dim and 5-dim datasets, respectively. 
We adopt correlation coefficient $r = 0.8$ to quantify the correlation between attributes, following the same setting as \cite{yang2020answering}. 

Based on the results, we have the following observations which are consistent with our analysis in \autoref{Extension to Multi-dimensional Settings}. 
1) \myahead is robust to the changes in data distributions. 
Recalling the derivation of parameters $\theta$ and $B$ in Section 4.4, \myahead does not have specific requirements for the distribution of users' data. 
When the entire domain is adaptively divided, the same $\theta$ ensures that the frequencies of intervals have similar overall errors. 
Therefore, from \autoref{fig:Three_Laplace08-Set_10_7-Domain_6_6_6}, \autoref{fig:Three_Normal08-Set_10_7-Domain_6_6_6} (or \autoref{fig:Five_Laplace08-Set_10_7-Domain_6_6_6_6_6}, \autoref{fig:Five_Normal08-Set_10_7-Domain_6_6_6_6_6}), 
\myahead behaves similarly on different datasets. 
Under the same dataset parameters, \myHDG uses the same granularity 1, 2-dim grids to aggregate user records, 
then answering queries based on the uniform assumption. 
Therefore, on the \Laplacian distribution with more uneven frequency, the MSEs of \myHDG are larger than those on the \Gaussian distribution. 
2) For high-dimensional queries, the performance of \myahead is more dependent on the scale of user records. 
Compared with \myHDG which only needs to divide users into different attribute combinations, \myahead needs to further part users into different layers in 2-dim trees (recall \autoref{Associated 2-dim AHEAD Tree Construction} in \autoref{Extending AHEAD to high-dimensional range query}). 
For a 3-dim dataset with attribute domain size $|D|=64$, \myahead randomly divides the users into $C_3^2 \cdot 6$ groups, while \myHDG separates users into $C_3^2 + C_3^1$ groups, where the number of user records in each group of \myahead is half of that of \myHDG, \ie, doubling noise error of \myahead. 
Therefore, from Figure \ref{fig:Three_Laplace08-Set_10_6-Domain_6_6_6} and \ref{fig:Three_Laplace08-Set_10_7-Domain_6_6_6}, 
we know that the superiority of \myahead will decrease with fewer user records. 

\mypara{MSE under Various Data Dimension}
We evaluate the performance of \myahead and \myHDG with varying dimensions of data (from 3-dim to 10-dim), which are sampled from multivariate \Laplacian and \Gaussian distribution with correlation coefficient $r = 0.8$. 

From \autoref{fig: varying attribute dimension}, \myahead is robust to the changes in data dimension. 
Recalling \autoref{Extension to Multi-dimensional Settings}, with the data dimension $m$ increasing, \lle needs to divide users into more groups due to more attribute combinations, \ie, holding fewer user records to estimate the frequency for each 2-dim \myahead tree. 
For instance, when $m = 3$, the number of attribute combination is $C_3^2 = 3$, and when $m = 10$, the number of attribute combination is $C_{10}^2 = 45$. 
Since \lle constructs a query set with associated $2^m$ queries (recall Algorithm \ref{Estimating Answer of m-dimensional Range Query} in \autoref{Extending AHEAD to high-dimensional range query}), 
the maximum entropy optimization step of \lle incorporates more query information with higher dimension, thus resulting in almost consistent MSEs with dimension increasing.

\section{Discussion}
\mypara{Highlights of \myahead}
1) Through dynamically building the tree structure, \myahead addresses the limitations of the state-of-the-art LDP methods, which significantly enhances the query accuracy and can motivate the development of future LDP based privacy-preserving frameworks. 
2) To overcome the hindrance in finding the optimal partition \cite{muthukrishnan1999rectangular}, 
\myahead decomposes the domain by leveraging the tree fanout $B$ and the threshold $\theta$, which are theoretically derived under rigorous LDP guarantees. 
From the experimental results, 
these parameter settings work 
well both in low and high dimensional scenarios. 
3) By conducting an in-depth analysis of \myahead under various privacy budget, domain size, user scale, distribution skewness, data dimension and attribute correlation, 
we conclude some useful observations for adopting \myahead (recall the observations in \autoref{Practical Deployment of AHEAD}). 

\mypara{Limitations and Future Work}
Below, we discuss the limitations of \myahead and promising directions for further improvements. 
1) For high(>2)-dimensional range query, \myahead is sensitive to the scale of user records. 
\myahead needs to divide users 2 times, \ie, partition into different attribute combinations and different layers in 2-dim trees. 
Thus, when the number of group user is large, \myahead needs sufficient user records to ensure the accuracy of the frequency estimation of each 2-dim layer. 
2) Compared to \myHDG using only two grids, \ie, finer-grained 1-dim and sparser-grained 2-dim grids, 
\myahead generates 2-dim intervals with various granularities to decompose the entire domain. 
Therefore, in practice, \myahead requires longer frequency value searching time and larger memory usage compared to \myHDG. 
A layer fusion strategy can be designed for each 2-dim \myahead tree to compress tree height. 
3) \myahead is a completely dynamic framework, 
which uses the threshold to determine the division of each sub-domain. 
In the cases of large domain size, each sub-domain needs to be compared with the threshold, which unavoidably increases the computational overhead. 
Since the higher-level node represents a relatively large sub-domain (whose frequency value is generally greater than the threshold), a `static' and `dynamic' hybrid tree structure can be potentially adopted.
For instance, the top few layers can use the static framework, while the remaining layers can leverage the threshold to decompose the sub-domains in line with the dynamic framework. 

\section{Related Work}
\label{Related Work}
\mypara{Frequency Estimation under LDP}
The notion of local differential privacy (LDP) was introduced in \cite{kasiviswanathan2011can}. 
Duchi \etal \cite{duchi2013local} systematically analyzed the LDP algorithm and gave LDP's theoretical upper bound based on information theory. 
For LDP scenarios, one of the upmost basic tasks is frequency estimation of user values. 
Erlingsson \etal \cite{erlingsson2014rappor} proposed \rappor, which is the first practical example of frequency estimation. 
The algorithm uses the Bloom filter \cite{bloom1970space} to encode the private data, 
and then leverages a random response (\rr) method \cite{warner1965randomized} to perturb the encoded data. 
After that, several mechanisms \cite{warner1965randomized, bassily2017practical, bassily2015local, duchi2013local} were also proposed for frequency estimation under LDP. 
Wang \etal \cite{wang2017locally} compared the estimation variance of different algorithms 
and gave algorithm recommendations under different domain sizes. 
Wang \etal \cite{wang2020set} proposed a wheel mechanism with a same variance as \oue \cite{wang2017locally}. 

\mypara{Marginal Release under LDP}
Marginal release are widely studied under LDP. 
Kulkarni \etal \cite{cormode2018marginal} proposed to apply the Fourier transformation method and Ren \etal \cite{ren2018textsf} proposed to apply the expectation maximization method. 
The state-of-the-art method \mycalm \cite{zhang2018calm}, strategically choose sets of attributes and adaptively choose a randomization algorithm to reduce the noise effect. 
We notice that the methods for marginal release can be also used to answer range queries. Thus, we have also analysed  \mycalm's performance in \autoref{Consistent Adaptive Local Marginal}. 

\mypara{Range Query under DP}
The range query problem has been studied extensively in the centralized setting including works based on hierarchy \cite{cormode2012differentially, hay2010boosting, li2010optimizing, qardaji2013understanding}, 
coarsened domain \cite{li2014data, qardaji2013differentially, xiao2012dpcube, xu2013differentially, zhang2014towards}, 
Wavelet or Fourier transformation \cite{xiao2010differential, acs2012differentially}, 
or publishing a synthetic dataset \cite{hardt2012simple}. 
Hay \etal \cite{hay2016principled} proposed DPBench to evaluate algorithms for answering 1-dim and 2-dim range queries. 
McKenna \etal \cite{mckenna2018optimizing} presented \hdmm to answer workloads of predicate counting queries. 
Both \myngram \cite{chen2012differentially} and \myahead adopt the adaptive tree strategies and leverage the decomposition threshold to balance utility and privacy. 
However, they are different in terms of the DP setting and targeted data type. 
1) \myngram is designed and analyzed for DP scenarios, while \myahead is orientated to the local setting. 
2) \myngram publishes the sequential data with DP guarantee, while \myahead aggregates one/multi-dimensional ordinal attributes satisfying LDP. 

\mypara{Range Query under LDP}
For 1-dim scenarios, 
Cormode \etal \cite{cormode2019answering} applied the Haar wavelet tranform to the LDP setting and proposed \mydht. 
Wang \etal \cite{wang2019answering} leveraged the idea of hierarchical intervals and presented \myhio mainly for answering 1-dim and 2-dim range queries. 
For answering 2-dim and high-dimensional range queries, Wang \etal \cite{yang2020answering} designed the state-of-the-art method \myHDG, which is inspired by the Adaptive Grids approach \cite{qardaji2013differentially} under DP. 
There are some works that utilized an relaxation of $\epsilon$-LDP \cite{xiang2019linear} or leveraged the properties of workload \cite{ edmonds2020power} to achieve significant gains in utility of query answer. 

Besides the above problems, DP (LDP) has been also used for other data analysis tasks, such as heavy hitters \cite{wang2018locally, wang2018privtrie, qin2016heavy, bassily2017practical}, location \cite{andres2013geo, chen2016private}, graph data \cite{qin2017generating, sun2019analyzing, ji2015your, ji2015secgraph}, key-value data \cite{ye2019privkv, gu2020pckv}, evolving data \cite{joseph2018local}, machine learning \cite{abadi2016deep, jordon2018pate, wang2020privacy}. 
Their problem definitions are different from ours, thus not suitable for comparison. 

\section{Conclusion}
\label{conclusion}

In this work, we propose a novel LDP protocol for the one and multi-dimensional range query problem, by leveraging adaptive hierarchical decomposition. 
Our method satisfies rigorous LDP guarantees while achieving advantageous utility performance with the theoretically-derived parameters. 
Through theoretical analysis as well as extensive experimental evaluation, 
we show the effectiveness of \myahead in balancing utility and privacy for range queries 
and its significant advantages over the state-of-the-art methods. 
Furthermore, by studying various parameter settings, 
we conclude several important observations for adopting \myahead in practice. 
Our source code is available on GitHub at \url{https://github.com/link-zju/ccs21-AHEAD}.
\section*{acknowledgement}
We would like to thank Lisa M. Tolles from Sheridan Communications and the anonymous reviewers for their helpful comments. 
This work was partly supported by NSFC under Grants 62088101, 61833015, 61772466, U1836202, the Zhejiang Provincial Natural Science Foundation for Distinguished Young Scholars under No. LR19F020003, and the Fundamental Research Funds for the Central Universities (Zhejiang University NGICS Platform). 
Changchang Liu was partly sponsored by the Combat Capabilities Development Command Army Research Laboratory and was accomplished under Cooperative Agreement Number W911NF-13-2-0045 (ARL Cyber Security CRA).

\bibliographystyle{ACM-Reference-Format}
\bibliography{sample-base}

\appendix
\section{The rationality of sampling principle}
\label{The rationality of sampling principle}
\myahead can adopt two strategies to utilize the privacy budget: the privacy budget splitting strategy and the user partition strategy. 
More specifically, the privacy budget splitting strategy divides the whole privacy budget $\epsilon$ into $c$ pieces and estimates the frequency distribution with all users’ reports under privacy budget $\epsilon/c$. The user partition strategy randomly assigns the users into $c$ groups and uses the whole privacy budget to obtain the frequencies from each group of users. For the same node, the variances of the two strategies are recorded as $\myvar_1$ and $\myvar_2$. From Equation 2, we have 
$\myvar_{1}=\frac{4 e^{\frac{\epsilon}{c}}}{N\left(e^{\frac{\epsilon}{c}}-1\right)^{2}}$ and 
$\myvar_{2}=\frac{c}{N} \frac{4 e^{\epsilon}}{\left(e^{\epsilon}-1\right)^{2}}$. 
Since $\epsilon$ is greater than 0 and $c$ is a positive integer greater than 1, we know $\myvar_1 > \myvar_2$. 
Therefore, under the same setting, the user partition strategy has a less noise error than the privacy budget splitting strategy. 

For inconsistency of different levels, previous studies \cite{nguyen2016collecting, wang2017locally, wang2019locally} have proven that the error of user partition is small. In addition, when answering queries, \myahead uses the combination with the least number of nodes, so the inconsistency will not cause conflicts when answering queries. 

\section{\myahead satisfies $\epsilon$-LDP}
\label{AHEAD satisfies LDP}
\myahead can answer range query while satisfying rigorous LDP guarantees as shown in the following theorem. 
\begin{theorem}
    \myahead satisfies $\epsilon$-LDP.
    \label{satisfy LDP}
\end{theorem}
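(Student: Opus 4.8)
The plan is to reduce the claim to the $\epsilon$-LDP guarantee of a single invocation of \oue, using three structural facts about \myahead: each user transmits exactly one message (the \oue report in Step~2 of \autoref{Workflow of ahead}); the group to which a user is assigned is drawn from \emph{public} information (a fresh random coin, or registration time / user ID), independently of the private value; and the raw private value is never touched anywhere except in that one \oue report. Step~4 (post-processing) acts only on aggregated noisy frequencies, so it will follow for free from the post-processing immunity of LDP.

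First I would fix an arbitrary user $u$ with private value $v_u$ and write down explicitly the randomized map $\Psi$ producing everything the aggregator ever learns that depends on $v_u$. By \autoref{Construct 1-dim prototype tree}, $u$ belongs to exactly one group, say group $i$; there, $u$ receives the decomposition $E_i$ and budget $\epsilon$, deterministically projects $v_u$ onto the index $\pi_{E_i}(v_u)$ of the interval of $E_i$ that contains it, and releases the \oue report on $\pi_{E_i}(v_u)$ under budget $\epsilon$; nothing else about $v_u$ leaves $u$. Next I would condition on the history $H$ consisting of the group index of $u$ together with all messages sent by groups $1,\dots,i-1$: given $H$, the decomposition $E_i$ is a fixed partition, so $\pi_{E_i}$ is a fixed deterministic pre-processing map and $\Psi$ restricted to this history is just \oue (with budget $\epsilon$) composed with that fixed map. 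Since pre-composing an $\epsilon$-LDP mechanism with a fixed function on the input side cannot increase the worst-case likelihood ratio, for all $v_1, v_2 \in D$ and all output sets $T$ we get
\[
\Pr[\Psi(v_1) \in T \mid H] \;\le\; e^{\epsilon}\,\Pr[\Psi(v_2) \in T \mid H].
\]
Because the law of $H$ is built only from other users' data and public coins, hence independent of $v_u$, averaging this inequality over $H$ gives $\Pr[\Psi(v_1)\in T] \le e^{\epsilon}\Pr[\Psi(v_2)\in T]$, i.e. $\Psi$ is $\epsilon$-LDP; and since $\Psi$ is exactly $u$'s entire contribution and Step~4 only post-processes aggregates, \myahead satisfies $\epsilon$-LDP.

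The main obstacle is handling the sequential interactivity correctly: the decomposition handed to group $i$ depends on the reports of all earlier groups, so one must argue that this adaptivity introduces no privacy loss. The two observations that close the gap are that the user groups are \emph{disjoint} --- so no individual's data is queried in more than one of the $c$ rounds, and no sequential composition is ever charged to any user --- and that the branch $E_i$ is measurable with respect to quantities that are independent of the private value of any user in group $i$. Once these are isolated, the per-user view collapses to the single \oue release analyzed above, and one only needs the established fact that the \oue mechanism with budget $\epsilon$ is $\epsilon$-LDP for a domain of any size $|E_i|$.
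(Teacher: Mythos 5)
Your proposal is correct and takes essentially the same route as the paper's own proof: reduce the privacy of the whole protocol to the single \oue report each user sends in Step~2, note that the group assignment is made independently of the private value and costs no budget, and dismiss Step~3 and Step~4 as operating only on already-perturbed data. The only differences are of emphasis --- you spell out the conditioning-on-history/averaging argument for the adaptively chosen decomposition (the paper instead handles sequential interactivity by citation) and observe that projecting $v$ onto the current partition is fixed deterministic pre-processing, whereas the paper re-derives the \oue likelihood-ratio bound $\frac{p}{q}\cdot\frac{1-q}{1-p} = e^{\epsilon}$ explicitly rather than invoking it as a known fact.
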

\begin{proof}
    In Step 1, the users randomly select their group ID in $[1,2, \cdots, c]$ without privacy budget consumption. 
    In Step 2 and Step 3, \myahead sequentially interacts with users, and each user produces a single output. 
    In Step 4, \myahead does not touch users' private data, thus incurring no additional privacy budget. 
    Therefore, if the interaction with users (Step 2 and Step 3) meets the $\epsilon$-LDP, \myahead satisfies $\epsilon$-LDP \cite{joseph2020exponential}. 
    
    In each interactive round, \myahead constructs the noisy frequency based on the \oue protocol with privacy budget $\epsilon$ in Step 2. 
    For any pair of possible values $v_1, v_2 \in D$ belonging to the same user in the group $g$, a noisy binary vector $O$ is every potential output in the range of \oue. 
  \begin{small}
  \begin{align}\label{6}
  \frac{\Pr{O|v_1, g}}{\Pr{O|v_2, g}} &= \frac{\Pi_{i=1}^{\ell}\Pr{O[i]|v_1, g}}{\Pi_{i=1}^{\ell}\Pr{O[i]|v_2, g}} \nonumber \\
                        &= \frac{\Pi_{i=1}^{\ell}\Pr{O[i]|v_1} \cdot \Pr g}{\Pi_{i=1}^{\ell}\Pr{O[i]|v_2} \cdot \Pr g} \nonumber \\
                                          &\leq{\frac{\Pr{O[v_1]=1|v_1}\Pr{O[v_2]=0|v_1}}{\Pr{O[v_1]=1|v_2}\Pr{O[v_2]=0|v_2}}} \nonumber \\
                                          &= \frac{p}{q}\cdot\frac{1-q}{1-p} \nonumber \\
                                          &= \frac{1/2}{1/(1+e^\epsilon)}\cdot\frac{1-1/(1+e^\epsilon)}{1-1/2}= e^\epsilon, 
  \end{align}
  \end{small}
  where $\ell$ is the length of $O$. 
  The $p$ and $q$ are the flipping probabilities of the \oue protocol. When $p = \frac{1}{2}$ and $q = \frac{1}{(e^\epsilon +1)}$, \oue can obtain the minimal variance \cite{wang2017locally}. 
  From \autoref{6}, Step 2 satisfies $\epsilon$-LDP. 
  Since Step 3 processes uploaded noisy data, i.e., not using the users’ original private data, there is no additional privacy budget consumption. 
  Because of the above, the overall process satisfies $\epsilon$-LDP. 
\end{proof}

\section{Proof of Theorem 4.1}
\label{The proof of weighted average}
    Using \autoref{weighted average} to combine the frequencies of child nodes, the node $n$ can achieve the minimal updated variance.
\begin{proof}
According to \cite{wanglocally}, when $D$ is large and $\epsilon$ is not too large, 
each interval's estimated frequency value $\hat{f}$ is approximate to $f + X$, where $f$ is the true frequency value and $X$ is a random variable following \Gaussian distribution $\mathcal{N}(0, \myvar_{\oue})$. 
If $u$ is a leaf node, without loss of generality, we assume that 
\begin{align}
\hat{f}(n) = f(n) + \mathcal{N}(0, \myvar_{(n)}) \nonumber
\end{align}
\begin{align}
\sum_{u \in child(n)} \hat{f}(u) = \sum_{u \in child(n)}f(u) + \mathcal{N}(0, \myvar_{(u)}) \nonumber
\end{align}
The interval of node $n$ equals to the combination of its children intervals, thus 
\begin{align}
f(n) = \sum_{u \in child(n)}f(u) \nonumber
\end{align}
Assuming the used coefficients for weighted average are $\lambda_1$ and $\lambda_2$, the weighted average between $f(n)$ and $\sum_{u \in child(n)}f(u)$ is an unbiased estimation of node $n$'s frequency with $\lambda_1 + \lambda_2 = 1$. 
\begin{align}
\tilde{f}(n)=\lambda_1\hat{f}(n) + \lambda_2\sum_{u \in child(n)}\hat{f}(u) \nonumber
\end{align}
The variance of $\tilde{f}(n)$ is equal to $\lambda_1^2\myvar_{(n)} + \lambda_2^2\myvar_{child(n)}$. When 
\begin{align}
\lambda_1 = \frac{\myvar_{child(n)}}{\myvar_{child(n)} + \myvar_{(n)}} \nonumber,~~~~
\lambda_2 = \frac{\myvar_{(n)}}{\myvar_{child(n)} + \myvar_{(n)}}, \nonumber
\end{align}
we can minimize the variance of $\tilde{f}(n)$. 
After the weighted average, the variances still fit \Gaussian distribution. Thus the above proof still holds when $u$ is not a leaf node. 
\end{proof}

\section{The pseudo-code of 2-dim \myahead}
\label{The pseudo-code of 2-dim AHEAD}
As shown in \autoref{Construct 2-dim prototype tree} and \autoref{alg:2-dim post_processing}, the main difference between 1-dim and 2-dim \myahead is the decomposition process (lines 4, 9 of \autoref{Construct 2-dim prototype tree}). 

\floatname{algorithm}{Algorithm}
\renewcommand{\algorithmicrequire}{\textbf{Input:}}
\renewcommand{\algorithmicensure}{\textbf{Output:}}

{\color{red}
\begin{algorithm}[!h]
    \caption{2-dim \myahead Tree Construction}
    \label{Construct 2-dim prototype tree}
	\begin{algorithmic}[1]
        \Require All users’ value set $V = \{v_1, v_2,\ldots,v_N \}$, attribute domain $D$, tree fanout $B$, privacy budget $\epsilon$, threshold $\theta$
        \Ensure \myahead Tree $T$
		\State $c = {\log_B}|D|$
        \State $//$ Step 1: User partition
        \State Randomly divide users into $c$ parts $\left\{ V_1, V_2, \ldots, V_c \right\}$
        \State Create the root node of tree $T$ with initial interval $e^0_0 = [1, |D|] \times [1, |D|]$ and $T.\operatorname{node}(e^0_0).frequency = 1$
		\For {$i$ from 1 to $c$}
            \State $//$ Step 2: New decomposition generation
            \For {$j$, node in enumerate($T.\operatorname{node}(level=i-1)$)}
            \If {$\operatorname{node.frequency} > \theta$}
            \State Divide $\operatorname{interval} e^{i-1}_j$ into $B$ disjoint intervals $\{e^{i-1}_{j,k}\}$
            \For {$k$ from $1$ to $B$}
    		\State $\operatorname{node.add\_child}(e^{i-1}_{j,k})$
            \EndFor
            \Else
            \State $\operatorname{node.add\_child}(e^{i-1}_{j})$
            \EndIf    		
            \EndFor

            \State $//$ Step 3: Noisy frequency construction
            \State $F = \fo(V_i, T.\operatorname{node}(level=i).\operatorname{interval}, \epsilon)$
            \For {$k$, node in enumerate $T.\operatorname{node}(level=i)$}
    		\State $node.\operatorname{frequency} = F[k]$
            \EndFor
        \EndFor
        \State $//$ Step 4: Post-processing
        \State Run \autoref{alg:2-dim post_processing}
        \State return $T$
	\end{algorithmic}
\end{algorithm}

\begin{algorithm}[!h]
	\caption{Post-processing}
	\label{alg:2-dim post_processing}
	\begin{algorithmic}[1]
        \Require \myahead tree $T$, tree fanout $B$
        \Ensure \myahead tree $T$
        \For {$i$ from $1$ to $c$}
		\State norm\_sub($T.\operatorname{node(level=i)}.\operatorname{frequency}$)
        \EndFor    
        \For {$j$ from $c-1$ to $1$}
            \For {\_, node in enumerate $T.\operatorname{node}(level=j)$}
            \State $f_1$ = node.frequency, $f_2$ = $\sum{\operatorname{node.children().frequency}}$
		    \State node.frequency = $\lambda_1$$f_1$ + $\lambda_2$$f_2$
            \EndFor 
        \EndFor    
        \For {$k$ from $1$ to $c$}
            \For {\_, node in enumerate $T.\operatorname{node}(level=k)$}
                    \If {node.children() == None}:
		            \State $\operatorname{node}.\operatorname{add\_children}()$
		            \State $\operatorname{node}.\operatorname{children()}.\operatorname{frequency} = \operatorname{node}.\operatorname{frequency}/B$
                    \EndIf
            \EndFor 
        \EndFor    
        
    \end{algorithmic}
\end{algorithm}
}

\section{Complexity Analysis} 
\label{Complexity Analysis}
Here, we provides a detailed complexity analysis of the algorithms used in our evaluation. 
For ease of exposition, we assume that all attributes have the same domain $D$. 

\begin{table}[h]
\centering
\caption{Comparison of complexity for different methods. The table lists the server-side computation, server-side storage, and client-server commutation. }
    \begin{tabular}{|c|c|c|c|c|}
    \hline
    \multicolumn{2}{|l|}{}   & Time & Space & Comm \\ \hline
    \multirow{4}{*}{1-dim} & \myahead &  $O(log_{|D|}2\cdot N \cdot|D|)$   & $O(|D|)$    &  $O(|D|)$          \\ \cline{2-5} 
                         & \mycalm  &  $O(N\cdot|D|)$                    & $O(|D|)$    &  $O(|D|)$          \\ \cline{2-5} 
                         & \myhio   &  $O(log_{|D|}5\cdot N \cdot|D|)$   & $O(|D|)$    &  $O(log(|D|))$     \\ \cline{2-5} 
                         & \mydht   &  $O(N+|D|^3)$                      & $O(|D|^2)$  &  $O(log(|D|))$     \\ \hline
    \multicolumn{5}{|l|}{}                             \\ \hline
    \multirow{2}{*}{2-dim} & \myahead & $O(log_{|D|}2\cdot N \cdot|D|^2)$  & $O(|D|^2)$  &  $O(|D|^2)$        \\ \cline{2-5} 
                         & \myHDG   & $O(N)$  & $O(N^\frac{1}{2})$       & $O(log|D|)$                      \\ \hline
    \end{tabular}
\end{table}

\mypara{Time Complexity}
For 1-dim scenarios, the computation of \myahead is mainly from processing users' reports, 
which takes $O(\frac{N}{log_2^{|D|}}\cdot2^i)$ for the $i$-th group of users, 
\ie, $O(log_{|D|}2\cdot N \cdot|D|)$ in total. 
The similar arguments also hold for the hierarchy-based \myhio method, \ie, $O(log_{|D|}5\cdot N \cdot|D|)$. Because the domain size $D$ is usually large in the range query scene, 
\mycalm adopts \oue to aggregate the users' data, which takes $O(|D|)$ for each user, 
and $O(N \cdot |D|)$ in total. For \mydht, the running time is dominated by the sum operation and inverse transformation. 
Specifically, the method first sums the reports of users with the same index, which should count all of them for each user, \ie, $O(N)$. 
After that, \mydht also needs an inverse transform process to produce the estimated frequency, \ie, $O(|D|^3)$. 
For 2-dim scenarios, the domain size changes from $|D|$ to $|D| \times |D|$.
\myahead selects fanout $B = 4$ and the computation becomes $O(log_{|D|}4\cdot N \cdot|D|^2)$. 
To estimate the user frequency distribution, \myHDG should evaluate hash functions for each report from users, \ie, $O(N)$ in total. 

\mypara{Space Complexity}
We measure the storage needed except that occupied by the inputs and outputs, which is the same amount of storage for all methods. 
For 1-dim scenarios, \myahead and \myhio needs to maintain the hierarchical tree structure, requiring $O(|D|)$ storage. The \mycalm method sums length-$|D|$ vector reported by each user to calculate the frequency distribution, which also needs $O(|D|)$ storage. Due to the inverse transform process, for \mydht, a storage to count all possible intermediate results are needed, \ie, $O(|D|^2)$. 
For 2-dim scenarios, the domain becomes $D \times D$, thus \myahead require $O(|D|^2)$ storage to maintain the hierarchical tree structure. 
The storage of \myHDG depends on the granularity of 2-dim grids. 
Base on the analysis in \cite{yang2020answering}, \myHDG needs $O(N^\frac{1}{2})$ storage.

\mypara{Communication Complexity}
Since \myhio, \mydht and \myHDG uses \olh \cite{wang2017locally} as \fo, the report by \olh is only one bit, plus the index, which can be represented by $log|D|$ bits or $log|D|^2$ bits for 2-dim \myHDG. 
The \mycalm method adopts \oue, where each user should report a length-$|D|$ binary vector, 
requiring $O(|D|)$ bits. 
\myahead is dominated by distributing the domain decomposition to each user. Considering the worst case, \myahead needs $O(|D|^2)$. 

\section{Extending \myahead to high-dimensional range query} 
\label{Extending AHEAD to high-dimensional range query}

\begin{algorithm}[t]
    \caption{Associated 2-dim AHEAD Tree Construction}
    \label{Associated 2-dim AHEAD Tree Construction}
	\begin{algorithmic}[1]
        \Require All users’ value set $V = \{v_1, v_2,\ldots,v_N \}$, private attribute dimensions $m$, private attributes set $A$, attribute domain $D$, tree fanout $B$, privacy budget $\epsilon$, threshold $\theta$
        \Ensure \myahead Forest $\{T\}$
		\State $c = C^2_m$
        \State Randomly divide users into $c$ parts $\left\{ V_1, V_2, \ldots, V_c \right\}$
        \State $//$ Step 1: Building Block Construction
        \For {$k, \{a_x, a_y\}_{x \neq y}$ in enumerate(pairwise attributes)}
        \State $\{T\}.\operatorname{add}(\operatorname{2dim\_AHEAD\_tree}(V_k[a_x, a_y], D, B, \epsilon, \theta))$
        \EndFor
        \State $//$ Step 2: Consistency on Attributes
        \For {$\ell$ from 1 to $T$.height}
            \For {attribute $a_x$ in enumerate($A$)}
            \State  make $\{T.\operatorname{node}(level=\ell).\operatorname{frequency}\}$ consistent on $a_x$
            \EndFor
        \EndFor
    \State return $\{T\}$
	\end{algorithmic}
\end{algorithm}

\begin{algorithm}[t]
    \caption{Estimating Answer of $m$-dim Range Query}
    \label{Estimating Answer of m-dimensional Range Query}
	\begin{algorithmic}[1]
        \Require \myahead forest $\{T\}$, $m$-dim range query $R_{\bigcap {[\alpha_j,\beta_j]}_{j=1}^{m}}$
        \Ensure answer of $m$-dim range query
        \State $//$ generate a set of $m$-dim range queries
        \State $Q(q)=\left\{\wedge\left(a_{j},\left[\alpha_{j}, \beta_{j}\right] \text { or }\overline{\left[\alpha_{j}, \beta_{j}\right]}\right) \mid a_{j} \in A\right\}$
        \State $//$ associated 2-dim queries’ answers
        \For {$\_, \{a_j, a_k\}_{j \neq k}$ in enumerate(pairwise attributes)}
        \State $Q(q^{(x,y)})=\left\{\wedge\left(a_{j},\left[\alpha_{j}, \beta_{j}\right] \text { or }\overline{\left[\alpha_{j}, \beta_{j}\right]}\right) \mid a_{j} \in (a_x, a_y)\right\}$
        \State $f_{q^{(x,y)}}(Q(q^{(x,y)})) = T^{(x,y)}.\operatorname{frequency}( Q(q^{(x,y)}))$
        \EndFor
        \State $//$ Step 3: Maximum Entropy Optimization
        \State $\text { maximize }-\sum_{g \in Q(q)} f_{q}(g) \cdot \log \left(f_{q}(g)\right)$ 
    \State return $f_q(R_{\bigcap {[\alpha_j,\beta_j]}_{j=1}^{m}})$
	\end{algorithmic}
\end{algorithm}

The extending process contains three steps as follows. 
\begin{itemize}[leftmargin=*]
    \item \mypara{Step 1: Building Block Construction} \myahead groups all attributes in pairs to form $C_m^2$ 2-dim attribute pairs. 
    Then, \myahead estimates the frequency distributions for the 2-dim attribute pairs separately, \ie, a total of $C_m^2$ 2-dim trees. 

    \item \mypara{Step 2: Consistency on Attributes}
    \myahead achieves consistency on all $m$ attributes among the related 2-dim trees. 
    For example, the attribute $a$ is involved in $(m-1)$ attribute pairs. 
    Assume these $(m-1)$ 2-dim trees are $\{T_1, T_2, T_3, \cdots, T_{m-1}\}$ and each tree has $\ell$ layers except for the root node. 
    For an integer $ k \in [1, B^{\ell/2}]$, we define $f_{T_i}(a, \ell, k)$ to be the sum of frequencies of $T_i$ nodes in level $\ell$, whose specified sub-domain corresponds to $a$ is in $[(k-1)\cdot\frac{|D|}{B^{\ell/2}}+1, k\cdot\frac{|D|}{B^{\ell/2}}]$. 
    To make all $f_{T_i}(a, \ell, k)$ consistent, \myahead calculates their weighted average as $f(a, \ell, k) = \sum_{i=1}^{m-1}\lambda_{i} \cdot f_{T_i}(a, \ell, k)$, where $\lambda_i$ is the weight of $f_{T_i}(a, \ell, k)$ and $\sum_{i=1}^{m-1}\lambda_i = 1$. 
    Then, we should select the values of weight $\lambda$ to minimize the variance of $f(a, \ell, k)$, \ie, $\myvar[f(a, \ell, k)] = \sum_{i=1}^{m-1}\lambda_i^2\cdot \myvar[f_{T_i}(a, \ell, k)]$, where $\myvar[f_{T_i}(a, \ell, k)]$ is the total variance of the nodes used for the calculation of $f_{T_i}(a, \ell, k)$. 
    Based on the analysis in \cite{yang2020answering}, when the weights are inversely proportional to the variance of the estimates, $\myvar[f(a, \ell, k)]$ achieves the minimum. 
    Thus, the optimal weight $\lambda_i = \frac{1}{\myvar[f_{T_i}(a, \ell, k)]}/\sum_{i=1}^{m-1}\frac{1}{\myvar[f_{T_i}(a, \ell, k)]}$. 
    We should update each involved node in $T_i$ by adding the amount of change $(f(a, \ell, k) - f_{T_i}(a, \ell, k))/B^{\ell/2}$ to make each $f_{T_i}(a, \ell, k)$ equal to $f(a, \ell, k)$. 
    Based on the analysis in \cite{qardaji2014priview}, the consistency process makes $\{T_1, T_2, T_3, \cdots, T_{m-1}\}$ agree on attribute $a$ without changing the frequency distributions of other attributes. 
    Thus, following any order of these attributes, \myahead can achieve consistency on all $m$ attributes. 
    Noting that each 2-dim tree has $\ell$ layers, \myahead needs to conduct the above process for all $\ell$ layers. 
    
    \item \mypara{Step 3: Maximum Entropy Optimization}
    The problem we faced is to estimate the frequency of the $m$-dim query with partial information from 2-dim queries. We adopt the principle of Maximum Entropy \cite{qardaji2014priview}. 
    Specifically, for an $m$-dim range query $q$, we can define a set of range queries as $$Q(q)=\left\{\wedge\left(a_{j},\left[\alpha_{j}, \beta_{j}\right] \text { or }\overline{\left[\alpha_{j}, \beta_{j}\right]}\right) \mid a_{j} \in A\right\}\text{,}$$
    where $[\alpha_t, \beta_t]$ represents the query interval and $\overline{\left[\alpha_{t}, \beta_{t}\right]}$ is the complement of it. 
    For $2^m$ queries $g \in Q(q)$, we define $f_q(g)$ as the set of answers for queries in $Q(q)$. 
    Similarly, for 2-dim scenarios, we can obtain the query set 
    $$Q(q^{(j,k)})=\left\{\left(a_{j},\left[\alpha_{j}, \beta_{j}\right] \text { or }\overline{\left[\alpha_{j}, \beta_{j}\right]}\right) \wedge \left(a_{k},\left[\alpha_{k}, \beta_{k}\right] \text { or }\overline{\left[\alpha_{k}, \beta_{k}\right]}\right)\right\}\text{,}$$
    and answer set $f_{q^{(j,k)}}$. 
    For any query $g^{(j,k)} \in Q(q^{(j,k)})$, we use $f_{q^{(j,k)}}(g^{(j,k)})$ to denote its answer. 
    In particular, for a $g^{(j,k)} \in Q(q^{(j,k)})$,  $f_{q}(g^{(j,k)})$ means $g^{(j,k)}$'s answer constructed from $f_q$ by summing up the answers of the associated queries in $Q(q)$. 
    Then we can formulate the following optimization problem: 
$$
\begin{array}{ll}
\text { maximize } & -\sum_{g \in Q(q)} f_{q}(g) \cdot \log \left(f_{q}(g)\right) \\
\text { subject to } & \forall_{g \in Q(q)} f_{q}(g) \geq 0 \\
& \forall_{a_j, a_k \in A} \forall_{g^{(j, k)} \in Q\left(q^{(j, k)}\right)} f_{q^{(j, k)}}(g^{(j, k)})=f_{q}(g^{(j, k)})
\end{array}
$$
    The above optimization problem can be addressed by an off-the-shelf convex optimization tool. 
    To solve the frequency estimation problem more efficiently, \myahead can adopt the Weighted Update \cite{yang2020answering}, which achieves almost the same accuracy as the Maximum Entropy. 
\end{itemize}

\section{Validation of Threshold Choice}
\label{Validation of Threshold Choice}
\begin{figure*}[ht]
    \centering
    \subfigure[Loan, $|D| = 256$, vary $\theta$]{
    \includegraphics[width=0.23\hsize]{./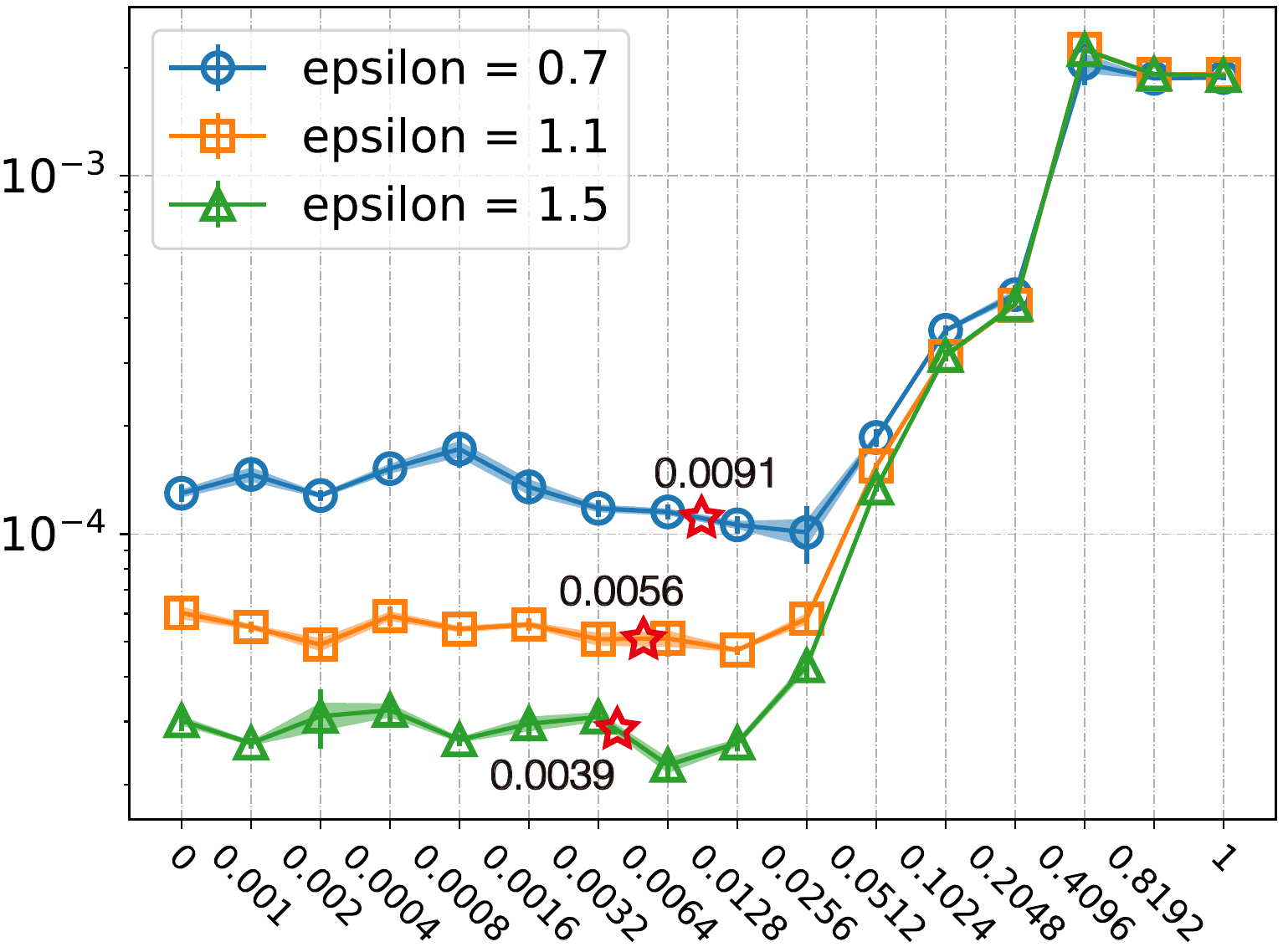}
    }
    \subfigure[Financial, $|D| = 512$, vary $\theta$]{
    \includegraphics[width=0.23\hsize]{./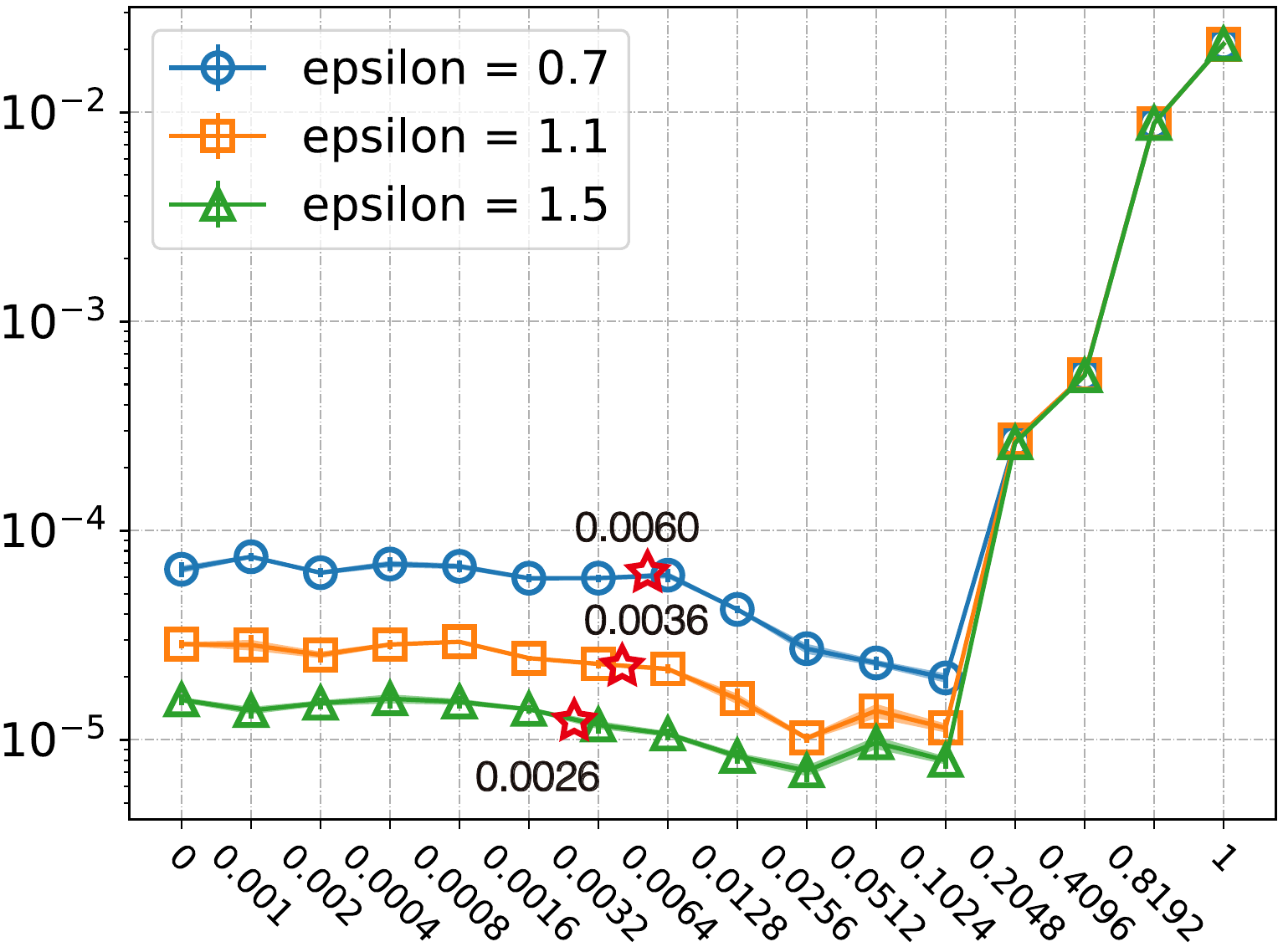}
    }
    \subfigure[BlackFriday, $|D| = 512$, vary $\theta$]{
    \includegraphics[width=0.23\hsize]{./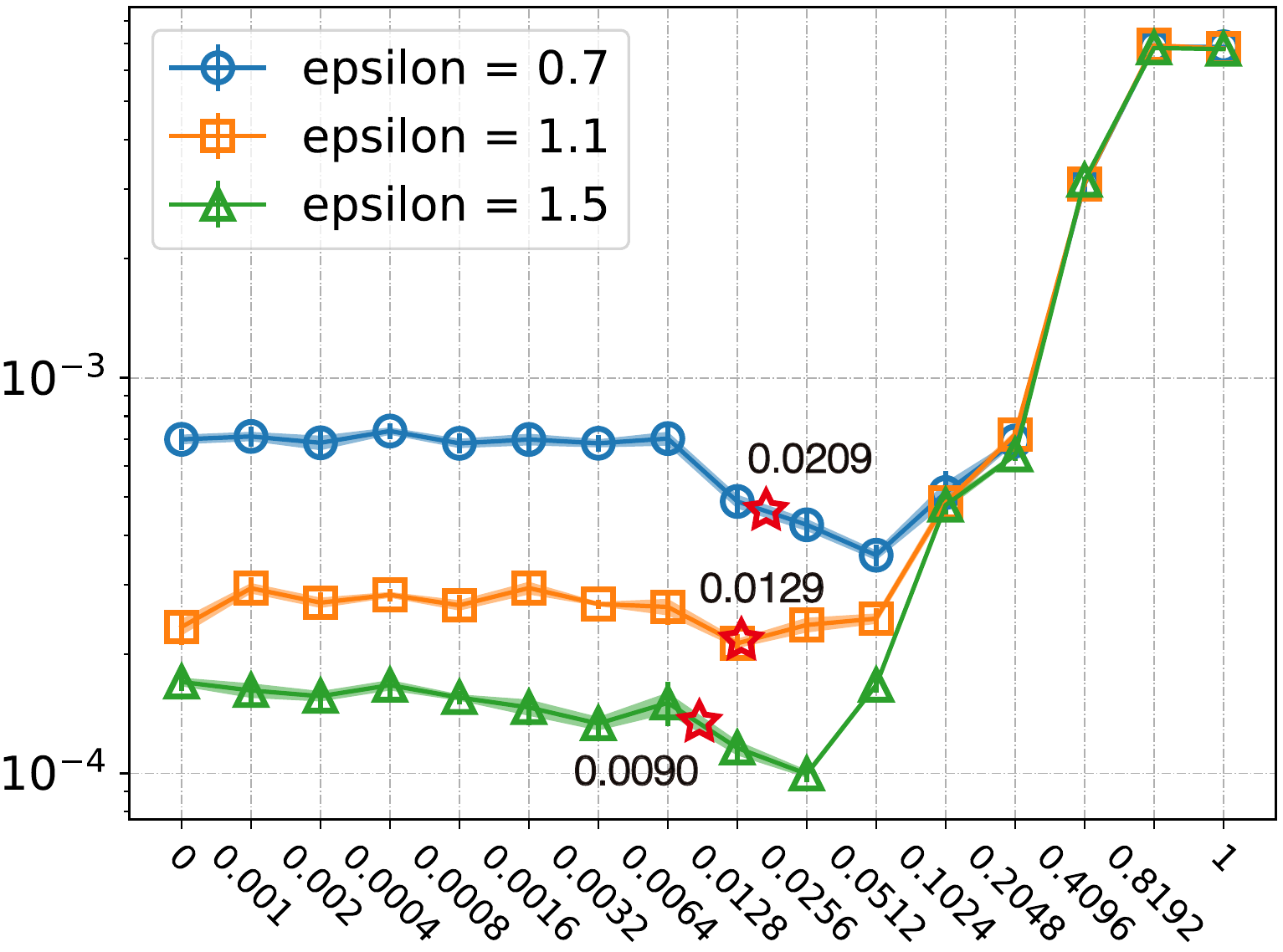}
    }
    \subfigure[Salaries, $|D| = 1024$, vary $\theta$]{
    \includegraphics[width=0.23\hsize]{./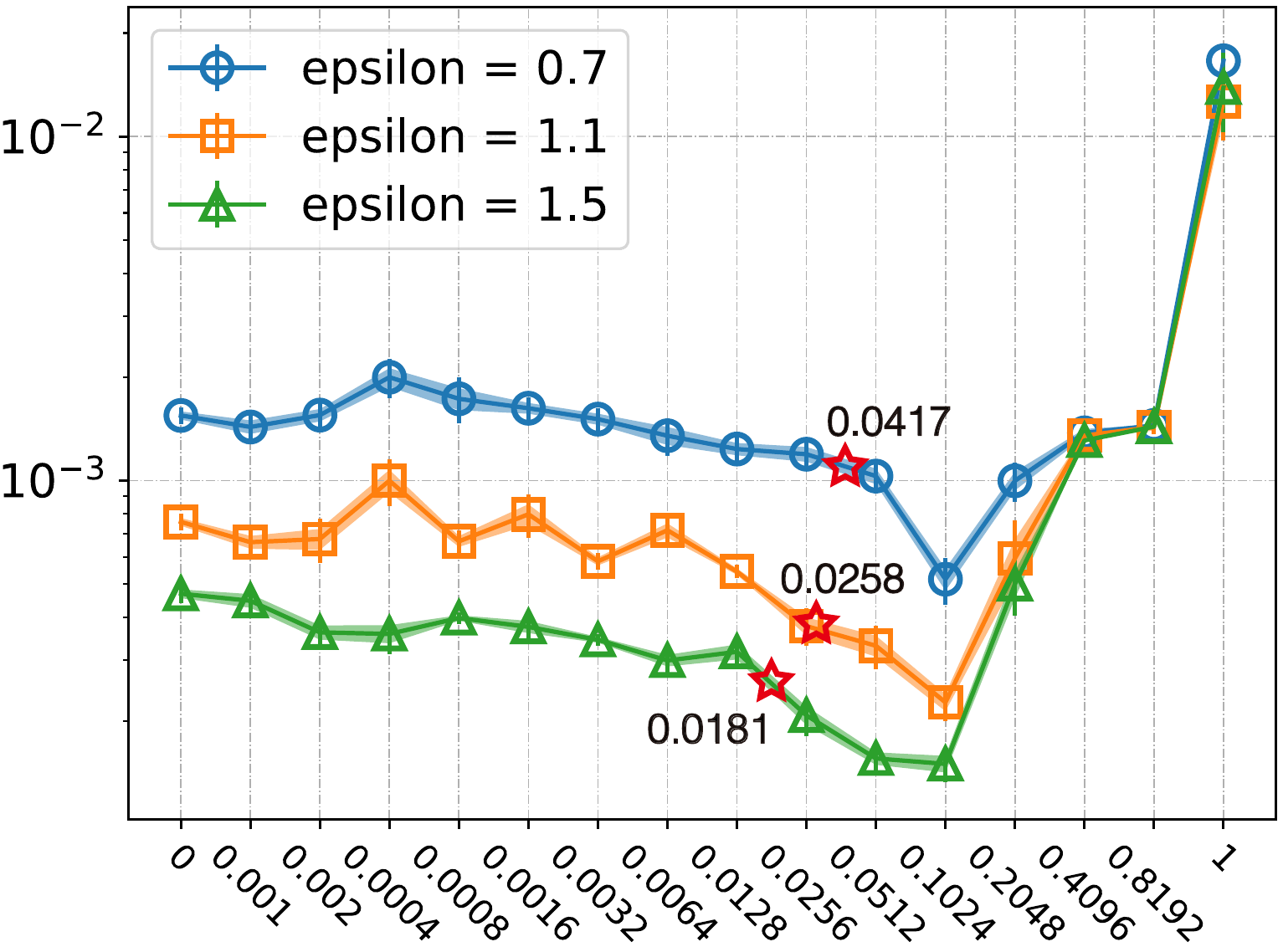}
    }
    \vspace{-0.4cm}
    \caption{Impact of threshold setting under various privacy budgets. The red star corresponds to our theoretical threshold obtained from \autoref{theta setting1}. The results are shown in log scale. }
    \vspace{-0.3cm}
    \label{SNRvalid}
\end{figure*}

\begin{figure*}[ht]
    \centering
    \subfigure[Loan, $|D| = 256$, $\theta = 0.006$]{\label{fig:leaf_Distribution-Loan}
    \includegraphics[width=0.235\hsize]{./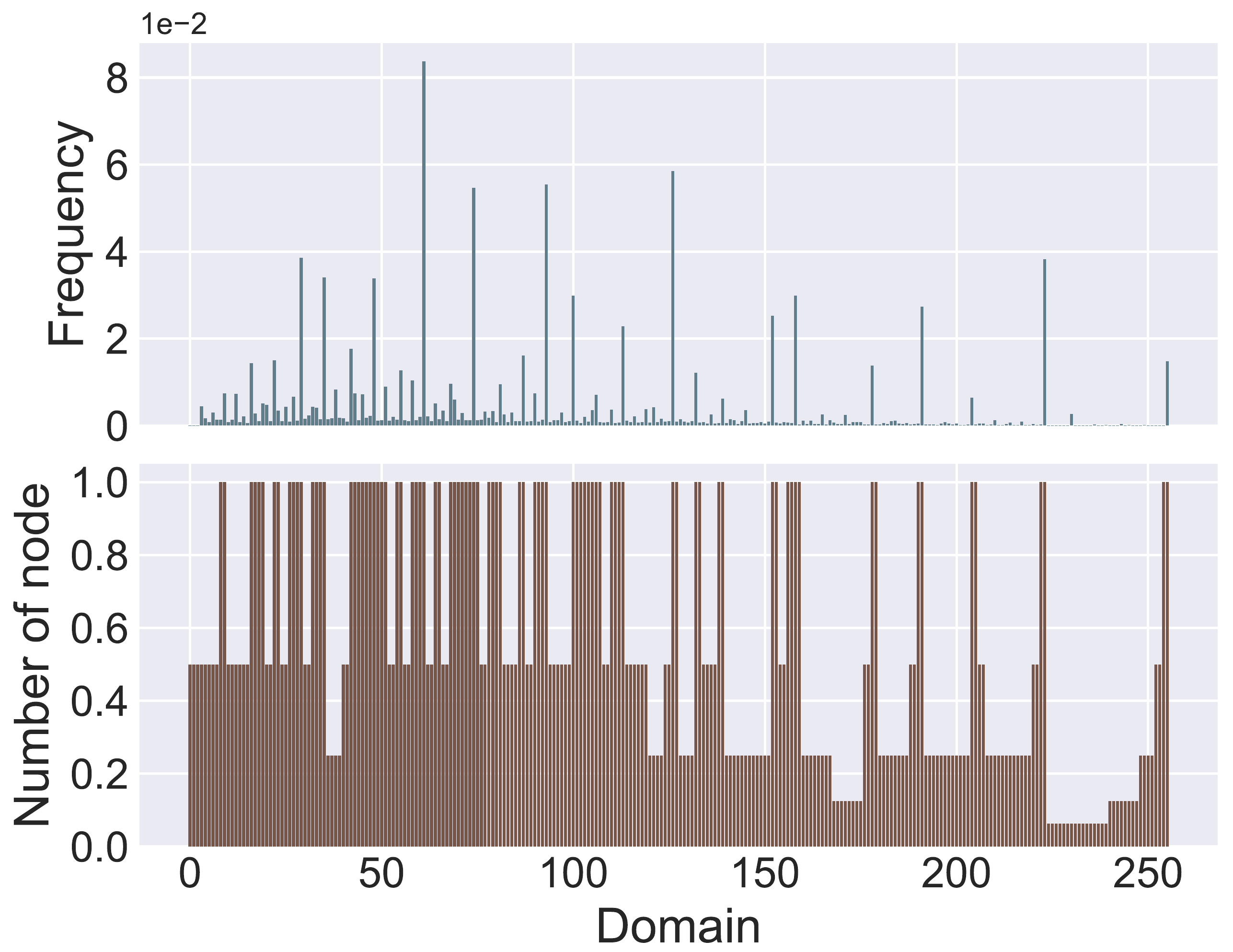}
    }
    \subfigure[Financial, $|D| = 512$, $\theta = 0.004$]{\label{fig:leaf_Distribution-Financial}
    \includegraphics[width=0.235\hsize]{./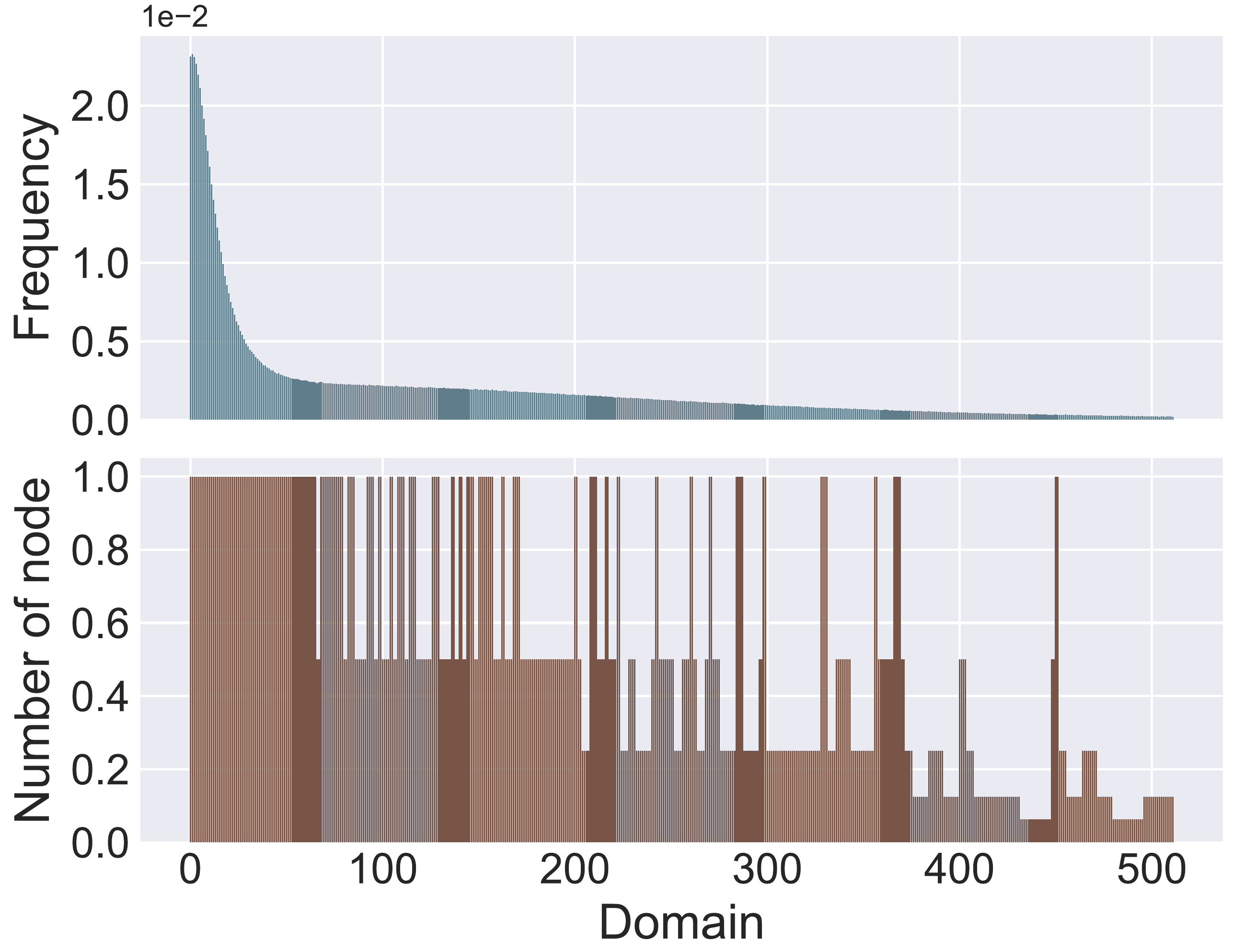}
    }
    \subfigure[BlackFriday, $|D| = 1024$, $\theta = 0.013$]{\label{fig:leaf_Distribution-BlackFriday}
    \includegraphics[width=0.235\hsize]{./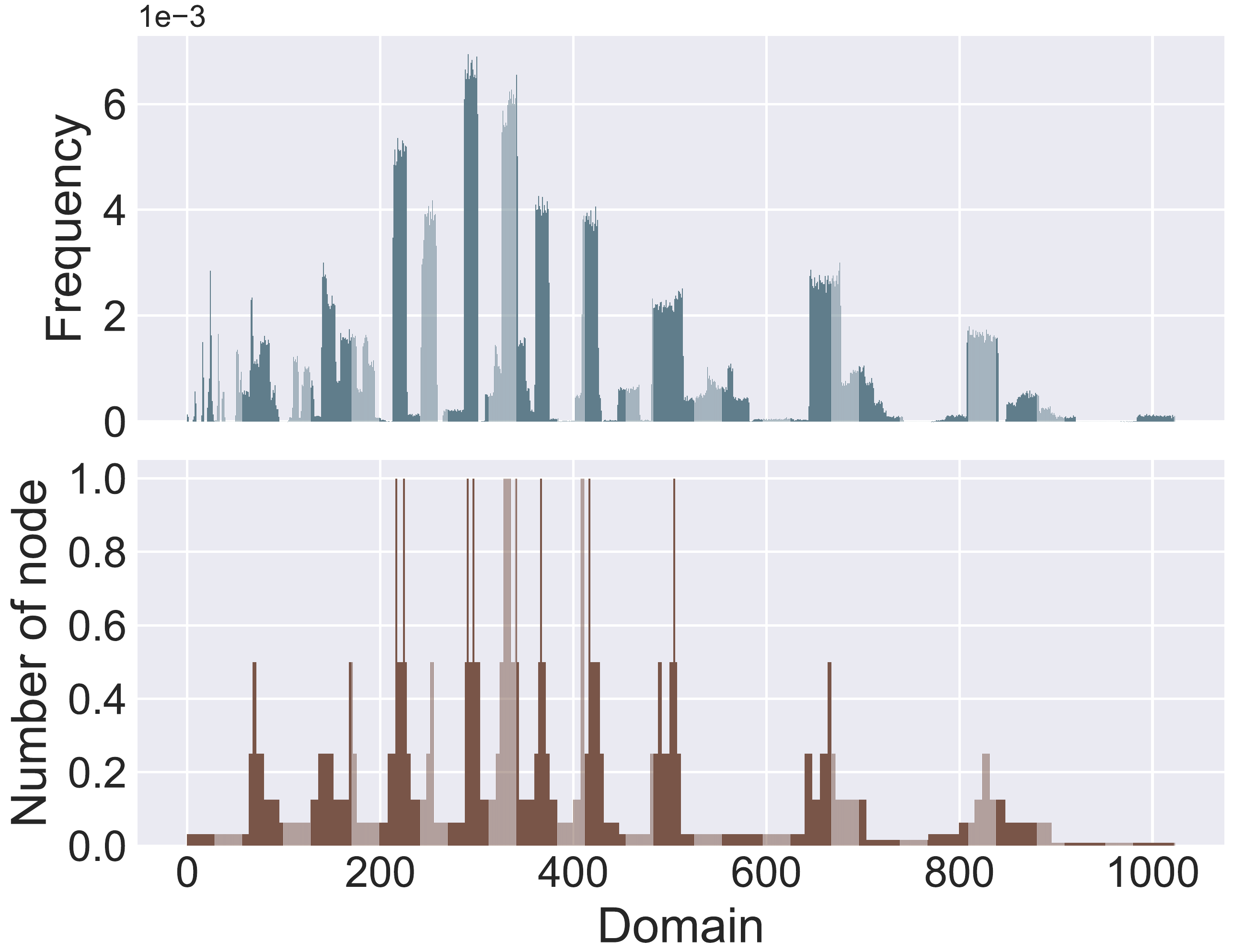}
    }
    \subfigure[Salaries, $|D| = 2048$, $\theta = 0.026$]{\label{fig:leaf_Distribution-Salaries}
    \includegraphics[width=0.235\hsize]{./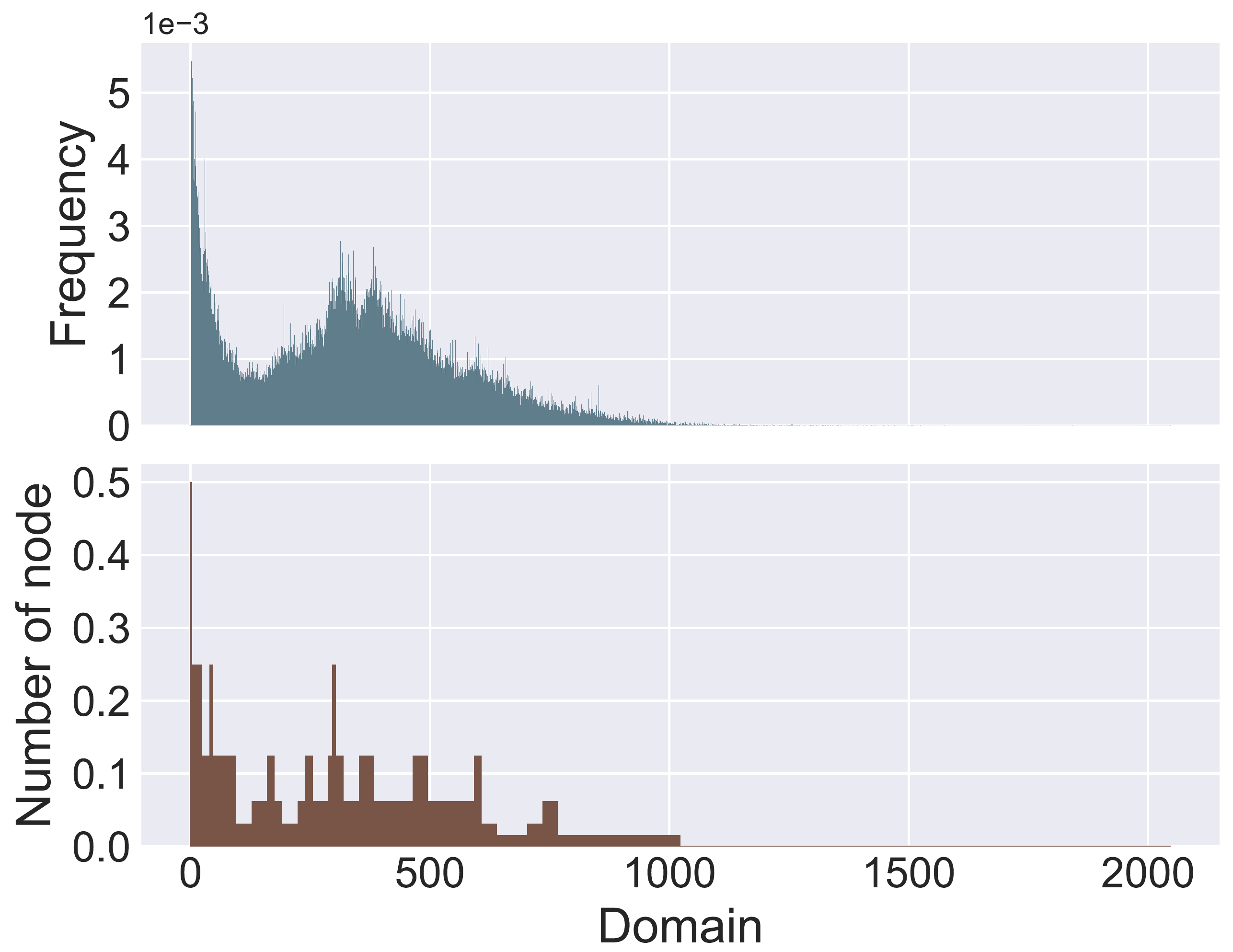}
    }
    \vspace{-0.4cm}
    \caption{The distribution of the number of leaves in \myahead in 1-dim scenes. The top of each picture shows the true frequency distribution, and the bottom shows the corresponding distribution of the number of leaves with $\epsilon = 1.1$. }
    \vspace{-0.2cm}
    \label{1-dim node distribution}
\end{figure*}

\begin{figure*}[ht]
    \centering
    \subfigure[\Laplacian, $|D| = 256^2$, $\theta = 0.003$]{\label{fig:leaf_Distribution-Laplacian256}
    \includegraphics[width=0.235\hsize]{./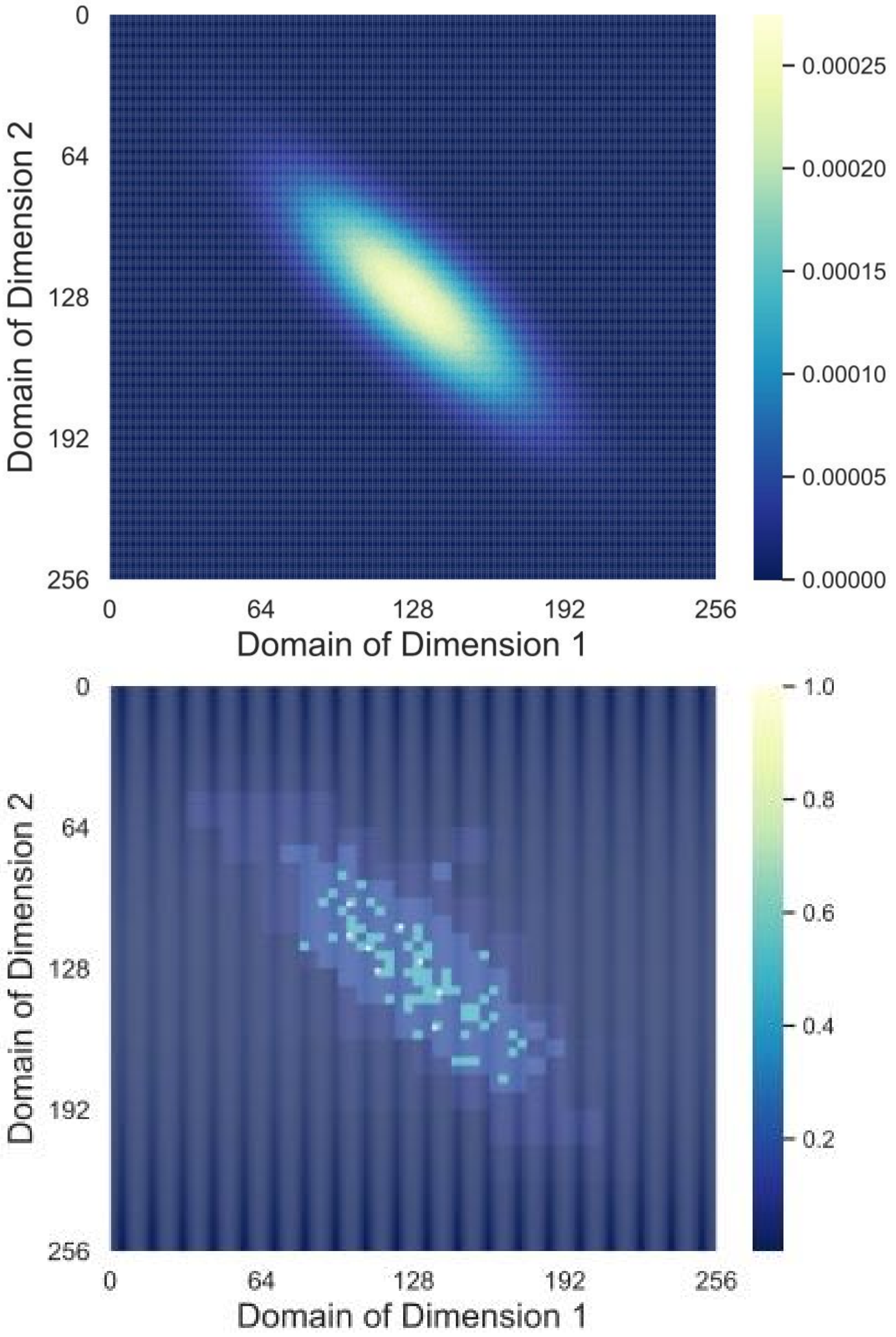}
    }
    \subfigure[\Laplacian, $|D| = 1024^2$, $\theta = 0.004$]{\label{fig:leaf_Distribution-Laplacian1024}
    \includegraphics[width=0.235\hsize]{./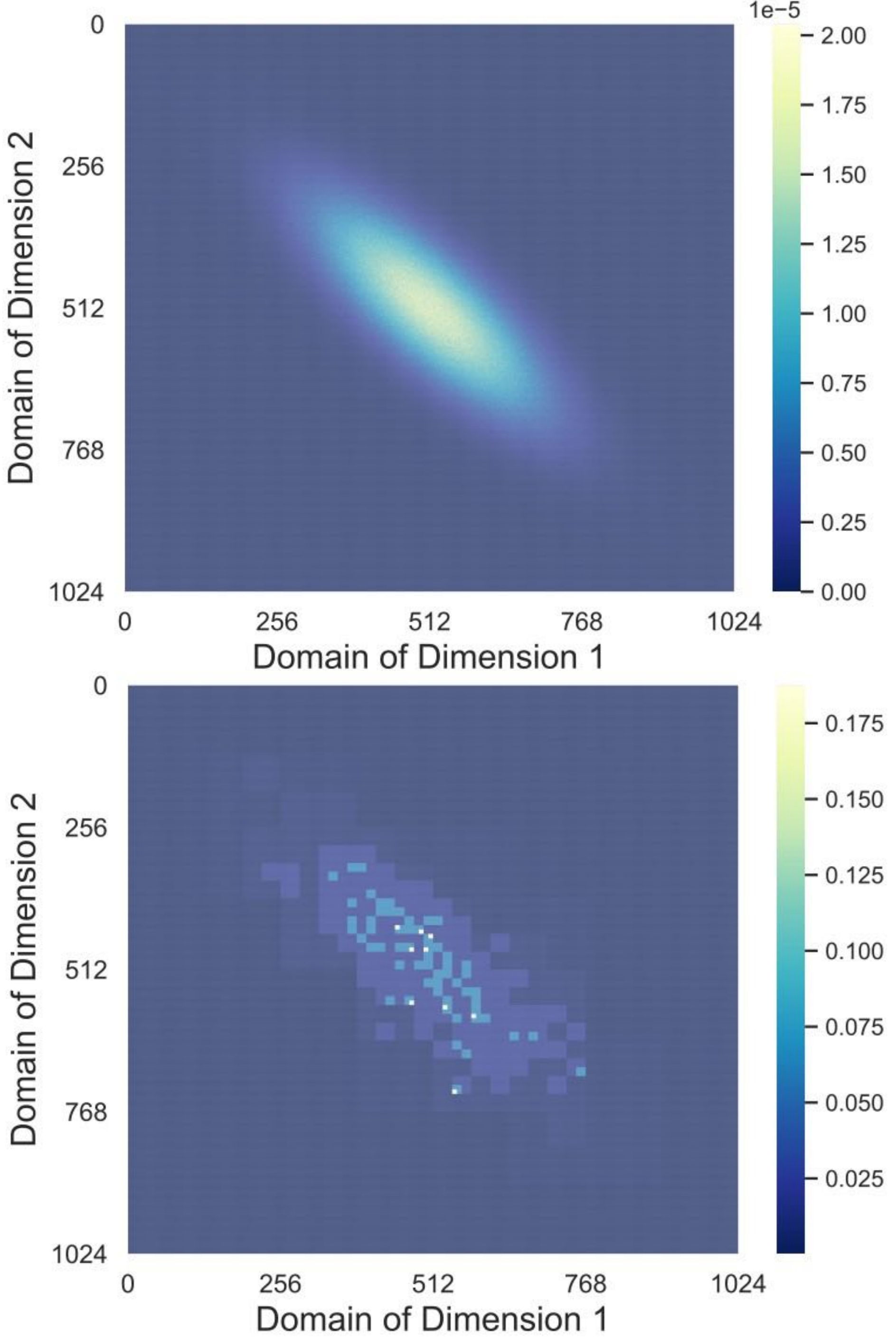}
    }
    \subfigure[\Gaussian, $|D| = 256^2$, $\theta = 0.003$]{\label{fig:leaf_Distribution-guassian256}
    \includegraphics[width=0.235\hsize]{./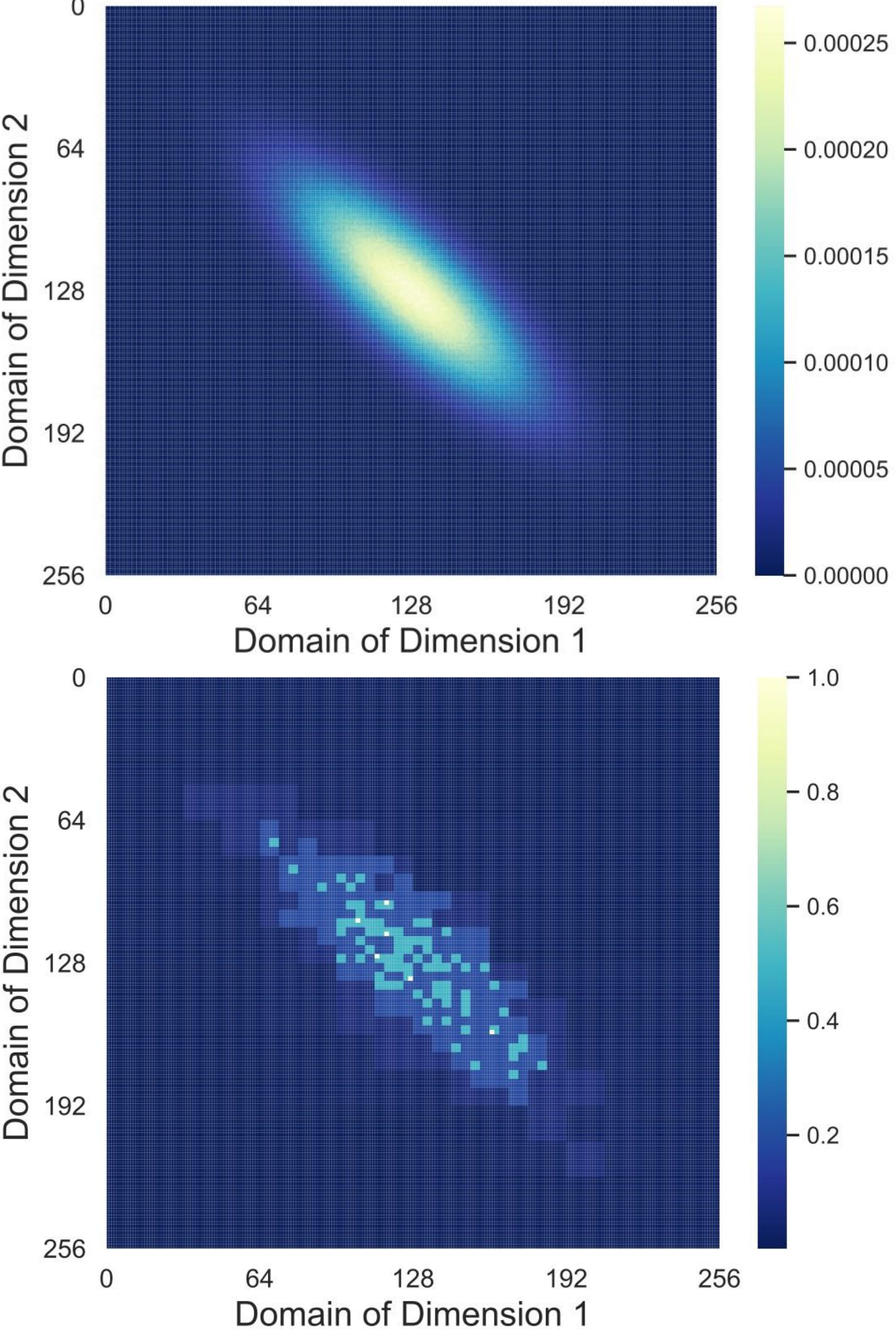}
    }
    \subfigure[\Gaussian, $|D| = 1024^2$, $\theta = 0.004$]{\label{fig:leaf_Distribution-guassian1024}
    \includegraphics[width=0.235\hsize]{./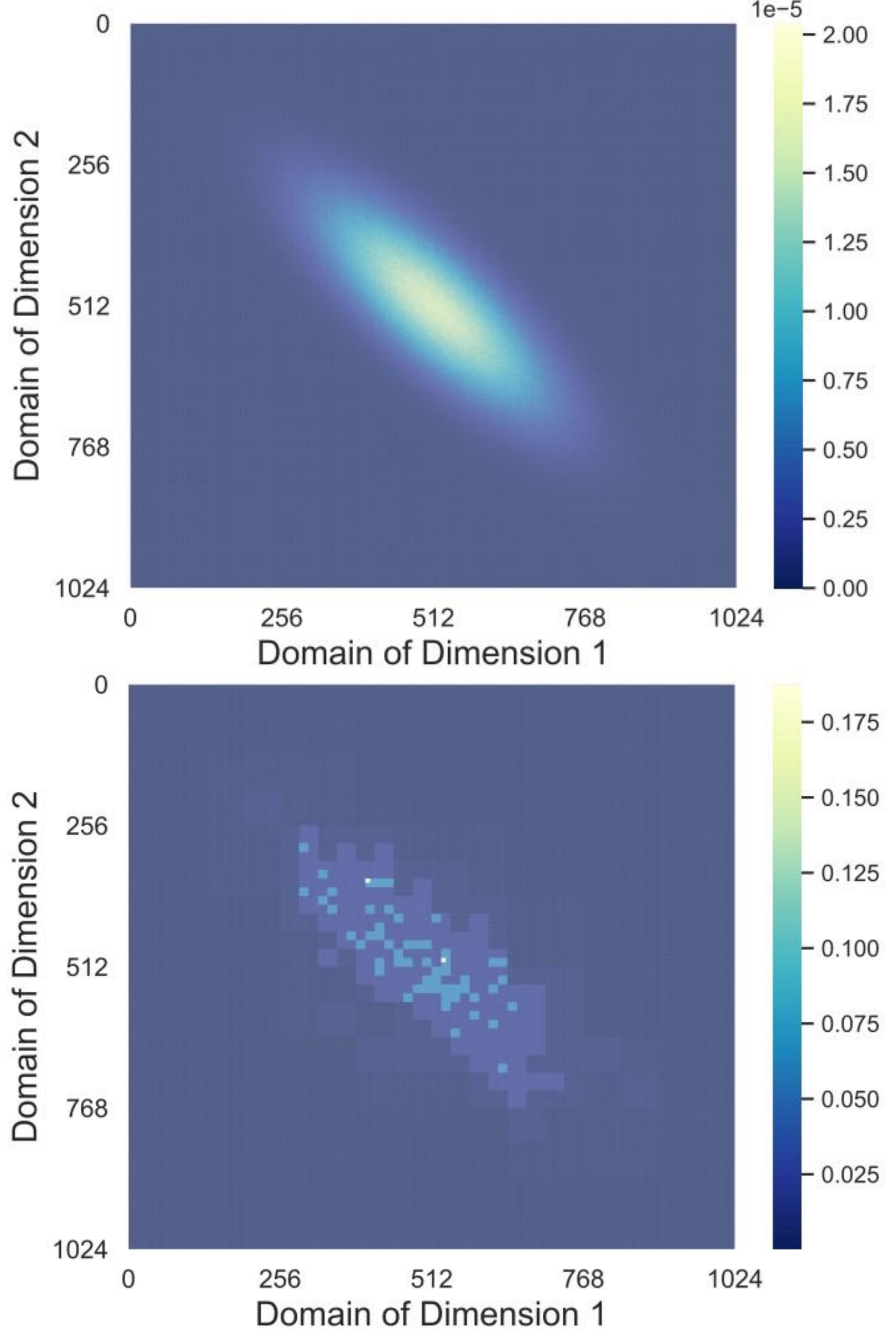}
    }
    \vspace{-0.4cm}
    \caption{The distribution of the number of leaves in \myahead in 2-dim scenes. The top of each picture shows the true frequency distribution, and the bottom shows the corresponding distribution of the number of leaves with $\epsilon = 1.1$. }
    \vspace{-0.2cm}
    \label{2-dim node distribution}
\end{figure*}

Recalling the analysis in \autoref{Selection parameters}, by setting a threshold, we do not divide the sub-domains whose frequencies are smaller than the threshold. 
A reasonable threshold setting can balance the noise error and non-uniform error thus minimizing the overall estimation error.

Here, we omit the post-processing step in \myahead to highlight the effect of the threshold values. 
\autoref{SNRvalid} shows the impact of different threshold settings in \myahead. 
The horizontal axis of each plot represents the threshold from 0 to 1. 
The red star on each line is the threshold value obtained by \autoref{theta setting1} in \autoref{Selection parameters}.

From \autoref{SNRvalid}, we have the following observations consistent with our analysis. 
1) A large threshold will cause the MSE to increase significantly. 
The reason is that a larger threshold would make the estimated frequency distribution closer to the uniform distribution. In this way, the non-uniform error will dominate the estimation error (recall \autoref{method_2}) thus degrading the overall accuracy. 
On the Loan dataset, when the threshold value is larger than 0.4096, the MSE becomes more than $2 \times 10^{-3}$, which is ten times of the minimum MSE. 
2) A small threshold has little impact on MSE, \eg, on the BlackFriday dataset, the MSE hardly changes when the threshold is less than 0.0064. 
In this case, the MSE is mainly caused by the noise error, which is related to the privacy budget. 
3) On different datasets, the optimal experimental threshold values are also various. 
For example, on the BlackFriday dataset, the optimal experimental threshold is 0.0256 for $\epsilon = 1.5$. 
While on the Salaries dataset, the optimal experimental threshold is 0.1024. 
4) It is worth noting that the theoretical $\theta$ is not exactly the empirically observed optimum in some cases. 
In the derivation of $\theta$, we use the true frequency value $f$. 
However, in the practical implementation of \myahead, we only have access to the estimated frequency value $\hat{f}$, \ie, the true frequency value $f$ with a random noise variable $X$. 
Since \oue is an unbiased protocol, the expected value of $\hat{f}$ is equal to $f$, making the theoretical $\theta$ close to the empirically observed optimum.

We also provide the distribution of the number of leaves in \autoref{1-dim node distribution} and \autoref{2-dim node distribution}. 
The upper part of each sub-graph is the true frequency distribution, and the following is the corresponding node number distribution of \myahead with $\epsilon = 1.1$. 
From \autoref{1-dim node distribution} and \autoref{2-dim node distribution}, the node number distributions are almost the same with the true frequency distributions, which confirms the rationality of the threshold selection in \autoref{Selection parameters}. 
More specifically, when the frequency of the sub-domain is large, \myahead further divides the sub-domain, \ie, more nodes for estimation, to reduce the non-uniform error. 
Otherwise, \myahead does not decompose the sub-domain, \ie, less nodes for estimation, to suppress the noise error. 
In addition, from \autoref{overall compare Salaries} and \autoref{fig:leaf_Distribution-Salaries},  
when the frequencies are more uniformly distributed across nodes in most subtrees (the nodes fell into the interval $[1024, 2048]$), \myahead will behave better owing to smaller non-uniform errors (refer to more details in \autoref{Privacy and Utility Analysis}).

\section{Dataset Description} 
\label{Dataset Description}

\begin{table}[!ht]
\caption{Correlation between attributes of Salaries. }
\begin{tabular}{|c|ccccc|}
\hline
                 & $a_1$              & $a_2$                       & $a_3$                      & $a_4$                 & $a_5$  \\ \hline
$a_1$ & \multicolumn{1}{c|}{1}       & \multicolumn{1}{c|}{-0.6638} & \multicolumn{1}{c|}{-0.3631} & \multicolumn{1}{c|}{0.2344} & 0.2344   \\ \cline{2-6} 
$a_2$          & \multicolumn{1}{c|}{-0.6638} & \multicolumn{1}{c|}{1}       & \multicolumn{1}{c|}{0.0208}  & \multicolumn{1}{c|}{0.2442} & 0.2442   \\ \cline{2-6} 
$a_3$         & \multicolumn{1}{c|}{-0.3631} & \multicolumn{1}{c|}{0.0208}  & \multicolumn{1}{c|}{1}       & \multicolumn{1}{c|}{0.4665} & 0.4665   \\ \cline{2-6} 
$a_4$      & \multicolumn{1}{c|}{0.2344}  & \multicolumn{1}{c|}{0.2442}  & \multicolumn{1}{c|}{0.4665}  & \multicolumn{1}{c|}{1}      & 1        \\ \cline{2-6} 
$a_5$         & \multicolumn{1}{c|}{0.2344}  & \multicolumn{1}{c|}{0.2442}  & \multicolumn{1}{c|}{0.4665}  & \multicolumn{1}{c|}{1}      & 1        \\ \hline
\end{tabular}

\label{table: Correlation between Attributes of Salaries.}
\vspace{-0.6cm}
\end{table}
\begin{table}[!ht]
\caption{Correlation between attributes of BlackFriday. }
\begin{tabular}{|c|ccccc|}
\hline
                     & $a_1$              & $a_2$                       & $a_3$                      & $a_4$                 & $a_5$ \\ \hline
$a_1$            & \multicolumn{1}{c|}{1}       & \multicolumn{1}{c|}{-0.0508} & \multicolumn{1}{c|}{-0.0030} & \multicolumn{1}{c|}{-0.0062} & 0.0083     \\ \cline{2-6} 
$a_2$ & \multicolumn{1}{c|}{-0.0508} & \multicolumn{1}{c|}{1}       & \multicolumn{1}{c|}{-0.0552} & \multicolumn{1}{c|}{-0.4010} & -0.3749    \\ \cline{2-6} 
$a_3$ & \multicolumn{1}{c|}{-0.0030} & \multicolumn{1}{c|}{-0.0552} & \multicolumn{1}{c|}{1}       & \multicolumn{1}{c|}{0.0774}  & 0.0566     \\ \cline{2-6} 
$a_4$ & \multicolumn{1}{c|}{-0.0062} & \multicolumn{1}{c|}{-0.4010} & \multicolumn{1}{c|}{0.0774}  & \multicolumn{1}{c|}{1}       & 0.3239     \\ \cline{2-6} 
$a_5$          & \multicolumn{1}{c|}{0.0083}  & \multicolumn{1}{c|}{-0.3749} & \multicolumn{1}{c|}{0.0566}  & \multicolumn{1}{c|}{0.3239}  & 1          \\ \hline
\end{tabular}
\label{table: Correlation between Attributes of BlackFriday.}
\end{table}
\begin{table}[!ht]
\caption{Correlation between attributes of Loan. }
\begin{tabular}{|c|ccccc|}
\hline
                  & $a_1$              & $a_2$                       & $a_3$                      & $a_4$                 & $a_5$ \\ \hline
$a_1$ & \multicolumn{1}{c|}{1}      & \multicolumn{1}{c|}{0.0018} & \multicolumn{1}{c|}{0.9431} & \multicolumn{1}{c|}{0.2398} & 0.1519            \\ \cline{2-6} 
$a_2$         & \multicolumn{1}{c|}{0.0018} & \multicolumn{1}{c|}{1}      & \multicolumn{1}{c|}{0.0460} & \multicolumn{1}{c|}{0.0436} & 0.0481            \\ \cline{2-6} 
$a_3$       & \multicolumn{1}{c|}{0.9431} & \multicolumn{1}{c|}{0.0460} & \multicolumn{1}{c|}{1}      & \multicolumn{1}{c|}{0.2633} & 0.1651            \\ \cline{2-6} 
$a_4$      & \multicolumn{1}{c|}{0.2398} & \multicolumn{1}{c|}{0.0436} & \multicolumn{1}{c|}{0.2633} & \multicolumn{1}{c|}{1}      & 0.9612            \\ \cline{2-6} 
$a_5$ & \multicolumn{1}{c|}{0.1519} & \multicolumn{1}{c|}{0.0481} & \multicolumn{1}{c|}{0.1651} & \multicolumn{1}{c|}{0.9612} & 1                 \\ \hline
\end{tabular}

\label{table: Correlation between Attributes of Loan.}
\vspace{-0.6cm}
\end{table}
Here, we provide a detailed description of the datasets used in our evaluation. 
\begin{itemize}
    \item Salaries\footnote{https://www.kaggle.com/kaggle/sf-salaries}: This dataset is about San Francisco city employee salary data on an annual basis from 2011 to 2014. 
    It contains 148,654 records and 13 attributes, where we select 5 attributes as shown in \autoref{table: Correlation between Attributes of Salaries.}. 
    \item BlackFriday\footnote{https://www.kaggle.com/roshansharma/black-friday}: This dataset is a sample of the transaction records in a retail store, who wants to know  the customer purchase behavior against different products. 
    It contains 537,577 records and 12 attributes, where we select 5 attributes as shown in \autoref{table: Correlation between Attributes of BlackFriday.} 
    \item Loan\footnote{https://www.kaggle.com/wordsforthewise/lending-club}: This dataset provides the complete loan data of Lending Club for all loans issued through 2007-2018. 
    It contains 2,260,668 records and 150 attributes, where we select 5 attributes as shown in \autoref{table: Correlation between Attributes of Loan.} 
    \item Financial\footnote{https://www.kaggle.com/ntnu-testimon/paysim1}: This synthetic dataset is generated by the PaySim mobile money simulator \cite{lopez2016paysim}. 
    It contains 6,362,620 records and 11 attributes, including transaction type, customer ID and transaction amount. 
    Since the distribution of transaction amount is quite skewed, 
    we truncate the data greater than 500,000. 
    After the truncation, the processed dataset has 6,022,336 records, \ie, $94.6\%$ of the original data remaining. 
    Then, we divide the range of transaction amount [0, 500000] into slots of a fixed length and bucketize the records with a domain size of 512. 
\end{itemize}

\section{High-dimensional Range Query on Real datasets}
\label{High-dimensional Range Query on Real datasets}
In this section, we evaluate \myahead on high-dimensional real-world datasets, where the correlations between the selected attributes are shown in \autoref{table: Correlation between Attributes of Salaries.}, \autoref{table: Correlation between Attributes of BlackFriday.} and \autoref{table: Correlation between Attributes of Loan.}. 
With domain size $|D|=64$ for each dimension, we show the MSE of \myahead for 2-dim, 3-dim and 5-dim range query, respectively. 
Under each setting, we set 8 different privacy budgets. 

Base on the results in \autoref{fig: high-dim real dataset}, we have the following observations which are consistent with our analysis in \autoref{Extension to Multi-dimensional Settings} and \autoref{Effectiveness of ahead for high-dimensional Range Query}. 
1) \myahead outperforms \myHDG throughout most cases. 
2) \myahead with \lle obtains lower MSEs than \de. 
3) The data utility of \myHDG changes significantly with the correlation of attributes, and becomes worse with a stronger correlation. 
For instance, as shown in \autoref{fig:Rand_ep-Loan-Ori-Domain6_Attribute2}, \autoref{fig:Rand_ep-Loan-Ori-Domain6_Attribute3} and \autoref{fig:Rand_ep-Loan-Ori-Domain6_Attribute5}, the superiority of \myahead will increase with the stronger correlation between attributes, \ie, `installment' in 3-dim and `last\_pymnt\_amnt' in 5-dim (recall \autoref{table: Correlation between Attributes of Loan.}).

\begin{figure*}[!t]
    \centering
    \subfigure[2-dim, Salaries, vary $\epsilon$]{\label{fig:Rand_ep-Salaries-Ori-Domain6_Attribute2}\includegraphics[width=0.25\hsize]{./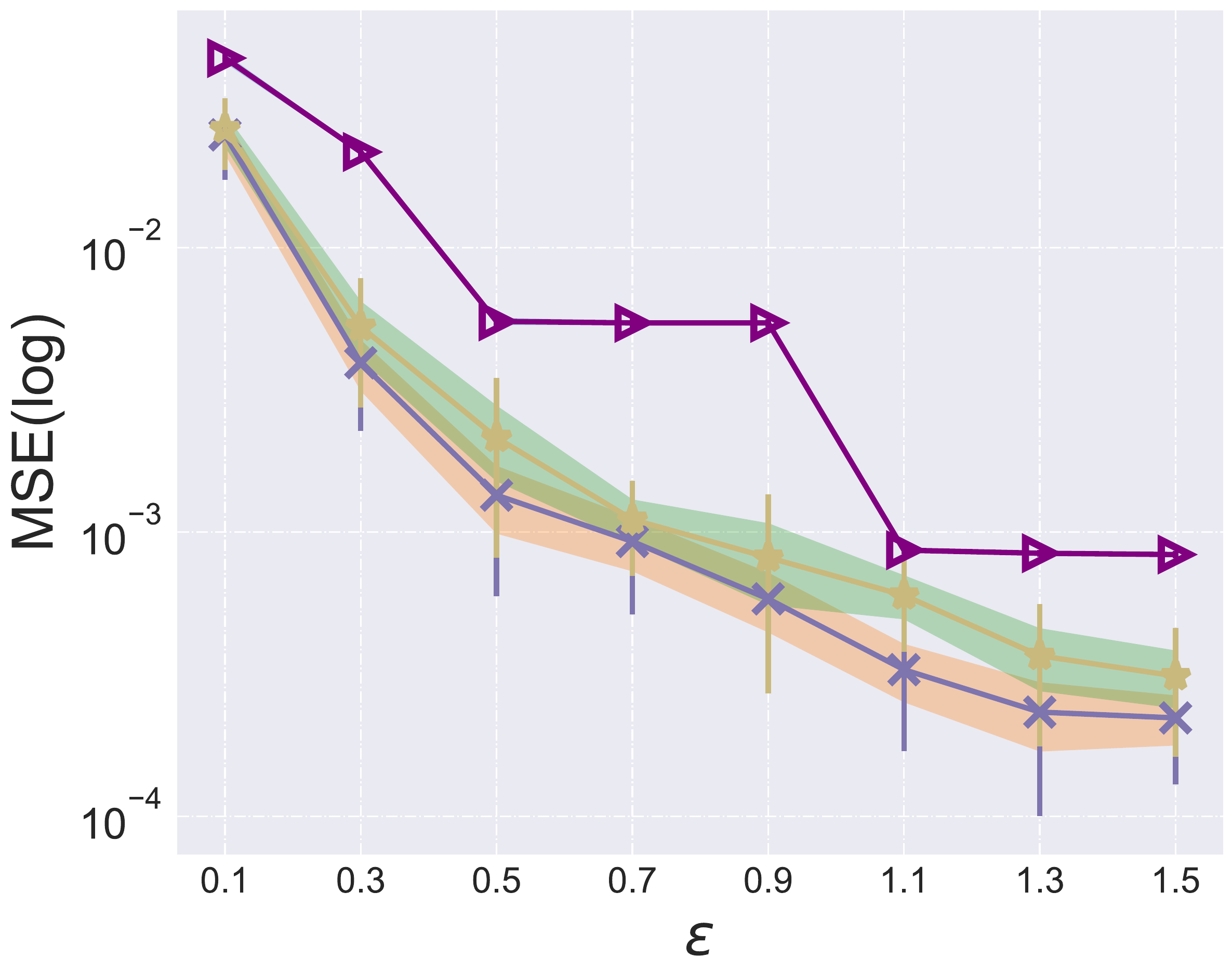}}
    \subfigure[3-dim, Salaries, vary $\epsilon$]{\label{fig:Rand_ep-Salaries-Ori-Domain6_Attribute3}\includegraphics[width=0.25\hsize]{./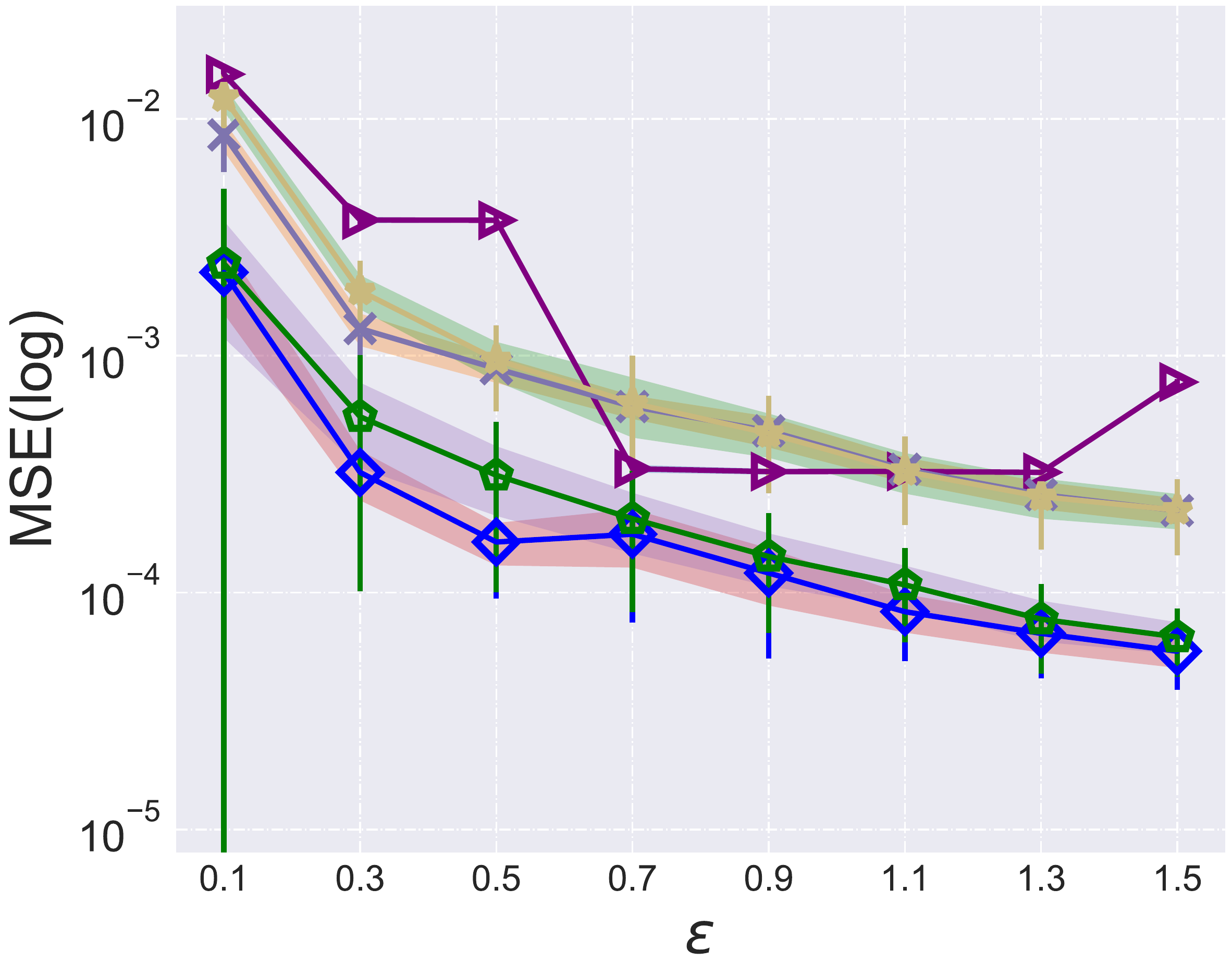}}
    \subfigure[5-dim, Salaries, vary $\epsilon$]{\label{fig:Rand_ep-Salaries-Ori-Domain6_Attribute5}\includegraphics[width=0.25\hsize]{./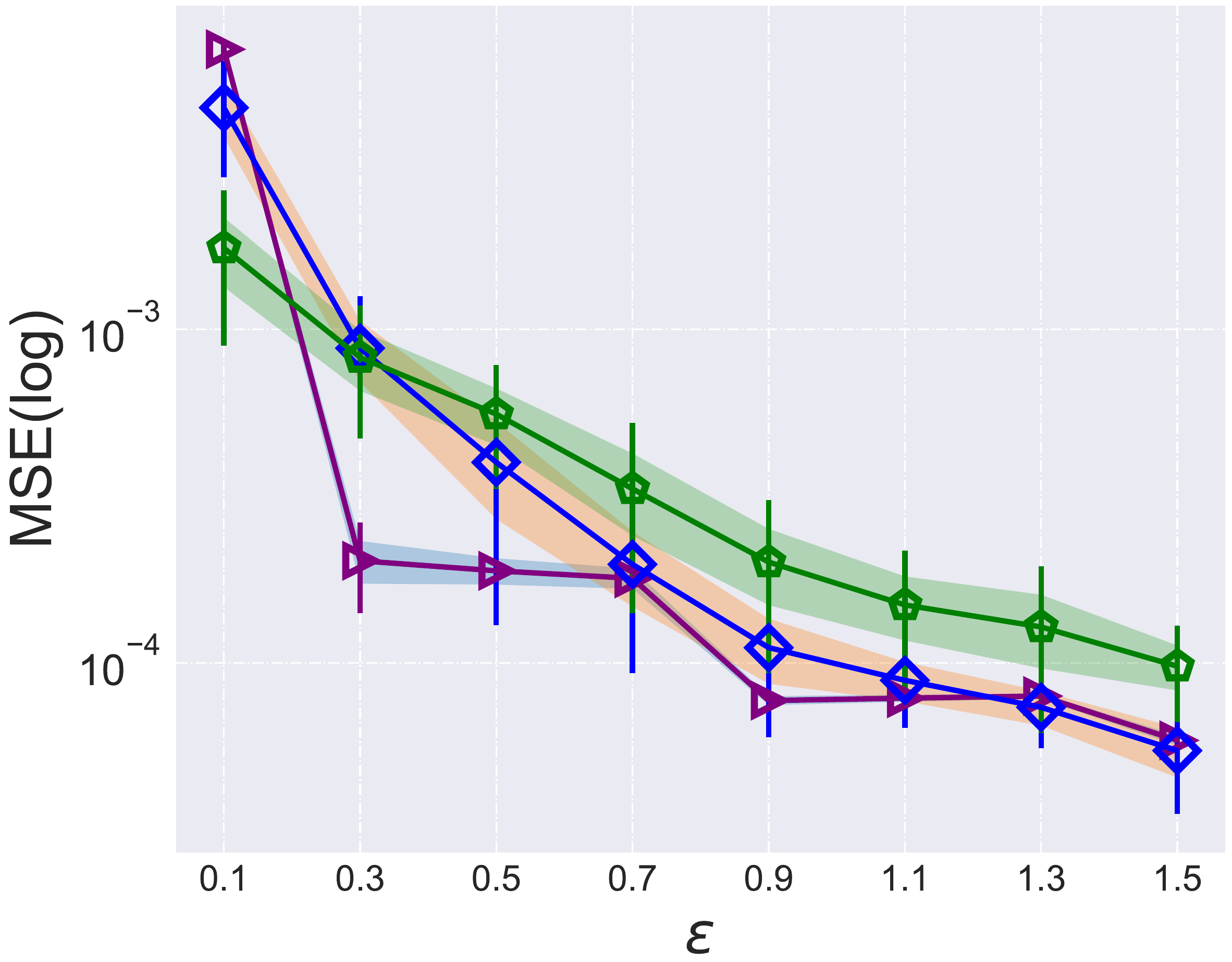}}\\[-2ex]
    \subfigure[2-dim, BlackFriday, vary $\epsilon$]{\label{fig:Rand_ep-BlackFriday-Ori-Domain6_Attribute2}\includegraphics[width=0.25\hsize]{./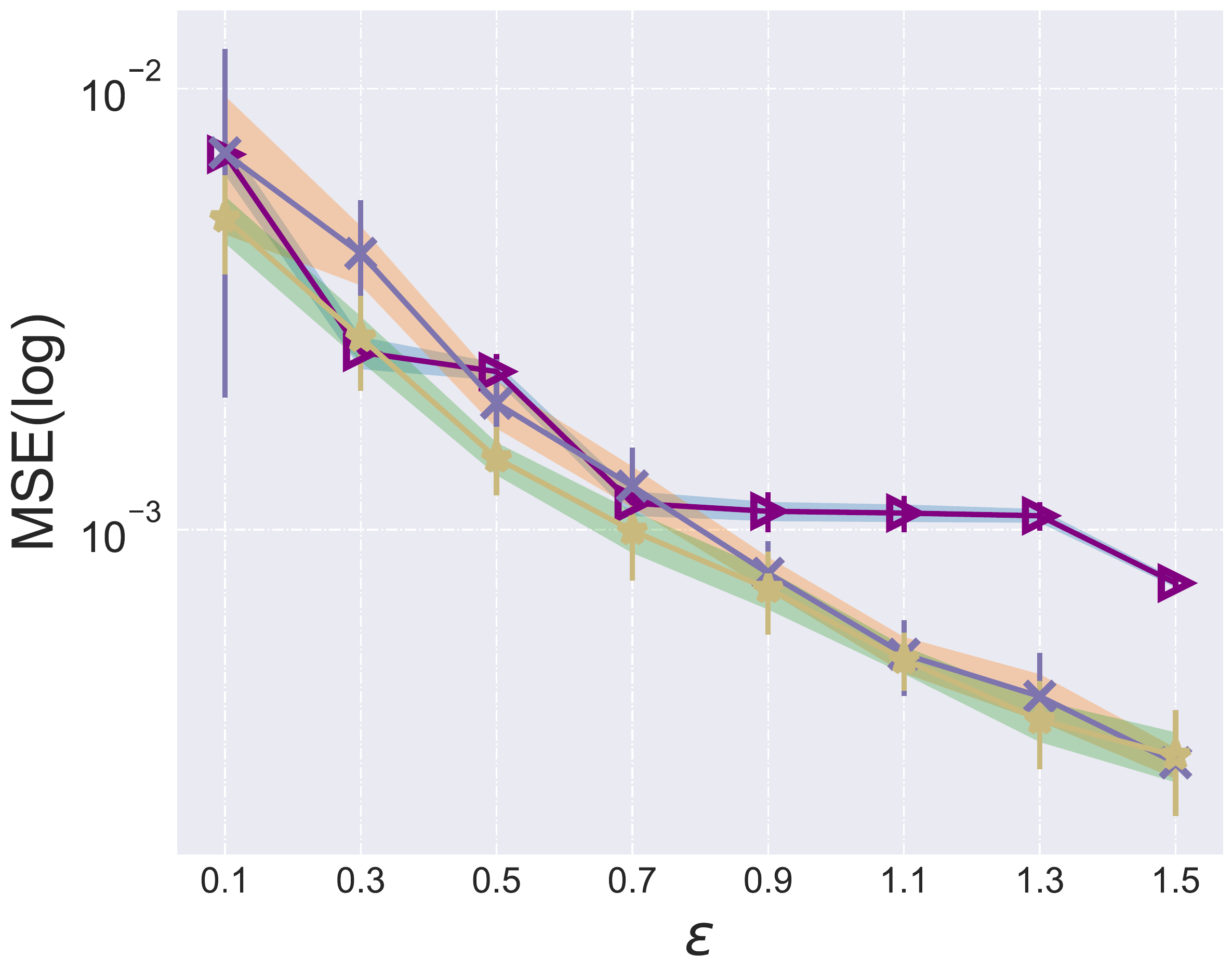}}
    \subfigure[3-dim, BlackFriday, vary $\epsilon$]{\label{fig:Rand_ep-BlackFriday-Ori-Domain6_Attribute3}\includegraphics[width=0.25\hsize]{./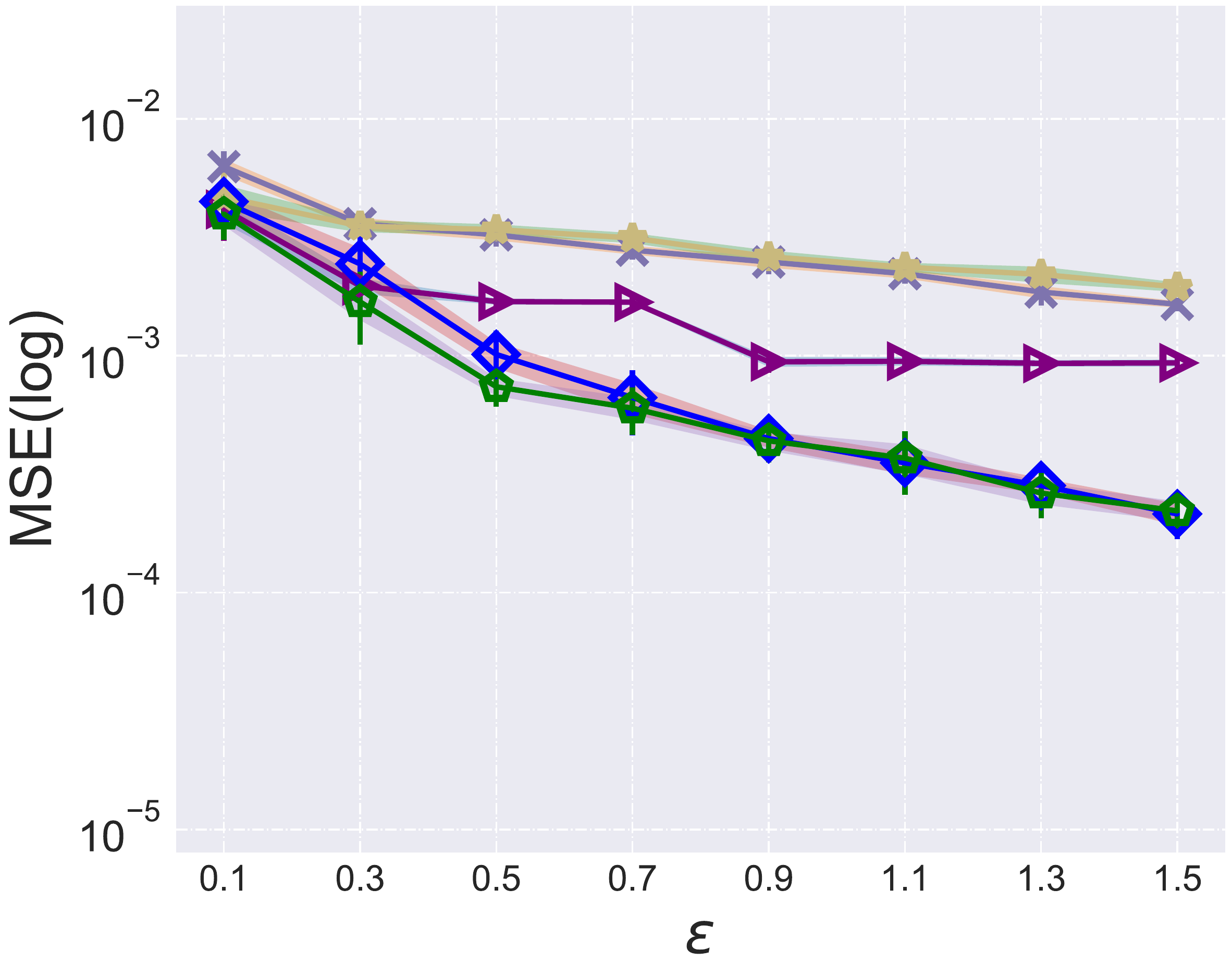}}
    \subfigure[5-dim, BlackFriday, vary $\epsilon$]{\label{fig:Rand_ep-BlackFriday-Ori-Domain6_Attribute5}\includegraphics[width=0.25\hsize]{./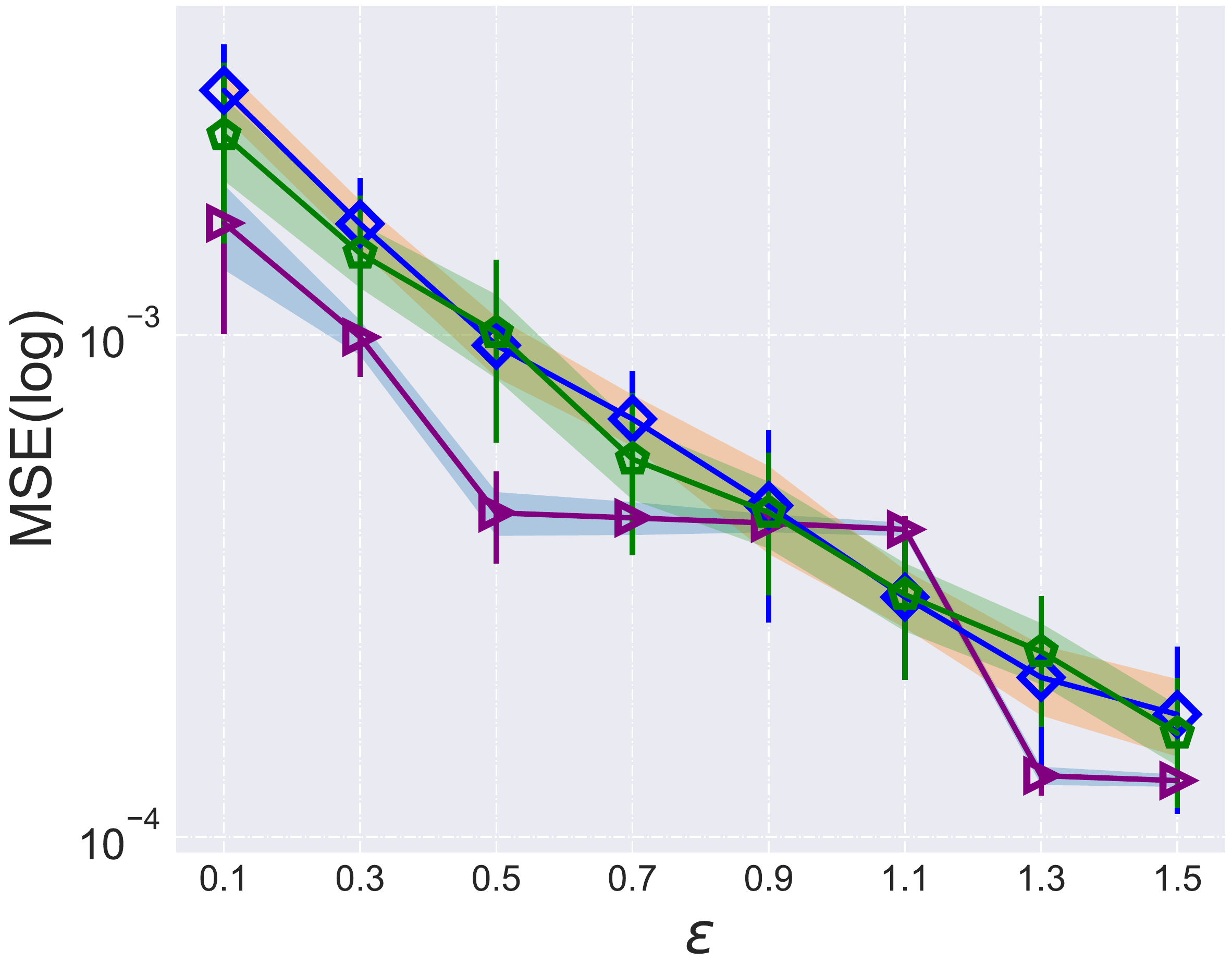}}\\[-2ex]
    \subfigure[2-dim, Loan, vary $\epsilon$]{\label{fig:Rand_ep-Loan-Ori-Domain6_Attribute2}\includegraphics[width=0.25\hsize]{./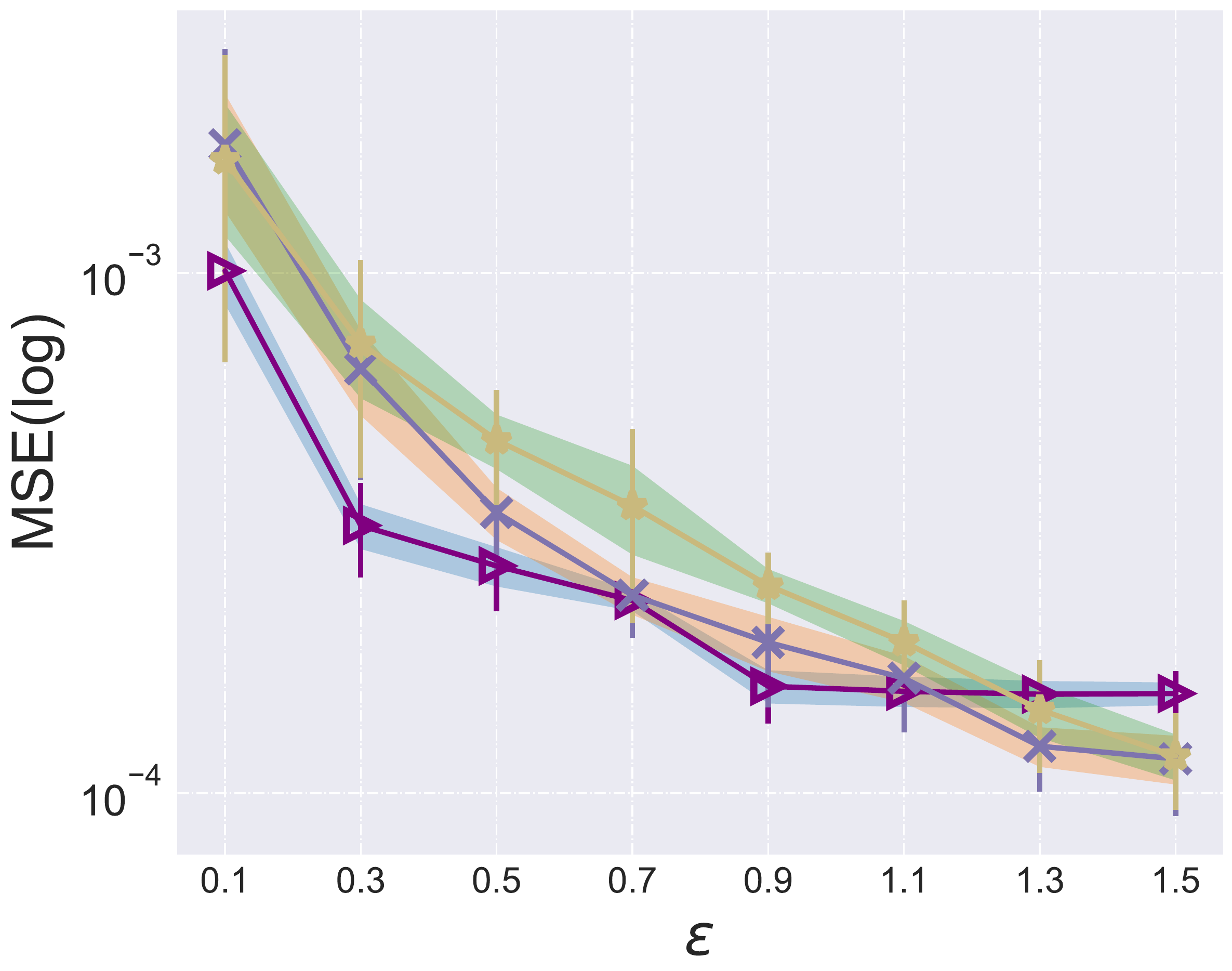}}
    \subfigure[3-dim, Loan, vary $\epsilon$]{\label{fig:Rand_ep-Loan-Ori-Domain6_Attribute3}\includegraphics[width=0.25\hsize]{./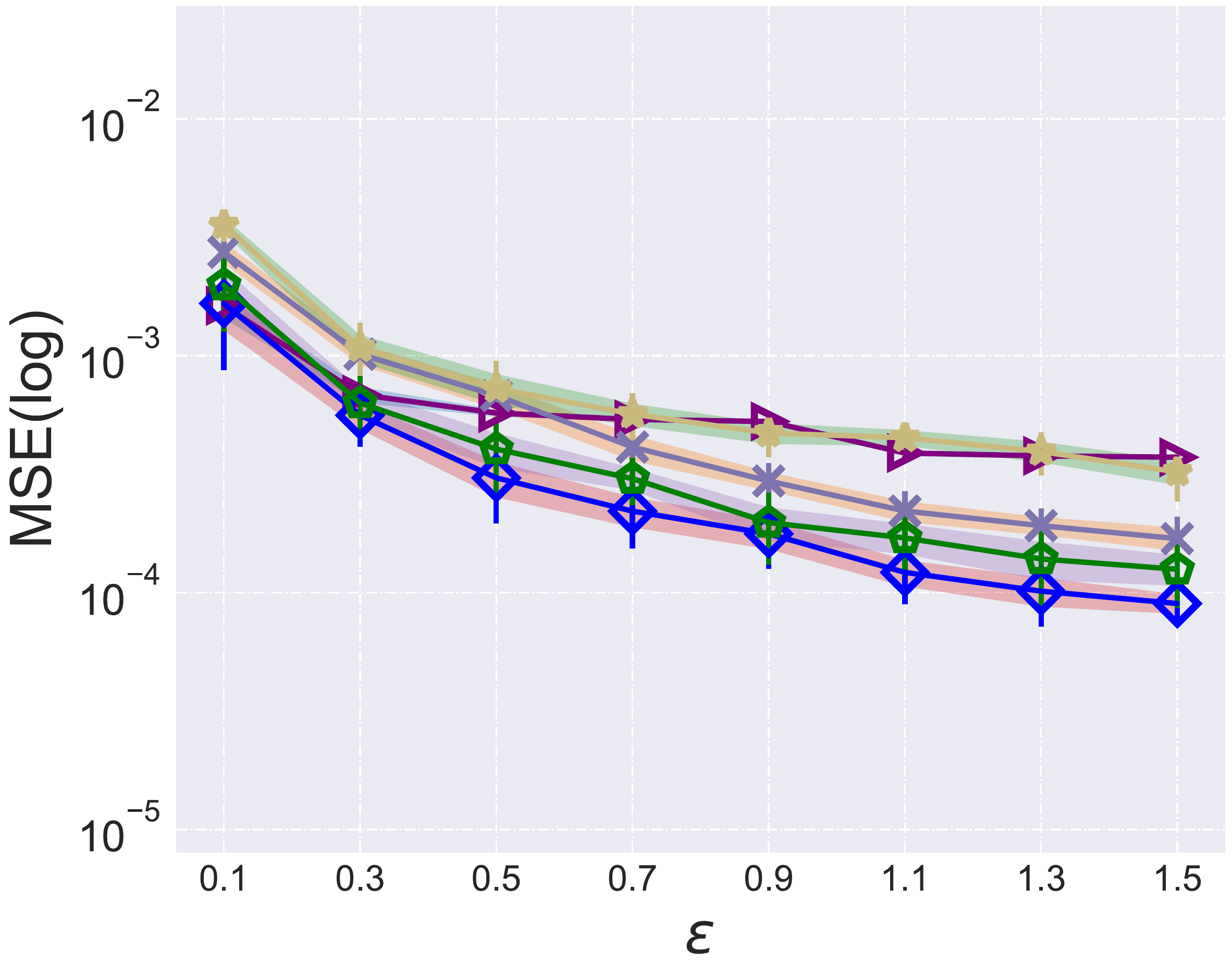}}
    \subfigure[5-dim, Loan, vary $\epsilon$]{\label{fig:Rand_ep-Loan-Ori-Domain6_Attribute5}\includegraphics[width=0.25\hsize]{./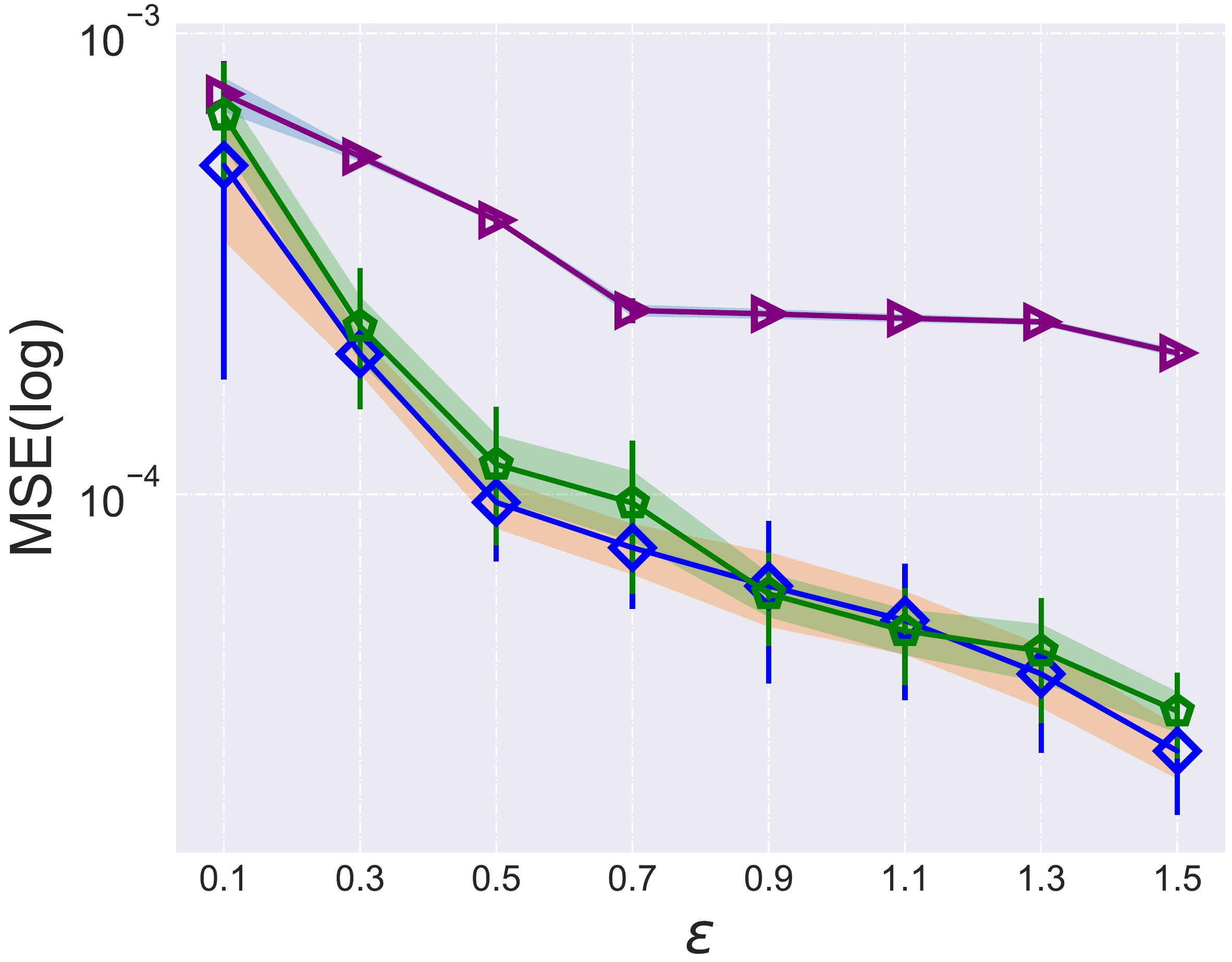}}\\[-1ex]
    \subfigure{\includegraphics[width=0.8\textwidth]{./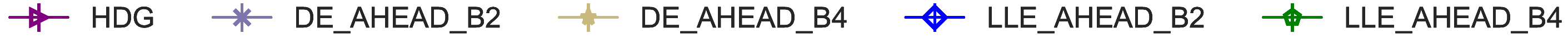.pdf}}  
    \vspace{-0.5cm}
    \caption{Comparison of different methods on high-dimensional real datasets under various privacy budgets. \de and \lle respectively represent two high-dimensional expansion methods, \ie, ``direct estimation'' and ``leveraging low-dimensional estimation''. \myHDG is a baseline method. The results are shown in log scale.}
    \label{fig: high-dim real dataset}
\end{figure*}

\section{Practical Deployment of \myahead}
\label{Practical Deployment of AHEAD}

\label{comprehensive experiment}
In this section, we systematically analyze the impact of privacy budget, user scale, domain size, data skewness, data dimension and attribute correlation on the performance of \myahead and the state-of-the-art methods. 
When focusing on one variable, we should ensure the others are consistent. 
Therefore, the experiments are mainly conducted on the four synthetic datasets. 
The observations in our following analysis can help adopt \myahead in practice 
as well as assess existing LDP-based range query frameworks. 
\subsection{Impact of User Scale}

\begin{figure*}[!ht]
    \centering
    \subfigure[\myahead, 1-dim \Zipf, $|D|$ = 256]{
        \includegraphics[width=0.3\hsize]{./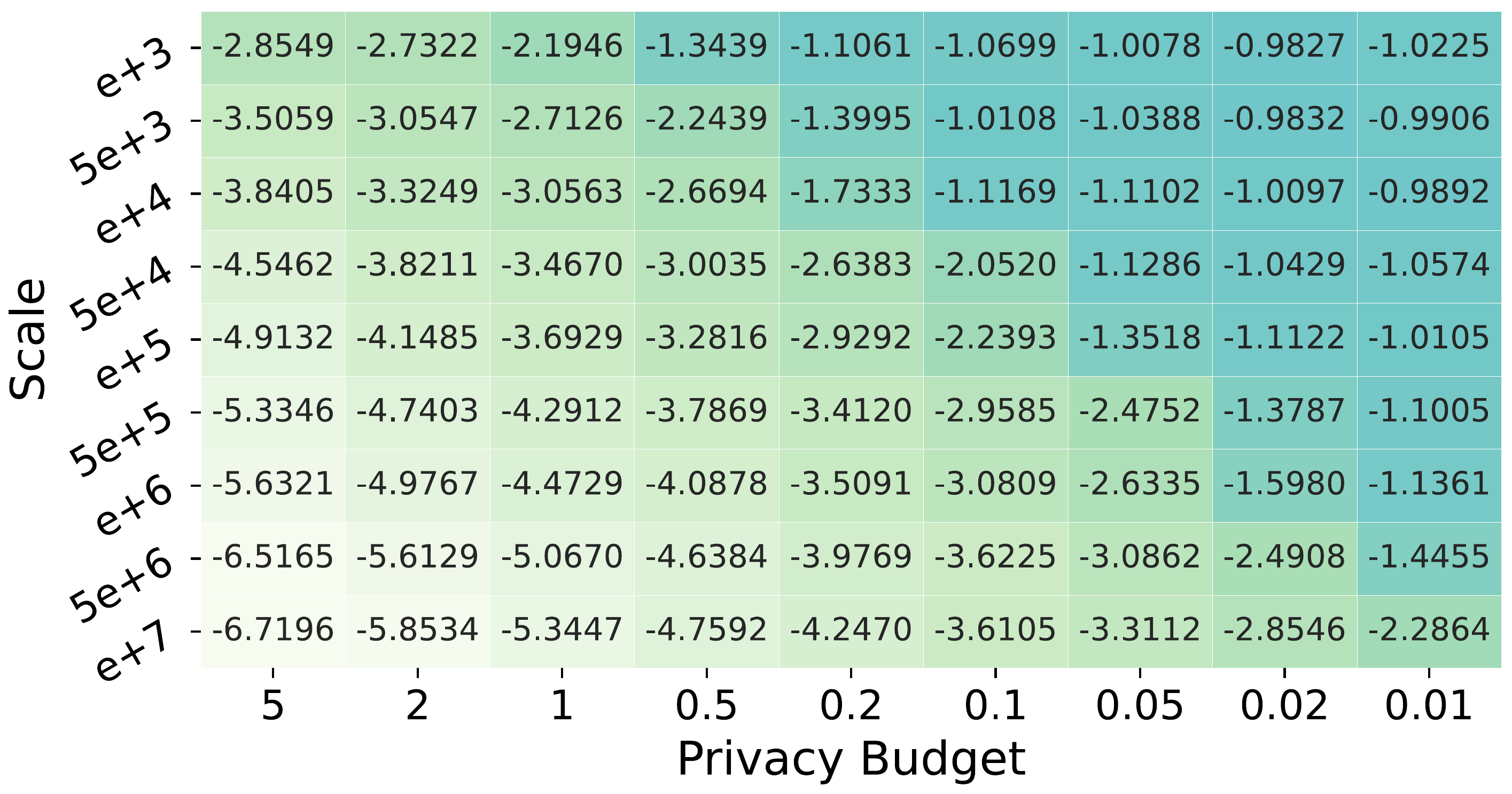}
        \label{ahead scale eps zpif}
    }
    \subfigure[\myhio, 1-dim \Zipf, $|D|$ = 256]{
        \includegraphics[width=0.3\hsize]{./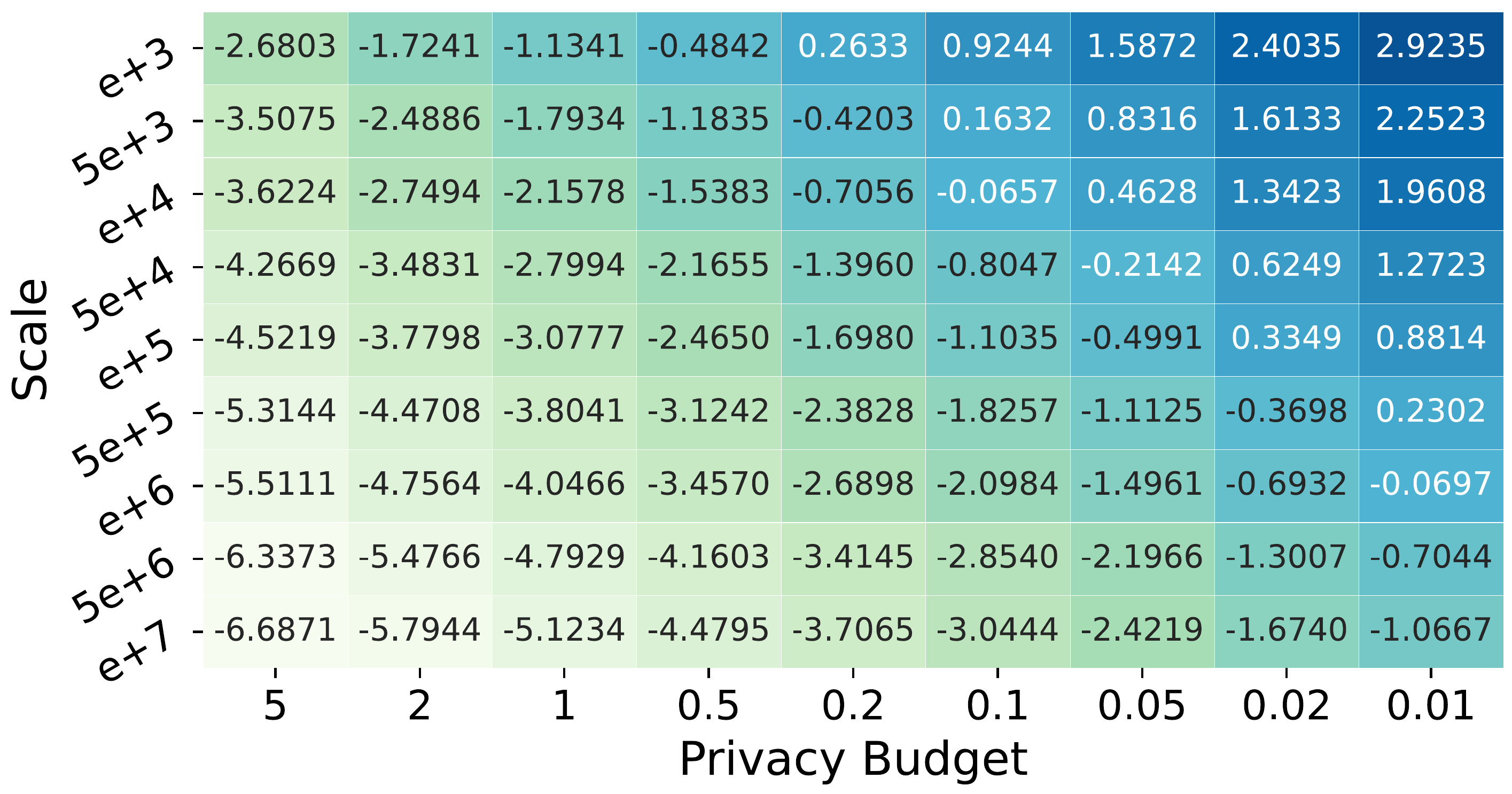}
        \label{hio scale eps zpif}
    }
    \subfigure[\mydht, 1-dim \Zipf, $|D|$ = 256]{
        \includegraphics[width=0.3\hsize]{./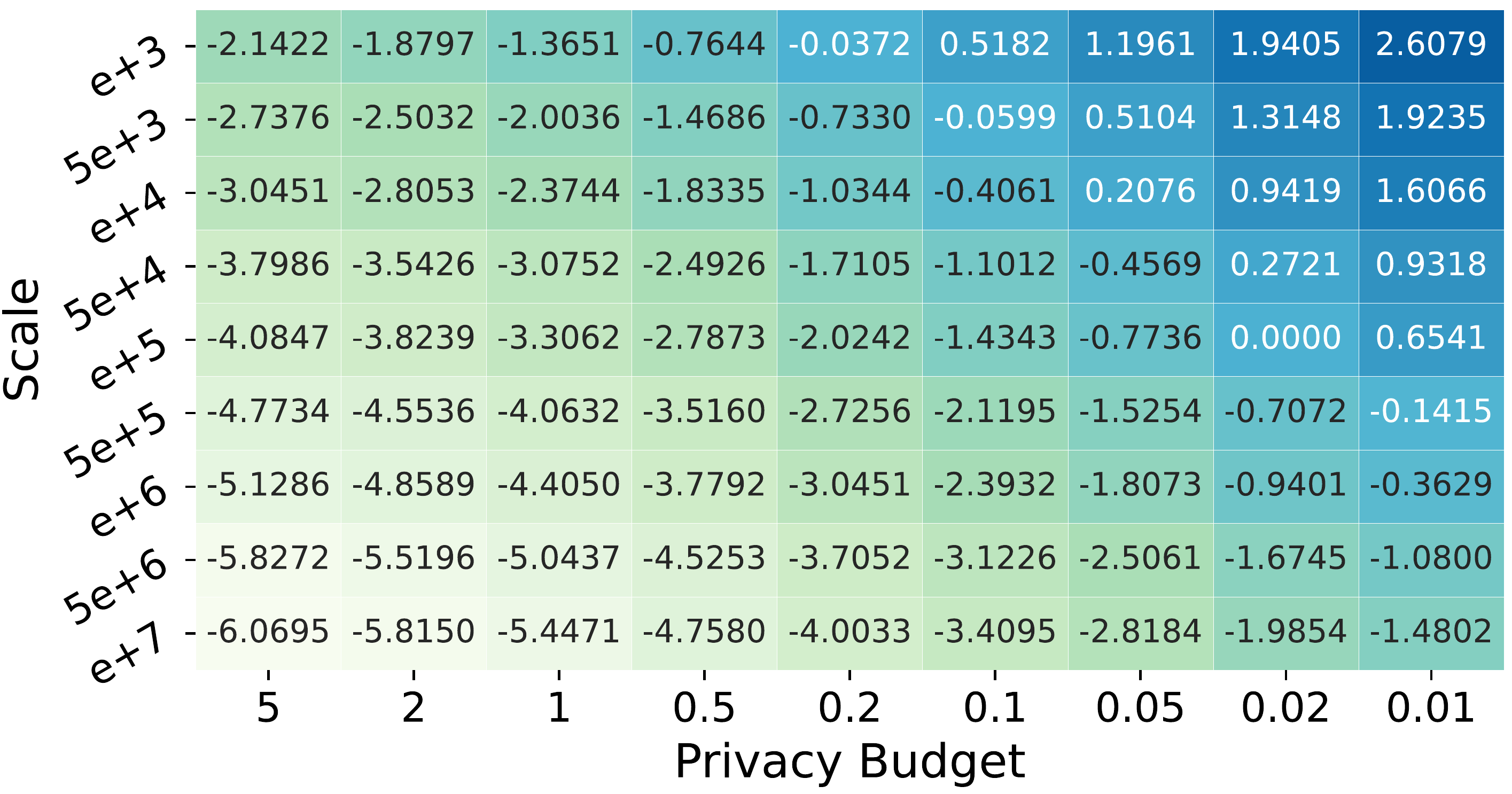}
        \label{dht scale eps zpif}
    }
    \\[-2ex]
    \subfigure[\myahead, 1-dim \Cauchy, $|D|$ = 1024]{
        \includegraphics[width=0.3\hsize]{./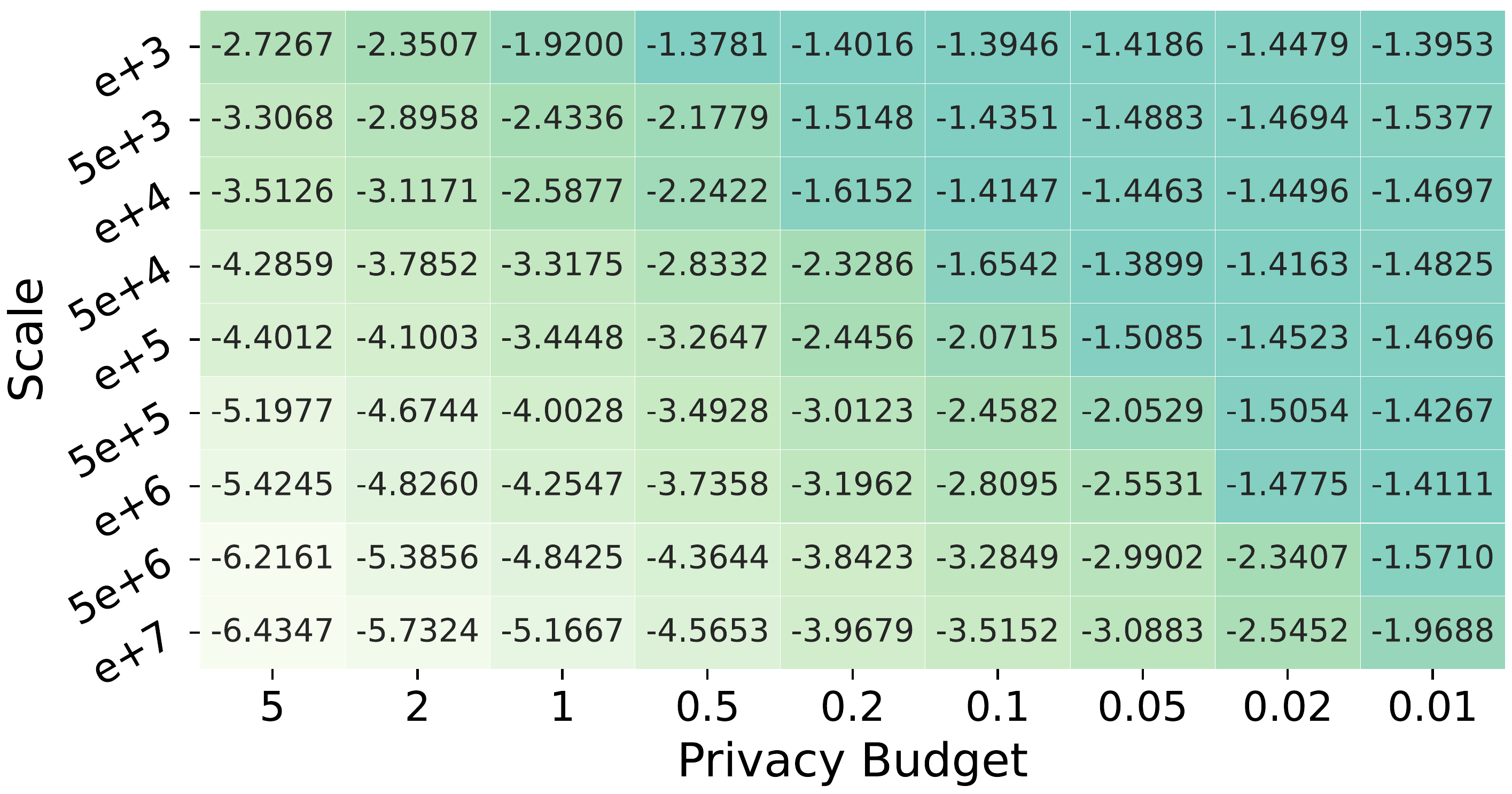}
        \label{ahead scale eps cauchy}
    }
    \subfigure[\myhio, 1-dim \Cauchy, $|D|$ = 1024]{
        \includegraphics[width=0.3\hsize]{./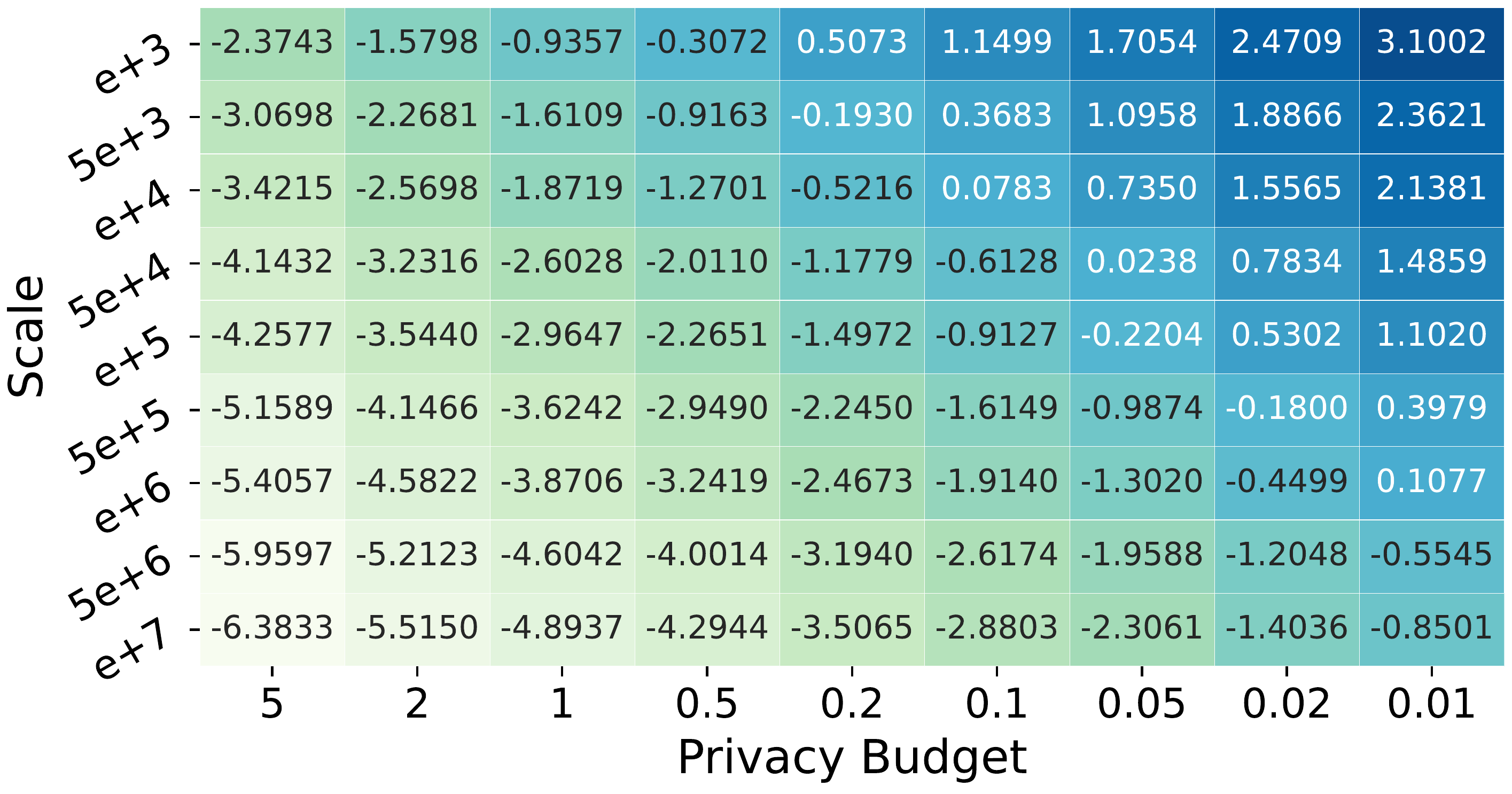}
        \label{hio scale eps cauchy}
    }
    \subfigure[\mydht, 1-dim \Cauchy, $|D|$ = 1024]{
        \includegraphics[width=0.3\hsize]{./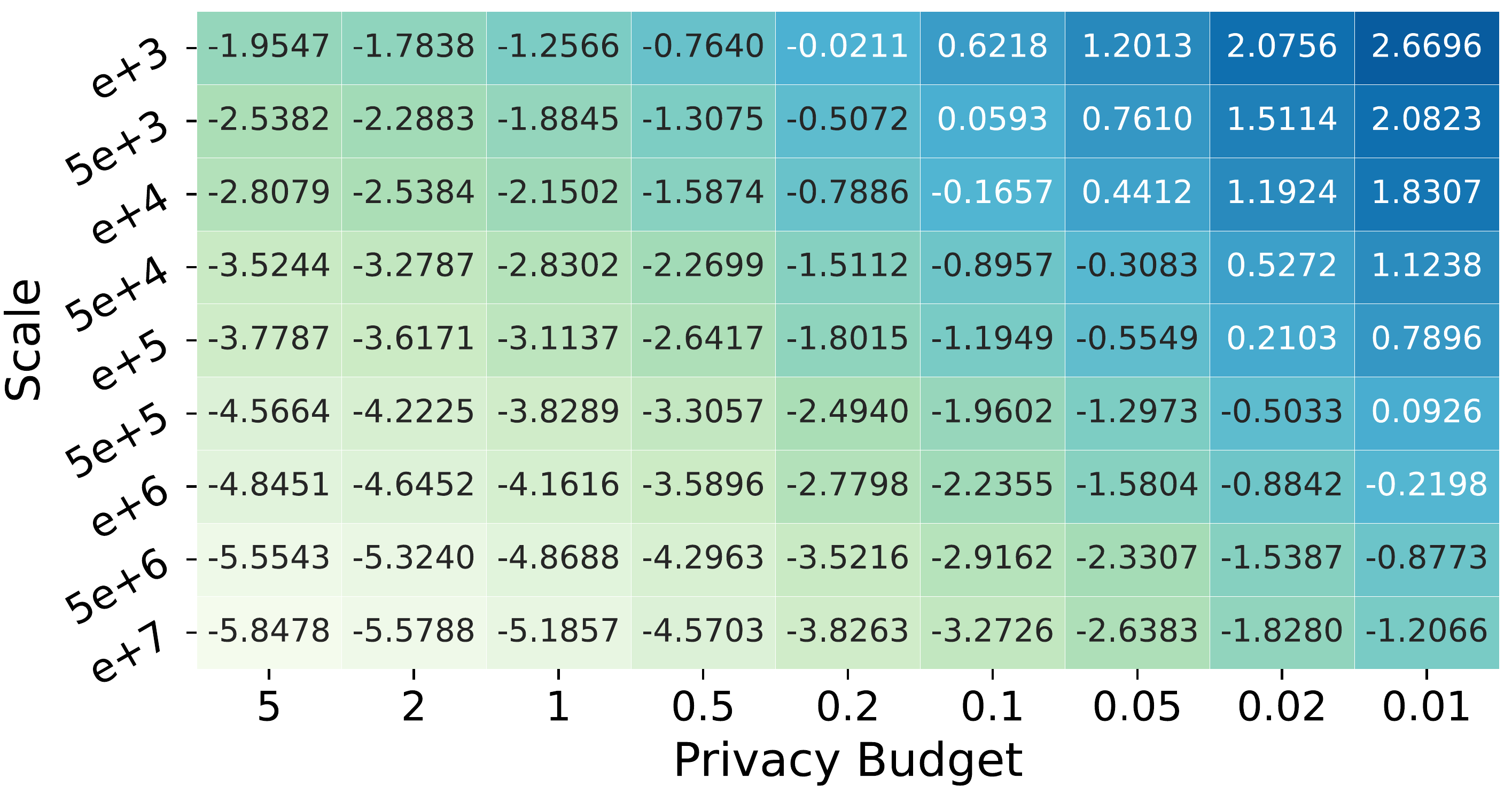}
        \label{dht scale eps cauchy}
    }
    \vspace{-0.5cm}
    \caption{The MSE of different methods when varying user scales and privacy budgets. The results are shown in log scale. }
    \vspace{-0.4cm}
    \label{scale and privacy budget}
\end{figure*}

\mypara{Setup}
We compare the algorithms' performance at increasing user scales, \ie, from $10^3$ to $10^7$. 
Under each user scale $N$, we set 8 different privacy budgets. 
As shown in \autoref{scale and privacy budget}, we use heatmaps to illustrate the impact of user scale and privacy budget on the MSE. 

In practical use, 
the privacy budget is usually specified by users for satisfying their privacy requirements. 
Then, the aggregator needs to ensure query accuracy under the fixed privacy budget. 
For instance, when privacy budget $\epsilon = 1$, the aggregator desires MSE not higher than $10^{-2}$ to ensure the accuracy of the query results. 
\autoref{ahead scale eps zpif} and \autoref{ahead scale eps cauchy} show the MSE of \myahead over different coupling of user scale and privacy budget. 
To meet the accuracy demand, the aggregator needs to collect at least $5 \cdot 10^3$ user records with \myahead. 
In comparison, previous works require more user records (more than $10^4$) for satisfying the same level of accuracy. 
For stronger privacy protection, \ie, smaller privacy budget, all methods require more user records. 
Based on the above observations, we summarize the following observation for selecting a proper user scale. 

\begin{observation}[Necessity of choosing proper user scale]
    \label{Necessity of choosing proper user scale}
    When the privacy protection strength is fixed, \ie, a determined privacy budget $\epsilon$, one needs to use an appropriate user scale to ensure algorithm performance.
\end{observation}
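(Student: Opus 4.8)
The plan is to turn this qualitative claim into a precise monotonicity statement about the user scale and then verify it from the error analysis already established. First I would fix a target accuracy level $\tau>0$ — the largest MSE the aggregator is willing to tolerate — and show that, for any fixed privacy budget $\epsilon$, the set of user scales $N$ for which \myahead attains MSE $\le \tau$ has the form $[N_0(\epsilon,|D|,\tau),\infty)$ for a finite threshold $N_0$ that is non-increasing in $\epsilon$. The key ingredient is the error decomposition of \autoref{Privacy and Utility Analysis}: the overall query error splits into a noise-and-sampling term and a data-dependent non-uniform term $\myErr_{\mathrm{nu}}$, and by \autoref{OUEVAR} the per-node noise variance is $\myvar = c\cdot\frac{4e^{\epsilon}}{N(e^{\epsilon}-1)^{2}}$ with $c=\log_B|D|$. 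Since a range query is answered by accumulating at most $O(\log_B|D|)$ node estimates (each rescaled by some $1/k\le 1$, as in the \myhio bound of $2(B-1)\log_B|D|$ intervals), the noise contribution to the MSE is bounded by $C_1(\epsilon,|D|)/N$ with $C_1(\epsilon,|D|)=O\!\left(\log^{2}|D|\cdot\frac{e^{\epsilon}}{(e^{\epsilon}-1)^{2}}\right)$; that is, it behaves as $\Theta(1/N)$ once $\epsilon$ and $|D|$ are fixed.

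Next I would establish both directions. \emph{Sufficiency}: absorbing the $N$-independent quantity $\myErr_{\mathrm{nu}}$ into a residual budget $\tau'=\tau-\myErr_{\mathrm{nu}}$ and solving $C_1(\epsilon,|D|)/N\le\tau'$ gives $N\ge C_1(\epsilon,|D|)/\tau'$, so every $N$ above this value meets the accuracy demand — this is exactly what \autoref{scale and privacy budget} illustrates, where the aggregator needs at least roughly $5\cdot10^{3}$ records at $\epsilon=1$. \emph{Necessity}: because the noise variance is exactly proportional to $1/N$ and \oue is unbiased, the expected squared error of any estimated node is at least $\myvar=\Theta(1/N)$, and post-processing (Step~4) only divides this by a bounded factor $\beta\le c$; hence an $N$ that is too small forces the MSE above $\tau$ irrespective of the realized tree or the input distribution. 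Finally, $\epsilon$-monotonicity of $N_0$ follows from $\frac{d}{d\epsilon}\frac{e^{\epsilon}}{(e^{\epsilon}-1)^{2}}<0$ for $\epsilon>0$, so stronger privacy enlarges the required scale, matching the left-to-right trend of the heatmaps.

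The hard part will be that "algorithm performance" is only meaningful relative to a query workload and an input distribution, and $\myErr_{\mathrm{nu}}$ is genuinely data-dependent and does \emph{not} vanish as $N\to\infty$; so an $N$-only guarantee is impossible unless the target $\tau$ is chosen strictly above this irreducible non-uniform floor, or one restricts to near-uniform sub-domains where the threshold choice of \autoref{theta setting1} keeps $\myErr_{\mathrm{nu}}$ comparable to $\myvar$. A secondary subtlety is that $c$ and the realized adaptive tree depend on $|D|$ and on the data, so $C_1(\epsilon,|D|)$ must be taken as a worst case over admissible trees; I would neutralize this by using the structure-independent at-most-$c$-layers bound and the $2(B-1)\log_B|D|$ node-count bound inherited from \myhio. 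With those caveats in place, the observation reduces to the elementary fact that a $\Theta(1/N)$ error term crosses any fixed positive threshold at a finite value of $N$ that grows as privacy protection strengthens.
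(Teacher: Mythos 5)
Your argument is sound as far as it goes, but it takes a genuinely different route from the paper: the paper does not prove this observation analytically at all. It is an empirical conclusion drawn from the heatmap experiments in \autoref{scale and privacy budget} (user scale swept from $10^{3}$ to $10^{7}$ against eight privacy budgets on \Zipf and \Cauchy data), summarized by the concrete example that meeting MSE $\le 10^{-2}$ at $\epsilon=1$ requires collecting at least about $5\cdot 10^{3}$ records with \myahead, with competitors needing more; the variance-formula reasoning $\myvar\approx 4/(N\epsilon^{2})$ is invoked by the paper mainly for the companion exchangeability observation, not for this one. You instead formalize the claim as a monotonicity statement, derive the $\Theta(1/N)$ noise scaling from \autoref{OUEVAR} and the error decomposition of \autoref{Privacy and Utility Analysis}, and obtain a finite threshold $N_0(\epsilon,|D|,\tau)$ that shrinks as $\epsilon$ grows. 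What your approach buys is an explicit, workload-bounded rule for choosing $N$ and a rigorous $\epsilon$-monotonicity; what the paper's empirical route buys is coverage of the full pipeline (adaptive thresholding, Norm-Sub, weighted averaging) whose effect on the constants is hard to bound, and validation on skewed data where the adaptive tree actually matters. Your self-identified caveat is the crucial one and should stay front and center: the non-uniform error floor $\myErr_{\mathrm{nu}}$ does not vanish with $N$, so a guarantee of the form ``large enough $N$ ensures MSE $\le\tau$'' only holds for $\tau$ strictly above that floor, whereas the paper's measured MSEs already include it and so never confront the issue. One smaller point of care: your necessity direction asserts a per-node variance lower bound of order $\myvar$, but post-processing averages up to $c$ estimates and Norm-Sub redistributes mass, so the lower bound survives only up to the bounded factor $\beta\le c$ you mention; at the level of an informal observation this is acceptable, but a fully rigorous lower bound over all trees and inputs would require more work than the paper ever attempts.
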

Next, we take a step further to analyze the relation between user scale and privacy budget in order to provide easy methods for selecting user scale. 
As shown in \autoref{ahead scale eps zpif} and \autoref{ahead scale eps cauchy}, \myahead has similar MSEs under different combinations of user scale and privacy budget. 
When user scale $N_1 = 10^6$ and privacy budget $\epsilon_1 = 0.1$, the MSE of \myahead is $10^{-3.0809}$ on the \Zipf dataset. 
In addition, \myahead obtains similar MSE ($10^{-3.0563}$) with $N_1 = 10^4$ and $\epsilon_2 = 1$. 
Based on the results in \autoref{hio scale eps zpif}, \autoref{hio scale eps cauchy}, \autoref{dht scale eps zpif} and \autoref{dht scale eps cauchy}, \myhio and \mydht also have the user scale \& privacy budget exchangeability. 
All the three evaluated methods leverage the \fo protocol when collecting user private data. 
From Equation \ref{OUEVAR}, 
the variance of \oue is $\frac{4e^{\epsilon}}{N\left(e^{\epsilon}-1\right)^{2}}$. 
The \myvar can be converted to $\frac{4}{N\left(e^{\epsilon}+e^{-\epsilon}-2\right)}$, which is approximate to $\frac{4}{N\epsilon^2}$ when $\epsilon$ is not large ($\epsilon \leq 2$). 
Then, the variance of the estimated frequency is approximately the same if the product $N\epsilon^2$ of two combinations is equal. 
Therefore, we have the following observation. 
\begin{observation}[The exchangeability between user scale and privacy budget]
    \label{User scale-Privacy budget exchangeability}
    For any two pairs of user scale and privacy budget combination ($N_1$, $\epsilon_1$) and ($N_2$, $\epsilon_2$), when $\epsilon$ is not large and $N_1\epsilon^2_1 \approx N_2\epsilon^2_2$ is satisfied, \myahead can achieve a similar MSE under two pairs ($N_1$, $\epsilon_1$) and ($N_2$, $\epsilon_2$). 
\end{observation}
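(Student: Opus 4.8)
The plan is to reduce the statement to the behavior of the single quantity $\myvar$, the variance of each estimated frequency value, since this is what governs the overall MSE of \myahead. First I would invoke the error decomposition of \autoref{Privacy and Utility Analysis}: the MSE of a range-query answer splits into a noise-and-sampling term and a non-uniform term, and the noise term dominates (the sampling error is a constant much smaller than the injected noise, following the analysis of \cite{yang2020answering}). By \autoref{OUEVAR}, the noise attached to each estimated node has variance $\myvar = \frac{4 e^{\epsilon} c}{N (e^{\epsilon}-1)^2}$, where the number of groups $c = \log_B |D|$ depends only on the domain and the fanout, not on $N$ or $\epsilon$.

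Second, I would make the small-$\epsilon$ approximation precise. Rewriting $\frac{(e^{\epsilon}-1)^2}{e^{\epsilon}} = e^{\epsilon} + e^{-\epsilon} - 2$ and Taylor-expanding gives $e^{\epsilon} + e^{-\epsilon} - 2 = \epsilon^2 + O(\epsilon^4)$, so $\myvar = \frac{4c}{N\epsilon^2}\bigl(1 + O(\epsilon^2)\bigr)$ whenever $\epsilon$ is not large (say $\epsilon \le 2$). Consequently, for two combinations with $N_1 \epsilon_1^2 \approx N_2 \epsilon_2^2$ the per-node variances agree up to the $O(\epsilon^2)$ relative terms, and so does the decomposition threshold $\theta = \sqrt{(B+1)\myvar}$ prescribed by \autoref{theta setting1}.

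Third, I would argue that matching $\myvar$ and $\theta$ makes the two runs of \myahead statistically indistinguishable, and hence yields a matching MSE. In Step 3, a node is split by comparing its noisy frequency --- the true value plus a mean-zero variable of variance $\myvar$ --- against $\theta$; with the same noise law and the same $\theta$, the distribution of the prototype tree is (approximately) the same. The post-processing weights $\lambda_1, \lambda_2$ in \autoref{weighted average} depend only on within-layer variance ratios and are therefore unchanged, and the non-uniform error depends only on the realized tree shape and the true data distribution, so it too has the same law. Combining the matched noise and non-uniform contributions gives a matched MSE, which is the claim; I would close by pointing to \autoref{scale and privacy budget} for the empirical confirmation.

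The hard part is the third step. The adaptive tree is a random object whose law depends on $\myvar$ in a nonlinear, data-dependent way through the thresholding in Step 3, so ``close $\myvar$ implies close MSE'' is a heuristic rather than a theorem unless one controls the sensitivity of the tree distribution to perturbations of $\myvar$ and $\theta$ --- ideally a coupling or continuity bound on the total-variation distance between the two tree laws in terms of the gap between the two per-node variances, together with a uniform bound on the per-query MSE so that a small total-variation gap transfers to a small MSE gap. Rather than carry this out, I would present the argument at the level of expected errors, treating the $O(\epsilon^2)$ relative discrepancy in $\myvar$ (and the corresponding shift in $\theta$) as negligible in the stated regime, which is exactly the informal level at which the paper states the observation.
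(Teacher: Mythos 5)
Your proposal matches the paper's own justification: the paper likewise rewrites the \oue variance as $\frac{4}{N\left(e^{\epsilon}+e^{-\epsilon}-2\right)} \approx \frac{4}{N\epsilon^{2}}$ for $\epsilon$ not large, concludes that matched $N\epsilon^{2}$ gives approximately matched per-node variance and hence similar MSE, and leans on the heatmap experiments for the rest. Your extra care about the threshold $\theta$, the tree law, and the heuristic nature of the final step is a faithful elaboration of the same argument rather than a different route, and is appropriate given that the statement is an empirically supported observation rather than a theorem.
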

\subsection{Impact of Domain Size}
\label{Impact of Domain Size}

\begin{figure*}[t]
    \centering
    \subfigure[\myahead, 1-dim \Zipf, $N = 10^5$]{
        \includegraphics[width=0.3\hsize]{./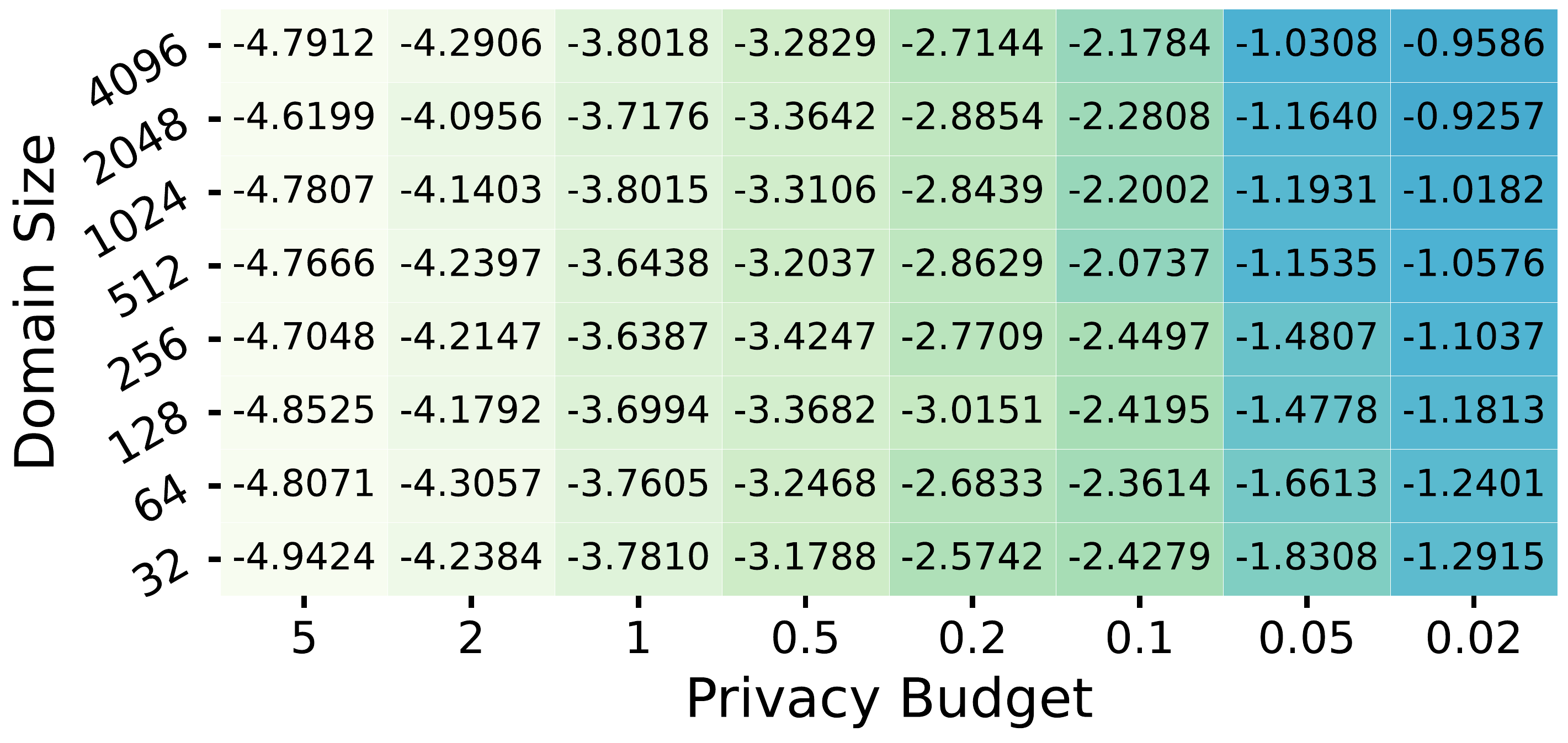}
        \label{ahead domainsize eps zpif}
    }
    \subfigure[\myhio, 1-dim \Zipf, $N = 10^5$]{
        \includegraphics[width=0.3\hsize]{./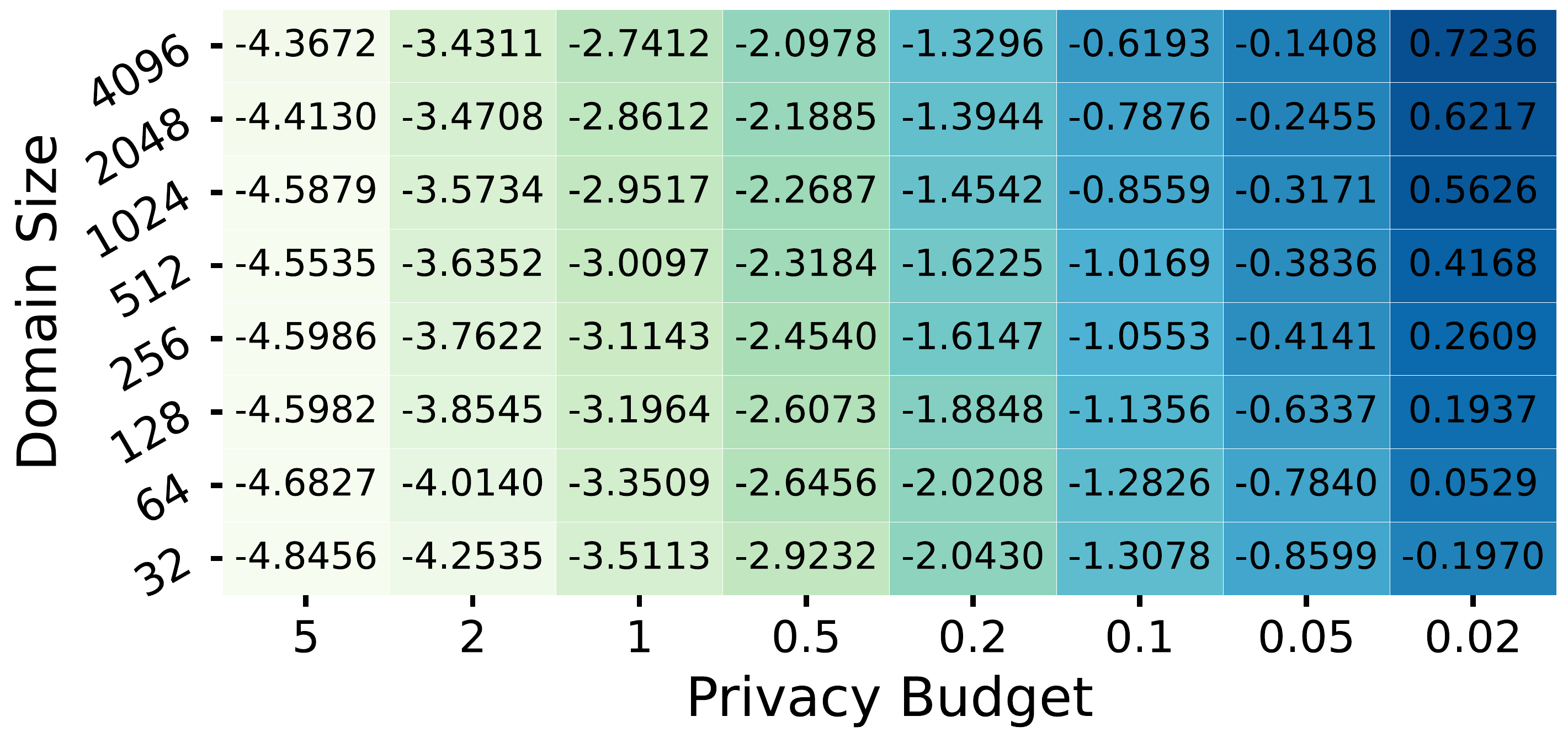}
        \label{hio domainsize eps zpif}
    }
    \subfigure[\mydht, 1-dim \Zipf, $N = 10^5$]{
        \includegraphics[width=0.3\hsize]{./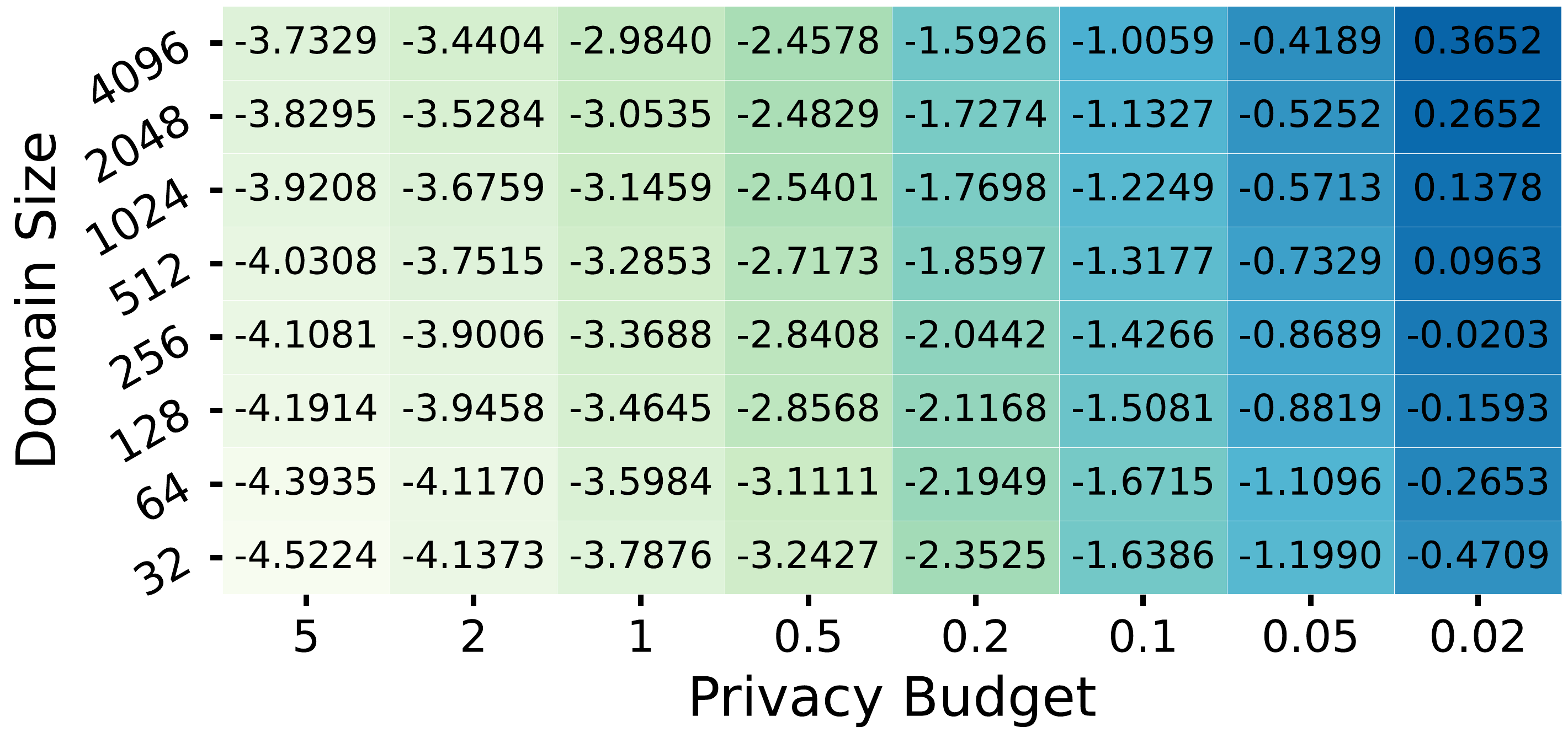}
        \label{dht domainsize eps zpif}
    }
    \\[-2ex]
    \subfigure[\myahead, 1-dim \Cauchy, $N = 10^5$]{
        \includegraphics[width=0.3\hsize]{./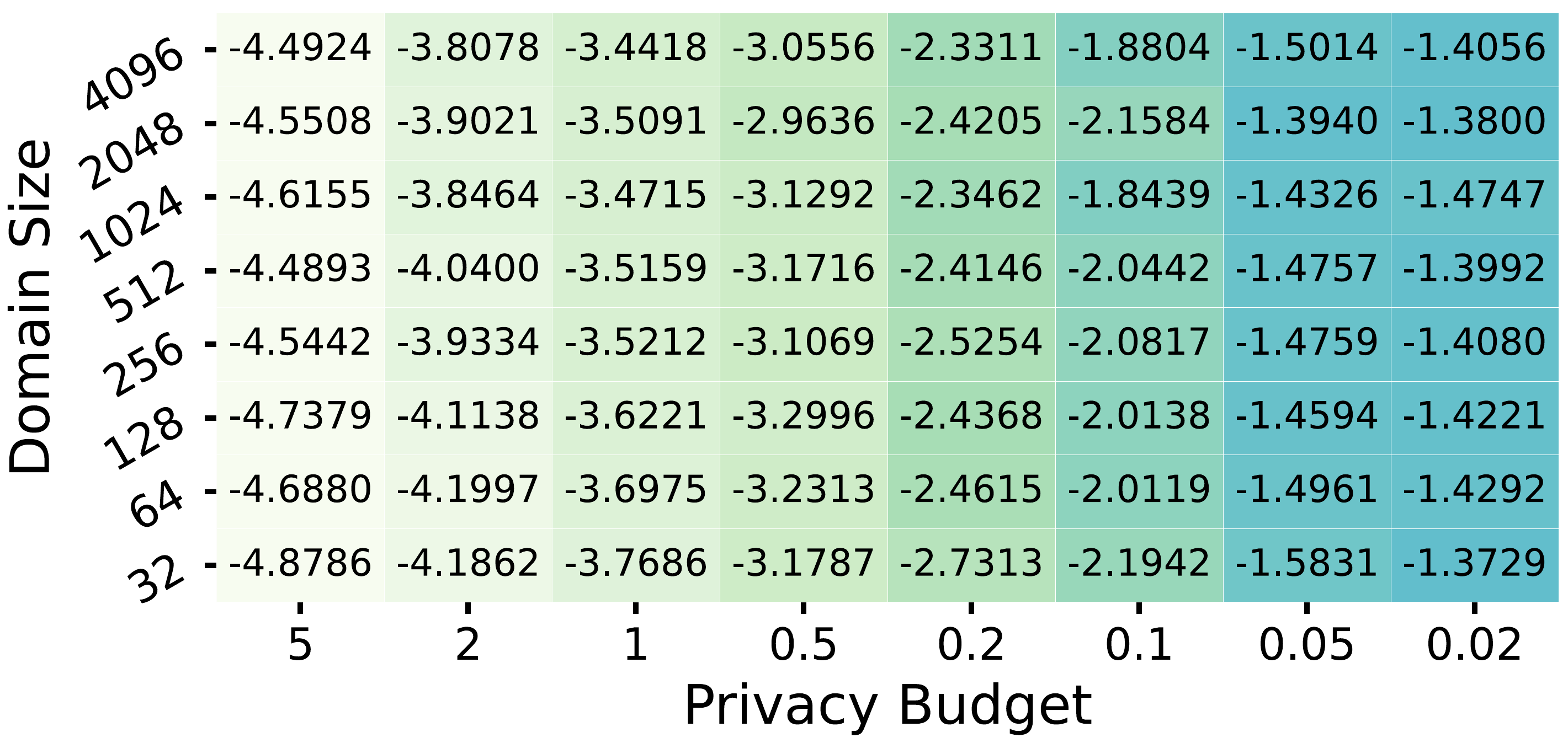}
        \label{ahead domainsize eps cauchy}
    }
    \subfigure[\myhio, 1-dim \Cauchy, $N = 10^5$]{
        \includegraphics[width=0.3\hsize]{./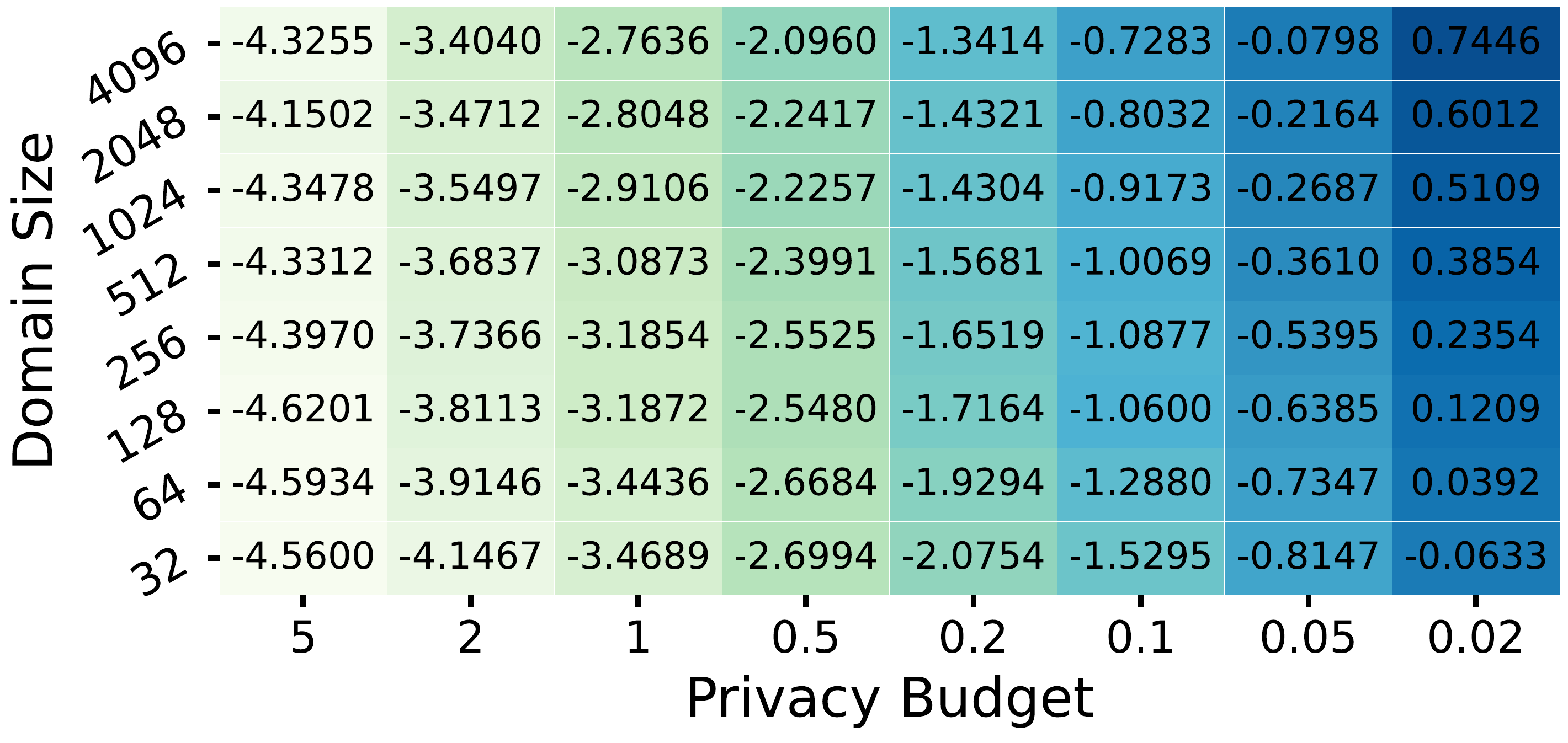}
        \label{hio domainsize eps cauchy}
    }
    \subfigure[\mydht, 1-dim \Cauchy, $N = 10^5$]{
        \includegraphics[width=0.3\hsize]{./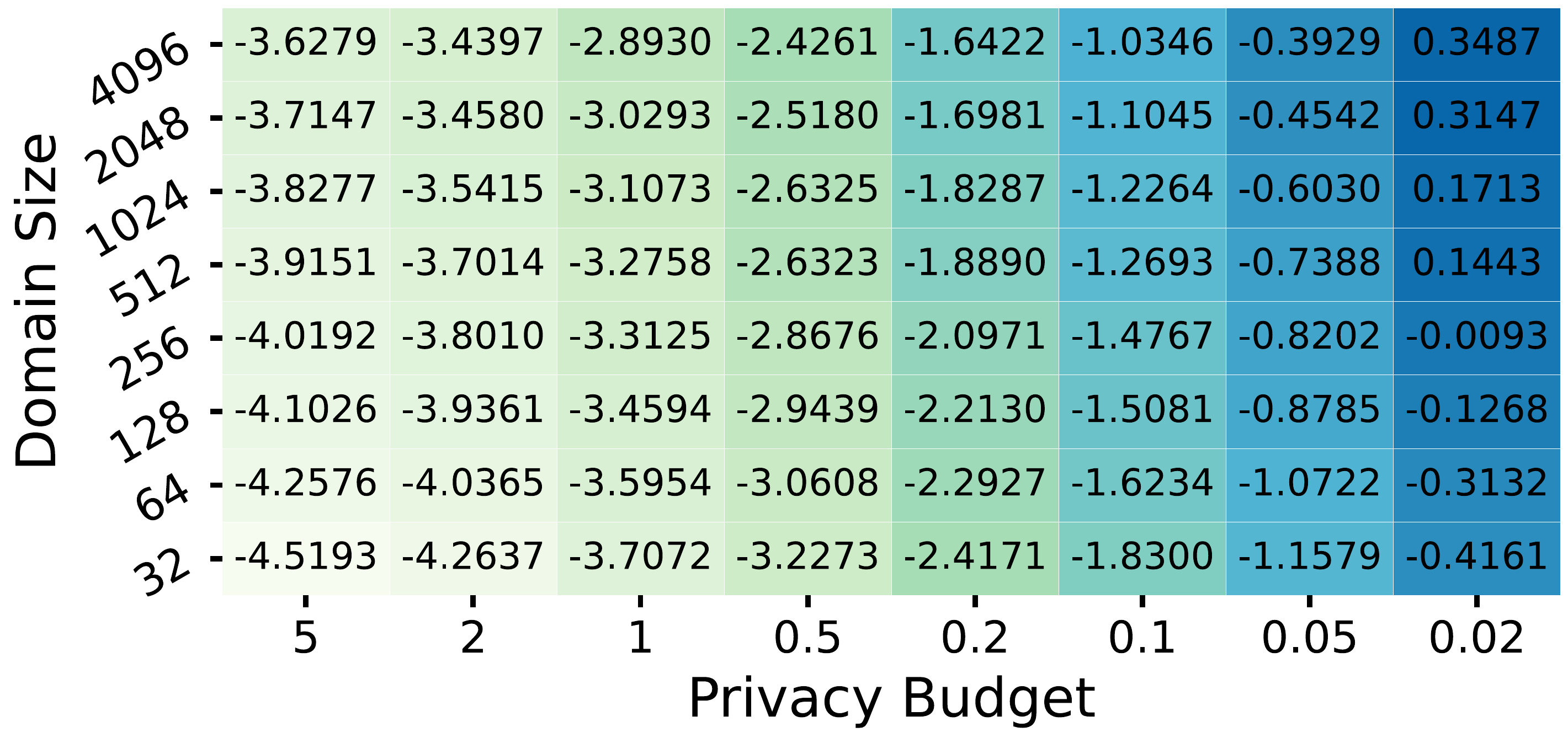}
        \label{dht domainsize eps cauchy}
    }
    \vspace{-0.5cm}
    \caption{The MSE of different methods when varying domain sizes and privacy budgets. The results are shown in log scale. }
    \vspace{-0.4cm}
    \label{domainSizeVSPrivacyBudget}
\end{figure*}
\mypara{Setup}
There are 8 different domain sizes 
used in the experiments. 
In addition, we set 7 different privacy budgets 
to explore the coupling effect of domain size $|D|$ and privacy budget $\epsilon$ on MSE. 

1) From \autoref{hio domainsize eps zpif}, \autoref{dht domainsize eps zpif}, \autoref{hio domainsize eps cauchy} and \autoref{dht domainsize eps cauchy}, 
the MSEs of \myhio and \mydht have a tendency to increase with the increase of domain size. 
Since all the three algorithms adopt the user partition strategy and the number of groups grows with the domain size, 
the user scale in each group becomes smaller for a larger domain size. 
For example, when domain size rises from $32$ to $4096$, the number of user groups enlarges from 5 to 12 for \mydht. 
Based on \autoref{GRRVAR} and \autoref{OUEVAR}, the variance of the \fo algorithm is inversely proportional to user scale. 
2) As shown in \autoref{ahead domainsize eps zpif} and \autoref{ahead domainsize eps cauchy}, \myahead is less affected by the changes in domain size. 
Due to the threshold setting, \myahead makes the actual domain size smaller by amalgamating some lower frequency sub-domains. 
In addition, \myahead averages the estimated frequency values of repeated sub-domains, \eg, sub-domains $[0, 3]$ and $[6, 7]$ in \autoref{AHEAD method}, to reduce the influence of noise. 
Based on the results in \autoref{domainSizeVSPrivacyBudget} and the above analysis, we have the following observation on the impact of domain size. 

\begin{observation}[Robustness under various domain sizes] 
    \label{Robustness under various domain size}
    The impact of varying domain size on \myahead is different from \myhio and \mydht. \myahead reacts more robust to domain size changes. 
    \vspace{-0.2cm}
\end{observation}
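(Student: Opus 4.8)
The plan is to treat this robustness claim as a statement about how the \emph{dominant noise error} of a range-query answer scales with the domain size $|D|$, and to show that this scaling is (asymptotically) flat for \myahead but increasing for \myhio and \mydht. Throughout I would hold the data distribution, the privacy budget $\epsilon$, and the user scale $N$ fixed, and track three quantities as $|D|$ grows: the per-estimate \oue variance, the number of node estimates aggregated into a query answer, and the variance reduction contributed by the post-processing averaging. The qualitative \emph{difference} between the methods then reduces to comparing the $|D|$-dependence of these three quantities.

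First I would establish the baseline for \myhio and \mydht. Both adopt the user-partition (sampling) strategy with $c=\log_B|D|$ groups, so each frequency estimate carries the inflated variance $\myvar = c\cdot\myvar_{\oue(\epsilon)}$ (recall \autoref{OUEVAR} and the error analysis in \autoref{Privacy and Utility Analysis}). Because these frameworks are \emph{static}, every layer of the full $B$-ary tree is materialized and estimated; a query answer therefore sums node estimates whose individual variance is $\Theta(\log|D|/N)$, giving a total noise MSE that grows monotonically with $|D|$. This reproduces the increasing trend seen in \autoref{domainSizeVSPrivacyBudget} for \myhio and \mydht and serves as the object of comparison.

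Next I would analyze \myahead and exhibit the two compensating mechanisms. By the threshold rule \autoref{theta setting1}, $\theta=\sqrt{(B+1)\myvar}\propto\sqrt{c}\propto\sqrt{\log|D|}$, so as $|D|$ grows the decomposition threshold itself rises and \myahead refuses to split any sub-domain whose frequency falls below this larger bound; the number of fine-grained leaves it actually estimates does not scale like $|D|$. More importantly, a sub-domain that stops decomposing at level $\ell$ must, to preserve $\epsilon$-LDP, be re-estimated by every later group, \ie, $\beta=c-\ell+1$ times, and the weighted-averaging step of post-processing fuses these into one estimate with variance $\approx\myvar/\beta=\myvar_{\oue(\epsilon)}\cdot c/\beta$. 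For the many low-frequency sub-domains that stop early ($\ell$ small, so $\beta\approx c$), this reduces to $\approx\myvar_{\oue(\epsilon)}$, which is independent of $c$ and hence of $|D|$. Combining the two effects, I would argue that the dominant noise contribution to a query answer is $O(\myvar_{\oue(\epsilon)})$ rather than $O(c\cdot\myvar_{\oue(\epsilon)})$, yielding a query MSE that is essentially flat in $|D|$ and thus qualitatively distinct from the growing curves of \myhio and \mydht.

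The hard part is that this cancellation is exact only for sub-domains that halt at the very first level; in general the halting depth $\ell$, and thus $\beta$, is determined by the data distribution through where each sub-domain's frequency crosses $\theta$, so the argument cannot be made fully distribution-free. The main obstacle is therefore to control the sub-domains that decompose to intermediate depths, where $\beta<c$ gives only partial variance reduction, while simultaneously bounding the \emph{non-uniform error} introduced by the corresponding merges. I would resolve this under a mild concentration assumption---namely that a constant fraction of the probability mass forces most sub-domains to halt within $O(1)$ levels, as holds for the skewed \Zipf and \Cauchy data used here---so that the $\beta\approx c$ regime dominates the error budget. The resulting statement is consequently a scaling claim, corroborated by the heatmaps of \autoref{domainSizeVSPrivacyBudget}, rather than a worst-case bound valid for every input distribution.
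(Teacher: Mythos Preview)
Your proposal is correct and identifies exactly the two mechanisms the paper invokes: (i) the threshold-driven amalgamation of low-frequency sub-domains, which shrinks the \emph{effective} domain \myahead must estimate, and (ii) the weighted averaging of the repeated estimates a non-decomposed sub-domain receives from later user groups, which cancels the $c=\log_B|D|$ inflation in per-node variance that \myhio and \mydht suffer. The paper, however, treats this Observation as primarily an \emph{empirical} finding drawn from the heatmaps in \autoref{domainSizeVSPrivacyBudget}; its accompanying text offers only the brief qualitative explanation you have formalized (``\myahead makes the actual domain size smaller by amalgamating some lower frequency sub-domains'' and ``\myahead averages the estimated frequency values of repeated sub-domains \ldots\ to reduce the influence of noise'') and does not attempt the scaling analysis you sketch---in particular, the $\beta\approx c$ cancellation yielding a residual variance $\approx\myvar_{\oue(\epsilon)}$ independent of $|D|$ is never made explicit. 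So your route is the same in spirit but considerably more quantitative than the paper's; what you gain is a mechanistic account of \emph{why} the \myahead curves are flat, at the cost of the mild distributional assumption you already flag. The paper simply lets the experiments carry the claim.
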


\subsection{Impact of Data Skewness}
\mypara{Setup}
We compare the algorithms' performance at various data skewnesses, where the data records are sampled from low skewness (0 for both \Gaussian and \Laplacian) to high skewness (0.9 for \Gaussian and 2.0 for \Laplacian). 
As shown in \autoref{skewness and privacy budget}, 
under each user scale, we set 8 different privacy budgets and use heatmaps to illustrate the impact of skewness and privacy budget on the MSE. 

1) From \autoref{ahead skewness eps 105 gaussian} and \autoref{ahead skewness eps 105 laplace}, when privacy budget $\epsilon \leq 0.1$, the MSEs of \myahead have a tendency to decrease with the increase of data skewness. 
Since there may exist more nodes suitable for pruning for data with higher skewness, \ie, more sparse sub-domains, \myahead can significantly suppress the injected noises. 
2) With the increase of privacy budget or user scale, the impact of skewness becomes insignificant on MSE of \myahead. 
Recalling \autoref{theta setting1} in \autoref{Selection parameters}, the threshold $\theta$ becomes small with a large privacy budget or user scale. 
For instance, $\theta = 0.11$ when $\epsilon=0.1$ and $N = 10^6$, and $\theta=0.003$ when $\epsilon=1$ and $N = 10^7$. 
For large privacy budgets or user scales, \myahead reduces the intensity of tree pruning, where the impact of data skewness on the tree construction will weaken. 

\begin{observation}[The benefit from high skewness]
    \label{The benefit from high skewness}
    \myahead tends to have smaller MSE on highly skew data, where the impact of skewness will weaken as the privacy budget or user scale increases. 
\end{observation}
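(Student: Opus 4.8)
The plan is to support \autoref{The benefit from high skewness} by combining the error decomposition of \autoref{Privacy and Utility Analysis} with the threshold formula of \autoref{theta setting1}, and then confirming the resulting prediction on the heatmaps of \autoref{skewness and privacy budget}. Recall that when a query is answered by a set $S$ of tree nodes, the expected squared error splits into a noise-and-sampling part of order $\sum_{n \in S}\frac{1}{k_n^2}\myvar$ (a node whose value is inherited from an ancestor $k_n$ times coarser contributes only $\myvar/k_n^2$) and a non-uniform part governed by $\sum_{n}\bigl|f_n - f_{p(n)}/k_n\bigr|^2$, where $f_{p(n)}$ is the true frequency of the coarsest ancestor actually estimated. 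The threshold $\theta=\sqrt{(B+1)\myvar}$ was chosen in \autoref{Selection parameters} precisely so that a sub-domain of true frequency below $\theta$ is cheaper to leave un-split; equivalently, whenever the true mass concentrates, the adaptive tree prunes aggressively and both error terms shrink.

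First I would make the ``more skewness $\Rightarrow$ more pruning'' step precise at the level of the prototype tree. Fixing $B$, $\epsilon$, $N$ (hence a fixed $\theta$), I would argue that a more concentrated distribution puts a larger fraction of the probability mass on a small number of leaves, so at every level of the $B$-ary decomposition more sibling sub-intervals carry frequency below $\theta$ and are therefore attached without further splitting. This both reduces the number of directly estimated nodes (shrinking the noise term, since an inherited node contributes $\myvar/k_n^2<\myvar$) and, crucially, bounds the extra non-uniform error it induces: a pruned sub-domain has ancestor mass $f_{p(n)}<\theta=O(\sqrt{\myvar})$, so its non-uniform contribution is $O(\myvar)$, of the same order as the noise it would have cost anyway --- this is exactly the inequality behind \autoref{method_2}. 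I would then observe that the few dense sub-domains that remain are refined to fine granularity, so their non-uniform error is also controlled, giving a net decrease of the expected MSE as skewness grows. Since the realized tree is built from the noisy estimates $\hat f = f + X$ rather than from $f$, I would invoke unbiasedness of \oue (so $\mathbb{E}[\hat f]=f$) to argue the pruning pattern concentrates around the one induced by the true frequencies, as already seen empirically in \autoref{1-dim node distribution} and \autoref{2-dim node distribution}.

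Second, for the ``weakening'' clause, I would simply read off the monotonicity of $\theta=\sqrt{(B+1)\myvar}$ with $\myvar=\frac{4e^{\epsilon}c}{N(e^{\epsilon}-1)^2}$: both increasing $\epsilon$ and increasing $N$ drive $\myvar$, hence $\theta$, toward zero (e.g.\ $\theta\approx 0.11$ at $\epsilon=0.1,\,N=10^6$ versus $\theta\approx 0.003$ at $\epsilon=1,\,N=10^7$). As $\theta\to 0$ almost every sub-domain exceeds the threshold, so the adaptive tree converges to the full $B$-ary tree and \myahead degenerates to a \myhio-style estimator whose error depends only on the number of covering nodes and the per-node variance --- quantities independent of how the mass is distributed. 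Hence the skewness-dependent gap identified in the first step closes as $\epsilon$ or $N$ grows, which is the second half of the claim; the heatmaps in \autoref{skewness and privacy budget} (\autoref{ahead skewness eps 105 gaussian}, \autoref{ahead skewness eps 105 laplace}) are then used to exhibit exactly this: a visible MSE decrease along the skewness axis in the small-$\epsilon$ corner that flattens as one moves toward larger $\epsilon$ or $N$.

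The main obstacle is that ``skewness'' is only a scalar summary, whereas the pruning decision is a level-by-level comparison of \emph{absolute} sub-interval frequencies against $\theta$, so there are redistributions of mass that increase a skewness coefficient without increasing the count of below-threshold sub-intervals (or that push the still-estimated dense region into a highly non-uniform shape, inflating the non-uniform term). A fully distribution-free monotonicity statement is therefore out of reach; the honest formulation is a semi-formal argument under a mild regularity assumption on the family being compared --- e.g.\ a scale family such as the \Gaussian/\Laplacian used here, where increasing the skewness/scale parameter stochastically concentrates the mass --- backed by the comprehensive heatmap sweep. I would thus present the reasoning above as the explanation and defer quantitative confirmation to the experiments, consistent with how the other observations in this section are justified.
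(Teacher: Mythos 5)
Your proposal follows essentially the same route as the paper: the observation is justified there not by a formal proof but by the same two-part qualitative argument you give --- higher skewness yields more sub-domains below $\theta$, hence more pruning and suppressed noise, while the monotone decrease of $\theta=\sqrt{(B+1)\myvar}$ in $\epsilon$ and $N$ (with the same numerical examples $\theta\approx 0.11$ vs.\ $\theta\approx 0.003$) reduces pruning intensity and thus weakens the skewness effect --- confirmed empirically by the heatmaps in \autoref{skewness and privacy budget}. Your added detail (the $\myvar/k_n^2$ noise accounting, the $O(\myvar)$ bound on pruned-node non-uniform error, and the caveat that a distribution-free monotonicity claim is unattainable) is a sound refinement but does not change the approach.
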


\begin{figure*}[!ht]
    \centering
    \subfigure[\myahead, 1-dim \Gaussian, $N = 10^5$]{
        \includegraphics[width=0.31\hsize]{./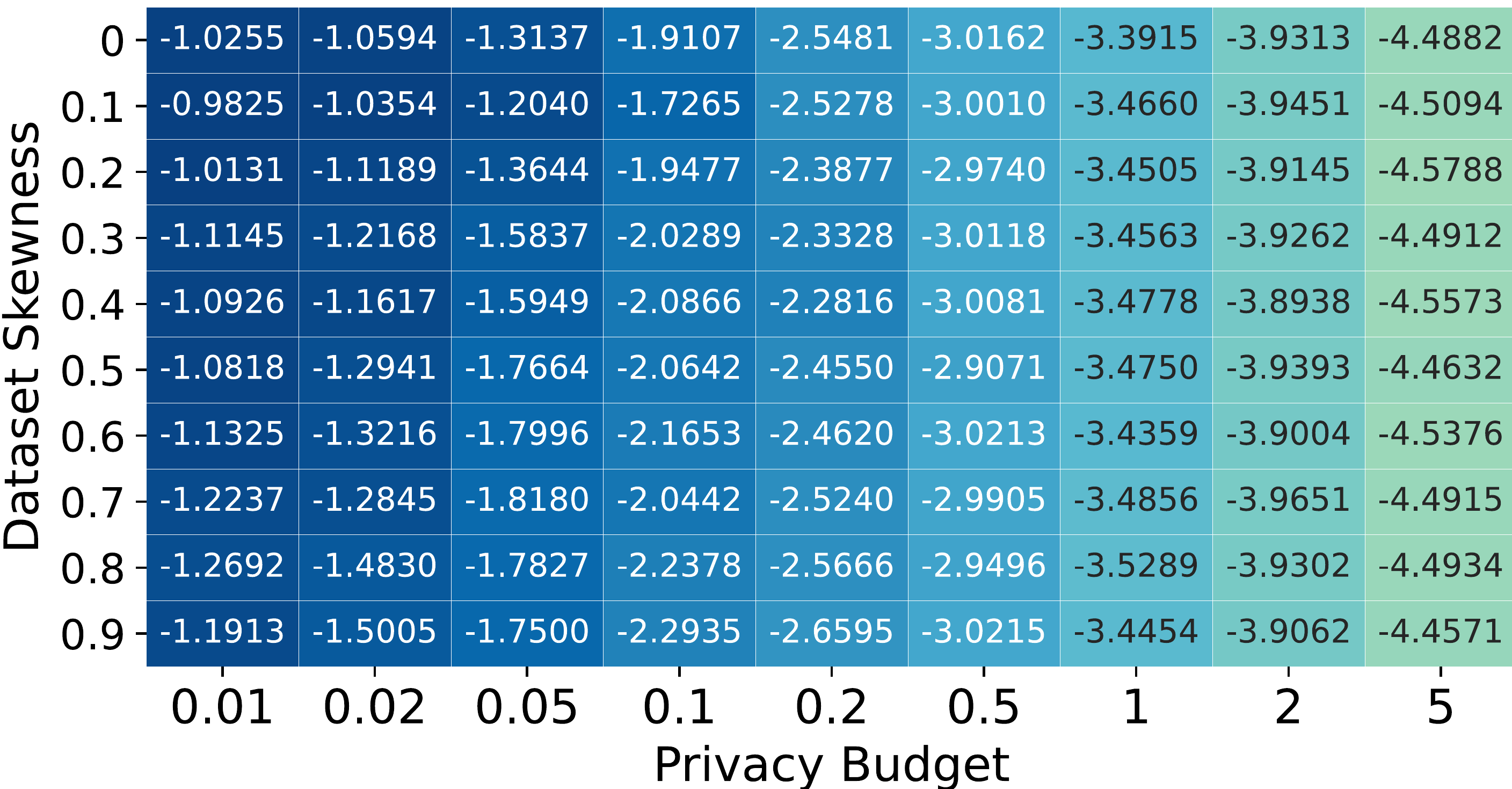}
        \label{ahead skewness eps 105 gaussian}
    }
    \subfigure[\myahead, 1-dim \Gaussian, $N = 10^6$]{
        \includegraphics[width=0.31\hsize]{./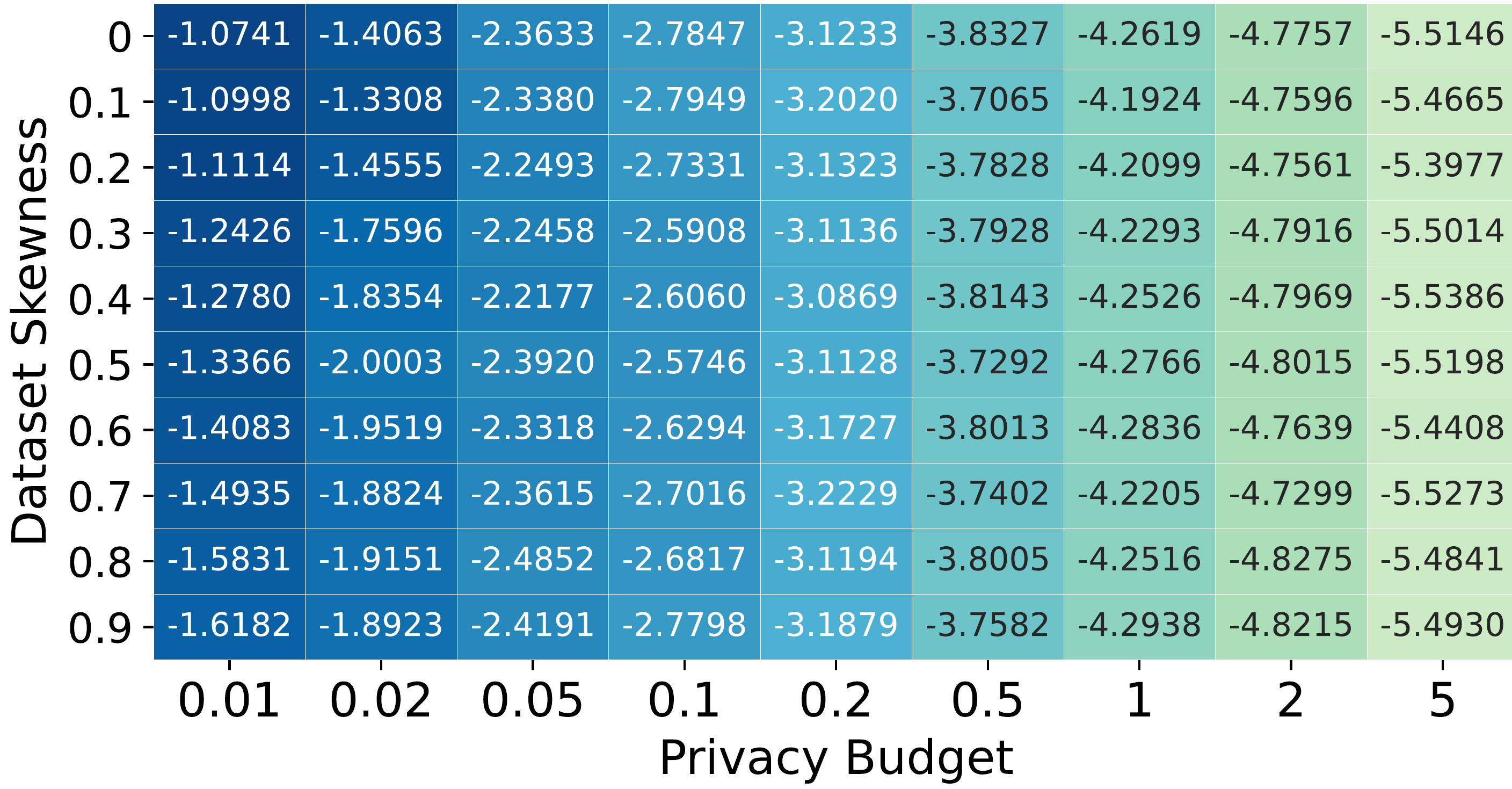}
        \label{ahead skewness eps 106 gaussian}
    }
    \subfigure[\myahead, 1-dim \Gaussian, $N = 10^7$]{
        \includegraphics[width=0.31\hsize]{./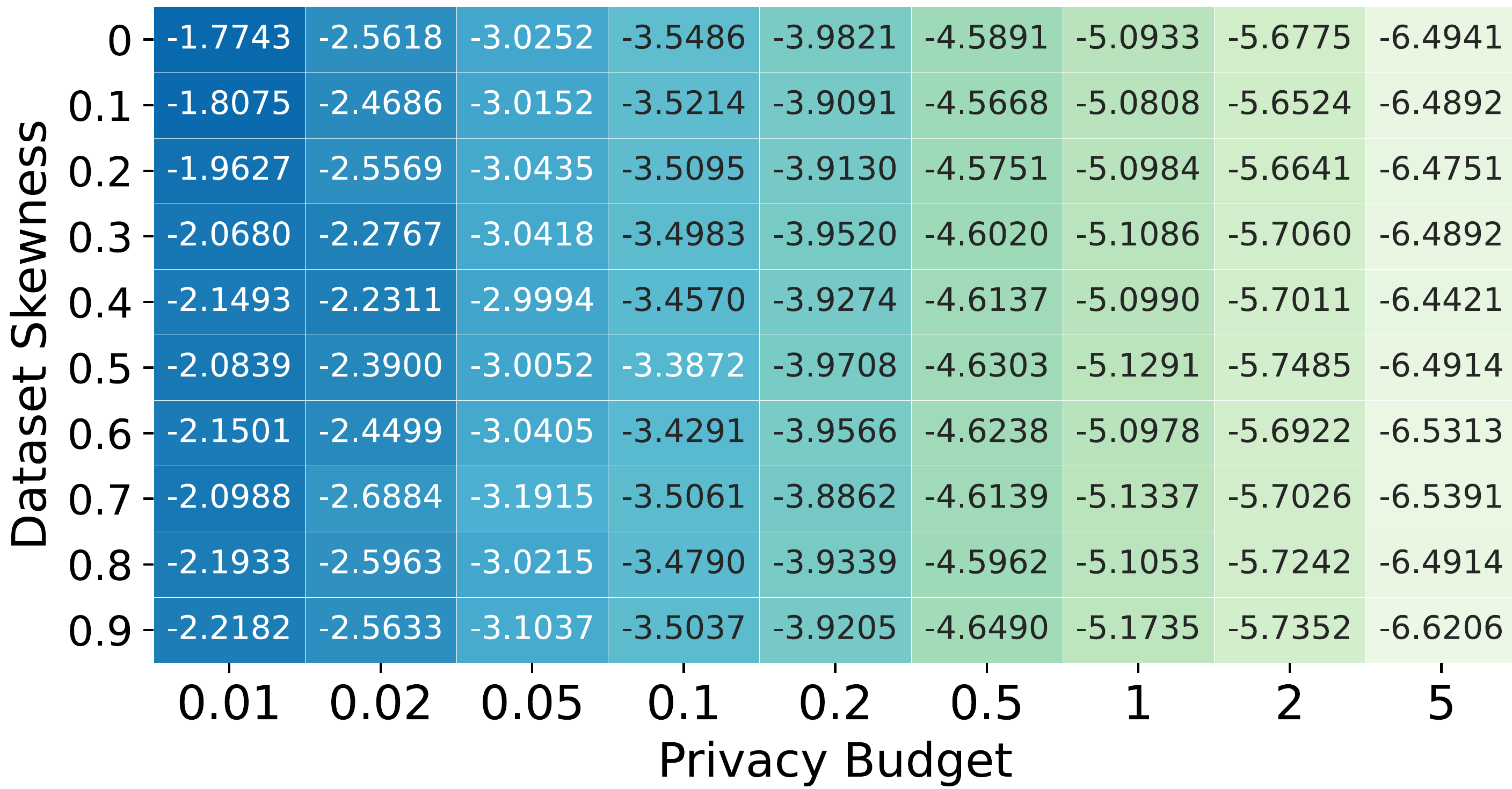}
        \label{ahead skewness eps 107 gaussian}
    }
    \\[-2ex]
    \subfigure[\myahead, 1-dim \Laplacian, $N = 10^5$]{
        \includegraphics[width=0.31\hsize]{./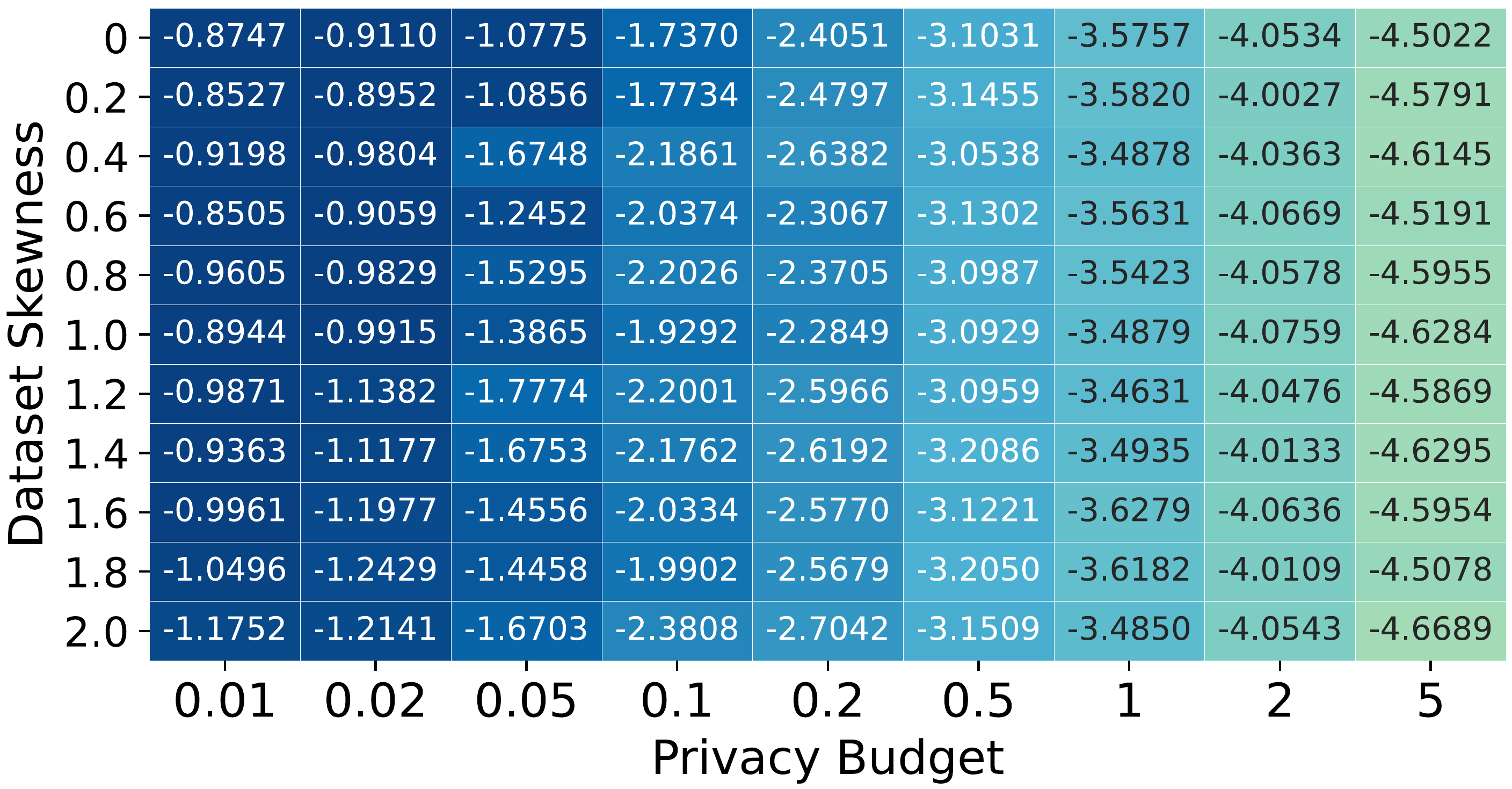}
        \label{ahead skewness eps 105 laplace}
    }
    \subfigure[\myahead, 1-dim \Laplacian, $N = 10^6$]{
        \includegraphics[width=0.31\hsize]{./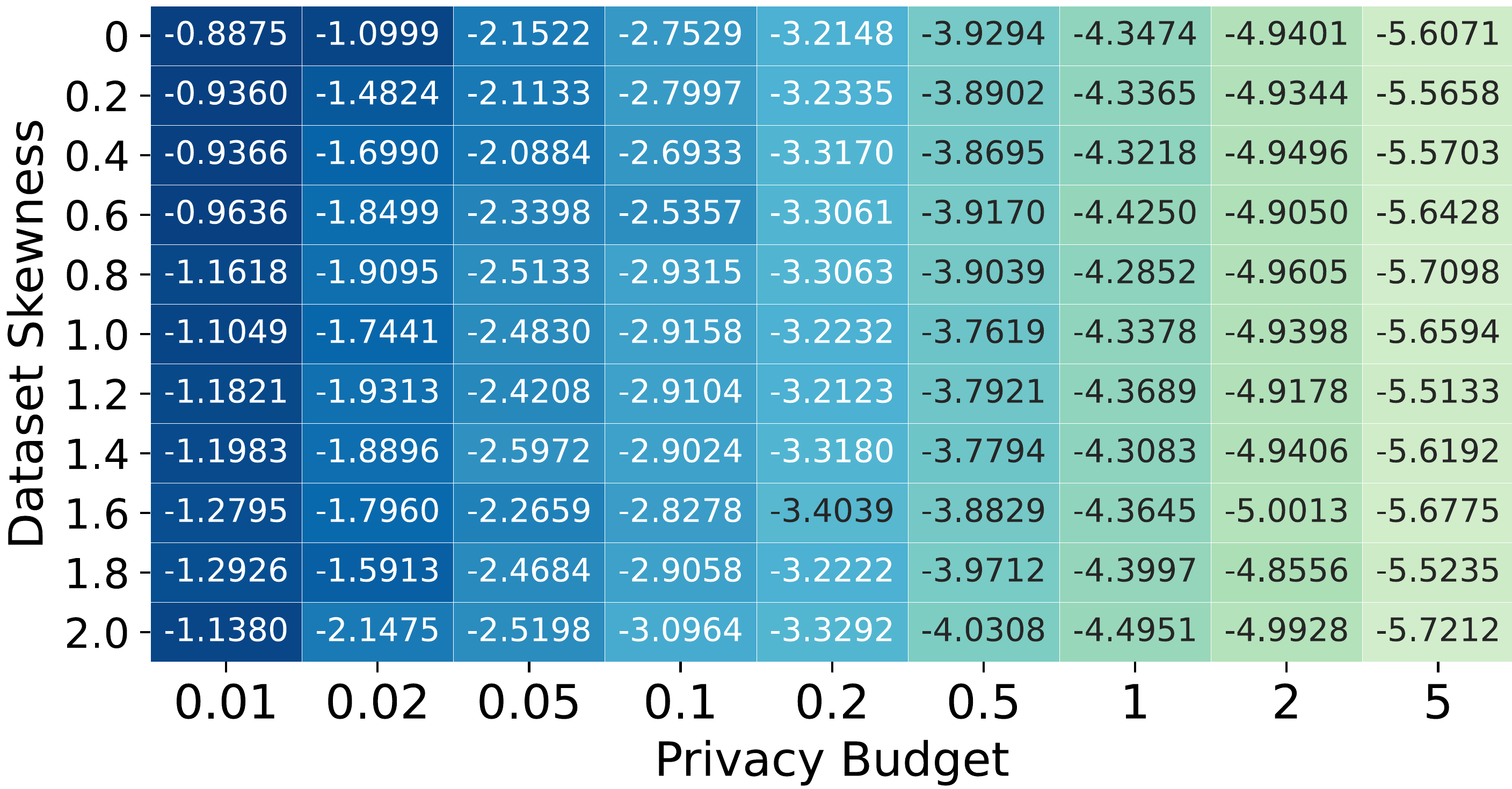}
        \label{ahead skewness eps 106 laplace}
    }
    \subfigure[\myahead, 1-dim \Laplacian, $N = 10^7$]{
        \includegraphics[width=0.31\hsize]{./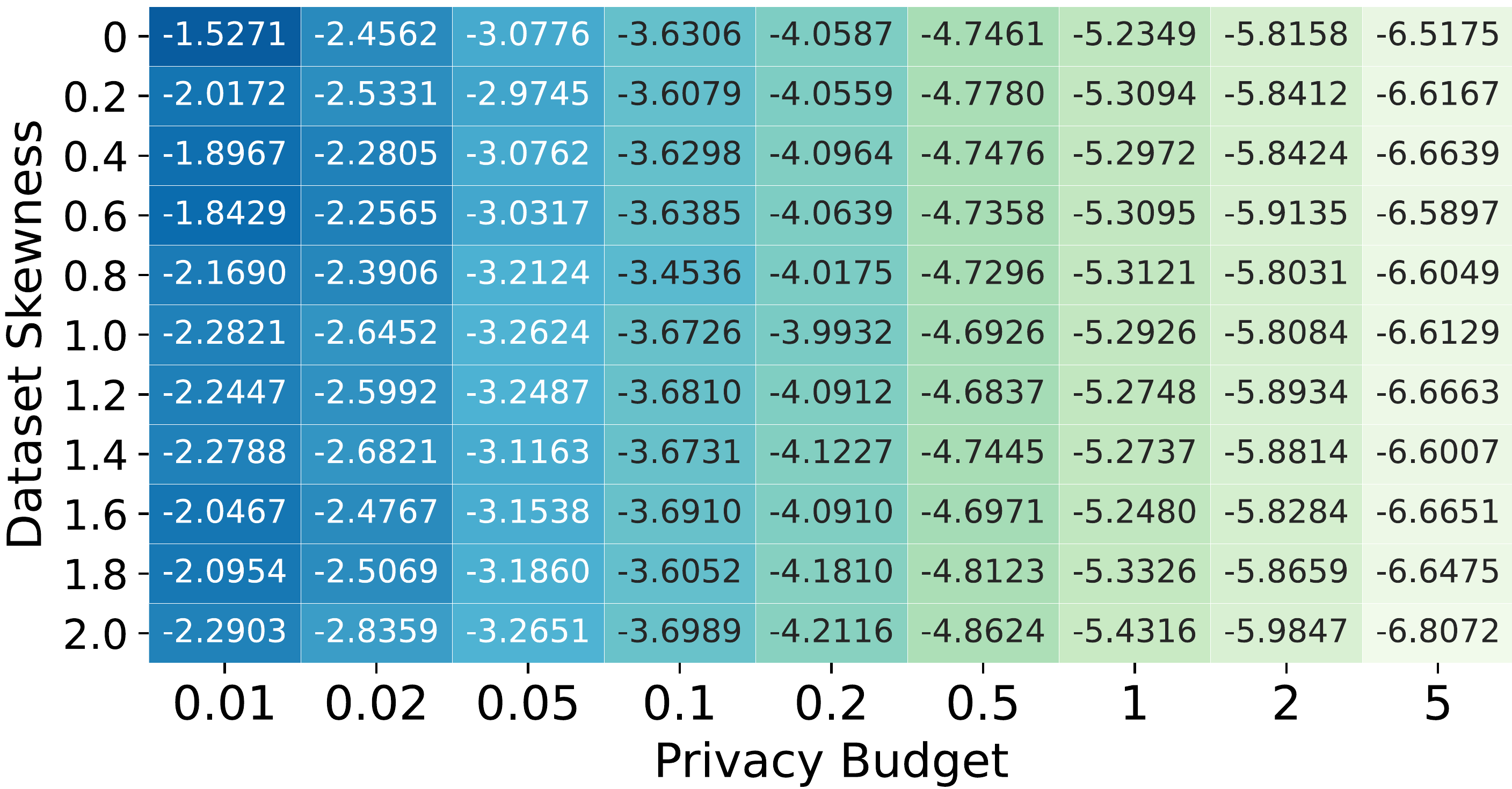}
        \label{ahead skewness eps 107 laplace}
    }
    \vspace{-0.5cm}
    \caption{The MSE of \myahead when varying data skewnesses and privacy budgets. The results are shown in log scale. }
    \vspace{-0.4cm}
    \label{skewness and privacy budget}
\end{figure*}

\subsection{Impact of Attribute Correlation}
\label{Impact of Attribute Correlation}
\mypara{Setup}
Next, we evaluate the impact of different attribute correlations on query errors as shown in \autoref{3-dim correlation} and \autoref{5-dim correlation}. 
The experimental settings are similar to 2-dim scenes, \ie, varying the correlation coefficient $r$ from 0.1 (\emph{representing weakly correlated}) to 0.9 (\emph{representing strongly correlated}) with a fixed privacy budget $\epsilon = 1.1$. 

From the results, we have the following findings. 
1) The \lle method can protect the correlation between attributes in high-dimensional situations. 
Instead of decomposing all dimensions simultaneously like \de, \lle uses the 2-dim \myahead tree to estimate the answers of high-dimensional range queries. The MSE of \myahead behaves consistently across different correlations, thus the correlation of the data is well preserved by \lle. 
2) For high-dimensional scenarios, the impact of attribute correlation becomes less on \myHDG. 
With the increase of dataset dimension, the number of 1-dim grids ($C_m^2$) increases faster than the 1-dim grids ($C_m^1$). 
Therefore, as shown in \autoref{5-dim correlation}, the negative impact of the 1-dim grids on the correlation is reduced. 

\begin{observation}[Robustness under various attribute correlation] 
    \label{Robustness under various attribute correlation}
    The MSE of \myahead behaves consistently on attribute correlation changes. 
    As the data dimension increases, the impact of correlation on \myHDG decreases. 
    \vspace{-0.2cm}
\end{observation}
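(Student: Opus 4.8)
\noindent\textbf{Proof proposal for Observation~\ref{Robustness under various attribute correlation}.}
The plan is to split the claim into two essentially independent sub-claims: (i) the MSE of \myahead (via \lle) is, up to lower-order terms, independent of the pairwise correlation coefficient $r$; and (ii) the $r$-sensitivity of \myHDG decays as the data dimension $m$ grows. For (i) I would trace through the error decomposition already assembled in \autoref{Privacy and Utility Analysis} and \autoref{Extension to Multi-dimensional Settings}. The crucial structural fact is that \lle accesses the raw data only through the $C_m^2$ two-dimensional \myahead trees, each of which decomposes \emph{both} attributes of its pair simultaneously; hence every node frequency it produces is an (unbiased-up-to-post-processing) estimate of a genuine $2$-dim marginal of the true joint distribution, and the noise, sampling, and non-uniform error bounds established for the $1$- and $2$-dim cases carry over verbatim with constants depending only on $B$, $\theta$, $\epsilon$, $N$ and $c=C_m^2$ --- never on $r$. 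The maximum-entropy step (Algorithm~\ref{Estimating Answer of m-dimensional Range Query}) then reconstructs the $m$-dim answer from exactly these $2$-dim constraints, so correlation enters through the constraints rather than as an extra error term. I would make this precise by showing that if the true $2$-dim marginals were fed to the optimization, the maximum-entropy solution restricted to any query in the query set equals the true answer; consequently the only residual error is the propagated $2$-dim estimation error, which is $r$-free, giving $\myErr_{\myahead}(r,m)=\myErr_{\myahead}(0,m)+o(1)$ in $r$.

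For (ii) I would model \myHDG's answer as a convex combination $\hat f = w_1\hat f^{(1)} + w_2\hat f^{(2)}$ of the pieces reconstructed from the $C_m^1$ one-dimensional grids and the $C_m^2$ two-dimensional grids, with $w_1,w_2$ the variance-optimal weights made explicit by the Weighted Update rule of \cite{yang2020answering} (following \cite{qardaji2014priview}). The $2$-dim grids introduce no correlation bias, whereas the $1$-dim grids implicitly impose a product-of-marginals assumption, so their contribution carries a bias term that grows with $|r|$; bounding that bias by the worst-case non-uniform error $|f_p-\frac{f_p}{k}|$ from \autoref{Privacy and Utility Analysis} and relating this worst case to $|r|$ for the Gaussian/Laplacian generators yields a correlation error of order $w_1\cdot|r|$. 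Since the number of $2$-dim constraints grows quadratically while the number of $1$-dim constraints grows only linearly, the optimal weight on the correlation-destroying $1$-dim information behaves like $w_1 = \Theta\big(C_m^1/(C_m^1+C_m^2)\big)=\Theta(1/m)$, so $\myErr_{\myHDG}(r,m)-\myErr_{\myHDG}(0,m)=O(|r|/m)\to 0$. Combining (i) and (ii) gives the observation, and the curves in \autoref{3-dim correlation} and \autoref{5-dim correlation} then certify that the hidden constants are small.

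The main obstacle is sub-claim (ii): \myHDG does not combine $1$-dim and $2$-dim information through a clean linear estimator but through the Weighted Update / maximum-entropy machinery, so expressing the answer as a convex combination with weights of order $1/m$ requires either a first-order perturbation analysis of the maximum-entropy fixed point around the independent distribution or an appeal to the variance-optimality argument of \cite{qardaji2014priview,yang2020answering} to pin down the weights. Making ``correlation error $\propto |r|$'' rigorous also demands a concrete surrogate for distance-from-independence --- tying the per-grid non-uniform error of \autoref{Privacy and Utility Analysis} to the correlation of the generating distribution is a modeling choice rather than a theorem --- so the cleanest honest formalization is the pair of asymptotic statements above (flat in $r$ for \myahead, $O(|r|/m)$ for \myHDG) rather than an exact identity, with the experiments supplying the remaining quantitative force.
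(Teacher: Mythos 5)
Your two-part split mirrors the paper's own justification: the observation is supported empirically (\autoref{3-dim correlation}, \autoref{5-dim correlation}) together with exactly the two mechanisms you identify --- \lle answers high-dimensional queries only through pairwise 2-dim \myahead trees that decompose both attributes simultaneously, so its MSE stays flat in $r$, while for \myHDG the number of 2-dim grids $C_m^2$ grows faster than the number of correlation-destroying 1-dim grids $C_m^1$, so their influence shrinks as $m$ increases. Your extra formal machinery (the max-entropy-exactness-from-true-marginals step and the $\Theta(1/m)$ weight bound) goes beyond what the paper establishes --- it treats this purely as an empirical observation with the above qualitative reasoning --- and, as you yourself note, those steps would need additional modeling assumptions, but they are not required to match the paper's argument.
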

\begin{figure*}[!t]
    \centering
    \subfigure[3-dim \Laplacian, $N=10^6$, vary $r$]{\label{fig:Three_LaplaceCorr-Set_10_6-Domain_6_6_6}\includegraphics[width=0.24\hsize]{./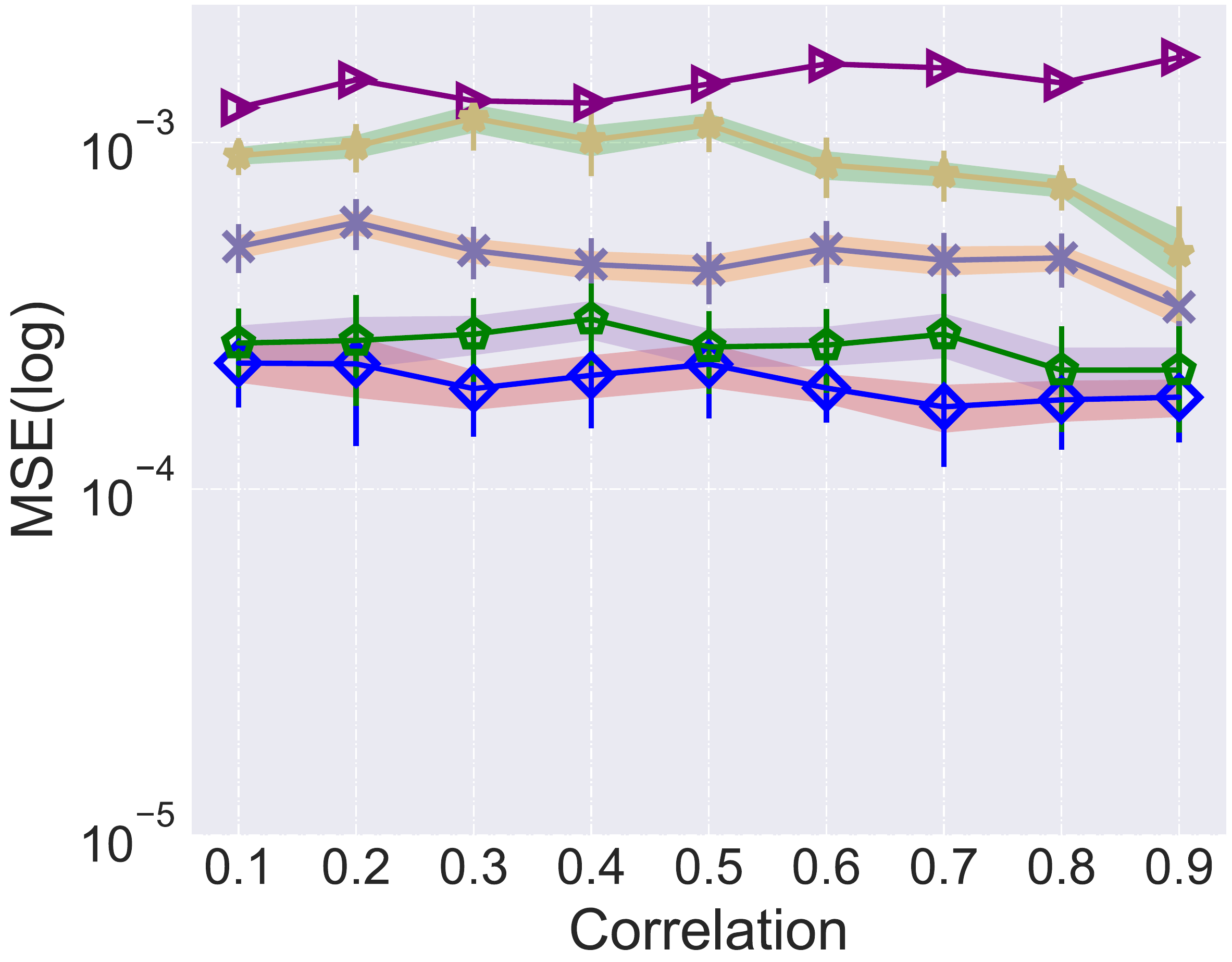}}
    \subfigure[3-dim \Laplacian, $N=10^7$, vary $r$]{\label{fig:Three_LaplaceCorr-Set_10_7-Domain_6_6_6}\includegraphics[width=0.24\hsize]{./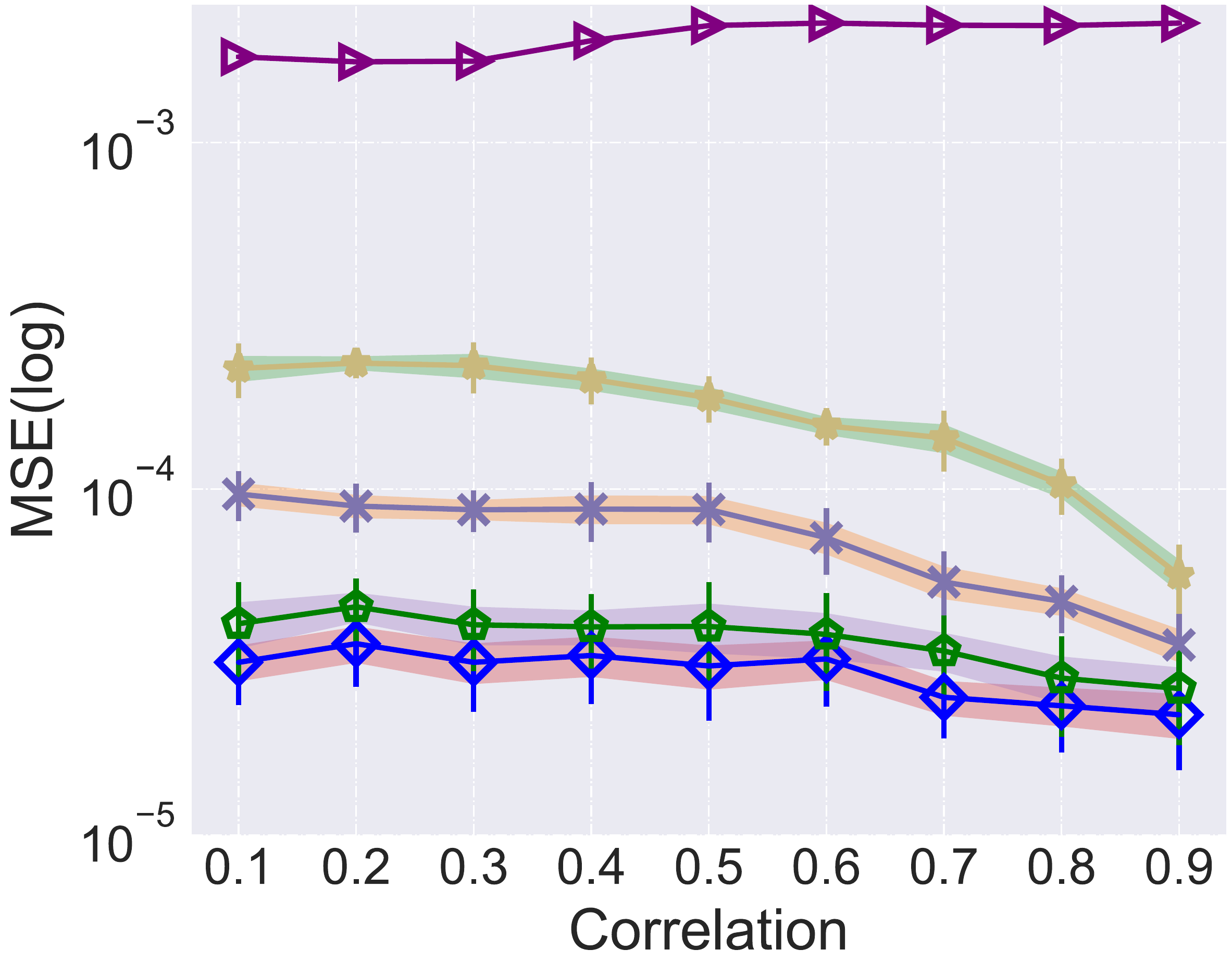}}
    \subfigure[3-dim \Gaussian, $N=10^6$, vary $r$]{\label{fig:Three_NormalCorr-Set_10_6-Domain_6_6_6}\includegraphics[width=0.24\hsize]{./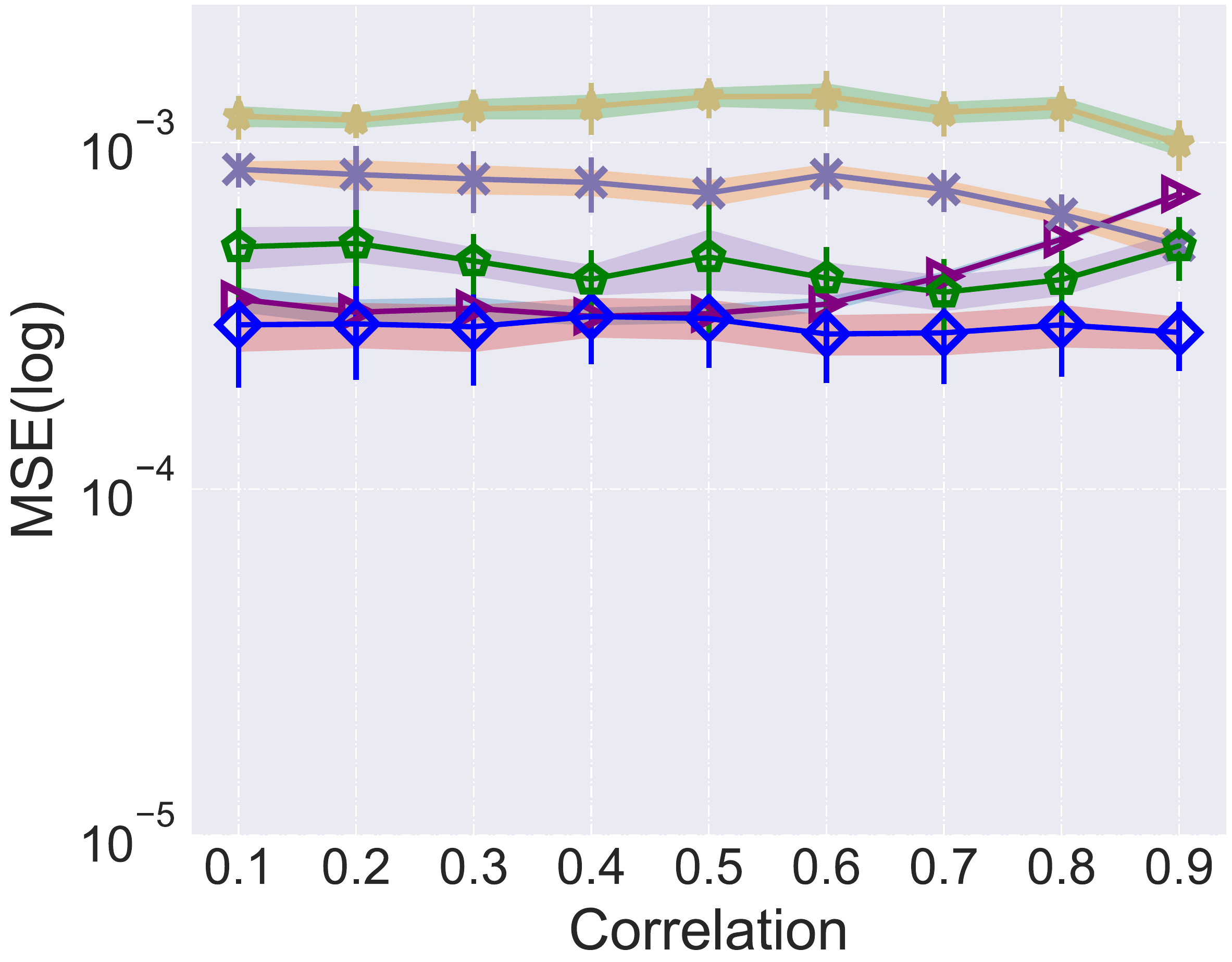}}
    \subfigure[3-dim \Gaussian, $N=10^7$, vary $r$]{\label{fig:Three_NormalCorr-Set_10_7-Domain_6_6_6}\includegraphics[width=0.24\hsize]{./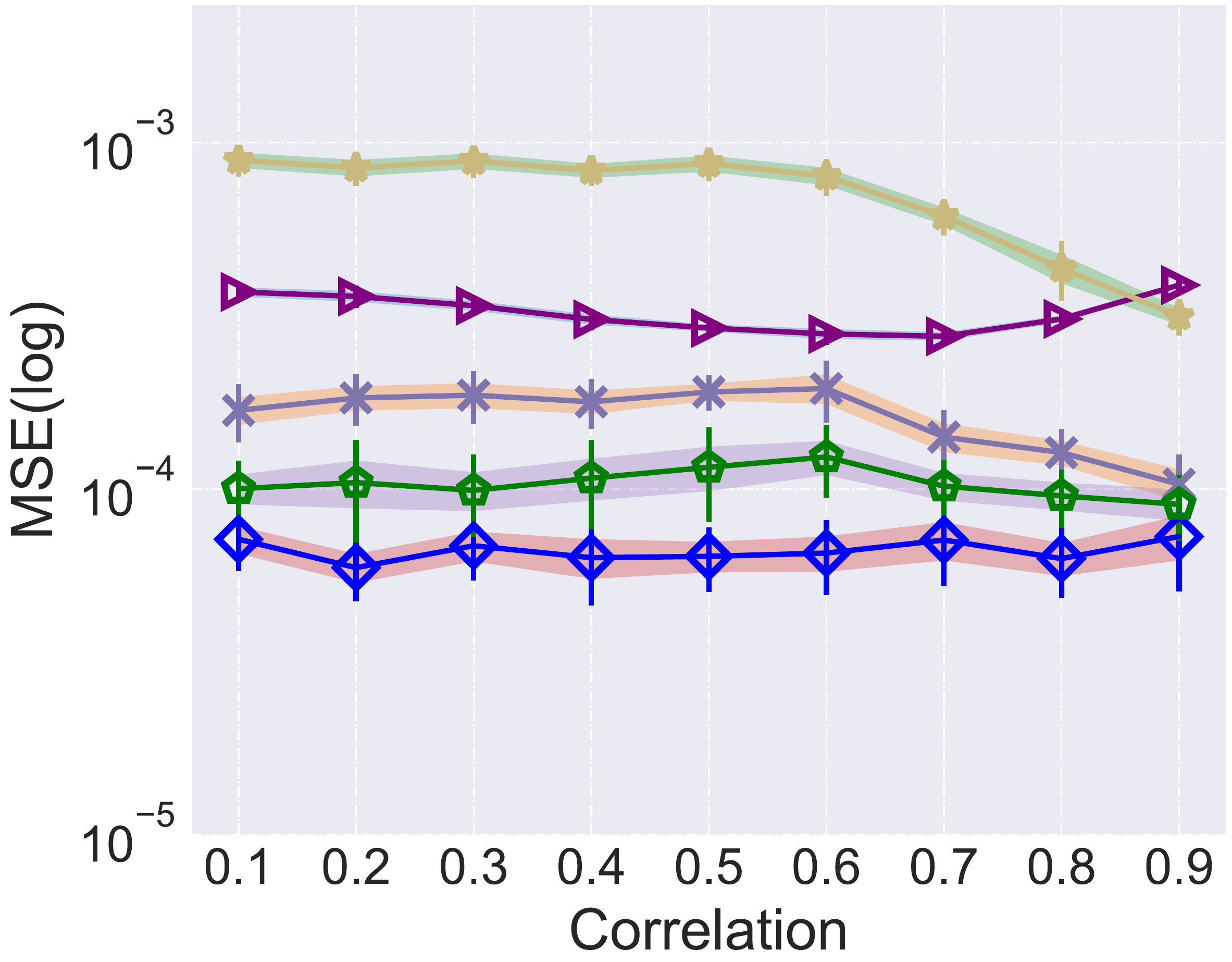}}\\[-1ex]
    \subfigure{\includegraphics[width=0.8\textwidth]{./figures_others/comparison_3D_legend5.pdf.pdf}}  
    \vspace{-0.4cm}
    \caption{Comparison of different methods on 3-dim \Laplacian and \Gaussian datasets under various attribute correlations. \de and \lle respectively represent two high-dimensional expansion methods, \ie, ``direct estimation'' and ``leveraging low-dimensional estimation''. \myHDG is a baseline method. The results are shown in log scale.}
    \vspace{-0.4cm}
    \label{3-dim correlation}
\end{figure*}

\begin{figure*}[!t]
    \centering
    \subfigure[5-dim \Laplacian, $N=10^6$, vary $r$]{\label{fig:Five_LaplaceCorr-Set_10_6-Domain_6_6_6_6_6}\includegraphics[width=0.24\hsize]{./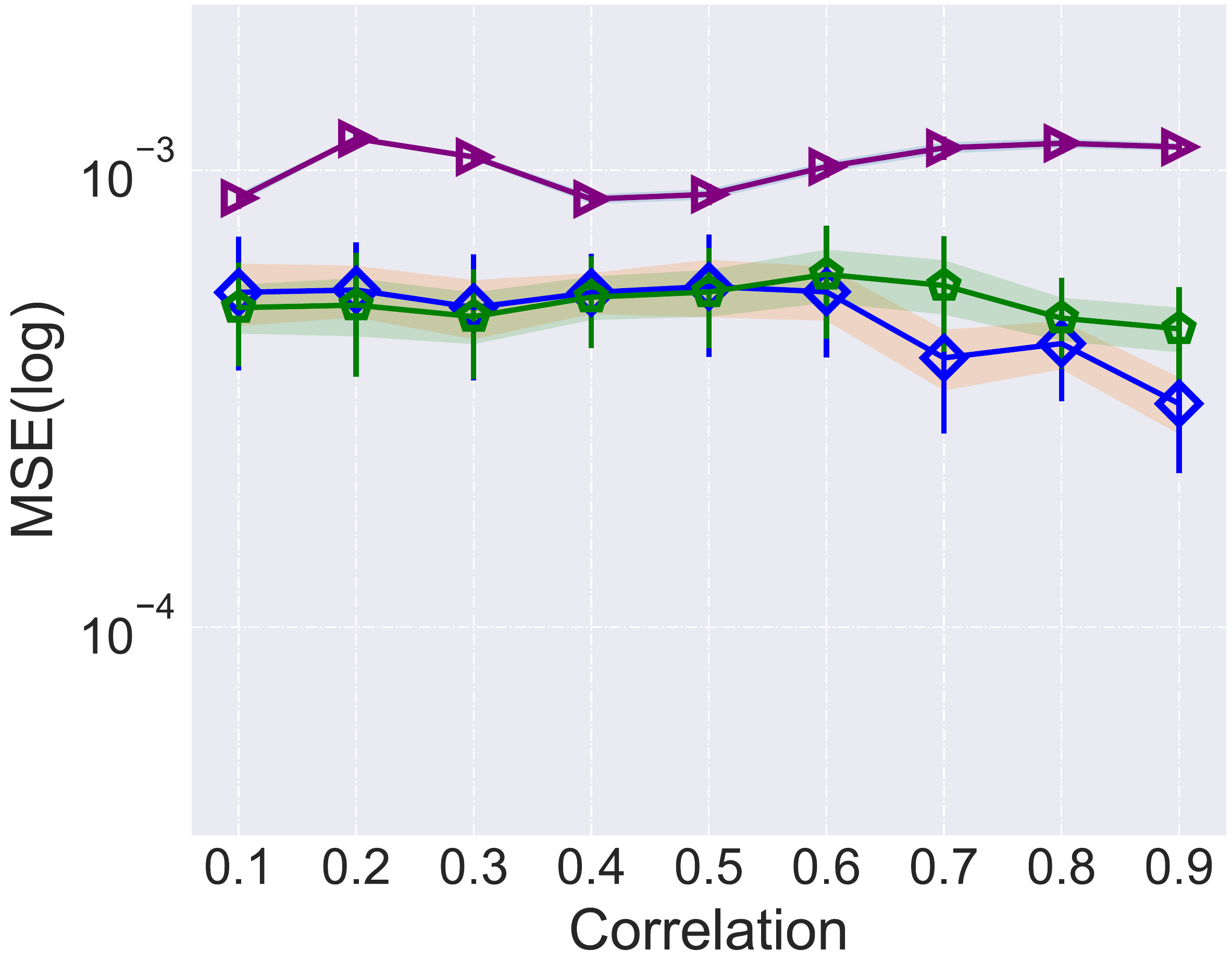}}
    \subfigure[5-dim \Laplacian, $N=10^7$, vary $r$]{\label{fig:Five_LaplaceCorr-Set_10_7-Domain_6_6_6_6_6}\includegraphics[width=0.24\hsize]{./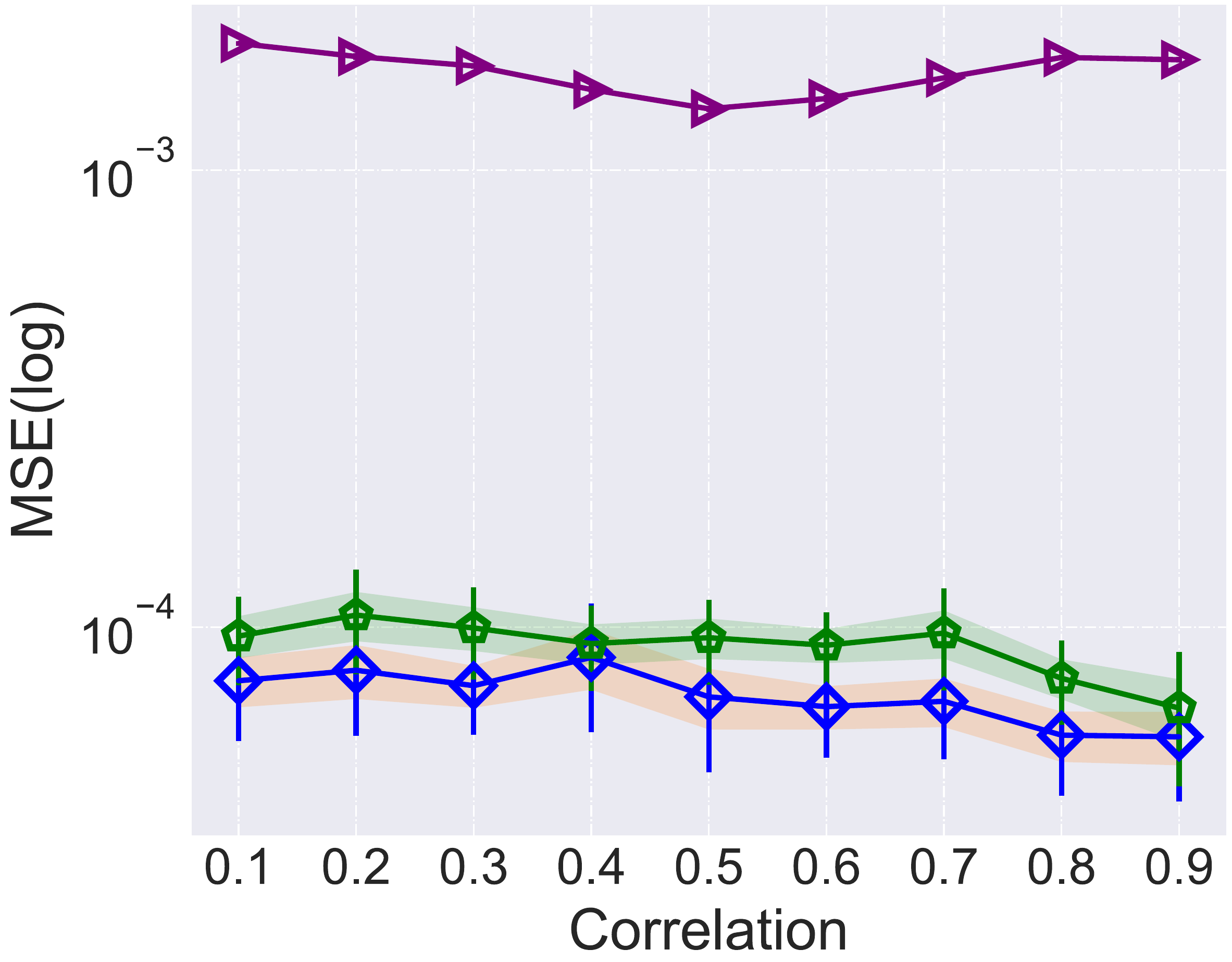}}
    \subfigure[5-dim \Gaussian, $N=10^6$, vary $r$]{\label{fig:Five_NormalCorr-Set_10_6-Domain_6_6_6_6_6}\includegraphics[width=0.24\hsize]{./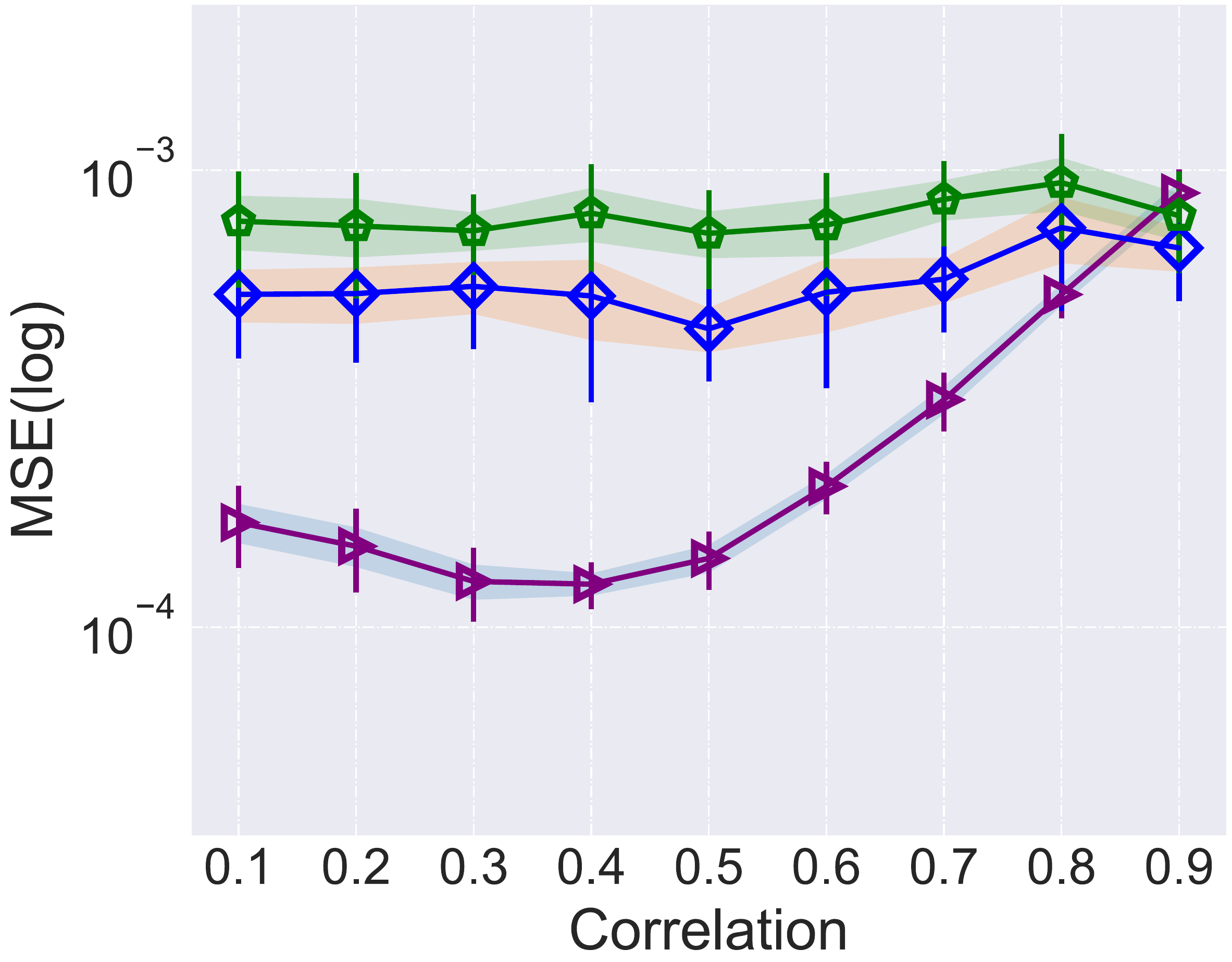}}
    \subfigure[5-dim \Gaussian, $N=10^7$, vary $r$]{\label{fig:Five_NormalCorr-Set_10_7-Domain_6_6_6_6_6}\includegraphics[width=0.24\hsize]{./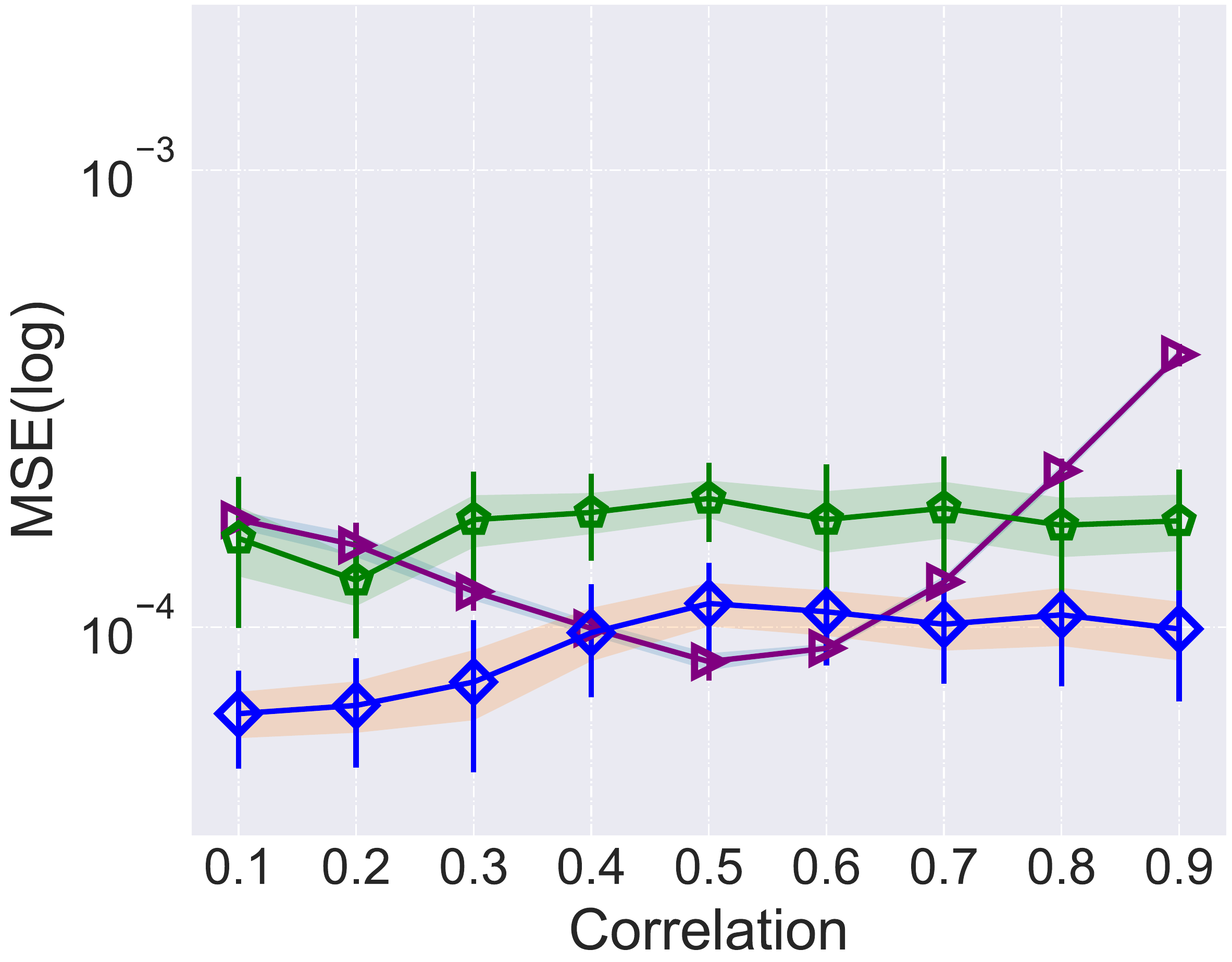}}\\[-1ex]
    \subfigure{\includegraphics[width=0.5\textwidth]{./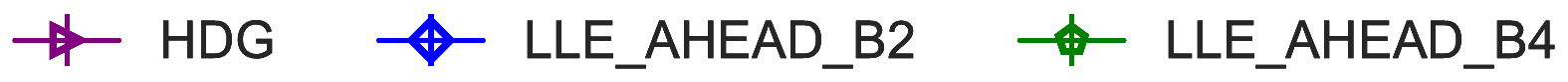.pdf}}  
    \vspace{-0.5cm}
    \caption{Comparison of different methods on 5-dim \Laplacian and \Gaussian datasets under various attribute correlations. }
    \vspace{-0.4cm}
    \label{5-dim correlation}
\end{figure*}

\subsection{Remarks}
The six observations mentioned above reflect the impact of various factors on the performance of \myahead. 
\autoref{Necessity of choosing proper user scale} and \autoref{User scale-Privacy budget exchangeability} describe the coupling influence of user scale and privacy budget on algorithm accuracy and demonstrate the transformation relationship between the user scale and privacy budget. 
Observation \autoref{The benefit from high skewness} provides an encouraging of practical adoption of \myahead when facing highly skew data. 
\autoref{Robustness under various domain size} and \autoref{Robustness under various attribute correlation} demonstrate the advantageous robustness of \myahead under different domain sizes and attribute correlations.

In detail, 1) when the privacy protection strength (the privacy budget $\epsilon$) is fixed, according to \autoref{Necessity of choosing proper user scale}, 
\myahead obtains more accurate query answers with larger user scales. 
2) If the user scale is insufficient, the aggregator tends to pay the cost, such as compensation fee, 
to trade with users to reduce the strength of privacy protection (increasing the privacy budget $\epsilon$). 
From \autoref{User scale-Privacy budget exchangeability}, 
the aggregator can readily calculate the appropriate $\epsilon$ based on the user scale. 
3) For the continuous attribute, 
the aggregator divides the total domain into slots of a fixed length 
and bucketize the records at a suitable granularity for queries 
\cite{wang2019answering,cormode2019answering}. 
Therefore, 
when the domain size is 1024, the bucketize granularity is two times that of the domain size 512. 
From \autoref{Robustness under various domain size}, since \myahead is more robust to domain size changes, the aggregator can choose a larger domain size to obtain a higher granularity. 
4) When private data tends to have a high skewness, such as data from income, social networks and Web surfing, practitioners should give priority to \myahead from \autoref{The benefit from high skewness}. 
5) From \autoref{Robustness under various attribute correlation},
in practice, \myahead can handle the impact of attribute correlation varying.

\end{document}